\def\syz{\operatorname{syz}}
\def\Res{\operatorname{Res}}
\def\LT{\operatorname{LT}}
\def\tr{\operatorname{tr}}
\def\trfive{\tr_5}
\def\A#1#2{\la#1#2\ra}
\def\B#1#2{[#1#2]}
\newcommand{\feyn}[1]{#1\kern-0.45em/}
\def\Z{\mathbb Z}
\def\C{\mathbb C}
\def\CP{\mathbb{CP}}
\def\F{\mathbb F}
\def\Q{\mathbb Q}
\def\nn{\nonumber}
\def\half{\frac{1}{2}}
\def\K{\mathbb K}
\def\Fpoly{\F[z_1,\ldots z_n]}
\def\la{\langle}
\def\ra{\rangle}
\def\sp{\operatorname{span}}
\def\lex {{\it lex}}
\def\grlex {{\it grlex}}
\def\grevlex {{\it grevlex}}
\def\c{\hspace{-0.8mm}}
\def\e {\mathbf{e}}
\def\dnz{dz_1\wedge \ldots \wedge dz_n}
\def\GB{Gr\"obner basis}
\def\Buch{Buchberger's Algorithm}
\def\mm {{\sc Mathematica}}
 \def\Singular{{\sc Singular}}
\def\Macaulay {{\sc Macaulay2}}
\def\WLOG{without loss of generality}
\def\hw[#1]{{\color{blue} \it #1}}
\def\bz{\bar z}
\def\zs{(z_1,\ldots z_n)}
\def\xis{(\xi_1,\ldots \xi_n)}
\def\TF{{\mathbf T}}
\renewcommand{\algorithmicrequire}{\textbf{Input:}}
\newcommand{\BREAK}{\textbf{break}}
\newtheorem{thm}{Theorem}[chapter]
\newtheorem{definition}[thm]{Definition}
\newtheorem{proposition}[thm]{Proposition}
\newtheorem{example}[thm]{Example}
\newtheorem*{remark}{Remark}
\newtheorem{ex}{Exercise}[chapter]
\newcommand{\mathsym}[1]{{}}
\newcommand{\unicode}[1]{{}}
\newcommand{\avg}[1]{\langle#1\rangle}
\begin{document}

\title{Lecture Notes on \\ Multi-loop Integral Reduction \\and Applied
  Algebraic Geometry}
\author{Yang Zhang \thanks{Department of physics, ETH Z\"urich, Wolfgang-Pauli-Strasse 27, 8093 Z\"urich,
Switzerland}\thanks{Max Planck Institut f\"ur Physik, Werner Heisenberg
Institut, 80805 M\"unchen, Germany \href{mailto:yang.ithaca@gmail.com}{yang.ithaca@gmail.com}}}

\date{}
\maketitle
These notes are for the author's lectures, ``Integral Reduction and Applied
  Algebraic Geometry Techniques'' in School and Workshop on
  Amplitudes in Beijing 2016.

I would like to acknowledge Simon Badger, Niklas Beisert, Zvi Bern,
Emil J. Bjerrum-Bohr, 
Jorrit Bosma, Poul Damgaard, Simon
Caron-Huot, Lance Dixon, Claude Duhr, Bo Feng, Hjalte Frellesvig,
Alessandro Georgoudis, Johannes Henn, Rijun Huang, Harald Ita, David Kosower, Kasper
Larsen, Jianxin Lu, Qing Lu, Pierpaolo Mastrolia, Dhagash Mehta, Mads Sogaard, Martin Sprenger, Mingmin Shen,
Henry Tye, Gang Yang, Huafeng Zhang and Huaxing Zhu for useful
discussions in related directions. In particular, I express my
gratitude to Michael
Stillman for his excellent lectures on algebraic geometry at Cornell
University.

I also acknowledge the support provided by Swiss National Science Foundation,
via Ambizione grant (PZ00P2\_161341).
\tableofcontents{}


\chapter{Introduction}
From childhood, we know that integral calculus is more difficult
than differential calculus, moreover a multiple integral can be a
hard nut to crack. Multifold integrals appear ubiquitously in science
and technology, for example, to understand high order quantum
interactions, we have to deal with {\it multi-loop Feynman
  integrals}. For precision LHC physics, next-to-next-to-leading order
(NNLO) and even next-to-next-to-leading order (N3LO) contribution
should be calculated, to be compared with the experimental data. This
implies we have to compute two-loop or three-loop, 
  non-supersymmetric, frequently massive Feynman integrals.  It is a
tough task.

Recall that in college, when we get a complicated integral, usually we
do not compute it directly by brute force. Instead, we may first: 
\begin{itemize}
\item Reduce the integrand. For example, given a
  univariate rational function integral, we use partial fraction to
  split the integrand into a sum of fractions, each of which contains
  only one pole. 
\item Convert the integral to residue computations. For an analytic
  univariate integrand, sometimes we can deform the contour of
  integral and make it a residue computation. The latter is often much
  easier than the original integral. 
\item Rewrite the integral by integration-by-parts (IBP). 
\end{itemize}
All these basic techniques are used every day in high energy physics,
and in all other branches of physics. For example, Ossola, Papadopoulos
and Pittau (OPP) \cite{Ossola:2006us,Ossola:2007ax}  developed a systematic one-loop integrand reduction
method, in the fashion of partial fraction. This method reduces 
one-loop Feynman integrals to one-loop master integrals, whose
coefficients can be automatically extracted from tree diagrams by
unitarity analysis. Nowadays, OPP method becomes a standard programmable algorithm for
computing next-to-leading order (NLO) contributions. 

However, for two-loop and higher-loop Feynman integrals, these basic
techniques for simplifying integrals often become insufficient. For
example, 
\begin{itemize}
\item For multi-loop orders, a Feynman integrand is still a rational
  function, however, in {\it multiple variables}. In this case, it is
  not easy to carry out partial fraction or general integrand
  reduction. This new issue is the {\it monomial order}, the order of
  variables. Naive reduction results may be
  too complicated for next steps, integral computation or unitarity
  analysis. 
\item For multi-loop generalized unitarity, sometimes we have residues not from one
  complex variable, but from multiple complex variables. It is
  well-known that the analysis of several complex variables is much harder than
  univariate complex analysis. For example, 
  \begin{quote}
    (Hartog) Let $f(z_1,\ldots, z_n)$ be an analytic function in 
    $U \backslash \{P\}$, where $U$ is an open set of
    $\C^n$ ($n>1$) and $P$ is a
    point in $U$. Then $f(z_1,\ldots, z_n)$ is analytic in $U$. 
\end{quote} 
Hartog's theorem implies that any isolated singular point of a
multivariate analytic function is removable. Hence, non-trivial singular
points of multivariate analytic function have a much more
complicated geometric structure than those in univariate cases. Besides, multivariate Cauchy's theorem does not apply for the
case when analytic functions have zero Jacobian at the
pole. That makes residue computation difficult. For instance,
\begin{equation}
  \label{residue_question}
  \oint \oint_{\text{around (0,0)}} \frac{dz_1 dz_2}{(a z_1^3+z_1^2+z_2^2)(z_1^3+z_1
    z_2-z_2^2)} =~?
\end{equation}

\item For multi-loop integrals, the number of IBP relations becomes
  huge. We may need to list a large set of IBP relations, and then
  use linear algebra to eliminate unwanted terms to get {\it useful} 
  IBPs. However, the linear system can be very large and Gauss
  elimination (especially in analytic computations) may exhaust
  computer RAM.

 Is there a way to list only useful IBPs, by
  adding constraints on differential forms? The answer is ``yes'', but
  these constraints are subtle. These are linear equations which only allow
  polynomial solutions \cite{Gluza:2010ws}. \footnote{As an analogy, consider the equation $6 x+ 9 y=15$
  in $x$, $y$. If $x$, $y$ are allowed to be rational numbers, it is a
  simple linear equation. However, if only integer values for $x$,
  $y$ are allowed, it is a less-trivial Diophantine equation in number
  theory. Here we have polynomial-valued Diophantine equations.}
How do we solve them efficiently?
\end{itemize}
Most Feynman integral simplification procedures in multi-loop orders,
suffer from the complicated structure of multiple variables. Note that,
usually our
targets are just polynomials or rational functions. However, multivariate
polynomial problems can be extremely difficult. (One famous example is
Jacobian conjecture, which stands unsolved today.) 

The modern branch of mathematics dealing with multivariate polynomials and
rational functions is {\it algebraic geometry}. Classically, algebraic
geometry studies the geometric sets defined by zeros of polynomials. Polynomial
problems are translated to geometry problems, and vice versa. Note that since
only polynomials are allowed, algebraic geometry is more
``rigid'' than {\it differential geometry}. Classical algebraic
geometry culminates at the classification theorem of
algebraic surfaces by the {\it Italian school} in 19th century. 

Modern
algebraic geometry is rigorous, much more general and abstract. The
classical geometric objects are replaced by the abstract concept {\it
  scheme}, and powerful techniques like {\it homological algebra} and {\it
  cohomology} are introduced in algebraic geometry thanks to Alexander
Grothendieck and contemporary mathematicians \cite{MR3075000,MR0163909,MR0163911,MR0173675,MR0199181,MR0217086,MR0238860,MR0463157}. Modern algebraic
geometry shows its power in the proof of {\it Fermat's last theorem} by
Andrew Wiles. Now algebraic geometry applies on number theory,
representation theory, complex geometry and theoretical physics.

Back to our cases, there are numerous polynomial/rational function
problems. Clearly, they are not as sophisticated as {\it Fermat's last
  theorem} or {\it Riemann hypothesis}. Apparently they resemble
classical algebraic geometry problems. However, beyond the
classification of curves or surfaces, we need computational
power to solve polynomial-form equations, to compute multivariate residues in the real
world. The computational aspect of algebraic geometry, was neglected for a long time. 

When I was a graduate student, I was lucky taking a class by Professor
Michael
Stillman. One fascinating thing in the class was that many times after
learning an important theorem, Michael turned on the computer and ran
a program called ``Macaulay2'' \cite{M2}. He typed in number fields, 
polynomials, and geometric objects in the study. Then various commands in the program
can automatically generate the dimension, the genus and various maps between
objects. He taught us one essential tool behind the program was the
so-called {\it
  Gr\"obner basis}, which is the crucial concept in the new subject
{\it computational algebraic geometry} (CAG)
\cite{MR3330490,opac-b1094391}. It was my first time hearing
about CAG and soon found it useful.

CAG aims at multivariate polynomial and rational function problems in
the real world. It began with {\it Buchberger's algorithm} in 1970s,
which obtained the Gr\"obner basis for a {\it polynomial
  ideal}. Buchberger's algorithm for polynomials is similar to Gaussian
Elimination for linear algebra: the latter finds a linear
basis of a subspace while the former finds a ``good''
generating set for an ideal. With Gr\"obner basis, one can carry out
multivariate polynomial division and simplify rational functions; one
can eliminate variables from a polynomial system; one can apply polynomial
constraints without solving them... Then CAG developed quickly
and now it
is so all-purpose that people use it outside mathematics, like in
robotics, cryptography and game theory. I believe that CAG is crucial for the
deep understanding of multi-loop scattering amplitudes. 

Hence, the purpose of these lecture notes is to introduce a
fast-developing research field: applied algebraic geometry in multi-loop
scattering amplitudes. I would like to show CAG methods by
examples,
\begin{itemize}
\item Multi-loop integrand reduction via  Gr\"obner basis. This
  generalizes one-loop OPP integrand reduction method to all loop
  orders. In this section, I will introduce basic notations of
  polynomial ring, rudiments of algebraic geometry and the
  Gr\"obner basis method.
\item Multivariate residue computation, in {\it generalized} unitarity analysis. 
A flavor of several complex variables will be provided in the
section. Then I present the definition of multivariate residues and
CAG based algorithms for computing multivariate
residues. Finally I show that they are very useful in high-loop unitarity
analysis.
\item Multi-loop IBP with polynomial constraints. These constraints
  form a {\it syzygy} system, which can be solved by \GB\
  \cite{Gluza:2010ws} techniques. We show that we can combine this 
  with unitarity cuts and the Baikov representation
  \cite{Baikov:1996rk} to further improve the
  efficiency. 
\end{itemize}
I will illustrate mathematical concepts and methods by practical
examples and exercises, even beyond mathematics/physics, like the game {\it
  Sudoku}. The proof of many mathematical theorems will be skipped or
just roughly sketched. These notes do not cover other important topics
in amplitudes studies, like {\it Symbol}, {\it differential equation},
{\it Grassmannian}
or {\it bootstrap}. We refer to the beautiful online articles, for
instance, \cite{ Dixon:2011nj , Duhr:2011zq, ArkaniHamed:2012nw , Henn:2014qga 
 } for these topics. I will not cover all the technical details of the
research frontier from integral reduction, since I believe it is more
important for readers to get the idea of basic algebraic geometry and find its
applications in their own research fields.

\chapter{Integrand reduction and Gr\"obner basis}

\section{Basic physical objects}
In these notes we mainly focus on scattering amplitudes in perturbative
quantum field theory and (super-)gravity. To make the reduction
methods general, we aim at non-supersymmetric amplitudes. These
methods definitely work with supersymmetric theories, however, it is
more efficient to combine them with specific shortcuts in
supersymmetric theories.

Referring to an $L$-loop Feynman diagram, we mean a {\it connected} diagram with $n$ external 
lines, $P$ propagators, and $L$ {\it fundamental cycles} \footnote{We
  need some graph theory concepts here: for a graph $G$, a {\it
    spanning tree} $T$ is a tree subgraph which contains all vertices of
  $G$. Given any edge $e$ in $G$ which is not in $T$, we define a
  {\it fundamental cycle} $C_e$
as the
  simple cycle which consists of $e$ and a subset of $T$. The number of
  fundamental cycles is independent of the choice of $T$.}. We further require that
each external line is connected to some fundamental cycle. Define $V$ as the number
of vertices in this diagram, then the graph theory relation holds,
\begin{equation}
  \label{Graph_theory}
  L=P-V+1.
\end{equation}
Note that this relation is not Euler's famous formula, since this
relation holds for both planar and nonplanar graphs in graph theory,
but Euler characteristic does not enter this relation.  (Of course, for a
planar graph, by embedding fundamental cycles into a
plane as face boundaries, it becomes Euler's formula for planar
graphs.)

For gauge theories, we have {\it color-ordered} Feynman diagrams such
that the external color particles must be drawn from infinity in a given cyclic
order, and the Feynman rules would differ from the unordered
ones. Sometimes, with these constraints, we cannot draw a Feynman
diagram on a plane without crossing lines. We call such
a Feynman diagram a {\it nonplanar diagram in the sense of color
  ordering.} Note that this definition is different from {\it
  nonplanar diagram in the sense of graph theory}, since by lifting the color
order constraint, a colored-ordered nonplanar diagram may be embedded
into a plane without crossing lines. See an example in  Fig.\ref{xbox_planar}.
\begin{figure}[h]
\subfloat[a nonplanar diagram with color ordering]{\includegraphics[width = 3in]{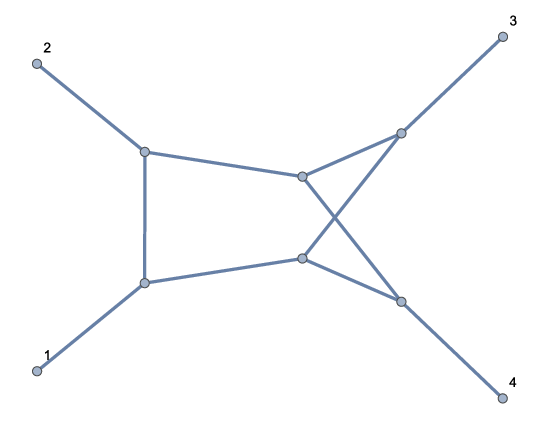}} 
\subfloat[Redraw Fig.\ref{xbox_planar}a by neglecting external line
color ordering]{\includegraphics[width = 3in]{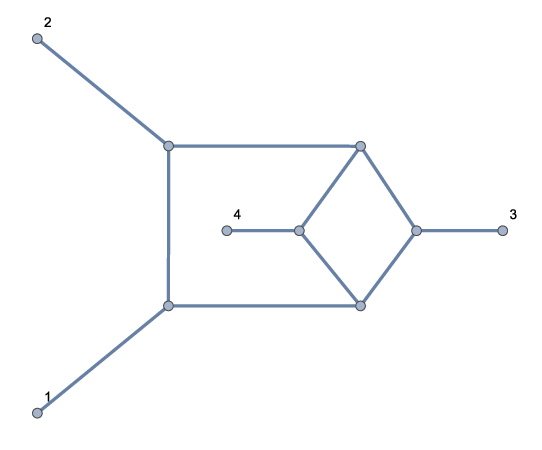}}
\caption{A nonplanar diagram in color
  ordering, may be a planar diagram in the sense of graph theory. }
\label{xbox_planar}
\end{figure}

Sometimes, for an $L$-loop diagram with $L>1$, two fundamental cycles do not
share a common edge. In this case the diagram is {\it factorable},
i.e., factorized into two diagram. We consider a factorable diagram as
two lower loop-order diagrams, instead of an ``authentic'' $L$-loop
diagram. See an example in Figure \ref{reducible_diagrams}a. For a $n$-point $L$-loop
diagram, if two external lines attach to one vertex, we consider this diagram as an
$n-1$ point diagram. See an example
in Figure \ref{reducible_diagrams}b.
\begin{figure}[h]
\subfloat[a factorable diagram.]{\includegraphics[width = 3in]{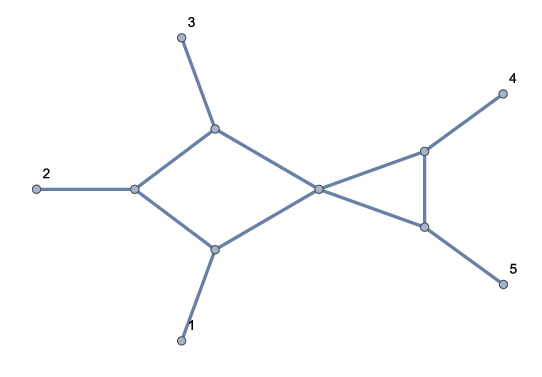}} 
\subfloat[This diagram
is considered as a $4$-point diagram instead of a $5$-point diagram.]{\includegraphics[width = 3in]{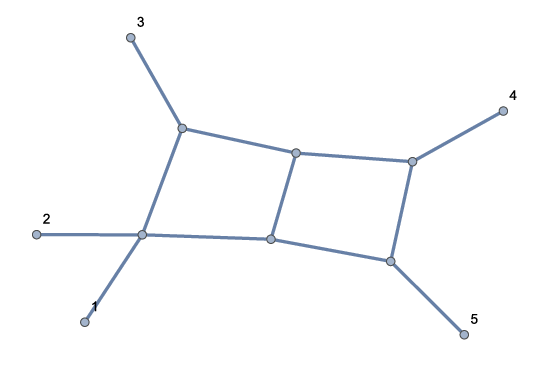}}
\caption{Diagrams to be simplified}
\label{reducible_diagrams}
\end{figure}

A Feynman diagram has the associated Feynman integral,
\begin{equation}
  I=\int \frac{d^D l_1}{i \pi^{D/2}} \ldots \frac{d^D l_L}{i \pi^{D/2}}
  \frac{N(l_1,\ldots l_L)}{D_1 \ldots D_P}\,.
\label{Feynman integral}
\end{equation}
For each fundamental cycle, we assign an internal momenta $l_i$.
Here the denominators of Feynman propagator have the form, $D_i=(\alpha_1 l_1 + \dots \alpha_l l_L+\beta_1 k_1 +\ldots
 \beta_n k_n)^2-m_i^2$. $k_1 \ldots k_n$ are the external momenta. $\alpha$'s
 must be $\pm 1$. For fermion propagators, we complete the denominator
 squares to get this form. $N(l_1,\ldots l_L)$ is the numerator, which
 depends on Feynman rules and the symmetry factor. Here we hide the dependence of
 external momenta/polarizations in $N(l_1,\ldots l_L)$. The spacetime dimension 
 $D$ may take the value $4-2\epsilon$ in the
 {\it dimensional regularization scheme} (DimReg).  Sometimes we also
 discuss the case $D=4$ or some other fixed integer, for studying leading
 singularity and maximal unitarity cut.

\section{Integrand reduction at one loop}
Consider the problem of reducing the integrand in (\ref{Feynman
  integral}) before integration. Schematically integrand reduction, as
a generalization of partial fractions, is to express the numerator $N$ as, 
\begin{equation}
  N= \Delta + \sum_{j=1}^P h_j D_j,
\label{IntRed}
\end{equation}
where $\Delta$ and $h_j$'s are polynomials in loop momenta components.
The term $h_j D_j$ cancels a denominator $D_i$ and provides a Feynman
integral with fewer propagators. Then this term merges with other
Feynman integrals in the scattering amplitude. $\Delta$ remains for 
this diagram. If $\Delta$ is ``significantly simpler'' than $N$,
this integrand reduction is useful.    
\subsection{Box diagram}

To make our discussion solid, we first introduce
the classical OPP reduction method \cite{Ossola:2006us,Ossola:2007ax} at
one loop order. It is well known that if $D=4$, all one-loop Feynman integrals
with more than $4$ distinct propagators can be reduced to Feynman
integrals with at most $4$ distinct propagators, while if $D=4-2\epsilon$, one-loop Feynman integrals
with more than $5$ distinct propagators are reduced to Feynman
integrals with at most $5$ distinct propagators, at the integrand level.
These statements can be proven by tensor calculations
\cite{Melrose:1965kb}. Later in this section, we re-prove these by
a straightforward algebraic geometry argument. 

For simple $D=4$ cases, we only need to start from the box
diagram. For instance, consider $D=4$ four-point massless box,
\begin{figure}
\centering
\includegraphics[scale=0.8]{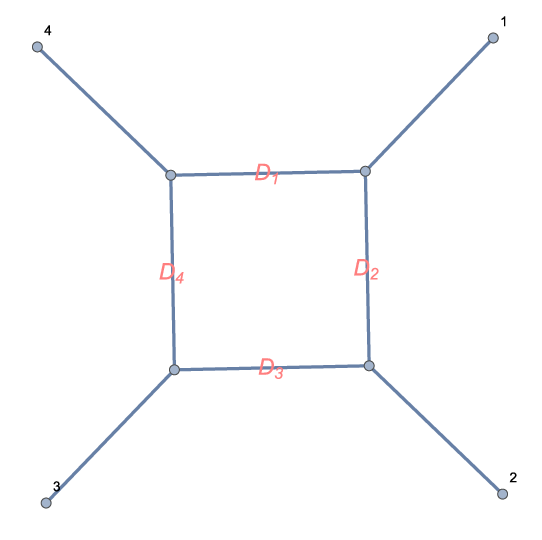}
\caption{One-loop massless box diagram}
\label{box}
\end{figure}
with denominators in propagators,
\begin{gather}
  D_1=l^2,\quad D_2=(l-k_1)^2, \quad D_3=(l-k_1-k_2)^2,\quad D_4=(l+k_4)^2.
\end{gather}
The Mandelstam variables are $s=(k_1+k_2)^2$ and $t=(k_1+k_4)^2$. It is useful to re-parameterize the loop momentum $l$ instead of using
its Lorentz components. There are several parametrization
methods: (1) van
Neerven-Vermaseren parameterization \cite{vanNeerven:1983vr} (2)
spinor-helicity parameterization (3) Baikov
parametrization. Here we use the straightforward van
Neerven-Vermaseren parameterization, and postpone applications of
other parameterizations later.

Note that by energy-momentum conservation, only external momenta
$k_1$, $k_2$ and $k_4$ are independent. To make a $4D$ basis, we
introduce an auxiliary vector $\omega_\mu\equiv\frac{2 i}{s}\epsilon_{\mu \nu \rho \sigma} k_1^\nu
k_2^\rho k_4^\sigma$ \footnote{The normalization is from the
  convention of spinor helicity formalism. So $\omega$ is a pure
  imaginary vector, and later on the unitarity solutions appear to be
  real in van
Neerven-Vermaseren variables. }
\begin{equation}
  \label{eq:19}
  \omega^2 = -\frac{t (s+t)}{s} \,.
\end{equation}
 Then the basis $\{e_1,e_2 \ldots e_4\}\equiv
\{k_1,k_2,k_4,\omega\}$. The {\it Gram matrix} of this basis is,
\begin{equation}
  \label{eq:3}
G=\left(
  \begin{array}{cccc}
    0 & \frac{s}{2} & \frac{t}{2} & 0 \\
\frac{s}{2} & 0 & \frac{-s-t}{2} & 0 \\
\frac{t}{2} & \frac{-s-t}{2} & 0 & 0 \\
0 & 0 & 0 & -\frac{t (s+t)}{s} \\
  \end{array}
\right),\quad G_{ij}=e_i\cdot e_j\,.
\end{equation}
Note that for any well-defined basis, Gram matrix should be
non-degenerate. For any $4D$ momentum $p$, define van Neerven-Vermaseren variables as,
\begin{gather}
  \label{box_vNV}
  x_i(p)\equiv p \cdot e_i,\quad   i=1,\ldots ,4\,.
\end{gather}
Then for any two $4D$ momenta, a scalar product translates to van
Neerven-Vermaseren form, by linear algebra
\begin{equation}
  \label{eq:4}
  p_1 \cdot p_2=\mathbf x(p_1)^T (G^{-1}) \mathbf x(p_2) \, ,
\end{equation}
where the bold $\mathbf x(p)$ denotes the column $4$-vector, $(x_1,x_2,x_3,x_4)^T$. Back to
our one-loop box, define $x_i\equiv
x_i(l)$. Hence a Lorentz-invariant numerator $N_\text{box}$ in \eqref{Feynman integral}
has the form,
\begin{equation}
  \label{eq:5}
  N_\text{box}=\sum_{m_1} \sum_{m_2} \sum_{m_3} \sum_{m_4} c_{m_1m_2m_3m_4}x_1^{m_1} x_2^{m_2} x_3^{m_3} x_4^{m_4}\,,
\end{equation}
For a renormalizable theory, there is a bound on the sum,
$m_1+m_2+m_3+m_4\leq 4$. The goal in integrand reduction is to expand
\begin{equation}
  N_\text{box}=\Delta_\text{box}+h_1 D_1 +\ldots h_4 D_4\,, 
\label{box_IR}
\end{equation}
 such that the remainder polynomial
$\Delta_\text{box}$ is as simple as possible.

Following \cite{Ossola:2006us}, the simplest $\Delta_\text{box}$ can be obtained
by a direct analysis. Note that
\begin{align}
  x_1 &= l \cdot k_1 =\half (D_1-D_2), \nn \\
  x_2 &= l \cdot k_2 =\half (D_2-D_3)+\frac{s}{2}, \nn \\
 x_3 &= l \cdot k_4 =\half (D_4-D_1),
\label{RSP_box}
\end{align}
hence $x_1$ and $x_3$ can be written as combinations of $D_i$'s,
while  $x_2$ is equivalent to the constant $s/2$ up to combinations of
 $D_i$'s. A scalar product which equals combinations of denominators
and constants is called a {\it reducible scalar product} (RSP). In this
cases, $x_1,x_2,x_3$ are RSPs. The
remainder $\Delta_\text{box}$ shall not depend on RSPs, hence, 
\begin{equation}
  \label{eq:7}
  \Delta_\text{box}= \sum_{m_4} c_{m_4} x_4^{m_4}.
\end{equation}
$x_4$ is called a  {\it irreducible scalar product} (ISP). Furthermore, using the expansion of $l^2$ and \eqref{RSP_box},
\begin{align}
 D_1=l_1^2&= \frac{1}{{4 s t (s+t)}}\big(-4 s^2 x_4^2 +s^2 t^2 +4 D_1 s^2 t-2 D_2
  s^2 t-2 D_4 s^2 t+D_2^2 s^2+D_4^2 s^2\nn\\
&-2 D_2 D_4 s^2
+2 D_1 s t^2-2
   D_3 s t^2+2 D_1 D_2 s t
-4 D_1 D_3 s t+2 D_2 D_3 s t+2 D_1 D_4 s t\nn\\
&-4 D_2 D_4 s t
+2 D_3
   D_4 s t+D_1^2 t^2+D_3^2 t^2-2 D_1 D_3 t^2 \big),
\label{box_quadratic_eqn}
\end{align}
which means
\begin{equation}
\label{box_quadratic_eqn_cut}
  x_4^2=\frac{t^2}{4} +\mathcal O (D_i).
\end{equation}
Hence quadratic and higher-degree monomials in $x_4$ should be removed
from the box integrand, and
\begin{equation}
  \Delta_\text{box}=c_0 + c_1 (l \cdot \omega).
\label{box_integrand_basis}
\end{equation}
This is the {\it integrand basis} for the $4D$ box, which contains only
$2$ terms. Note that by Lorentz symmetry,
\begin{equation}
  \label{eq:11}
  \int d^D l \frac{l\cdot \omega}{D_1 D_2 D_3 D_4} =0,
\end{equation}
for any value of $D$. So $c_1$ should not appear in the final
expression of scattering amplitude. We call such a term a {\it spurious term}.
But it is important for integrand
reduction, as we will see soon.

There are two ways of using the integrand basis
\eqref{box_integrand_basis},
\begin{enumerate}
\item Direct integrand reduction (IR-D). If the numerator $N$ is known, for instance from Feynman rules, we can use \eqref{RSP_box} and
  \eqref{box_quadratic_eqn} to reduce $N$ explicitly to get $c_0$ and
  $c_1$. $h_1D_1+\ldots h_4D_4$ is kept for further triangle,
  bubble ... computations.

\item Integrand reduction with unitarity (IR-U). Sometimes, it is
  more efficient to fit the coefficients $c_0$ and $c_1$ from tree
  amplitudes, by unitarity. Here $c_0$ and $c_1$ correspond to the
  remaining information at the quadruple cut,
  \begin{equation}
    \label{eq:12}
    D_1 = D_2 =D_3 =D_4=0.
  \end{equation}
From \eqref{RSP_box} and \eqref{box_quadratic_eqn_cut}, there are two
solutions for $l$, namely $l^{(1)}$ and $l^{(2)}$, characterized by,
\begin{align}
  \label{box_solution}
  \text{(1)}\quad x_1&=0,\quad x_2=\frac{s}{2},\quad x_3=0,\quad x_4=\frac{t}{2}\,,\\
 \text{(2)}\quad x_1&=0,\quad x_2=\frac{s}{2},\quad x_3=0,\quad
                        x_4=-\frac{t}{2}\,.
\end{align}
On this cut, the box diagram becomes four tree diagrams,
summed over different {\it on-shell} massless internal states. 
\begin{gather}
  \label{tree_product}
  S^{(i)}_\text{box}=\sum_{h_1} \sum_{h_2} \sum_{h_3}\sum_{h_4} A(k_1,
 l^{(i)}-k_1, -l^{(i)};s_1, h_2, -h_1)\times \nn\\
A(k_2, l^{(i)}-k_1-k_2,k_1-l^{(i)};s_2, h_3, -h_2) A(k_3,
l^{(i)}+k_4,k_1+k_2-l^{(i)};s_3, h_4, -h_3)\nn\\
\times A(k_4,l^{(i)},-k_4-l^{(i)};s_4,h_1,-h_4)\,,
\end{gather}
where $s_i$'s stand for external particles helicities, while $h_i$'s
stand for internal particles helicities and should be
summed. Unitarity implies that,
\begin{equation}
  \label{box_unitarity}\left\{
  \begin{array}{cc}
    c_0+\frac{t}{2}c_1=S^{(1)}_\text{box} & \\
    c_0-\frac{t}{2}c_1=S^{(2)}_\text{box} & \\
  \end{array}
\right. .
 \end{equation}
Generically, there is a unique solution for $(c_0,c_1)$. Here we see
the importance of the box integrand basis
\eqref{box_integrand_basis}. If there are fewer than $2$ terms in the
basis (oversimplified), then the integrand cannot be fitted from
unitarity.  If there are more than $2$ terms in the
basis (redundant), then the integrand will contain free
parameters which mess up the amplitude computation for following
steps.
\end{enumerate}

\subsection{Triangle diagram}
After the box integrand reduction is done, we proceed to the triangle
cases. Note that there are more than one triangle diagrams, in a
$4$-point scattering process, by pinching one internal line. Consider
this one,
\begin{equation}
  \label{eq:16}
  I=\int \frac{d^4l }{i \pi^2} \frac{N_\text{tri}}{D_1D_2D_3}\,,
\end{equation}
where external lines $3$ and $4$ are combined. 
\begin{figure}
\centering
\includegraphics[scale=0.8]{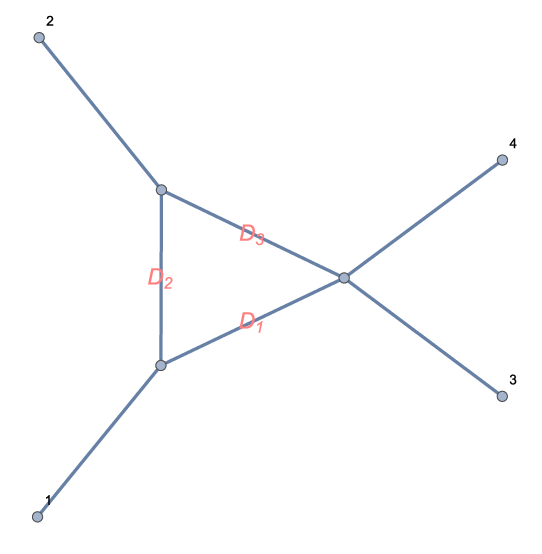}
\caption{One-loop triangle diagram}
\label{box}
\end{figure}
The kinematics is much
simpler than that of the box case. Besides $\omega$, we introduce another
imaginary auxiliary vector,
\begin{equation}
  \label{eq:17}
  \tilde \omega=i\bigg(-\frac{s+t}{t} k_1+\frac{t}{s} k_2-k_4\bigg)\,.
\end{equation}
Then,
\begin{gather}
  \tilde \omega \cdot  k_1=0,\quad \tilde \omega\cdot  k_2=0,\quad
  \omega\cdot \tilde \omega=0, \quad (\tilde \omega)^2=\omega^2=-\frac{t(s+t)}{s}.
\end{gather}
Note that the momentum $k_4$ does not appear in propagators of
this triangle diagram, so we would better replace the variable
$x_3=l\cdot k_4$ by a new variable $y_3\equiv l\cdot \tilde \omega$,
\begin{equation}
  \label{eq:20}
  x_3=-\frac{s+t}{s} x_1+\frac{t}{s} x_2+i y_3 \,.
\end{equation}

The integrand reduction for triangle reads
$N_\text{tri}=\Delta_\text{tri}+h_1D_1+h_2D_2+h_3D_3$. Generically,
\begin{equation}
  \label{eq:21}
  N_\text{tri} =\sum_{m_1} \sum_{m_2} \sum_{m_3} \sum_{m_4} d_{m_1m_2m_3m_4}x_1^{m_1} x_2^{m_2} y_3^{m_3} x_4^{m_4}\,,
\end{equation}
with the renormalization constraint that $m_1+m_2+m_3+m_4 \leq
3$ \cite{Ossola:2006us}. Here we already replaced $x_3$. Again,
\begin{align}
  x_1 &= l \cdot p_1 =\half (D_1-D_2)\,, \nn \\
  x_2 &= l \cdot p_2 =\half (D_2-D_3)+\frac{s}{2}\,. 
\label{RSP_tri}
\end{align}
we have $2$ RSPs, $x_1$, $x_2$ and $2$ ISPs, $y_3$, $x_4$. Again, from
$D_1=l^2$, we have
\begin{equation}
  \label{eq:22}
  y_3^2+x_4^2=\mathcal O (D_i) \,,
\end{equation}
which means we can trade $y_3^2$ for $x_4^2$. Hence with the
renormalization condition,
\begin{equation}
  \Delta_\text{tri}=d_0' +d_1' y_3+ d_2' x_4 +d_3' y_3 x_4 + d_4'
  x_4^2+d_5' y_3 x_4^2+d_6' x_4^3 \,.
\end{equation}
which contains $7$ terms. By Lorentz symmetry,
\begin{equation}
  \int d^Dl\frac{y_3^m x_4^n}{D_1D_2D_3}=0 \,,
\end{equation}
as long as $m$ is odd or $n$ is odd. It seems that $x_4^2$ term 
survives the integration. To further simplify the integral,
we redefine the integrand basis,
\begin{equation}
  \Delta_\text{tri}=d_0+d_1 y_3+ d_2 x_4 +d_3 y_3 x_4 + d_4
  (x_4^2-y_3^2)+d_5 y_3 x_4^2+d_6 x_4^3 \,.
\label{integrand_basis_tri}
\end{equation}
By the symmetry between $\tilde \omega$ and $\omega$, the term
proportional to $d_4$ integrates to zero. Hence, the integrand basis
of triangle contains $1$ scalar integral and $6$ spurious
terms.\footnote{We use the massless case as an illustrative
  example. Actually for a triangle diagram with two massless external
  lines, the scalar integral itself can be further reduced to bubble
  integrals, via IBPs.}

To use this basis, again, there are two manners as in the previous
section. 
\begin{enumerate}
\item (IR-D). Suppose that the box integrand reduction is finished and
  the triangle diagram integrand is obtained, say from Feynman
  rules. We combine the triangle integrand and the term proportional to
  $D_4$ in \eqref{box_IR}, and carry out the reduction process in this
  section explicitly. Finally, we get coefficients $d_0,\ldots,d_6$.
\item (IR-U). The goal is to determine $d_0,\ldots,d_6$ from
  unitarity. We need the triple cut,
  \begin{equation}
    \label{eq:10}
    D_0=D_1=D_3=0 \,,
  \end{equation}
\end{enumerate}
There are two branches of solutions,
\begin{align}
  \label{tri_cut_solution}
  \text{(1)}\quad x_1&=0,\quad x_2=\frac{s}{2},\quad y_3=i z,\quad x_4=z \,,\\
 \text{(2)}\quad x_1&=0,\quad x_2=\frac{s}{2},\quad y_3=-i z,\quad x_4=z \,,
\end{align}
where for each branch $z$ is a free parameter. On this cut, the
numerator becomes a sum of products of tree amplitudes,
\begin{gather}
  \label{eq:14}
  S^{(i)}_\text{tri}(z)=\sum_{h_1} \sum_{h_2} \sum_{h_3} A(k_1,
 l^{(i)}-k_1, \c -l^{(i)};s_1, h_2, \c -h_1)(z)\times \nn\\
A(k_2, l^{(i)}\c -k_1\c -k_2,k_1-l^{(i)};s_2, h_3, -h_2)(z) A(k_3,k_4,
l^{(i)},k_1+k_2\c -l^{(i)};s_3,s_4, h_1, \c -h_3)(z)\nn \,.\\
\end{gather}
for $i=1,2$. We try to fit coefficients in $\Delta_\text{tri}$ with
$S^{(i)}_\text{tri}(z)$. However, the new issue is that $\Delta_\text{tri}$
on either branch, is a polynomial of $z$. $S^{(i)}_\text{tri}(z)$ in
general is not a polynomial of $z$, since the last tree amplitude may
have a pole when $(l+p_4)^2=0$. On the cut,
\begin{equation}
  \label{eq:1}
  \frac{1}{(l+p_4)^2}=\frac{1}{t+2 i y_3}\,,
\end{equation}
which becomes a fraction in $z$ for each branch. Note that this pole
is from quadruple cut, hence we have to subtract the box integrand basis to
avoid the double counting. The correct unitarity relation is,
\begin{equation}
  \Delta_\text{tri}\big(l^{(i)}(z)\big)=S^{(i)}_\text{tri}(z)
  -\frac{c_0+c_1 \big(l^{(i)}(z)\cdot
    \omega\big)}{\big(l^{(i)}(z)+p_4\big)^2}\,, \quad i=1,2  \,.
\label{OPP_subtraction}
\end{equation}
If $c_0$ and $c_1$ are known from box integrand reduction, then
both sides of the equation are polynomials in $z$ and Tylor
expansions determine coefficients $d_0,\ldots,d_6$. \footnote{Note that in
  general, for a massive triangle diagram, the two cut branches may
  merge into one. In this case, a Laurent expansion over $z$ is needed
  and \eqref{OPP_subtraction} again remove the redundant pole.}

The further reduction for bubbles is similar.

\subsection{D-dimensional one-loop integrand reduction}
Dimensional regularization is a standard way for QFT
renormalization. Here we briefly introduce OPP integrand reduction
\cite{Ossola:2007ax,Giele:2008ve,Ellis:2011cr} in D-dimension for
one-loop diagrams.

Again, consider the four-point massless box integral in $D=4-2\epsilon$, 
\begin{equation}
  \label{box_D}
  I_\text{box}^D[N]=\int \frac{d^D l}{i \pi^{D/2}} \frac{N^D_\text{box}}{D_1
    D_2 D_3 D_4}\,,
\end{equation}
with the same definition of $D_i$'s. The loop momentum $l$ contains
two parts $l=l^{[4]}+l^\perp$, where $l^{[4]}$ is the four-dimensional
part and $l^\perp$ is the component in the extra dimension. 
\begin{equation}
  \label{eq:24}
  l^2=(l^{[4]})^2+(l^\perp)^2=(l^{[4]})^2 -\mu_{11}\,.
\end{equation}
Here we introduce a variable $\mu_{11}=-(l^\perp)^2$. We use the
scheme such that all external particles are in $4D$, hence,
\begin{equation}
  \label{eq:25}
  (l^\perp) \cdot k_i=0,\quad i=1,\ldots, 4
\end{equation}
and similar orthogonal conditions hold between $l^\perp$ and external
polarization vectors hold. This implies $l^\perp$ appears in the
integrand only in the form of $\mu_{11}$. $l^{[4]}$ is
parameterized by the same van Neerven-Vermaseren variables $x_1,\ldots
x_4$, as before. Therefore,
\begin{equation}
  \label{eq:26}
  N^D_\text{box}=\sum_{m_1} \sum_{m_2} \sum_{m_3} \sum_{m_4} \sum_m
  c_{m_1 m_2 m_3 m_4 m}x_1^{m_1} x_2^{m_2} x_3^{m_3} x_4^{m_4} \mu_{11}^m\,,
\end{equation}
with the renormalization condition $m_1+m_2+m_3+m_4+2m\leq
4$. ($\mu_{11}$ contains $2$ powers of $l$.) Again, as in the 4D case,
\begin{eqnarray}
\label{box_D_RSP}
x_1 = \half (D_1-D_2),\quad x_2 =\half (D_2-D_3)+\frac{s}{2},\quad 
 x_3 = \half (D_4-D_1),
\end{eqnarray}
so $x_1$, $x_2$ and $x_3$ are RSPs which do not appear in the
integrand basis. The ISPs are $x_4$ and $\mu_{11}$. From the relation
$D_1=(l^{[4]})^2 -\mu_{11}$, we get,
\begin{equation}
  \label{eq:27}
  x_4^2=\frac{t^2}{4}-\frac{(s+t)t}{s} \mu_{11} +\mathcal O(D_i)\,,
\end{equation}
Hence we can trade $x_4^2$ for $\mu_{11}$ in the integrand basis,
\begin{equation}
  \Delta_\text{box}^D=c_0+c_1 x_4+ c_2 \mu_{11}+ c_3 \mu_{11} x_4+c_4
  \mu_{11}^2\,,
\label{box_integrand_basis_D}
\end{equation}
which contains $5$ terms. The terms proportional to $x_4$ are again
spurious, i.e., integrated to zero. 

The coefficients $c_0,\ldots c_4$ can either be calculated from
explicit reduction (IR-D) or unitarity (IR-U). For the latter, the
quadruple cut $D_1=D_2=D_3=D_4=0$ is applied. There is one family of solutions
which is
one-dimensional,
\begin{equation}
  \label{eq:29}
  x_1=0,\quad x_2=\frac{s}{2},\quad x_3=0,\quad x_4=z, \quad \mu_{11}=\frac{s(t^2-4z^2)}{4t(s+t)}.
\end{equation}
Amazingly, the $4D$ quadruple cut contains two zero-dimensional solutions
while $D$-dim quadruple cut has only one family of solution. The two
roots in $4D$ are connected by a cut-solution curve, in DimReg. The Taylor series in $z$
fits coefficients $c_0,\ldots c_4$.

If only $\epsilon\to 0$ limit of the amplitudes is needed,
\eqref{box_integrand_basis_D} can be further simplified by
{\it dimension shift} identities,
\begin{align}
  \label{box_dimension_shift}
  \int\frac{d^D}{i\pi^{D/2}}
  \frac{\mu_{11}}{D_1D_2D_3D_4}&=\frac{D-4}{2} I_\text{box}^{D+2}[1]\\
\int\frac{d^D}{i\pi^{D/2}}
  \frac{\mu_{11}^2}{D_1D_2D_3D_4}&=\frac{(D-4)(D-2)}{4} I_\text{box}^{D+4}[1]
\end{align}
These identities can be proven via Baikov parameterization (Chapter
\ref{cha:integr-parts-reduct}) . It is well
known that the $6D$ scalar box integral is finite and the $8D$ scalar box is
UV divergent such that,
\begin{align}
  \label{eq:31}
\lim_{D\to 4}\frac{D-4}{2} I_\text{box}^{D+2}[1] &=0,\\
  \lim_{D\to 4}\frac{(D-4)(D-2)}{4} I_\text{box}^{D+4}[1] &=-\frac{1}{3}.
\end{align}
Hence the integrand basis after integration becomes,
\begin{eqnarray}
  \label{eq:32}
  \lim_{D\to 4}\int\frac{d^Dl}{i\pi^{D/2}}
  \frac{\Delta^D_\text{box}}{D_1D_2D_3D_4} &=& c_0
                                               I_\text{box}^{D}[1]-\frac{1}{3} c_4
\end{eqnarray}
in the $\epsilon\to 0$ limit. 
The second term is called a {\it rational term}, which cannot be
obtained from the $4D$ quadruple cut. 

It seems that $D$-dimensional integrand reduction is more complicated
than the $4D$ case, with more variables and more integrals in the basis. However, it provides the complete amplitude for a
general renormalizable QFT, and mathematically, its cut solution has
simpler structure.

OPP method is programmable and highly efficient for automatic one-loop
amplitude computation \cite{Ossola:2007ax,Badger:2010nx,Cullen:2011xs,Hirschi:2011pa}. 

\section{Issues at higher loop orders}
Since OPP method is very convenient for one-loop cases, the natural question
is: is it possible to generalize OPP method for higher loop orders?

Of course, higher loop diagrams contain more loop momenta and usually
more propagators. Is it a straightforward generalization? The answer
is ``no''. For example, consider the $4D$
$4$-point massless double box diagram (see Fig. \ref{graph_dbox}),
\begin{figure}
\centering
\includegraphics[scale=0.9]{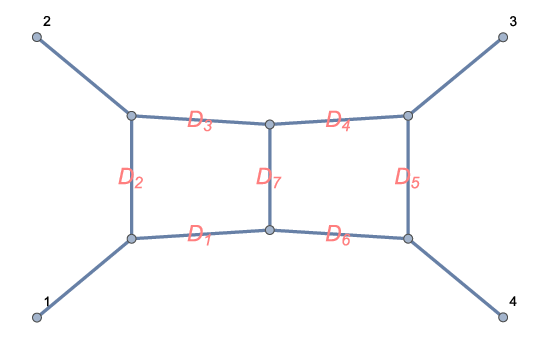}
\caption{two-loop double box diagram}
\label{graph_dbox}
\end{figure}
associated with the integral,
\begin{eqnarray}
  \label{eq:33}
  I_\text{dbox}[N]=\int \frac{d^4 l_1}{i \pi^2}\frac{d^4 l_2}{i \pi^2}\frac{N}{D_1 D_2 D_3 D_4 D_5 D_6 D_7}.
\end{eqnarray}
The denominators of propagators are,
\begin{gather}
  \label{dbox_propagators}
  D_1=l_1^2,\quad D_2=(l_1-k_1)^2,\quad D_3=(l_1-k_1-k_2)^2,\quad D_4=(l_2+k_1+k_2)^2, \nn\\
D_5=(l_2-k_4)^2,\quad
D_6=l_2^2,\quad D_7=(l_1+l_2)^2\,.
\end{gather}
The goal of reduction is to express,
\begin{gather}
  \label{dbox_IR}
  N_\text{dbox}=\Delta_\text{dbox} +h_1 D_1 +\ldots +h_7 D_7
\end{gather}
such that $\Delta_\text{dbox}$ is the ``simplest''. (In the sense that
all its coefficients in $\Delta_\text{dbox}$ can be uniquely fixed
from unitarity, as in the box case.) 

We use van Neerven-Vermaseren basis as before,
$\{e_1,e_2,e_3,e_4\}=\{k_1,k_2,k_4,\omega\}$. Define
\begin{equation}
  \label{dbox_vNV}
  x_i=l_1\cdot e_i, \quad y_i=l_2\cdot e_i,\quad i=1,\ldots 4.
\end{equation}
Then we try to determine $\Delta_\text{dbox}$ in these variables like
one-loop OPP method. 
\begin{align}
  x_1 &= \half (D_1-D_2)\,, \nn \\
  x_2 &=\half (D_2-D_3)+\frac{s}{2}\,, \nn \\
y_2 &= \half (D_4-D_6)-y_1-\frac{s}{2}\,, \nn \\
  y_3 &= \half (D_6-D_5) \,,
 \label{dbox_RSP}
\end{align}
Hence we can remove RSPs: $x_1$, $x_2$, $y_2$ and $y_3$ in
$\Delta_\text{dbox}$. (We trade $y_2$ for $y_1$, by 
symmetry consideration: under the left-right flip symmetry of double
box, $x_3 \leftrightarrow y_1$. ) There are $4$ ISPs, $x_3$, $y_1$,
$x_4$ and $y_4$.

Then following the one-loop OPP approach, the quadratic terms in $(l_i
\cdot \omega)$ can be removed from the integrand basis, since,
\begin{align}
  x_4^2&=x_3^2-t x_3+\frac{t^2}{4} +\mathcal O(D_i)\,,\nn\\
 y_4^2&=y_1^2-t y_1+\frac{t^2}{4}+\mathcal O(D_i) \,,\nn\\
 x_4 y_4 &=\frac{s+2t}{s} x_3
             y_1+\frac{t}{2}x_3+\frac{t}{2}y_1-\frac{t^2}{4}+\mathcal
             O(D_i) \,.
 \label{dbox_quadratic}
\end{align}
Then the trial version of integrand basis has the form,
\begin{eqnarray}
  \label{eq:28}
  \Delta_\text{dbox}=\sum_m \sum_n \sum_\alpha \sum_\beta
  c_{m,n,\alpha,\beta} x_3^{m} y_1^{n} x_4^{\alpha} y_4^{\beta} \,,
\end{eqnarray}
where $(\alpha,\beta)\in\{(0,0),(1,0),(0,1)\}$. The renormalization
condition is,
\begin{eqnarray}
  \label{eq:34}
  m+\alpha\leq 4,\quad n+\beta\leq 4,\quad m+n+\alpha+\beta\leq 6\, . 
\end{eqnarray}
By counting, there are $56$ terms in the basis. Is this basis correct?

Have a look at the unitarity solution. The heptacut $D_1=\ldots D_7=0$
has a complicated solution structure \cite{Kosower:2011ty}. (See table. \ref{dbox_sol}).
\begin{table}
    \centering
\begin{tabular}{|c|c|c|c|c|c|c|c|c|}
\hline
& $x_1$ & $x_2$ & $x_3$ & $x_4$ & $y_1$ & $y_2$ & $y_3$ & $y_4$ \\
\hline
(1)& $0$ & $\frac{s}{2}$ & $z_1$ & $z_1-\frac{t}{2} $ & $0$ & $-\frac{s}{2}$
                                                & 0 & $\frac{t}{2}$\\
\hline
(2)& $0$ & $\frac{s}{2}$ & $z_2$ & $-z_2+\frac{t}{2} $ & $0$ & $-\frac{s}{2}$
                                                & 0 & $-\frac{t}{2}$\\
\hline
(3)& $0$ & $\frac{s}{2}$ & $0$ & $\frac{t}{2}$ & $z_3$ & $-z_3-\frac{s}{2}$ &
                                                                       $0$
                                                        &
                                                          $z_3-\frac{t}{2}$
  \\
\hline
(4)& $0$ & $\frac{s}{2}$ & $0$ & -$\frac{t}{2}$ & $z_4$ & $-z_4-\frac{s}{2}$ &
                                                                       $0$
                                                        &
                                                          $-z_4+\frac{t}{2}$
  \\
\hline
(5) & $0$ & $\frac{s}{2}$ & $\frac{z_5-s}{2}$ & $\frac{z_5-s-t}{2}$ &
                                                              $\frac{s(s+t-z_5)}{2z_5}$
                                        & $-\frac{s(s+t)}{2z_5}$ & $0$ &
                                                                       $\frac{(s+t)(s-z_5)}{2z_5}$
  \\
\hline
(6)& $0$ & $\frac{s}{2}$ & $\frac{z_6-s}{2}$ & $\frac{-z_6+s+t}{2}$ &
                                                              $\frac{s(s+t-z_6)}{2z_6}$
                                        & $-\frac{s(s+t)}{2z_6}$ & $0$ &
                                                                       $-\frac{(s+t)(s-z_6)}{2z_6}$\\
\hline
\end{tabular}
    \caption{solutions of the $4D$ double box heptacut.}
\label{dbox_sol}
  \end{table}
There
are $6$ branches of solutions, each of which is parameterized by a
free parameter $z_i$. Solutions (5) and (6) contain poles in $z_i$,
hence we need Laurent series for tree products,
\begin{equation}
  \label{eq:35}
  S^{(i)}=\sum_{k=-4}^4 d_k^{(i)} z_i^k,\quad i=5,6\,.
\end{equation}
The bounds are from renormalization conditions, so there are $9$ nonzero coefficients for
each case. Solutions (1), (2), (3), (4) are relatively simpler, 
\begin{equation}
  \label{eq:35}
  S^{(i)}=\sum_{k=0}^4 d_k^{(i)} z_i^k,\quad i=1,2,3,4\,. 
\end{equation}
So there are $5$ nonzero coefficients for
each case. These solutions are not completely indenpendent, for example, solution
(1) at $z_1=s$
and solution (6) at $z_6=t/2$ correspond to the same loop momenta. Therefore, 
\begin{equation}
  \label{eq:36}
  S^{(1)}(z_1\to s)=S^{(6)}(z_6 \to t/2)\,.
\end{equation}
There are $6$ such intersections, namely between solutions (1) and (6), (1)
and (4), (2) and (3), (2) and (5), (3) and (6), (4) and (5). Hence,
there are $9\times 2+5\times 4 -6=32$ independent $d_k^{(i)}$'s. 

Now the big problem emerges,
\begin{equation}
  \label{eq:38}
  56>32\,.
\end{equation}
There are more terms in the integrand basis than those 
determined from unitarity cut. That means this integrand basis is 
redundant. However, it seems that we already used all algebraic
constraints in \eqref{dbox_RSP} and \eqref{dbox_quadratic}. Which
constraint is missing?

We need to reconsider \eqref{dbox_IR}, especially the meaning of
``simplest'' integrand basis. For simple example like massless double
box diagram, it is possible to use the detailed structures like
symmetries and Gram determinant constraints, to get a proper integrand basis \cite{Mastrolia:2011pr, Badger:2012dp}.
However, in general, we need an automatic reduction
method, without looking at the details. So we refer to
a new mathematical approach, {\it computational algebraic geometry}.

\section{Elementary computational algebraic geometry methods}
\subsection{Basic facts of algebraic geometry in affine space I}
In order to apply the new method, we need to list some basic concepts
and facts on algebraic geometry \cite{MR0463157}.

We start from a polynomial ring $R=\F[z_1,\ldots z_n]$ which is the
collection of all polynomials in $n$ variables
$z_1,\ldots z_n$ with coefficients in the {\it field} $\F$. For
example, $\F$ can
be $\Q$, the rational numbers, $\C$, the complex numbers, $\Z/p\Z$, the {\it finite
field} of integers modulo a prime number $p$, or $\C(c_1,c_2,\ldots
c_k)$, the complex rational functions of parameters $c_1,\ldots ,c_k$.

Recall that the right hand side of \eqref{dbox_IR} contains the sum
$h_1 D_1 +\ldots + h_7 D_7$ where $D_i$'s are known polynomials and
$h_i$'s are arbitrary polynomials. What are general properties of such
a sum? That leads to the concept of {\it ideal}.

\begin{definition}
  An ideal $I$ in the polynomial ring $R=\F[z_1,\ldots z_n]$ is a
  subset of $R$ such that,
  \begin{itemize}
\item $0\in I$. For any two $f_1,f_2\in I$, $f_1+f_2 \in I$. For any
  $f\in I$, $-f\in I$.
\item For  $\forall f \in I$ and $\forall h \in R$, $h f\in I$.
  \end{itemize}
\end{definition}

The ideal in the polynomial ring $R=\F[z_1,\ldots z_n]$ generated
  by a subset $S$ of $R$ is the collection of all such polynomials,
  \begin{equation}
    \sum_i h_i f_i, \quad h_i\in R, \quad f_i\in S.
\label{Ideal_generator}  
\end{equation}
 This ideal is denoted as $\langle S \rangle$. In particular, $\langle
 1 \rangle=R$, which is an ideal which contains all polynomials. Note that even if $S$ is an
 infinite set, the sum in \eqref{Ideal_generator}  is always restricted to 
 a sum of a finite number of terms. $S$ is called the generating set of
 this ideal.
 \begin{example}
\label{example_ideal}
  Let $I=\langle x^2+y^2+z^2-1, z\rangle$ in $\Q[x,y,z]$. By definition,
  \begin{equation}
    I=\{h_1 (x^2+y^2+z^2-1)+h_2\cdot z,\ \forall h_1,h_2\in R\}\,,
  \end{equation}
Pick up $h_1=1$, $h_2=-z$, and we see $x^2+y^2-1\in I$. Furthermore, 
\begin{equation}
  x^2+y^2+z^2-1 = (x^2+y^2-1)+z\cdot z\,.
\end{equation}
Hence $I=\langle  x^2+y^2-1, z\rangle$. We see that, in general, the
generating set of an ideal is not unique.
 \end{example}
Our integrand reduction problem can be rephrased as: given $N$ and the
ideal $I=\langle D_1 ,\ldots, D_7 \rangle$, how many terms in $N$ are
in $I$? To answer this, we need to study properties of ideals.
\begin{thm}[Noether]
\label{thm_Noether}
   The generating set of an ideal $I$ of $R=\F[z_1,\ldots z_n]$ can
   always be
   chosen to be finite.
 \end{thm}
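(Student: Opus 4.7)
The plan is to prove the stronger Hilbert Basis Theorem: if $R$ is any ring in which every ideal is finitely generated (a Noetherian ring), then $R[x]$ is Noetherian as well. Granting this, the theorem follows by induction on $n$: the base $\F[z_1]$ is Noetherian because $\F$ itself is (its only ideals are $\langle 0 \rangle$ and $\F$), and then $\F[z_1,\ldots,z_n] = \F[z_1,\ldots,z_{n-1}][z_n]$ inherits the property at each step.

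For the inductive step, let $I$ be an arbitrary ideal of $R[x]$ with $R$ Noetherian. The central construction is to track leading coefficients. Define $L(I) \subseteq R$ to consist of $0$ together with the leading coefficients of all nonzero elements of $I$. One checks directly from the ideal axioms that $L(I)$ is an ideal of $R$, so by hypothesis $L(I) = \langle a_1, \ldots, a_r \rangle$ for some $a_i$. Choose $f_1,\ldots,f_r \in I$ having these as leading coefficients, and set $d = \max_i \deg f_i$. Separately, for each degree $k$ with $0 \leq k < d$, let $L_k(I) \subseteq R$ be the ideal of leading coefficients of elements of $I$ of degree at most $k$; each $L_k(I)$ is also finitely generated, say $L_k(I) = \langle b_{k,1},\ldots,b_{k,s_k}\rangle$, and lift these generators to polynomials $g_{k,j} \in I$ of degree at most $k$.

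I claim the finite collection $\{f_1,\ldots,f_r\}\cup\{g_{k,j} : k<d,\ 1\le j\le s_k\}$ generates $I$. The argument is a descent on degree applied to an arbitrary $h \in I$: if $\deg h \geq d$, the leading coefficient of $h$ lies in $L(I)$, so one can write it as an $R$-linear combination of the $a_i$ and subtract a matching $R[x]$-combination of the $f_i$ (multiplied by appropriate powers of $x$) to strictly reduce $\deg h$. Once $\deg h < d$, the leading coefficient of $h$ lies in $L_{\deg h}(I)$, and the same cancellation works using the $g_{k,j}$'s. Iterating drives $h$ to $0$, proving that $h$ lies in the ideal generated by our finite set.

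The only delicate point is the bookkeeping in the reduction step: one must ensure each subtraction strictly lowers the degree without resurrecting higher-degree terms. This is precisely why the generators are split into a high-degree family $\{f_i\}$ (used uniformly for all $\deg h \geq d$) and the stratified low-degree families $\{g_{k,j}\}$; without the stratification below $d$, there would be no guarantee that the leading coefficients of low-degree elements of $I$ are captured by the $a_i$. I do not expect genuine technical obstacles—this is the standard proof—and the induction on $n$ that yields the final statement is routine once the one-variable case is settled.
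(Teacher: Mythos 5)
Your argument is correct: it is the standard proof of the Hilbert Basis Theorem, with the induction $\F[z_1,\ldots,z_n]=\F[z_1,\ldots,z_{n-1}][z_n]$ and the leading-coefficient ideals $L(I)$ and $L_k(I)$ handling the two degree regimes exactly as required. The paper itself does not prove this statement—it only cites Zariski and Samuel—so you have supplied the classical proof that the reference contains rather than taken a different route.
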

 \begin{proof}
   See Zariski, Samuel \cite{MR0384768}.
 \end{proof}
This theorem implies that we only need to consider ideals generated by
finite sets in the polynomial ring $R$. 
\begin{definition}
\label{quotient_ring}
  Let $I$ be an ideal of $R$, we define an equivalence relation,
  \begin{equation}
    \label{eq:59}
    f\sim g,\quad \text{if and only if } f-g\in I\,.
  \end{equation}
We define an equivalence class, $[f]$ as the set of all $g\in R$ such that
$g\sim f$. The {\it quotient ring} $R/I$ is set of
  equivalence classes,
  \begin{equation}
    R/I=\{[f]| f\in R\}\,.
  \end{equation}
with multiplication $[f_1][f_2]\equiv [f_1f_2]$. (Check this
multiplication is well-defined.)
\end{definition}
To study the structure of
an ideal, it is very useful to consider the algebra-geometry relation.
\begin{definition}
 Let $\mathbb K$ be a field, $\F\subset \K$. The $n$-dimensional $\K$-affine space $\mathbf A^n_\K$ is the set of all $n$-tuple
 of $\K$. Given a subset $S$ of the polynomial ring $\F[z_1,\ldots,z_n]$, its {\it algebraic
 set} over $\K$ is,
 \begin{equation}
   \label{eq:8}
   \mathcal Z_\K(S)=\{p\in \mathbf A^n_\K | f(p)=0,\ \text{for every } f \in S\}.
 \end{equation}
If $\K=\F$, we drop the subscript $\K$ in $\mathbf
A^n_\K$ and $\mathcal Z_\K(S)$.
\end{definition}
So the algebraic set $\mathcal Z(S)$ consists of all {\it common solutions}
of polynomials in $S$. Note that to solve polynomials in $S$ is
equivalent 
to solve all polynomials simultaneously in the ideal generated by $S$,
\begin{equation}
  \label{eq:18}
  \mathcal Z(S)=\mathcal Z(\langle S \rangle ),
\end{equation}
since if $p\in \mathcal Z(S)$, then $f(p)=0$, $\forall f\in S$. Hence, 
\begin{equation}
  \label{eq:37}
  h_1(p) f_1(p) + \ldots + h_k(p) f_k(p)=0,\quad \forall h_i \in R,\  \forall f_i \in S.
\end{equation}
So we always consider the algebraic set of an ideal.

For example, $\mathcal Z(\langle 1 \rangle )=\emptyset$ (empty set) since $1\not =0$. For the ideal $I=\langle x^2+y^2+z^2-1, z\rangle$ in
example \ref{example_ideal}, $\mathcal Z(I)$ is the unit circle on the plane
$z=0$. 

We want to learn the structure of an ideal from its algebraic
set. First, for the empty algebraic set,
\begin{thm}[Hilbert's weak Nullstellensatz]
\label{weak_Nullstellensatz}
 Let $I$ be an ideal of $\Fpoly$ and $\K$ be an algebraically closed
 field \footnote{A field $\K$ is algebraically closed, if any
   non-constant polynomial in $\K[x]$ has a solution in $\K$. $\Q$ is
   not algebraically closed, the set of all algebraic numbers $\bar Q$ and
 $\C$ are algebraically closed.} ,
 $\F\subset \K$. If $\mathcal Z_\K(I)=\emptyset$, then $I=\langle 1 \rangle$.
\end{thm}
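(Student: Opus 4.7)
The plan is to prove the contrapositive: assume $I \neq \langle 1 \rangle$ and produce a point in $\mathcal Z_\K(I)$. First I would enlarge $I$ to a maximal ideal. Since $R = \F[z_1,\ldots,z_n]$ is Noetherian (Theorem \ref{thm_Noether}), any proper ideal is contained in some maximal ideal $\mathfrak m$, and $\mathcal Z_\K(\mathfrak m) \subseteq \mathcal Z_\K(I)$ because $I \subseteq \mathfrak m$. So it suffices to show that the algebraic set of any maximal ideal is nonempty.

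Next I would pass to the quotient. Let $K = R/\mathfrak m$. Since $\mathfrak m$ is maximal, $K$ is a field, and by construction it is generated as an $\F$-algebra by the classes $a_i = [z_i]$, so $K = \F[a_1,\ldots,a_n]$ is a finitely generated $\F$-algebra. The heart of the proof is the following algebraic fact, usually called Zariski's lemma: if a field $K$ is finitely generated as an algebra over a subfield $\F$, then $K$ is in fact a finite (hence algebraic) extension of $\F$. Granting this, I note that $\K$ is algebraically closed and contains $\F$, so every algebraic extension of $\F$ embeds into $\K$; choose such an embedding $\varphi : K \hookrightarrow \K$ and set $p = (\varphi(a_1),\ldots,\varphi(a_n)) \in \mathbf A^n_\K$. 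For any $f \in \mathfrak m$ we have $f(a_1,\ldots,a_n) = [f] = 0$ in $K$, hence $f(p) = \varphi(f(a_1,\ldots,a_n)) = 0$. Thus $p \in \mathcal Z_\K(\mathfrak m) \subseteq \mathcal Z_\K(I)$, contradicting $\mathcal Z_\K(I) = \emptyset$.

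The main obstacle is Zariski's lemma, which is the only nontrivial input. I would prove it by induction on the number of generators $n$, using Noether normalization or the following direct argument. Writing $K = \F[a_1,\ldots,a_n]$, let $F = \F(a_1)$ and view $K$ as a finitely generated $F$-algebra $K = F[a_2,\ldots,a_n]$, which by induction is algebraic and hence finite over $F$. If $a_1$ were transcendental over $\F$, then $F = \F(a_1)$ would be the field of rational functions in one variable; expressing each $a_i$ ($i \geq 2$) as a root of a polynomial with coefficients in $F$, one can clear denominators and find a single nonzero $g(a_1) \in \F[a_1]$ such that every element of $K$ lies in $\F[a_1, g(a_1)^{-1}]$. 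But $\F[a_1, g(a_1)^{-1}]$ is never all of $F$, because irreducible polynomials in $\F[a_1]$ coprime to $g$ are not invertible there (in $\F = \Q$ or any infinite field one uses infinitely many irreducibles; for finite $\F$ an Euclid-style argument produces a new irreducible). This contradiction forces $a_1$ to be algebraic over $\F$, and then $K$ is algebraic over $\F$; being finitely generated and algebraic it is finite over $\F$. Packaging this lemma with the reduction to maximal ideals and the embedding into $\K$ completes the proof.
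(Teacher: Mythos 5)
The paper offers no proof of this theorem; it simply cites Zariski and Samuel \cite[Chapter 7]{MR0389876}. Your proposal therefore provides an actual argument where the paper provides a reference, and the argument you give is the standard one: pass to a maximal ideal $\mathfrak m \supset I$, observe that $R/\mathfrak m$ is a field finitely generated as an $\F$-algebra, invoke Zariski's lemma to conclude it is a finite extension of $\F$, embed it into the algebraically closed field $\K \supset \F$, and read off a point of $\mathcal Z_\K(\mathfrak m) \subseteq \mathcal Z_\K(I)$. The overall structure is correct, and you handle exactly the subtlety that the paper's remark flags afterwards: it is $\K$, not $\F$, that must be algebraically closed, and algebraic closedness of $\K$ enters precisely where you use it, namely to embed the finite extension $R/\mathfrak m$ of $\F$ into $\K$.

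There is one genuine imprecision inside your sketch of Zariski's lemma. You claim one can find $g \in \F[a_1]$ so that ``every element of $K$ lies in $\F[a_1, g(a_1)^{-1}]$.'' That cannot be literally true: $\F[a_1, g(a_1)^{-1}]$ is contained in $F = \F(a_1)$, whereas $K$ is in general a proper finite extension of $F$, so elements of $K \setminus F$ do not lie there. The correct statement is that, after clearing denominators, each generator $a_i$ (for $i \geq 2$) satisfies a monic polynomial with coefficients in $\F[a_1, g^{-1}]$, so $K$ is \emph{integral} over $\F[a_1, g^{-1}]$. In particular $F \subseteq K$ is integral over $\F[a_1, g^{-1}]$; but $\F[a_1, g^{-1}]$ is a localization of the PID $\F[a_1]$ and hence integrally closed in its fraction field $F$, forcing $F = \F[a_1, g^{-1}]$. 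Only then does your final observation (infinitely many irreducibles in $\F[a_1]$, only finitely many dividing $g$) produce the contradiction. With that repair the induction closes and the proposal becomes a complete, self-contained proof of the theorem, which is strictly more than the paper offers.
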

\begin{proof}
  See Zariski and Samuel, \cite[Chapter 7]{MR0389876}.
\end{proof}

\begin{remark}
The field extension $\K$ must be algebraically closed.  Otherwise,
say, $\K=\F=\Q$, the ideal $\la x^2-2 \ra$ has empty algebraic set in
$\Q$. (The solutions are not rational). However, $\la x^2-2\ra\not=\la
1\ra$. On the other hand, $\F$ need not be algebraically
closed. $I=\langle 1 \rangle$ means,
\begin{equation}
  1=h_1 f_1 +\ldots + h_k f_k,\quad f_i \in I, \ h_i\in \Fpoly\,.
\end{equation}
where $h_i$'s coefficients are in $\F$, instead of an algebraic extension
of $\F$. 
\end{remark}
\begin{example}
  We prove that, generally, the $4D$ pentagon diagrams are reduced to
  diagrams with fewer than $5$ propagators, $D$-dimensional hexagon
  diagram are  reduced to
  diagrams with fewer than $6$ propagators, in the integrand level.

For the $4D$ pentagon case, there are $5$ denominators from
propagators, namely $D_1,\ldots D_5$. There are $4$ Van Neerven-Vermaseren variables for the loop
momenta, namely $x_1$, $x_2$, $x_3$ and $x_4$. So $D_i$'s are
polynomials in $x_1,\ldots,x_4$ with coefficients in
$\F=\Q(s_{12},s_{23},s_{34},s_{45},s_{15})$. Define $I=\la D_1,\ldots
D_5,\ra$. Generally $5$ equations in $4$ variables,
\begin{equation}
  \label{eq:40}
  D_1=D_2=D_3=D_4=D_5=0\,,
\end{equation}
have no solution (even with algebraic extensions). Hence by Hilbert's weak
Nullstellensatz, $I=\la 1\ra$. Explicitly, there exist $5$ polynomials
$f_i$'s in $\F[x_1,x_2,x_3,x_4]$ such that
\begin{equation}
  \label{eq:41}
 f_1D_1+f_2 D_2 +f_3 D_3 +f_4 D_4 +f_5 D_5=1\,.
\end{equation}
Therefore,
\begin{gather}
  \int d^4l \frac{1}{D_1 D_2 D_3 D_4 D_5}=\int d^4l \frac{f_1}{D_2 D_3
    D_4 D_5} +\int d^4l\frac{f_2}{D_1 D_3 D_4 D_5}+\int d^4l\frac{f_3}{D_1 D_2 D_4 D_5}\nn\\
\int d^4l\frac{f_4} {D_1 D_2 D_3 D_5}+\int d^4l\frac{f_5} {D_1 D_2 D_3 D_4}\,,
\end{gather}
where each term in the r.h.s is a box integral (or simpler). Note
that $f_i$'s are in $\F[x_1,x_2,x_3,x_4]$, so the coefficients
of these polynomials are rational functions of Mandelstam
variables $s_{12},s_{23},s_{34},s_{45},s_{15}$. Weak
Nullstellensatz theorem does not provide an algorithm for finding such
$f_i$'s. The algorithm will be given by the Gr\"obner
basis method in next subsection, or by the resultant method \cite{opac-b1094391}. 

Notice that in the DimReg case, we have one more variable
$\mu_{11}=-(l^\perp)^2$. The same argument using  Weak
Nullstellensatz leads to the result.
\end{example}
For a general algebraic set, we have the important theorem:
\begin{thm}[Hilbert's Nullstellensatz]
  Let $\F$ be an algebraically closed field and $R=\F[z_1,\ldots
  z_n]$. Let $I$ be an ideal of $R$. If $f\in R$ and,
  \begin{equation}
    \label{eq:44}
    f(p)=0,\quad \forall p\in \mathcal Z(I),
  \end{equation}
then there exists a positive integer $k$ such that $f^k\in I$. 
\end{thm}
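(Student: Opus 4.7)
The plan is to reduce the strong Nullstellensatz to the weak form (Theorem \ref{weak_Nullstellensatz}, already available) by the classical \emph{Rabinowitsch trick}. By Noether's theorem, write $I = \langle f_1, \ldots, f_m \rangle$. Introduce one auxiliary variable $t$ and form the enlarged ring $R' = \F[z_1,\ldots,z_n,t]$, and consider the ideal
\begin{equation*}
J = \langle f_1, \ldots, f_m,\ 1 - t f \rangle \subset R'.
\end{equation*}
The first move is to show that $\mathcal{Z}(J) = \emptyset$ in $\mathbf{A}^{n+1}_{\F}$. Indeed, suppose $(p, t_0) \in \mathcal{Z}(J)$; then $f_i(p) = 0$ for all $i$, so $p \in \mathcal{Z}(I)$, whence by hypothesis $f(p) = 0$, contradicting $1 - t_0 f(p) = 0$.

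Next, since $\F$ is algebraically closed, the weak Nullstellensatz applied to $J$ in $R'$ yields $J = \langle 1 \rangle$. Therefore there exist polynomials $h_1, \ldots, h_m, h \in \F[z_1,\ldots,z_n,t]$ with
\begin{equation*}
1 = \sum_{i=1}^m h_i(z,t)\, f_i(z) + h(z,t)\bigl(1 - t f(z)\bigr).
\end{equation*}
The remaining task is to extract an honest membership statement $f^k \in I$ from this identity. The plan here is to view both sides inside the field of fractions $\F(z_1,\ldots,z_n)$ and substitute $t = 1/f(z)$. The last term on the right then vanishes, leaving
\begin{equation*}
1 = \sum_{i=1}^m h_i\bigl(z, 1/f(z)\bigr)\, f_i(z)
\end{equation*}
as an identity of rational functions in $z$. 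Let $k$ be large enough to exceed the $t$-degree of every $h_i$; multiplying through by $f(z)^k$ clears all denominators, producing
\begin{equation*}
f(z)^k = \sum_{i=1}^m \tilde h_i(z)\, f_i(z)
\end{equation*}
with $\tilde h_i \in R$, which is exactly $f^k \in I$.

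I expect the main subtlety to be the bookkeeping in this clearing-of-denominators step: one must verify that after the substitution $t = 1/f$ the quantities $f^k h_i(z, 1/f)$ really are polynomials in $z$ alone, uniformly in $i$, for a single choice of $k$. This is routine once one bounds $k$ by the maximum $t$-degree among the $h_i$, but it is the only place where the argument uses anything beyond the weak Nullstellensatz and the definition of an ideal. Everything else is either a direct consequence of the hypothesis $f \equiv 0$ on $\mathcal{Z}(I)$ or an invocation of the previously stated theorem.
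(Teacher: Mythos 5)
Your proof is correct. The paper does not actually prove the strong Nullstellensatz; it simply cites Zariski--Samuel (where the argument is via Noether normalization and integral extensions). What you give is the classical \emph{Rabinowitsch trick} reduction to the weak Nullstellensatz, which is already stated in the notes as Theorem~\ref{weak_Nullstellensatz}. This is arguably the better pedagogical choice here: it is short, self-contained given the weak form, and it illustrates the ``add an auxiliary variable and a relation $1 - tf$'' device that recurs elsewhere in the notes (e.g.\ in Example~\ref{Galois_group}, where $w \cdot (\text{discriminant}) - 1$ plays exactly this role).

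All the steps go through. The case $f = 0$ is trivial since $0 \in I$, so one may assume $f \neq 0$ and the evaluation map $\F[z_1,\ldots,z_n,t] \to \F(z_1,\ldots,z_n)$, $z_i \mapsto z_i$, $t \mapsto 1/f$, is a well-defined ring homomorphism; applying it to the identity
\begin{equation*}
1 \;=\; \sum_{i=1}^m h_i(z,t)\, f_i(z) \;+\; h(z,t)\bigl(1 - t f(z)\bigr)
\end{equation*}
kills the last term as you say. Writing $h_i(z,t) = \sum_{j=0}^{d_i} c_{ij}(z)\, t^j$ with $d_i = \deg_t h_i$, one has $f^{d_i} h_i(z, 1/f) = \sum_{j=0}^{d_i} c_{ij}(z)\, f^{d_i - j} \in R$, so the single choice $k = \max_i d_i$ clears every denominator uniformly; the ``bookkeeping'' you flagged is indeed routine. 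Two further small points worth noting for a reader: the finite generating set $\{f_1,\ldots,f_m\}$ is supplied by Theorem~\ref{thm_Noether}, and the weak Nullstellensatz is applied here with $\K = \F$, which is legitimate precisely because the strong statement assumes $\F$ itself is algebraically closed.
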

\begin{proof} 
  See Zariski and Samuel, \cite[Chapter 7]{MR0389876}.
\end{proof}
Hilbert's Nullstellensatz characterizes all polynomials vanishing on
$\mathcal Z(I)$, they are ``not far away'' from elements in $I$. For example, $I=\la
(x-1)^2 \ra$ and $\mathcal Z(I)=\{1\}$. The polynomial $f(x)=(x-1)$ does not
belong to $I$ but $f^2\in I$. 

\begin{definition}
  Let $I$ be an ideal in $R$, define the {\it radical ideal} of $I$ as,
  \begin{equation}
    \label{eq:45}
    \sqrt I=\{f\in R| \exists k\in\Z^+, f^k\in I\}\,.
  \end{equation}
For any subset $V$ of $\mathbf A^n$, define the ideal of $V$ as 
\begin{equation}
  \label{eq:46}
  \mathcal I(V)=\{f\in R| f(p)=0, \ \forall p\in V\}\,.
\end{equation}
Then Hilbert's Nullstellensatz reads, over an algebraically closed
field,
\begin{equation}
  \label{eq:47}
   \mathcal I (\mathcal Z(I)) = \sqrt I\,. 
\end{equation}
An ideal $I$ is called {\it radical}, if $\sqrt I = I$. 
\end{definition}
If two ideals $I_1$ and $I_2$ have the same algebraic set $\mathcal
Z(I_1)=\mathcal Z(I_2)$, then they have the same radical ideals $\sqrt
I_1=\sqrt I_2$. On the other hand, if two sets in $\mathbb A^n$ have
the same ideal, what could we say about them? To answer this question,
we need to define topology of $\mathbb A^n$:
\begin{definition}[Zariski topology] Define Zariski topology of
  $\mathbf A^n_\F$ by setting all algebraic set to be topologically closed.
  (Here $\F$ need not be algebraic closed.)
  \end{definition}
\begin{remark}
  The intersection of any number of Zariski closed sets is closed since,
  \begin{equation}
    \bigcap_i \mathcal Z(I_i) =\mathcal Z(\bigcup_i I_i ).
\label{algebraic_set_intersection}
  \end{equation}
 The union of two closed sets is closed since,
\begin{equation}
    \mathcal Z(I_1)\bigcup \mathcal Z(I_2) =\mathcal Z( I_1 I_2
    )=\mathcal Z( I_1 \cap I_2 ).
\label{algebraic_set_union}
  \end{equation}
$\mathbf A^n_\F$ and $\emptyset$ are both closed because $\mathbf
A^n_\F=\mathcal Z(\{0\})$, $\emptyset=\mathcal Z(\la 1\ra)$. That means
Zariski topology is  well-defined. We leave the proof of 
\eqref{algebraic_set_intersection} and \eqref{algebraic_set_union} as
an exercise. 

Note that Zariski topology is different from the usual topology
defined by Euclidean distance, for $\F=\Q,\mathbb R,\C$. For example,
over $\C$, the ``open'' unit disc defined by $D=\{z||z|<1\}$ is not Zariski
open in $\mathbf A^1_\C$. The reason is that $\C-D=\{z||z|\geq 1\}$ is
not Zariski closed, i.e. $\C-D$ cannot be the solution set of one or
several complex polynomials in $z$. 
\end{remark}
Zariski topology is the foundation of affine algebraic geometry. With
this topology, the dictionary between algebra and geometry can be
established.
\begin{proposition}  (Here $\F$ need not be algebraic closed.)
  \begin{enumerate}
  \item If $I_1 \subset I_2$ are ideals of $\Fpoly$, $\mathcal
    Z(I_1) \supset \mathcal Z(I_2)$ 
    \item If $V_1 \subset V_2$ are subsets of $\mathbf A^n_\F$, $\mathcal
    I(V_1) \supset \mathcal I(V_2)$ 
    \item For any subset $V$ in $\mathbf A^n_\F$, $\mathcal Z (\mathcal
  I(V))=\overline V$, the Zariksi closure of $V$. 
  \end{enumerate}
\end{proposition}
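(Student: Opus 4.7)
The plan is to verify the three claims in order, since each one has a short, direct proof from the definitions. For part (1), I would start by taking any point $p\in\mathcal Z(I_2)$ and showing that every $f\in I_1$ also vanishes at $p$: since $I_1\subset I_2$, we have $f\in I_2$, so $f(p)=0$ by assumption. Hence $p\in\mathcal Z(I_1)$, giving $\mathcal Z(I_2)\subset \mathcal Z(I_1)$. Part (2) is completely analogous: any polynomial vanishing on the larger set $V_2$ automatically vanishes on the smaller set $V_1$, hence $\mathcal I(V_2)\subset\mathcal I(V_1)$.

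For part (3), the strategy is to show two inclusions and use the fact that $\mathcal Z(\mathcal I(V))$ is by construction a Zariski-closed set containing $V$. First I would note the tautological inclusion $V\subset \mathcal Z(\mathcal I(V))$: every $f\in \mathcal I(V)$ vanishes on $V$ by definition, so every point of $V$ lies in the common zero set. Since the Zariski closure $\overline V$ is the intersection of all closed sets containing $V$, this immediately yields $\overline V \subset \mathcal Z(\mathcal I(V))$.

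For the reverse inclusion $\mathcal Z(\mathcal I(V))\subset \overline V$, I would use parts (1) and (2). By definition of the Zariski topology, $\overline V = \mathcal Z(J)$ for some ideal $J$ (the smallest algebraic set containing $V$). Since $V\subset \mathcal Z(J)$, every $f\in J$ vanishes on $V$, so $J\subset \mathcal I(V)$. Applying part (1) then gives $\mathcal Z(\mathcal I(V))\subset \mathcal Z(J)=\overline V$. Combining the two inclusions finishes part (3).

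I do not anticipate any real obstacle: the statement is essentially the formal Galois connection between the operators $\mathcal Z$ and $\mathcal I$, and all of the reasoning is a direct unwinding of definitions. The only subtle point to be careful about is the existence of an ideal $J$ with $\overline V = \mathcal Z(J)$ in part (3); this is guaranteed by the very definition of Zariski closed sets as algebraic sets, together with the remark on arbitrary intersections of closed sets being closed (equation \eqref{algebraic_set_intersection} in the excerpt), so no appeal to Hilbert's Nullstellensatz and in particular no algebraic-closedness of $\F$ is needed.
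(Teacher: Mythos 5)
Your proof is correct and follows essentially the same route as the paper: parts (1) and (2) are direct definition-unwindings, and for part (3) you establish $\overline V \subset \mathcal Z(\mathcal I(V))$ from $V\subset \mathcal Z(\mathcal I(V))$ being closed, then use part (1) applied to an ideal $J$ cutting out $\overline V$ to get the reverse inclusion. You were also right to flag that $\overline V$ is itself algebraic (by \eqref{algebraic_set_intersection}) so that such a $J$ exists without invoking the Nullstellensatz or algebraic closedness of $\F$.
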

\begin{proof}
  The first two statements follow directly from the definitions. For the
  third one,  $V\subset \mathcal Z (\mathcal
  I(V))$. Since the latter is Zariski closed, $\overline V \subset  \mathcal Z (\mathcal
  I(V))$. On the other hand, for any Zariski closed set $X$ containing
  $V$, $X=\mathcal Z(I)$. $I\subset \mathcal I(V)$. From statement 1,
  $X=\mathcal Z(I) \supset \mathcal Z (\mathcal
  I(V))$. As a closed set, $\mathcal Z (\mathcal
  I(V))$ is contained in any closed set which contains $V$, hence $\mathcal Z (\mathcal
  I(V))=\overline V$.
\end{proof}
In the case $\F$ is algebraic closed, the above proposition and
Hilbert's Nullstellensatz established the one-to-one correspondence
between radical ideals in $\Fpoly$ and closed sets in $\mathbf
A^n_\F$. We will study geometric properties like reducibility,
dimension, singularity later in these lecture notes. Before this, we turn to the computational aspect of affine
algebraic geometry, to see how to explicitly compute objects like
$I_1\cap I_2$ and $\mathcal Z(I)$.

\subsection{Gr\"obner basis}
\subsubsection{One-variable case}
We see that ideal is the central concept for the algebraic side of
classical algebraic geometry. An ideal can be generated by different
generating sets, some may be redundant or complicated. In
linear algebra, given a linear subspace $V=\sp\{v_1\ldots v_k\}$ we
may use Gaussian elimination to find the linearly-independent basis of
$V$ or Gram-Schmidt process to find an orthonormal basis. For ideals, a ``good basis'' can also dramatically simplify algebraic geometry
problems. 
\begin{example}
 \label{GB_one_variable}
As a toy model, consider some univariate cases. 
\begin{itemize}
\item 
For example, $I=\la x^3-x-1 \ra$ in
$R=\Q[x]$. Clearly, $I$ consists of all polynomials in $x$ proportional to
$x^3-x-1$, and every nonzero element in $I$ has the degree higher or
equal than $3$. So we say $B(I)=\{x^3-x-1\}$ is a ``good basis''
for $I$. $B(I)$ is useful:
for any polynomial $F(x)$ in $\Q[x]$,  polynomial division
determines,
\begin{equation}
  F(x) =q(x) (x^3-x-1)+r(x) , \quad q(x),r(x)\in \Q[x],\ \deg r(x)<3
\end{equation}
Hence $F(x)$ is in $I$ if and only if the remainder $r$ is zero. It
also implies that $R/I=\sp_\Q\{[1],[x],[x^2]\}$.

\item Consider $J=\la x^3-x^2+3x-3,x^2-3x+2\ra$. Is the naive choice
$B(J)=\{f_1,f_2\}=\{x^3-x^2+3x-3,x^2-3x+2\}$ a good basis? For
instance, $f=f_1-x f_2=2x^2+x-3$ is in $I$ but it is proportional to
neither $f_1$ nor $f_2$. Polynomial division over this basis is not
useful, since $f$'s degree is lower than $f_1$, the only division
reads,
\begin{equation}
  \label{eq:48}
  f=2 f_2 + (7 x-7) \,.
\end{equation}
The remainder does not tell us the membership of $f$ in
$I$. Hence $B(J)$ does not characterize $I$ or $R/I$, and it is not
``good''. Note that $\Q[x]$ is a principal ideal domain (PID), any
ideal can be generated by one polynomial. Therefore, use Euclidean
algorithm (Algorithm \ref{Euclid}) to find the greatest common factor of $f_1$ and $f_2$,
\begin{equation}
  (x-1)=\frac{1}{7}f_1(x)-\frac{x+2}{7}f_2(x),\quad (x-1)|f_1(x),\ (x-1)|f_2(x)
\end{equation}
Hence $J=\la x-1 \ra$. We can check that $\tilde B(J)=\{x-1\}$ is a ``good'' basis in the sense that
Euclidean division over $\tilde B(J)$ solves membership questions of
$J$ and determined $R/J=\sp_\Q\{[1]\}$.
\end{itemize}
\end{example}

\begin{algorithm}
\caption{Euclidean division for greatest common divisor}  
\label{Euclid}
\algsetup{indent=4em}
\begin{algorithmic}[1]
\STATE \algorithmicrequire\  $f_1, f_2$, $\deg f_1\geq \deg f_2$
\WHILE{$f_2\not |f_1$} 
                 \STATE polynomial division $f_1=q f_2+r$
                 \STATE  $f_1:=f_2$
                 \STATE $ f_2:=r$
\ENDWHILE
\RETURN  $f_2$ (gcd)
\end{algorithmic}
\end{algorithm}

Recall that in \eqref{dbox_IR}, given inverse propagators $D_1, \ldots, D_7$, we
need to solve the membership problem of $I=\la D_1\ldots D_7\ra$ and
compute $R/I$. However, in general, a set like $\{ D_1
\ldots D_7\}$ is not a ``good basis'', in the sense that the polynomial
division over this basis does not solve the membership problem or give
a correct integrand basis (as we see previously). Since it is a
multivariate problem, the polynomial ring $R$ is not a PID and we
cannot use Euclidean algorithm to find a ``good basis''.

Look at Example \ref{GB_one_variable} again. 
For the univariate
case, there is a natural monomial order $\prec$ from the degree,
\begin{equation}
  \label{eq:49}
  1 \prec x \prec x^2 \prec x^3 \prec x^4 \prec \ldots\,,
\end{equation}
and all monomials are sorted. For any polynomial $F$, define the
{\it leading term}, $\LT(F)$ to be the highest monomial in $F$ by this
order (with the
coefficient). For multivariate cases, the degree criterion is not fine enough to
sort all monomials, so we need more general monomial orders.
\begin{definition}
  Let $M$ be the set of all monomials with coefficients $1$, in the ring
  $R=\Fpoly$. A monomial order $\prec$ of $R$ is an ordering on $M$
  such that,
  \begin{enumerate}
  \item $\prec$ is a total ordering, which means any two different
    monomials are sorted by $\prec$.
\item $\prec$ respects monomial products, i.e., if $u\prec v$ then for
  any $w\in M$, $uw\prec vw$.
\item $1\prec u$, if $u\in M$ and $u$ is not constant. 
  \end{enumerate}
\end{definition}
There are several important monomial orders. For the ring $\Fpoly$,
we use the convention $1\prec z_n\prec z_{n-1}\prec\ldots \prec z_1$ for all
monomial orders. Given
two monomials, 
$g_1=z_1^{\alpha_1}\ldots z_n^{\alpha_n}$ and
 $g_2=z_1^{\beta_1}\ldots z_n^{\beta_n}$, consider the following orders:
\begin{itemize}
\item Lexicographic order (\lex). First compare $\alpha_1$ and
  $\beta_1$. If $\alpha_1<\beta_1$, then $g_1\prec g_2$. If $\alpha_1=\alpha_2$, we compare $\alpha_2$ and
  $\beta_2$. Repeat this process until for certain $\alpha_i$ and $\beta_i$
  the tie is broken. 
\item Degree lexicographic order (\grlex). First compare the total
  degrees. If $\sum_{i=1}^n\alpha_i<\sum_{i=1}^n\beta_i$, then
  $g_1\prec g_2$.
If total degrees are equal, we compare $(\alpha_1, \beta_1)$,
  $(\alpha_2, \beta_2)$ ... until the tie is broken, like \lex.
\item Degree reversed lexicographic order (\grevlex). First compare the total
  degrees. If $\sum_{i=1}^n\alpha_i<\sum_{i=1}^n\beta_i$, then
  $g_1\prec g_2$. If total degrees are equal, we compare $\alpha_n$ and
  $\beta_n$. If $\alpha_n<\beta_n$, then $g_1 \succ g_2$
  (reversed!). If $\alpha_n=\beta_n$, then we further compare $(\alpha_{n-1}$,
  $\beta_{n-1})$,  $(\alpha_{n-2}$,
  $\beta_{n-2})$ ... until the tie is broken, and use the reversed result. 
\item Block order. This is the combination of \lex \ and other orders. We
  separate the variables into $k$ blocks, say,
  \begin{equation}
    \label{eq:51}
    \{z_1,z_2,\ldots z_n\}=\{z_1,\ldots z_{s_1}\} \cup
    \{z_{s_1+1},\ldots z_{s_2}\} \ldots \cup \{z_{s_{k-1}+1},\ldots z_n\}\,.
  \end{equation}
Furthermore, define the monomial order for variables in each block. To compare
$g_1$ and $g_2$, first we compare the first block by the given
monomial order. If it is a tie, we compare the second
block... until the tie is broken.
\end{itemize}
\begin{example}
  Consider $\Q[x,y,z]$, $z\prec y \prec x$. We sort all monomials up to
  degree $2$ in \lex, \grlex, \grevlex \ and the block order $[x]\succ
  [y,z]$ with \grevlex \ in each block. This can be done be the
  following \mm\ code:

\begin{doublespace}
\noindent\(\pmb{F=1+x+x^2+y+x y+y^2+z+x z+y z+z^2;}\\
\pmb{\text{MonomialList}[F,\{x,y,z\},\text{Lexicographic}]}\\
\pmb{\text{MonomialList}[F,\{x,y,z\},\text{DegreeLexicographic}]}\\
\pmb{\text{MonomialList}[F,\{x,y,z\},\text{DegreeReverseLexicographic}]}\\
\pmb{\text{MonomialList}[F,\{x,y,z\},\{\{1,0,0\},\{0,1,1\},\{0,0,-1\}\}]\text{     }}\)
\end{doublespace}
and the output is,

\begin{doublespace}
\noindent\(\left\{x^2,x y,x z,x,y^2,y z,y,z^2,z,1\right\}\)
\end{doublespace}
\begin{doublespace}
\noindent\(\left\{x^2,x y,x z,y^2,y z,z^2,x,y,z,1\right\}\)
\end{doublespace}
\begin{doublespace}
\noindent\(\left\{x^2,x y,y^2,x z,y z,z^2,x,y,z,1\right\}\)
\end{doublespace}
\begin{doublespace}
\noindent\(\left\{x^2,x y,x z,x,y^2,y z,z^2,y,z,1\right\}\)
\end{doublespace}
Note that for \lex, $x\succ y^2$, $y\succ z^2$ since we first compare the power of
$x$ and the $y$. The total degree is not respected in this
order. On the other hand, \grlex\ and \grevlex\ both consider the
total degree first. The difference between \grlex\ and \grevlex\ is
that, $ x z\succ_\text{\grlex} y^2$ while $ x z\prec_\text{\grevlex}
y^2$. So $\grevlex$ tends to set monomials with more variables, lower,
in the list of monomials with a fixed degree. This property is useful for
computational algebraic geometry. Finally, for this block order,
$x\succ y^2$ since $x$'s degrees are compared first. But $y\prec z^2$,
since $[y,z]$ block is in \grevlex. 
\end{example}
With a monomial order, we define the leading term as
the highest monomial (with coefficient) of a polynomial in this
order. 
Back to the second part of Example \ref{GB_one_variable},
\begin{equation}
  \label{eq:50}
  \LT(f_1)=x^3\quad \LT(f_2)=x^2, \quad  \LT(x-1)=x
\end{equation}
The key observation is that although $x-1\in J$, its leading term
is not divisible by the leading term of either $f_1$ or $f_2$. This
makes polynomial division unusable and $\{f_1,f_2\}$ is not a ``
  good basis''. This leads to the concept of \GB.

\subsubsection{\GB}
\begin{definition}
  For an ideal $I$ in $\Fpoly$ with a monomial order, a \GB\ $G(I)=\{g_1,\ldots g_m\}$ is a generating set for
  $I$ such that for each $f\in I$, there always exists  $g_i\in G(I)$ such
  that,
  \begin{equation}
    \label{eq:52}
    \LT(g_i) | \LT(f) \,.
  \end{equation}
\end{definition}
We can check that for the ideal $J$ in Example \ref{GB_one_variable},
$\{f_1,f_2\}$ is not a Gr\"obner basis with respect to the natural
order, while $\{x-1\}$ is. 

\subsubsection{Multivariate polynomial division}

To harness the power of \GB\, we need the multivariate division
algorithm, which is a generalization of univariate Euclidean
algorithm (Algorithm \ref{multivariate_polynomial_division}). The
basic procedure is that: given a polynomial $F$ and a list of
$k$ polynomials $f_i$'s, if $\LT(F)$ is
divisible by some $\LT(f_i)$, then remove $\LT(F)$ by subtracting a multiplier of
$f_i$. Otherwise move $\LT(F)$ to the remainder $r$. The output will
be 
\begin{equation}
  \label{eq:53}
  F=q_1 f_1 + \ldots q_k f_k +r\,,
\end{equation}
where $r$ consists of monomials cannot be divided by any $LT(f_i)$.  Let
$B=\{f_1,\ldots f_k\}$, we
 denote $\overline{F}^B$  as the remainder $r$.
\begin{algorithm}
\caption{Multivariate division algorithm}  
\label{multivariate_polynomial_division}
\algsetup{indent=4em}
\begin{algorithmic}[1]
\STATE \algorithmicrequire\ $F$, $f_1\ldots f_k$, $\succ$
\STATE $q_1:=\ldots :=q_k=0$, $r:=0$
\WHILE{$F\not =0$} 
\STATE $reductionstatus:=0$
\FOR{$i=1$ to $k$}
          \IF{$\LT(f_i)|\LT(F)$} 
                     \STATE $q_i:=q_i+\frac{\LT(F)}{\LT(f_i)}$
                     \STATE $F:=F-\frac{\LT(F)}{\LT(f_i)} f_i$
                     \STATE $reductionstatus:=1$
                     \STATE \BREAK
           \ENDIF
           \ENDFOR
           \IF{$reductionstatus=0$}
                      \STATE $r:=r+\LT(F)$
                     \STATE $F:=F-\LT(F)$
           \ENDIF

\ENDWHILE
\RETURN  $q_1\ldots q_k$, $r$
\end{algorithmic}
\end{algorithm}
Recall that the one-loop OPP integrand
reduction and the naive trial of two-loop integrand reduction are very
similar to this algorithm.

Note that for a general list of polynomials, the algorithm has two
drawbacks: (1) the remainder $r$ depends on the order of the list,
$\{f_1,\ldots f_n\}$ (2) if $F\in \la f_1\ldots f_n\ra$, the algorithm
may not give a zero remainder $r$. These made the previous
two-loop integrand reduction unsuccessful. \GB\ eliminates these
problems.  

\begin{proposition}
\label{GB_division}
  Let $G=\{g_1,\ldots g_m\}$ be a \GB\ in $\Fpoly$ with the monomial
order $\succ$. Let $r$ be the remainder of the division of $F$ by $G$,
from Algorithm \ref{multivariate_polynomial_division}.
\begin{enumerate}
\item $r$ does not depend on the order of $g_1,\ldots g_m$.
\item If $F\in I=\la g_1,\ldots g_m\ra$, then $r=0$.
\end{enumerate}
\end{proposition}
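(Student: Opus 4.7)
The plan is to prove both statements by exploiting the single defining feature of a Gr\"obner basis, namely that every nonzero $f\in I$ satisfies $\LT(g_i)\mid \LT(f)$ for some $g_i\in G$, together with the structural fact that the remainder produced by Algorithm~\ref{multivariate_polynomial_division} contains no monomial divisible by any $\LT(g_i)$ (a term only lands in $r$ when the inner \textbf{for} loop fails to find such a divisor).

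For part (2), I would argue by contradiction. Suppose $F\in I$ and the algorithm outputs quotients $q_1,\ldots,q_m$ and remainder $r$ with $F=q_1g_1+\cdots+q_mg_m+r$. Since $F\in I$ and $\sum_i q_ig_i\in I$, we have $r=F-\sum_i q_ig_i\in I$. If $r\neq 0$, apply the Gr\"obner basis property to $r$: some $\LT(g_i)$ divides $\LT(r)$. But $\LT(r)$ is a monomial of $r$, and every monomial that the algorithm places into $r$ is, by construction, not divisible by any $\LT(g_j)$. This contradiction forces $r=0$.

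For part (1), let $r_1$ and $r_2$ be remainders obtained by running Algorithm~\ref{multivariate_polynomial_division} on $F$ using two different orderings of the list $g_1,\ldots,g_m$. The identities $F=\sum_i q_i^{(1)}g_i+r_1$ and $F=\sum_i q_i^{(2)}g_i+r_2$ yield $r_1-r_2=\sum_i\bigl(q_i^{(2)}-q_i^{(1)}\bigr)g_i\in I$. Now every monomial appearing in $r_1-r_2$ is either a monomial of $r_1$ or of $r_2$, and in either case is not divisible by any $\LT(g_i)$. If $r_1-r_2\neq 0$, then $\LT(r_1-r_2)$ is such a monomial, yet the Gr\"obner basis property applied to $r_1-r_2\in I$ demands that some $\LT(g_i)$ divides it, a contradiction. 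Hence $r_1=r_2$.

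Both parts therefore reduce to the same pattern: produce an element of $I$ whose leading term is forbidden by the very rule defining ``remainder,'' and invoke the Gr\"obner basis definition to close the gap. I do not expect a real obstacle here; the only subtlety is being explicit that the algorithm's \textbf{else}-branch only deposits monomials into $r$ that are undivisible by every $\LT(g_i)$, which is what makes the leading-term comparison work for both the whole remainder and any $\F$-linear combination of two remainders.
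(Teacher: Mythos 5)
Your proof is correct and uses essentially the same argument as the paper: in both parts you produce an element of $I$ (namely $r$ or $r_1-r_2$) whose monomials escape all leading terms of $G$ by construction of the division algorithm, and then contradict the defining property of a Gr\"obner basis applied to its leading term. You are simply a bit more explicit than the paper about why monomials of $r_1-r_2$ inherit the non-divisibility property from $r_1$ and $r_2$, which is a welcome clarification but not a different route.
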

\begin{proof}
  If the division with different orders of $g_1,\ldots g_n$ provides
  two remainder $r_1$ and $r_2$. If $r_1\not =r_2$, then $r_1-r_2$ contains monomials
  which are not divisible by any $\LT(g_i)$. But $r_1-r_2\in I$, this
  is a contradiction to the definition of \GB. 

If $F\in I$, then $r\in I$. Again by the definition of \GB, if
$r\not=0$, $\LT(r)$ is
divisible by some $\LT(g_i)$. This is a contradiction to multivariate
division algorithm. 
\end{proof}

Then the question is: given an ideal $I=\la f_1\ldots f_k\ra$ in
$\Fpoly$ and a monomial order $\succ$, does the \GB\ exist and how do
we find it? This is answered by \Buch, which was presented in 1970s
and marked the beginning of computational algebraic geometry.

\subsubsection{\Buch}
Recall that for one-variable case, Euclidean algorithm (Algorithm
\ref{Euclid}) computes the
gcd of two polynomials hence the \GB\ is given. The key step is to cancel
leading terms of two polynomials. That inspires the concept of
S-polynomial in multivariate cases.
\begin{definition}
\label{S-polynomial}
  Given a monomial order $\succ$ in $R=\Fpoly$, the S-polynomial of
  two polynomials $f_i$ and $f_j$ in $R$ is,
  \begin{equation}
    S(f_i,f_j)=\frac{\LT(f_j)}{\gcd\big(\LT(f_i),\LT(f_j)\big)} f_i
    -\frac{\LT(f_i)}{\gcd\big(\LT(f_i),\LT(f_j)\big)} f_j.
  \end{equation}
\end{definition}
Note that the leading terms of the two terms on the r.h.s
cancel. 

\begin{thm}[Buchberger]
   Given a monomial order $\succ$ in $R=\Fpoly$, \GB\ with respect to
   $\succ$ exists and can
   be found by \Buch\ (Algorithm \ref{Buchberger}).
\end{thm}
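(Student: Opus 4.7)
The plan rests on three ingredients already in place: the multivariate division algorithm (Algorithm \ref{multivariate_polynomial_division}), the Noetherian property of $\Fpoly$ from Theorem \ref{thm_Noether}, and a yet-to-be-stated Buchberger S-polynomial criterion. The strategy is to prove the criterion first, then use it to verify that the output of \Buch\ is a \GB, and finally to prove termination via an ascending chain argument.

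The criterion I would prove is: a generating set $G=\{g_1,\ldots,g_m\}$ of an ideal $I\subset\Fpoly$ is a \GB\ with respect to $\succ$ if and only if $\overline{S(g_i,g_j)}^G=0$ for every pair $i<j$. The forward direction is immediate from Proposition \ref{GB_division}. For the converse, I would use the classical leading-term cancellation argument: given $f\in I$, consider all representations $f=\sum_i h_i g_i$ and pick one minimizing $m:=\max_i \LT(h_i g_i)$ in the order $\succ$. If $\LT(f)\prec m$, then the $h_ig_i$ with leading monomial equal to $m$ must exhibit cancellation. Any such cancellation among monomial multiples of the $\LT(g_i)$ can be rewritten as an $\F$-linear combination of pairwise expressions of the form $\frac{m}{\text{lcm}(\LT(g_i),\LT(g_j))}S(g_i,g_j)$. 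Under the hypothesis that every $S(g_i,g_j)$ has a standard representation with leading monomial strictly less than $\text{lcm}(\LT(g_i),\LT(g_j))$, we can substitute these in to obtain a new representation of $f$ with strictly smaller $m$, contradicting minimality. Hence $\LT(f)=\LT(h_j g_j)$ for some $j$, so $\LT(g_j)\mid \LT(f)$.

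With the criterion in hand, \Buch\ itself is easy to describe and analyze: initialize $G:=\{f_1,\ldots,f_k\}$; for every pair $(p,q)\in G\times G$ compute $r:=\overline{S(p,q)}^G$, and whenever $r\ne 0$ adjoin $r$ to $G$; halt when every pair gives $r=0$. Every adjoined $r$ lies in the original ideal (being an $\F$-polynomial combination of elements of $G$), so $G$ continues to generate $\la f_1,\ldots,f_k\ra$. When the loop exits, the criterion tells us $G$ is a \GB. The remaining issue is termination. Each time a nonzero $r$ is appended, the division algorithm guarantees that $\LT(r)$ is not divisible by any $\LT(g)$ for $g$ in the current $G$, so the chain of monomial ideals
\[
\la \LT(G_0)\ra \subsetneq \la \LT(G_1)\ra \subsetneq \la \LT(G_2)\ra \subsetneq \cdots
\]
is strictly ascending. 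Since $\Fpoly$ is Noetherian by Theorem \ref{thm_Noether}, no such chain can be infinite, so the algorithm halts. Existence of a \GB\ is then a corollary of termination.

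The main obstacle is the converse half of the criterion. Everything else is a short argument once the criterion is known: correctness is a one-line consequence, and termination is a clean ACC argument. The cancellation-rewriting step, by contrast, requires packaging the combinatorics of how leading monomials can collide into a statement about S-polynomials (essentially, showing that the ``syzygies of leading terms'' among $\LT(g_1),\ldots,\LT(g_m)$ are generated by the obvious pairwise ones). I would expect to spend most of the proof on making that substitution rigorous, and the rest of the theorem follows quickly.
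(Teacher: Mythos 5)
The paper does not give a proof but defers to Cox, Little, O'Shea \cite{MR3330490}; your outline is precisely the standard argument given there (the S-pair criterion via leading-term cancellation and the syzygy lemma for monomial leading terms, correctness as an immediate corollary, and termination via the ascending chain of leading-term ideals using Theorem \ref{thm_Noether}). So your proposal is correct and matches the approach the paper points to.
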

\begin{proof}
  See Cox, Little, O'Shea \cite{MR3330490}. 
\end{proof}

\begin{algorithm}
\caption{Buchberger algorithm}  
\label{Buchberger}
\algsetup{indent=4em}
\begin{algorithmic}[1]
\STATE \algorithmicrequire\ $B=\{f_1\ldots f_n\}$ and a monomial order $\succ$
\STATE $queue:=\text{all subsets of B with exactly two elements}$

\WHILE{$queue!=\emptyset$}
                     \STATE $\{f,g\}:=\text{head of } queue$
                     \STATE $r:=\overline{S(f,g)}^B$
                     \IF{$r\not=0$}
                                           \STATE $B:=B\cup{r}$
                                           \STATE queue $<<$
                                           $\{\{B_1,r\},\ldots \{{\text{last\ of}\ }B,r\}\}$
                     \ENDIF
                     \STATE {\bf delete} head of $queue$
\ENDWHILE

\RETURN $B$ (\GB)
\end{algorithmic}
\end{algorithm}

The uniqueness of \GB\ is given via {\it reduced \GB}.
\begin{definition}
  For $R=\Fpoly$ with a monomial order $\succ$, a reduced \GB\ is a 
  \GB\ $G=\{g_1,\ldots g_k\}$ with respect to $\succ$, such that
  \begin{enumerate}
  \item Every $\LT(g_i)$ has the coefficient $1$, $i=1,\ldots,k$. 
   \item Every monomial in $g_i$ is not divisible by  $\LT(g_j)$, if $j\not =i$.
  \end{enumerate}
\end{definition}

\begin{proposition}
  For $R=\Fpoly$ with a monomial order $\succ$, $I$ is an ideal. The
  reduced \GB\ of $I$ with respect to $\succ$, 
    $G=\{g_1,\ldots g_m\}$, is
    unique up to the order of the list $\{g_1,\ldots g_m\}$. It is 
    independent of the choice of the generating set of $I$.  
\end{proposition}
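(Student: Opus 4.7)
The plan is to prove uniqueness in two stages: first show that any two reduced \GB s have the same set of leading terms, then show that the basis elements themselves must agree. The key intermediate object is the \emph{leading term ideal} $\LT(I) = \la \LT(f) : f\in I\ra$, which depends only on $I$ and $\succ$, not on any choice of generators.

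First I would verify that for any \GB\ $G = \{g_1,\ldots,g_m\}$ of $I$, the monomials $\LT(g_1),\ldots,\LT(g_m)$ generate $\LT(I)$ as a monomial ideal: every $f\in I$ has $\LT(f)$ divisible by some $\LT(g_i)$ by the defining property of a \GB. Next I would invoke (or prove by a short induction on the lex/degree ordering) the standard fact that a monomial ideal has a \emph{unique} minimal generating set consisting of monomials, namely those monomials in the ideal that are not divisible by any other. For a reduced \GB, condition 2 in the definition forces the leading terms to be pairwise non-divisible, so $\{\LT(g_1),\ldots,\LT(g_m)\}$ is exactly this unique minimal monomial generating set of $\LT(I)$. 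Consequently any two reduced \GB s $G = \{g_1,\ldots,g_m\}$ and $G' = \{g_1',\ldots,g_{m'}'\}$ of $I$ have the same cardinality and, after reordering, satisfy $\LT(g_i) = \LT(g_i')$ for every $i$ (both with coefficient $1$ by condition 1).

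Now for the second stage: fix an index $i$ and consider $h := g_i - g_i' \in I$. Since the leading terms of $g_i$ and $g_i'$ are identical (same monomial, same coefficient $1$), they cancel in $h$. If $h \neq 0$, then $\LT(h)$ is a monomial appearing either in $g_i$ or in $g_i'$ but strictly smaller than $\LT(g_i)$. Because $h \in I$ and $G$ is a \GB, Proposition \ref{GB_division} (or directly the \GB\ property) forces $\LT(h)$ to be divisible by some $\LT(g_j)$. If $\LT(h)$ came from $g_i$, this contradicts the reducedness of $G$ (no non-leading monomial of $g_i$ is divisible by any $\LT(g_j)$); if it came from $g_i'$, this contradicts the reducedness of $G'$ together with the fact that $\LT(g_j) = \LT(g_j')$. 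Either way we get a contradiction, so $h = 0$ and $g_i = g_i'$.

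Finally, independence from the generating set of $I$ is automatic: the reduced \GB\ was characterized entirely in terms of $I$ and $\succ$ (via $\LT(I)$ and the vanishing-remainder argument above), with no reference to an initial generating set. The main obstacle I anticipate is the uniqueness of the minimal monomial generating set of $\LT(I)$; this is where one actually uses that $\LT(I)$ is a \emph{monomial} ideal, via Dickson's lemma, and it is the only step that is not just bookkeeping on leading terms.
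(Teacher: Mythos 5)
Your argument is correct and complete, but it takes a genuinely different tack from what the paper does. The paper's ``proof'' of this proposition is a pointer to Cox, Little, O'Shea together with a constructive recipe (minimalize the leading terms, then autoreduce) for \emph{obtaining} a reduced \GB\ from an arbitrary one; it does not actually establish uniqueness. You supply the missing uniqueness argument directly: the leading-term ideal $\LT(I)=\la \LT(f): f\in I, f\neq 0\ra$ is intrinsic to $I$ and $\succ$, Dickson's lemma gives it a unique minimal monomial generating set, condition~2 of reducedness identifies the leading terms of any reduced \GB\ with that minimal set, and then the cancellation trick with $h=g_i-g_i'$ forces the basis elements themselves to coincide. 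This last step deserves one small remark: the paper's definition of reducedness only forbids divisibility of monomials of $g_i$ by $\LT(g_j)$ for $j\neq i$, whereas your parenthetical invokes it for ``any'' $j$. That is harmless, because no non-leading monomial of $g_i$ can be divisible by $\LT(g_i)$ in the first place (if $m=\LT(g_i)\cdot n$ with $n\neq 1$, then $m\succ \LT(g_i)$, contradicting leadingness; $n=1$ gives $m=\LT(g_i)$), so the $j=i$ case is automatically excluded and does not need the reducedness hypothesis. Your observation that independence from the original generating set is free --- since the whole characterization was phrased through $\LT(I)$ and the membership/division structure of $I$ itself --- also correctly discharges the last clause of the proposition, which the paper's algorithmic sketch leaves implicit. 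In short, the paper buys brevity by deferring to the textbook and describing construction rather than uniqueness; you buy a self-contained proof at the cost of invoking Dickson's lemma.
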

\begin{proof}
  See Cox, Little, O'Shea \cite[Chapter 2]{MR3330490}. Note that given a \GB\ $B=\{h_1\ldots h_m\}$,
  the reduced \GB\ $G$ can be obtained as follows,

  \begin{enumerate}
  \item For any $h_i\in B$, if $\LT(h_j)|\LT(h_i)$, $j\not=i$, then
    remove $h_i$. Repeat this process, and finally we
    get the {\it minimal basis} $G'\subset B$.
 \item For every $f\in G'$, divide $f$ towards $G'-\{f\}$. Then replace
   $f$ by the remainder of the division. Finally, normalize the
   resulting set such that every polynomial has leading coefficient
   $1$, and we get the reduced \GB\ $G$.
  \end{enumerate}
\end{proof}

Note that \Buch\  reduces only one polynomial pair every time, more recent
algorithms attempt to (1) reduce many polynomial pairs at once (2)
identify the ``unless'' polynomial pairs {\it a priori}. Currently,
the most efficient algorithms are Faugere's F4
and F5 algorithms \cite{Faugere199961,
  Faugere:2002:NEA:780506.780516}. 

Usually we compute Gr\"obner basis
by programs, for example,
\begin{itemize}
\item \mm\ The embedded \pmb{GroebnerBasis} computes Gr\"obner basis by
  \Buch. The relation between \GB\ and the original generating set is
  not given. Usually, \GB\ computation in \mm\  is not very fast.
\item {\sc Maple } Maple computes \GB\ by either \Buch\ or highly
  efficient F4
  algorithm. 
\item {\sc Singular } is a powerful computer algebraic system \cite{DGPS}
  developed in University of Kaiserslautern. \Singular\ uses either \Buch\ or F4
  algorithm to computer \GB.
\item {\sc Macaulay2} is a sophisticated algebraic geometry program
  \cite{M2}, which orients to research mathematical problems in algebraic
  geometry. It contains \Buch\ and experimental codes of F4
  algorithm. 
\item Fgb package \cite{FGb}. This is a highly efficient package of F4 and
  F5 algorithms by Jean-Charles Faug\'{e}re. It has both {\sc Maple
  } and {\sc C++ } interfaces. Usually, it is faster than the F4
  implement in {\sc Maple}. Currently, coefficients of
  polynomials are restricted to $\Q$ or $\Z/p$, in this package.
\end{itemize}

\begin{example}
\label{example_Buchberger}
  Consider $f_1=x^3 - 2 x y$, $f_2=x^2 y - 2 y^2 + x$. Compute the \GB\ of
  $I=\la f_1,f_2\ra$ with \grevlex\ and $x\succ y$ \cite{MR3330490}. 

We use \Buch. 
\begin{enumerate}
\item In the beginning, the list is $B:=\{h_1,h_2\}$ and the pair set
  $P:=\{(h_1,h_2)\}$, where
$h_1=f_1$, $h_2=f_2$,
\begin{equation}
  S(h_1,h_2)=-x^2,\quad h_3:=\overline{S(h_1,h_2)}^B=-x^2\,,
\end{equation}
with the relation $h_3=y h_1-x h_2$.
\item Now $B:=\{h_1,h_2,h_3\}$ and
$P:=\{(h_1,h_3),(h_2,h_3)\}$. Consider the pair $(h_1,h_3)$,
\begin{equation}
  S(h_1,h_3)=2xy,\quad h_4:=\overline{S(h_1,h_3)}^B=2 x y\,,
\end{equation}
with the relation $h_4=- h_1-x h_3$.
\item $B:=\{h_1,h_2,h_3,h_4\}$ and $ P:=\{(h_2,h_3),(h_1,h_4),(h_2,h_4),(h_3,h_4)\}$.
For the pair $(h_2,h_3)$,
\begin{equation}
  S(h_2,h_3)=-x+2y^2,\quad h_5:=\overline{S(h_2,h_3)}^B=-x+2y^2\,,
\end{equation}
The new relation is $h_5=-h_2-y h_3$. 
\item $B:=\{h_1,h_2,h_3,h_4,h_5\}$ and
\begin{equation}
  P:=\{(h_1,h_4),(h_2,h_4),(h_3,h_4),(h_1,h_5),(h_2,h_5),(h_3,h_5),(h_4,h_5)\}.
\end{equation}
For the pair $(h_1,h_4)$, 
\begin{equation}
  S(h_1,h_4)=-4 x y^2,\quad \overline{S(h_1,h_4)}^B=0
\end{equation}
Hence this pair does not add information to \GB. Similarly, all the
rests pairs are useless. 
\end{enumerate}
Hence the Groebner basis is
\begin{equation}
  \label{eq:42}
  B=\{h_1,\ldots h_5\}=\{x^3-2 x y,x^2 y+x-2 y^2,-x^2,2 x y,2 y^2-x\}.
\end{equation}
Consider all the relations  in intermediate steps, we determine
the conversion between the old basis $\{f_1, f_2\}$ and $B$,
\begin{gather}
  \label{eq:43}
  h_1= f_1,\quad h_2=  f_2,\quad h_3= f_1 y-f_2 x\nn\\
h_4= -f_1 (1+x y)+f_2 x^2,\quad h_5= -f_1 y^2 +(x y-1) f_2 
\end{gather}
Then we determine the reduced \GB. Note that $\LT(h_3)|\LT(h_1)$,
$\LT(h_4)|\LT(h_2)$, so $h_1$ and $h_2$ are removed. The minimal
\GB\ is $G'=\{h_3,h_4,h_5\}$. Furthermore,
\begin{equation}
  \label{eq:54}
  \overline{h_3}^{\{h_4,h_5\}} =h_3,\quad \overline{h_4}^{\{h_3,h_5\}}
  =h_4, \quad \overline{h_5}^{\{h_3,h_4\}} =h_5\quad
\end{equation}
so $\{h_3,h_4,h_5\}$ cannot be reduced further. The reduced \GB\ is
\begin{equation}
  \label{eq:56}
  G=\{g_1,g_2,g_3\}=\{-h_3, \half h_4, \half h_5\}=\{x^2,x y,
  y^2-\half x\}.
\end{equation}
The conversion relation is,
\begin{equation}
  \label{GB_conversion}
  g_1= -y f_1 +x f_2 ,\quad g_2= -\frac{ (1+x y)}{2}f_1+\half x^2 f_2
 ,\quad g_3= -\half y^2 f_1+\half (x y-1) f_2.
\end{equation}
\mm\ finds $G$ directly via $\pmb{
\text{GroebnerBasis}[\{x^3-2 x y,x^2 y-2 y^2+x\},\{x,y\},}$\\* 
$\pmb{\text{MonomialOrder}\to \text{DegreeReverseLexicographic}]}
$. However, it does not provide the conversion
\eqref{GB_conversion}. This can be found by {\sc Maple} or \Macaulay. 

As a first application of \GB\ , we can see some fractions can be
easily simplified (like integrand reduction),
\begin{eqnarray}
  \label{example_reduction}
  \frac{x^2}{(x^3-2 x y)(x^2 y-2 y^2+x)}&=&\frac{-y f_1+x f_2}{f_1 f_2}
  =-\frac{y}{f_2}+\frac{x}{f_1}\nn \\
 \frac{x y}{(x^3-2 x y)(x^2 y-2 y^2+x)}&=&\frac{- (1+x y)f_1/2+ x^2 f_2/2}{f_1 f_2}
  =-\frac{1+x y}{2f_2}+ \frac{x^2}{2f_1}\nn \\
 \frac{y^2}{(x^3-2 x y)(x^2 y-2 y^2+x)}&=&\frac{h_5+x/2}{f_1f_2}
  =\frac{x}{2f_1 f_2}-\frac{ y^2}{2f_2}+ \frac{x y-1}{2f_1}
\end{eqnarray}
In first two lines, we reduce a fraction with two denominators to
fractions with only one denominator. In the last line, a fraction with
two denominators is reduced to a fraction with two denominators but
lower numerator degree ($y^2\to x$). Higher-degree numerators can be
reduced in the same way. Hence we conclude that all fractions 
$N(x,y)/(f_1 f_2)$ can be reduced to, 
\begin{equation}
  \label{eq:55}
  \frac{1}{f_1 f_2}, \quad \frac{x}{f_1 f_2},\quad \frac{y}{f_1 f_2}
\end{equation}
and fractions with fewer denominators. Note that even with this
simple example,  one-variable partial fraction method does not
help the reduction. 
\end{example}
We have some comments on \GB:
\begin{enumerate}
  \item For $\Fpoly$, the computation of polynomial division and
    \Buch\ only used addition, multiplication and division in $\F$. No
    algebraic extension is needed. Let $\F\subset \K$ be a field
    extension. If $B=\{f_1,\ldots, f_k\}\subset \Fpoly$, then the \GB\
    computation of $B$ in $\K[x_1,\ldots,x_n]$ produces a \GB\,
    which is still in $\Fpoly$, irrelevant of the algebraic
    extension.
\item The form of a \GB\ and computation time dramatically depend on
  the monomial order. Usually, \grevlex\ is the fastest
  choice while \lex\ is the slowest. However, in some cases, \GB\
  with \lex\ is preferred. In these cases, we may instead consider some
  ``midway'' monomial order the like block order, or convert a
  known \grevlex\ basis to \lex\ basis \cite{FAUGERE1993329}. 
\item If all input polynomials are linear, then the reduced \GB\ is
  the {\it echelon form} in linear algebra. 
\end{enumerate}

\subsection{Application of \GB}
\GB\ is such a powerful tool that once it is computed, most
computational problems
on ideals are solved.
\subsubsection{Ideal membership and fraction reduction}
A \GB\ immediately solves the ideal membership problem. Given an $F\in
R=\Fpoly$, and $I=\la f_1,\ldots f_k\ra$. Let $G$ be a \GB\ of $I$
with a monomial order $\succ$. $F\in I$ if and only if
$\overline{F}^G=0$, i.e., the division of $F$ towards $G$ generates
zero remainder (Proposition \ref{GB_division}).  

$G$ also determined the structure of the quotient ring $R/I$
(Definition \ref{quotient_ring}). $f \sim g$ if and only if $f-g\in
I$. The division of $f_1-f_2$ towards $G$ detects equivalent
relations. In particular,
\begin{proposition}
  Let $M$ be the set of all monic monomials in $R$ which are not
  divisible by any leading term in $G$. Then the set,
  \begin{equation}
   V= \{[p]|p\in M\} 
  \end{equation}
is an $\F$-linear basis of $R/I$. 
\label{remainders}
\end{proposition}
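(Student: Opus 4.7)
The plan is to prove the two standard basis axioms: that $V$ spans $R/I$ as an $\F$-vector space, and that $V$ is $\F$-linearly independent in $R/I$. Both parts rely on the defining property of a \GB\ together with the multivariate division algorithm (Algorithm \ref{multivariate_polynomial_division}) and Proposition \ref{GB_division}.

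For spanning, I would take any class $[f]\in R/I$ and apply the division algorithm to $f$ by the \GB\ $G$, producing $f = q_1 g_1 + \cdots + q_k g_k + r$. By construction of the algorithm, every monomial of $r$ fails to be divisible by any $\LT(g_i)$, so $r$ is an $\F$-linear combination of elements of $M$. Since $f - r \in I$, we obtain $[f] = [r] \in \sp_\F V$, establishing that $V$ spans $R/I$.

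For linear independence, I would argue by contradiction or, equivalently, directly: suppose $\sum_i c_i [p_i] = [0]$ with distinct $p_i \in M$ and $c_i \in \F$. Then $h := \sum_i c_i p_i \in I$. Now run the division algorithm on $h$ by $G$; the key observation is that every monomial of $h$ lies in $M$, hence is not divisible by any $\LT(g_j)$. Thus at each iteration of the algorithm, the current leading term is simply moved into the remainder. Consequently the algorithm outputs $h$ itself as the remainder. But by Proposition \ref{GB_division}(2), since $h \in I$, the remainder must be $0$. Therefore $h = \sum_i c_i p_i = 0$ as a polynomial, and since the $p_i$ are distinct monomials, all $c_i = 0$.

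I do not expect any serious obstacle: the main subtlety is the linear independence step, where one must verify carefully that the division algorithm applied to a polynomial whose monomials all lie in $M$ simply returns that polynomial unchanged as the remainder. This requires noting that after moving $\LT(F)$ to $r$ and replacing $F$ by $F - \LT(F)$, the updated $F$ still has all its monomials in $M$, so the invariant is preserved throughout the loop. Once this is observed, both halves of the proof are short and the remainder-uniqueness statement of Proposition \ref{GB_division} does all the heavy lifting.
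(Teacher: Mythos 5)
Your proof is correct and follows the same route as the paper's: spanning comes from the division algorithm (the remainder lies in $\sp_\F V$), and linear independence from the observation that a polynomial supported on $M$ reduces to itself modulo $G$, combined with Proposition \ref{GB_division}(2) to force that remainder to vanish. Your explicit remark about the loop invariant (that replacing $F$ by $F-\LT(F)$ keeps every remaining monomial in $M$) spells out a step the paper leaves implicit, but the underlying argument is identical.
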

\begin{proof}
  For any $F\in R$, $\overline{F}^G$ consists of monomials which are not
  divisible by any leading term in $G$. Hence $[F]$ is a linear
  combination of finite elements in $V$. 

  Suppose that $\sum_j c_j [p_j]=0$ and each $p_j$'s are  monic monomials which are not
  divisible by leading terms of $G$ . Then $\sum_jc_jp_j\in I$, but by
  the Algorithm \ref{multivariate_polynomial_division}. $\overline{\sum_jc_j
    p_j}^G=\sum_jc_j
    p_j$. So $\sum_jc_jp_j=0$ in $R$ and $c_j$'s are all zero.
\end{proof}

As an application, consider fraction reduction for $N/(f_1 \ldots
f_k)$, where $N$ is polynomial in $R$,
\begin{equation}
  \label{eq:60}
  \frac{N}{f_1 \ldots f_k}=\frac{r}{f_1 \ldots f_k} + \sum_{j=1}^k
  \frac{s_i}{f_1 \ldots \hat{f_j}\ldots f_k}.
\end{equation}
The goal is to make $r$ simplest, i.e., $r$ should not contain any
term which belongs to $I=\la f_1,\ldots f_k \ra$. We compute the \GB
\ of $I$, $G=\{g_1,\ldots g_l\}$ and record the conversion relations
$g_i=\sum_{j=1}^k f_j a_{ji}$ from the computation. 

Polynomial division of $N$ towards
$G$ gives,
\begin{equation}
  \label{eq:57}
  N=r+\sum_{i=1}^l q_i g_i
\end{equation}
where $r$ is the remainder. The result,
\begin{equation}
  \label{fraction_reduction}
   \frac{N}{f_1 \ldots f_k}=\frac{r}{f_1 \ldots f_k} + \sum_{j=1}^k
  \frac{\big(\sum_{i=1}^l a_{ji}q_i\big)}{f_1 \ldots \hat{f_j}\ldots f_k},
\end{equation}
gives the complete reduction since by the properties of $G$, no term in
$r$ belongs to $I$. \eqref{fraction_reduction} solves integrand
reduction problem for multi-loop diagrams. In practice, there are
shortcuts to compute numerators like $\big(\sum_{i=1}^l a_{ji}q_i\big)$.

\subsubsection{Solve polynomial equations with \GB}
In general, it is very difficult to solve multivariate polynomial
equations since variables are entangled. \GB\ characterizes the
solution set and can also remove variable entanglements. 

\begin{thm}
  Let $f_1\ldots f_k$ be polynomials in $R=\F[x_1,\ldots x_n]$ and
  $I=\la f_1\ldots f_k\ra$. Let $\bar \F$ be the algebraic closure of $\F$. The solution set
  in $\bar \F$, 
  $\mathcal Z_{\bar\F}(I)$ is finite, if and only if 
  $R/I$ is a finite dimensional $\F$-linear space. In this case,  the
  number of solutions in $\bar \F$, counted with multiplicity, equals
  $\dim_\F (R/I)$. 
\end{thm}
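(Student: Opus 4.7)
The plan is to reduce both directions to a statement about the finite set of monomials that are not divisible by any leading term of a \GB\ of $I$, and then to obtain the multiplicity statement by base-changing to $\bar\F$ and splitting the quotient ring over its maximal ideals.

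First I fix any monomial order $\succ$ and compute a \GB\ $G=\{g_1,\ldots,g_m\}$ of $I$. By Proposition \ref{remainders}, the set $M$ of monic monomials not divisible by any $\LT(g_i)$ is an $\F$-basis of $R/I$, so $\dim_\F(R/I)<\infty$ iff $M$ is finite. A small combinatorial observation gives the key characterization: $M$ is finite iff for each variable $x_j$ some pure power $x_j^{n_j}$ is divisible by some $\LT(g_i)$, which is in turn equivalent to $I$ containing a nonzero univariate polynomial $p_j(x_j)\in\F[x_j]$ (namely the one whose leading term is that pure power). Given this, the implication $\dim_\F(R/I)<\infty\Rightarrow\mathcal Z_{\bar\F}(I)$ finite is immediate: any common zero must have its $j$-th coordinate among the finitely many $\bar\F$-roots of $p_j$, so at most $\prod_j\deg p_j$ points.

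For the converse, suppose $V:=\mathcal Z_{\bar\F}(I)=\{P_1,\ldots,P_r\}$ is finite. For each $j$ I set $q_j(x_j):=\prod_{i=1}^r\bigl(x_j-(P_i)_j\bigr)$. Because $I\subset\Fpoly$, the set $V$ is stable under $\mathrm{Gal}(\bar\F/\F)$, so the coefficients of $q_j$ are Galois-invariant, hence $q_j\in\F[x_j]$. Each $q_j$ vanishes identically on $V$, so Hilbert's Nullstellensatz, applied over $\bar\F[x]$ (which is algebraically closed, as required by Theorem~\ref{weak_Nullstellensatz}'s full form), yields $q_j^{k_j}\in I\bar\F[x]$ for some $k_j\in\Z^+$. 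Since $\bar\F$ is a flat $\F$-algebra, $I\bar\F[x]\cap\Fpoly=I$, so $q_j^{k_j}\in I$. Applying the characterization of the first paragraph to these univariate polynomials forces $M$ to be finite, and therefore $\dim_\F(R/I)<\infty$.

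For the multiplicity count I base-change to the algebraic closure: $R/I\otimes_\F\bar\F\cong\bar\F[x]/I\bar\F[x]$, and flatness preserves dimension, so $\dim_\F(R/I)=\dim_{\bar\F}\bigl(\bar\F[x]/I\bar\F[x]\bigr)$. The maximal ideals of $\bar\F[x]$ containing $I\bar\F[x]$ are precisely the $\mathfrak m_i:=\la x_1-(P_i)_1,\ldots,x_n-(P_i)_n\ra$ by Hilbert's Nullstellensatz. Since $I\bar\F[x]$ is zero-dimensional, the Chinese Remainder Theorem (equivalently, primary decomposition over an Artinian ring) gives
\[
\bar\F[x]/I\bar\F[x]\;\cong\;\prod_{i=1}^r\bar\F[x]_{\mathfrak m_i}\big/I\bar\F[x]_{\mathfrak m_i},
\]
and the dimension of the $i$-th factor is by definition the multiplicity $\mu(P_i)$ of the root $P_i$. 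Summing produces $\dim_\F(R/I)=\sum_{i=1}^r\mu(P_i)$, the number of solutions with multiplicity.

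The main obstacle is the multiplicity half: both the descent from $\bar\F$ to $\F$ (which needs $I\bar\F[x]\cap\Fpoly=I$, i.e.\ flatness of $\bar\F/\F$) and the local decomposition of a zero-dimensional Artinian quotient into its local factors at the maximal ideals are non-trivial inputs that the notes have not yet developed. The two finite/infinite implications, by contrast, are a direct and relatively routine application of the \GB\ structure of $R/I$ together with Hilbert's Nullstellensatz applied over $\bar\F$.
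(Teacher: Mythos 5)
Your proof reconstructs the standard Cox--Little--O'Shea argument that the paper cites, and the overall architecture --- the \GB\ characterization of finiteness of the staircase set $M$, the easy direction via univariate polynomials in $I$, the base change to $\bar\F$, and the Artinian local decomposition for the multiplicity count --- is sound. But one step fails at the generality of the statement. You argue that because $\mathcal Z_{\bar\F}(I)$ is stable under $\mathrm{Gal}(\bar\F/\F)$, the polynomial $q_j(x_j)=\prod_i\bigl(x_j-(P_i)_j\bigr)$ has coefficients in $\F$. This is correct only when $\F$ is perfect: the fixed field of $\mathrm{Aut}(\bar\F/\F)$ is the perfect closure of $\F$, not $\F$ itself. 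Concretely, take $\F=\mathbb F_p(t)$ and $I=\langle x^p-t\rangle$; the unique zero $t^{1/p}$ is fixed by every automorphism of $\bar\F$ over $\F$, yet $x-t^{1/p}\notin\F[x]$.

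Two repairs work. First, in characteristic $p$ replace $q_j$ by $q_j^{p^e}$ for $e$ large enough: its coefficients are $p^e$-th powers of elements of the perfect closure and hence lie in $\F$, it still vanishes on $V$, and your Nullstellensatz-plus-descent step goes through. (Also note that the descent $I\bar\F[x]\cap R=I$ requires \emph{faithful} flatness, not mere flatness; this holds because $\bar\F$ is free over $\F$, but that is the hypothesis actually being used.) Second, and more in the spirit of these notes: Buchberger's algorithm involves only coefficient-field arithmetic, so a reduced \GB\ of $I$ computed over $\F$ is also a reduced \GB\ of $I\bar\F[x]$ over $\bar\F$ --- the text remarks on exactly this stability under field extension --- hence the staircase $M$ is the same for both rings and $\dim_\F(R/I)=\dim_{\bar\F}\bigl(\bar\F[x]/I\bar\F[x]\bigr)$ unconditionally. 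This collapses the entire theorem, including the multiplicity count, to the algebraically closed case, where the Nullstellensatz applies directly and the Galois and descent difficulties disappear.
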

\begin{proof}
  See Cox, Little, O'Shea \cite{opac-b1094391}. The rigorous
  definition of multiplicity is given in the next chapter, Definition \ref{local_ring}. 
\end{proof}
 Note that again, we
  distinguish $\F$ and its algebraic closure $\bar \F$, since we do not
  need computations in $\bar \F$ to count total number of solutions in
  $\bar \F$. $\dim_\F (R/I)$ can be obtained by counting all monomials not
  divisible by $\LT(G(I))$, leading terms of the \GB. Explicitly,
  $\dim_\F (R/I)$ is computed by \pmb{vdim} of
  \Singular. 

\begin{example}
  \label{eq:61}
 Consider $f_1=-x^2+x+y^2+2, f_2=x^3-x y^2-1$. Determine the number of
 solutions $f_1=f_2=0$ in $\C^2$. 

Compute the \GB\ for $\{f_1,f_2\}$ in \grevlex with $x\succ y$, we get,
\begin{equation}
  \label{eq:62}
  G=\{y^2+3x+1,x^2+2x-1\}.
\end{equation}
Then $\LT(G)$=$\{y^2,x^2\}$. Then $M$ in Proposition \ref{remainders}
is clearly $\{1, x ,y, x y\}$. The linear basis for $\Q[x,y]/\la f_1,
f_2\ra$ is $\{[1],[x],[y],[x y]\}$. Therefore there are $4$ solutions
in $\C^2$. Note that B\'ezout's theorem would give the number $2\times
3=6$. However, we are considering the solutions in affine space, so
there are $6-4=2$ solutions at infinity. Another observation is that the
second polynomial in $G$ contains only $x$, so the variable
entanglement disappears and we can first solve for $x$ and then us
$x$-solutions to solve $y$. This idea will be developed in the next
topic, elimination theory.
\end{example}

\begin{example}[Sudoku]
  Sudoku is a popular puzzle with $9\times 9$ spaces. The
  goal is to fill in digits from ${1,2,\ldots, 9}$, such
  that each row, each column and each $3\times 3$ sub-box contain digits
  $1$ to $9$. See two Sudoku problems in Figure \ref{sudoku_1}.
\begin{figure}
\centering
\subfloat{
  \begin{lpsudoku}[scale=0.7,title=,titleindent=-3cm]
  \setrow{9}{{}, {}, {}, {}, {}, 8, {}, {}, 4}\setrow{8}{{}, 9, {}, {}, {}, 6, {}, 5, {}}\setrow{7}{6, 5, 2, {}, {}, {}, 1, {}, {}}\setrow{6}{7, {}, {}, {}, {}, 4, {}, 3, {}}\setrow{5}{{}, {}, {}, 3, 1, 9, {}, {}, {}}\setrow{4}{{}, 1, {}, 2, {}, {}, {}, {}, 9}\setrow{3}{{}, {}, 1, {}, {}, {}, 9, 6, 7}\setrow{2}{{}, 7, {}, 6, {}, {}, {}, 8, {}}\setrow{1}{4, {}, {}, 9, {}, {}, {}, {}, {}}
    \end{lpsudoku}}\subfloat{
  \begin{lpsudoku}[scale=0.7,title=,titleindent=-3cm]
    \setrow{9}{5, 3, {}, {}, 7, {}, {}, {}, {}}\setrow{8}{6, {}, {}, 1, {}, 5, {}, {}, {}}\setrow{7}{{}, 9, 8, {}, {}, {}, {}, 6, {}}\setrow{6}{8, {}, {}, {}, {}, {}, {}, {}, 3}\setrow{5}{4, {}, {}, 8, {}, 3, {}, {}, 1}\setrow{4}{7, {}, {}, {}, 2, {}, {}, {}, 6}\setrow{3}{{}, 6, {}, {}, {}, {}, 2, 8, {}}\setrow{2}{{}, {}, {}, 4, 1, 9, {}, {}, 5}\setrow{1}{{}, {}, {}, {}, 8, {}, {}, 7, 9}
    \end{lpsudoku}}
\caption{two Sudoku puzzles}
\label{sudoku_1}
\end{figure}

 Typically people solve Sudoku with backtracking algorithm: try to fill
 in as many digits as possible, and if there is no way to proceed then
 go
 one step back. It can be easily implemented in computer codes, and
 usually it is very efficient. Here we introduce solving Sudoku by
 \GB. This method is not the most efficient way, however, besides
 finding a solution, it illustrates the global structure of
 solutions.

We convert this puzzle to an algebraic problem. Name the digit on
$i$-th row and $j$-th column as $x_{ij}$. $x_{ij}$ must be in
$\{1,\ldots 9\}$. Let,
\begin{equation}
  \label{eq:63}
  F(x)=(x-1)(x-2) \ldots (x-9).
\end{equation}
So there are $81$ equations, $F(x_{ij})=0$. Two spaces in the same
row, or in the same column, or in the same sub-box, cannot contain the
same digit. For example,
$x_{11}\not=x_{12}$. Note this is not an equality, how do we write an
algebraic equation to describe this constraint?

The standard trick to ``differentiate'' polynomials. Consider
$F(y)-F(x)$, where $x$ and $y$ refer to two boxes that cannot contain
the same digit. $F(y)-F(x)$ must be proportional to $y-x$. 
\begin{equation}
  \label{eq:58}
  \frac{F(y)-F(x)}{y-x}=g(x,y).
\end{equation}
where $g(x,y)$ is a polynomial. It is clearly that when $y\not=x$,
$g(x,y)=0$. On the other hand, from the Taylor
series,
\begin{equation}
  \label{eq:64}
  F(y)-F(x)=(y-x)\bigg(F'(x)+\half (y-x) F''(x) + \ldots
  \bigg)=(y-x)g(x,y).
\end{equation}
If $g(x,y)=0$ but $y=x$, then $F'(x)=0$. However $F(x)$ has no multiple
root, that means $F(x)$ and $F'(x)$ cannot be both zero. So  if
$g(x,y)=0$ then $y\not=x$. There are $810$ such equations like
$g(x_{11},x_{12})=0$. Then with the known input information in Sukodu, we have a
polynomial equation system. 

For the first Sudoku, there are $81+810+27=918$ equations. It is really a
large system with high degree polynomials. Amazingly, we can still
solve it by \GB. Using \pmb{slimgb} command in \Singular, and the
number field $Z/11$, this sudoku is solved on a laptop computer with in about
$4.9$ seconds. The output \GB\ is linear and gives the unique solution
of the Sudoku (Figure \ref{sudoku_solution_1}).
\begin{figure}
\centering
\begin{lpsudoku}[scale=0.7,title=,titleindent=-1.5cm]
\setrow{9}{\hw[1], \hw[3], \hw[7], \hw[5], \hw[9], \bf{8}, \hw[6],
  \hw[2], \bf{4}}\setrow{8}{\hw[8], \bf{9}, \hw[4], \hw[1], \hw[2],
  \bf{6}, \hw[7], \bf{5}, \hw[3]}\setrow{7}{\bf{6}, \bf{5}, \bf{2},
  \hw[7], \hw[4], \hw[3], \bf{1}, \hw[9], \hw[8]}\setrow{6}{\bf{7},
  \hw[6], \hw[9], \hw[8], \hw[5], \bf{4}, \hw[2], \bf{3},
  \hw[1]}\setrow{5}{\hw[2], \hw[4], \hw[8], \bf{3}, \bf{1}, \bf{9},
  \hw[5], \hw[7], \hw[6]}\setrow{4}{\hw[5], \bf{1}, \hw[3], \bf{2},
  \hw[6], \hw[7], \hw[8], \hw[4], \bf{9}}\setrow{3}{\hw[3], \hw[2],
  \bf{1}, \hw[4], \hw[8], \hw[5], \bf{9}, \bf{6},
  \bf{7}}\setrow{2}{\hw[9], \bf{7}, \hw[5], \bf{6}, \hw[3], \hw[1],
  \hw[4], \bf{8}, \hw[2]}\setrow{1}{\bf{4}, \hw[8], \hw[6], \bf{9},
  \hw[7], \hw[2], \hw[3], \hw[1], \hw[5]}
  \end{lpsudoku}
\caption{Sudoku with unique solution, which is determined by \GB.}
\label{sudoku_solution_1}
\end{figure}
For Sudoku 2, there
are $919$ equations. \Singular\ takes about $5.1$ seconds on a laptop to get
\GB\ ,
\begin{gather}
 G= \{x_{58}^2-4,x_{11}-5,x_{12}-3,x_{13}-4,x_{14}-6,x_{15}-7,x_{16}-8,x_{17}-9,x_{18}-1,x_{19}-2,\nn\\x_{21}-6,x_{22}-7,x_{23}-2,x_
   {24}-1,x_{25}-9,x_{26}-5,x_{27}-3,x_{28}-4,x_{29}-8,x_{31}-1,x_{32}-9,\nn\\x_{33}-8,x_{34}-3,x_{35}-4,x_{36}-2,x_{37}-5,x_{38}-6,x
   _{39}-7,x_{41}-8,x_{42}+9 x_{58}-9,\nn\\x_{43}+9 x_{58}-2,x_{44}-7,x_{45}+3
   x_{58}-11,x_{46}-1,x_{47}-4,x_{48}+x_{58}-11,x_{49}-3,x_{51}-4,\nn\\
x_{52}+2 x_{58}-9,x_{53}+2 x_{58}-2,x_{54}-8,x_{55}+8
   x_{58}-11,x_{56}-3,x_{57}-7,x_{59}-1,x_{61}-7,\nn\\x_{62}-1,x_{63}-3,x_{64}-9,x_{65}-2,x_{66}-4,x_{67}-8,x_{68}-5,x_{69}-6,x_{71}-
   9,x_{72}-6,x_{73}-1,\nn\\x_{74}-5,x_{75}-3,x_{76}-7,x_{77}-2,x_{78}-8,x_{79}-4,x_{81}-2,x_{82}-8,x_{83}-7,x_{84}-4,x_{85}-1,\nn\\x_{86}
   -9,x_{87}-6,x_{88}-3,x_{89}-5,x_{91}-3,x_{92}-4,x_{93}-5,x_{94}-2,\nn\\x_{95}-8,x_{96}-6,x_{97}-1,x_{98}-7,x_{99}-9\}\,.
\end{gather}
Note that the new feature is that $G$ contains a quadratic polynomial,
which means the solution for this sudoku is not unique. From leading
term counting, there are $2$ solutions. Explicitly, solve the first
equation 
\begin{gather}
  \label{eq:67}
  x_{58}^2=4 \mod 11\,,
\end{gather}
and we get two solutions, $x_{58}=2$ or $x_{58}=9$. Afterwards, we get two
complete  solutions (Figure \ref{sudoku_solution_2}).

\begin{figure}
\centering
\subfloat{
  \begin{lpsudoku}[scale=0.7,title=,titleindent=-1.5cm]
 \setrow{9}{\bf{5}, \bf{3}, \hw[4], \hw[6], \bf{7}, \hw[8], \hw[9], \hw[1], \hw[2]}\setrow{8}{\bf{6}, \hw[7], \hw[2], \bf{1}, \hw[9], \bf{5}, \hw[3], \hw[4], \hw[8]}\setrow{7}{\hw[1], \bf{9}, \bf{8}, \hw[3], \hw[4], \hw[2], \hw[5], \bf{6}, \hw[7]}\setrow{6}{\bf{8}, \hw[2], \hw[6], \hw[7], \hw[5], \hw[1], \hw[4], \hw[9], \bf{3}}\setrow{5}{\bf{4}, \hw[5], \hw[9], \bf{8}, \hw[6], \bf{3}, \hw[7], \hw[2], \bf{1}}\setrow{4}{\bf{7}, \hw[1], \hw[3], \hw[9], \bf{2}, \hw[4], \hw[8], \hw[5], \bf{6}}\setrow{3}{\hw[9], \bf{6}, \hw[1], \hw[5], \hw[3], \hw[7], \bf{2}, \bf{8}, \hw[4]}\setrow{2}{\hw[2], \hw[8], \hw[7], \bf{4}, \bf{1}, \bf{9}, \hw[6], \hw[3], \bf{5}}\setrow{1}{\hw[3], \hw[4], \hw[5], \hw[2], \bf{8}, \hw[6], \hw[1], \bf{7}, \bf{9}}
    \end{lpsudoku}}\subfloat{
  \begin{lpsudoku}[scale=0.7,title=,titleindent=-1.5cm]
 \setrow{9}{\bf{5}, \bf{3}, \hw[4], \hw[6], \bf{7}, \hw[8], \hw[9], \hw[1], \hw[2]}\setrow{8}{\bf{6}, \hw[7], \hw[2], \bf{1}, \hw[9], \bf{5}, \hw[3], \hw[4], \hw[8]}\setrow{7}{\hw[1], \bf{9}, \bf{8}, \hw[3], \hw[4], \hw[2], \hw[5], \bf{6}, \hw[7]}\setrow{6}{\bf{8}, \hw[5], \hw[9], \hw[7], \hw[6], \hw[1], \hw[4], \hw[2], \bf{3}}\setrow{5}{\bf{4}, \hw[2], \hw[6], \bf{8}, \hw[5], \bf{3}, \hw[7], \hw[9], \bf{1}}\setrow{4}{\bf{7}, \hw[1], \hw[3], \hw[9], \bf{2}, \hw[4], \hw[8], \hw[5], \bf{6}}\setrow{3}{\hw[9], \bf{6}, \hw[1], \hw[5], \hw[3], \hw[7], \bf{2}, \bf{8}, \hw[4]}\setrow{2}{\hw[2], \hw[8], \hw[7], \bf{4}, \bf{1}, \bf{9}, \hw[6], \hw[3], \bf{5}}\setrow{1}{\hw[3], \hw[4], \hw[5], \hw[2], \bf{8}, \hw[6], \hw[1], \bf{7}, \bf{9}}
    \end{lpsudoku}}
\caption{Sudoku with multiple solutions, determined by \GB.}
\label{sudoku_solution_2}
\end{figure}

\end{example}

\subsubsection{Elimination theory}
We already see that \GB\ can remove variable entanglement, here we
study this property via elimination theory,

\begin{thm}
\label{Elimination}
  Let $R=\F[y_1,\ldots y_m,z_1,\ldots z_n]$ be a polynomial ring and
  $I$ be an ideal in $R$. Then $J=I\cap \Fpoly$, the {\it elimination
    ideal}, is an ideal of $\Fpoly$. $J$ is generated by $G(I)\cap
  \Fpoly$, where $G(I)$ is the \GB\ of $I$ in \lex\ order with $y_1\succ
  y_2\ldots \succ y_m\succ z_1\succ z_2\ldots \succ z_n$.
\end{thm}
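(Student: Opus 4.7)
The plan is to prove the theorem in two stages: first verify that $J$ is an ideal of $\Fpoly$, then establish the equality $J = \langle G(I) \cap \Fpoly \rangle$ by double inclusion, with the crucial input being a monomial-order argument specific to \lex.

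First I would check that $J = I \cap \Fpoly$ is an ideal of $\Fpoly$. This is essentially formal: sums of two elements of $J$ lie in $\Fpoly$ (trivially) and in $I$ (since $I$ is an ideal of $R$), so in $J$; and multiplying an element of $J$ by any $h \in \Fpoly \subset R$ keeps us in $\Fpoly$ and in $I$, hence in $J$. Write $G' := G(I) \cap \Fpoly$. The easy inclusion $\langle G' \rangle \subseteq J$ follows because each $g \in G'$ belongs to $I$ by construction, so any $\Fpoly$-combination of elements of $G'$ lies in $I \cap \Fpoly = J$.

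The nontrivial direction is $J \subseteq \langle G' \rangle$. Here I would exploit the defining feature of the chosen \lex\ order: with $y_1 \succ \cdots \succ y_m \succ z_1 \succ \cdots \succ z_n$, any monomial that contains a positive power of some $y_j$ strictly dominates every monomial in the $z$'s alone. Consequently, if a polynomial $g \in R$ has $\LT(g) \in \Fpoly$, then every monomial appearing in $g$ must also lie in $\Fpoly$; otherwise some $y$-containing monomial of $g$ would exceed $\LT(g)$, contradicting the definition of leading term. This gives the key lemma: $g \in G$ with $\LT(g) \in \Fpoly$ implies $g \in G'$.

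Now take any $f \in J$ and run the multivariate division algorithm of $f$ against $G$. Because $f \in I$, Proposition \ref{GB_division} guarantees remainder zero. At each intermediate step, the current polynomial $F'$ starts in $\Fpoly$ (since $f$ does) and its leading term is reduced by some $g_i \in G$ with $\LT(g_i) \mid \LT(F')$; since $\LT(F') \in \Fpoly$, the key lemma forces $g_i \in G'$ and the multiplier $\LT(F')/\LT(g_i) \in \Fpoly$, so $F' - (\LT(F')/\LT(g_i)) g_i$ stays in $\Fpoly$. By induction the division never leaves $\Fpoly$, and only elements of $G'$ are ever used. Therefore the relation $f = \sum_{g_i \in G'} q_i\, g_i$ produced by the algorithm has $q_i \in \Fpoly$, proving $f \in \langle G' \rangle$ inside $\Fpoly$. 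The main obstacle is cleanly articulating the \lex-specific fact that $\LT(g) \in \Fpoly$ forces $g \in \Fpoly$; once that is in hand, the rest reduces to tracking the division algorithm, which preserves membership in $\Fpoly$ automatically.
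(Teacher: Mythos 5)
Your proof is correct and reproduces the standard argument for the Elimination Theorem found in the reference the paper cites (Cox--Little--O'Shea); the paper itself does not supply a proof. The key lemma---that under the chosen \lex\ order, $\LT(g)\in\Fpoly$ forces every monomial of $g$ into $\Fpoly$---is exactly the crux, and tracking the division algorithm through $G'$ is the standard way to finish; one small point worth spelling out is that no term is ever moved to the remainder $r$ during the division (since moved terms are distinct monomials of strictly decreasing order and the final remainder is zero), which is what licenses your claim that every intermediate step is a reduction by some $g_i\in G'$.
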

\begin{proof}
  See Cox, Little and O'Shea \cite{MR3330490}. 
\end{proof}
Note that elimination ideal $J$ tells the relations between $z_1\ldots
z_n$, without the interference with $y_i$'s. In this sense, $y_i$'s are ``eliminated''. It is very useful for
studying polynomial equation system. In practice, \GB\ in \lex\ may
involve heavy computations. So frequently, we use block order instead,
$[y_1,\ldots y_m]\succ [z_1,\ldots z_n]$ while in each block \grevlex\
can be applied. 

Eliminate theory applies in many scientific directions, for example, it
transfers tree-level scattering equations (CHY formalism)
\cite{Cachazo:2013iea,Cachazo:2013hca,Cachazo:2013gna,Cachazo:2014nsa,Cachazo:2014xea} with $n$ particles, in $(n-3)$
variables, to a univariate polynomial equation \cite{Dolan:2015iln}. Here we give a simple
example in IMO,
\begin{example}[International Mathematical Olympiad, 1961/1]\ 

\noindent {\bf Problem} Solve the system of equations: 
\begin{eqnarray}
  \label{eq:69}
  x + y + z &=& a\nn \\
x^2 + y^2 + z^2 &=& b^2\nn \\
xy &=& z^2
\end{eqnarray}
where $a$ and $b$ are constants. Give the conditions that $a$ and $b$ must satisfy
so that $x, y, z$ (the solutions of the system) are distinct positive
numbers.

\noindent {\bf Solution} The tricky part is the condition for positive
distinct $x,y,z$. Now with \GB\, this problem can
be solved automatically. 

First, eliminate $x,y$ by \GB\ in \lex\ with $x\succ y \succ z$. For
example, in \mm\ 
\begin{gather*}
 \pmb{ \text{GroebnerBasis}[\{-a+x+y+z,-b^2+x^2+y^2+z^2,x
   y-z^2\},\{x,y,z\},}\\
\pmb{
\text{MonomialOrder}\to \text{Lexicographic},\text{CoefficientDomain}\to \text{RationalFunctions}]}
\end{gather*}
and the resulting \GB\ is,
\begin{equation}
  \label{eq:71}
  G=\left\{a^2\c -2 a z\c -b^2,-a^4+y \left(2 a^3\c +2 a b^2\right)+2 a^2 b^2-4 a^2 y^2-b^4,a^2-2 a
   x\c-2 a y+b^2\right\}.
\end{equation}
The first element is in $\Q(a,b)[z]$, hence it generates the
elimination ideal. Solve this equation, we get,
\begin{equation}
  \label{IMO_z}
  z=\frac{a^2-b^2}{2 a}\,.
\end{equation}
Then eliminate $y,z$ by \GB\ in \lex\ with $z\succ y \succ x$. We get
the equation,
\begin{equation}
  \label{IMO_x}
  a^4+x (-2 a^3-2 a b^2)-2 a^2 b^2+4 a^2 x^2+b^4=0\,.
\end{equation}
To make sure $x$ is real we need the discriminant,
\begin{equation}
  \label{x_discriminant}
  -4 a^2 (a^2-3 b^2) (3 a^2-b^2)\geq 0\,.
\end{equation}
Similarly, to eliminate $x,z$, we use \lex\ with $z\succ x \succ y$
and get
\begin{equation}
  \label{eq:75}
  a^4+y (-2 a^3-2 a b^2)-2 a^2 b^2+4 a^2 y^2+b^4=0\,,
\end{equation}
and the same real condition as \eqref{x_discriminant}. Note that $x$
and $y$ are both positive, if and only if $x,y$ are real, $x+y>0$ and
$x y$.  Hence positivity for $x,y,z$ means,
\begin{align}
  \label{eq:66}
  z=\frac{a^2-b^2}{2 a}&>0\nn\\
  x+y=a-z=a- \frac{a^2-b^2}{2a}&>0\\
 -4 a^2 (a^2-3 b^2) (3 a^2-b^2)&\geq 0.
\end{align}
which implies that,
\begin{equation}
  \label{admissible_region_pre}
  a>0,\quad b^2<a^2\leq 3b^2.
\end{equation}
To ensure that $x$, $y$ and $z$ are distinct, we consider the ideal
in $\Q[a,b,x,y,z]$.
\begin{equation}
  \label{eq:70}
  J=\{-a+x+y+z,-b^2+x^2+y^2+z^2,x
   y-z^2,(x-y)(y-z)(z-x)\}.
\end{equation}
Note that to study the $a$, $b$ dependence, we consider $a$ and $b$
as variables. Eliminate ${x,y,z}$, we have,
\begin{equation}
  \label{eq:74}
  g(a,b)=(a-b) (a+b) (a^2-3 b^2)^2 (3 a^2-b^2) \in J.
\end{equation}
If all the four generators in $J$ are zero for some value of
$(a,b,x,y,z)$, then $g(a,b)=0$. Hence, if $g(a,b)\not=0$,
$x$, $y$ and $z$ are distinct in the solution. So it is clear that
inside the region defined by \eqref{admissible_region_pre}, the subset set
\begin{equation}
  \label{admissible_region}
  a>0,\quad b^2<a^2<3b^2.
\end{equation}
satisfies the requirement of the problem. On the other hand, if
$a^2=3b^2$, explicitly we can check that $x$, $y$ and $z$ are not
distinct in all solutions. Hence $x,y,z$ in a solution are positive
and distinct, if and only if $a>0$ and $b^2<a^2<3b^2$. With \eqref{IMO_z} and 
\eqref{IMO_x}, it is trivial to obtain the solutions. 
\end{example}

\subsubsection{Intersection of ideals}
In general, given two ideals $I_1$ and $I_2$ in $R=\Fpoly$, it is very
easy to get the generating sets for $I_1+I_2$ and $I_1 I_2$. However,
it is difficult to compute $I_1\cap I_2$. Hence again we refer to
\GB\, especially to elimination theory. 

\begin{proposition}
  Let $I_1$ and $I_2$ be two ideals in $R=\Fpoly$. Define 
  $J$ as the ideal generated by $\{t f|f\in I_1\}\cup \{(1-t) g|g\in
  I_2\}$ in $\F[t,z_1,\ldots z_n]$. Then $I_1\cap I_2=J\cap R$, and
  the latter can be computed by elimination theory.   
\end{proposition}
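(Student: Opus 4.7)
The plan is to prove the set equality $I_1 \cap I_2 = J \cap R$ by showing both inclusions, and then note that the right-hand side is computable as an elimination ideal via Theorem \ref{Elimination}.

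For the inclusion $I_1 \cap I_2 \subseteq J \cap R$, I would take any $h \in I_1 \cap I_2$ and use the identity
\begin{equation}
h = t \cdot h + (1-t) \cdot h.
\end{equation}
Since $h \in I_1$, the term $t \cdot h$ lies in $J$ by the definition of the generating set of $J$; similarly $(1-t)\cdot h \in J$ since $h \in I_2$. Thus $h \in J$, and since $h \in R$ has no $t$-dependence, $h \in J \cap R$.

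For the reverse inclusion $J \cap R \subseteq I_1 \cap I_2$, I would take $h \in J \cap R$ and write, using the generators of $J$,
\begin{equation}
h(z_1,\ldots,z_n) = \sum_i p_i(t,z) \, t f_i(z) + \sum_j q_j(t,z) \, (1-t) g_j(z),
\end{equation}
with $f_i \in I_1$, $g_j \in I_2$, and $p_i, q_j \in \F[t,z_1,\ldots,z_n]$. The key trick is to specialize $t$: substituting $t=0$ makes the first sum vanish, giving $h = \sum_j q_j(0,z) g_j(z) \in I_2$; substituting $t=1$ makes the second sum vanish, giving $h = \sum_i p_i(1,z) f_i(z) \in I_1$. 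Both substitutions are legal because $h$ itself does not depend on $t$, so the left-hand side is unchanged. Hence $h \in I_1 \cap I_2$.

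Finally, for the computational claim, I would invoke Theorem \ref{Elimination}: compute a Gröbner basis of $J$ with respect to a lex (or block) order in which $t$ is greater than every $z_i$, and then $J \cap R$ is generated by those basis elements that are free of $t$. There is no serious obstacle here; the only subtle point is to check that the substitutions $t=0$ and $t=1$ are well-defined ring homomorphisms on $\F[t,z_1,\ldots,z_n]$ that fix $R$, which is immediate, and to note that the generating set $\{tf \mid f \in I_1\} \cup \{(1-t)g \mid g \in I_2\}$ can in practice be replaced by the finite set $\{t f_1,\ldots,t f_k,(1-t)g_1,\ldots,(1-t)g_l\}$ where $\{f_i\}$ and $\{g_j\}$ are generating sets of $I_1$ and $I_2$ respectively (by Theorem \ref{thm_Noether}), so the elimination is actually carried out on a finite input.
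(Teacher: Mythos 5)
Your proof is correct and follows essentially the same route as the paper: both inclusions are established by the same identity $h = th + (1-t)h$ in one direction and by specializing $t$ to $0$ and $1$ in the other, followed by an appeal to Theorem~\ref{Elimination} for the computation. Your write-up is if anything a bit more careful than the paper's, which writes a generic element of $J$ as a single term $a\,t f + b\,(1-t)g$ with $f\in I_1$, $g\in I_2$ rather than as a finite sum over generators, and you also explicitly record the finiteness of the generating sets via Theorem~\ref{thm_Noether}; these are cosmetic refinements of the same argument.
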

\begin{proof}
  If $f\in I_1$ and $f\in I_2$, then $f=t f+(1-t)f \in J$. So $I_1\cap
  I_2\in J\cap R$. On the other hand, if $F\in J\cap R$, then
  \begin{equation}
    \label{eq:77}
    F(t,z_1,\ldots,z_n)=a(t, z_1,\ldots,z_n) t f(z_1,\ldots,z_n)+b(t, z_1,\ldots,z_n) (1-t) g(z_1,\ldots,z_n)\,,
  \end{equation}
where $f\in I_1$, $g\in I_2$. Since $F\in R$, $F$ is $t$
independent. Plug in $t=1$ and $t=0$, we get,
\begin{equation}
  \label{eq:78}
  F=a(1, z_1,\ldots,z_n) f(z_1,\ldots,z_n),\quad F=b(0, z_1,\ldots,z_n)  g(z_1,\ldots,z_n)\,.
\end{equation}
Hence  $F\in I_1\cap I_2$, $ J\cap R\subset I_1\cap I_2$. 
\end{proof}
In practice, terms like $tf$ and $(1-t)g$ increase degrees by $1$,
hence this elimination method may not be efficient. More efficient
method is given by {\it syzygy} computation \cite[Chapter 5]{opac-b1094391}. 

\subsection{Basic facts of algebraic geometry in affine space II}
In this subsection, we look closer at properties of algebraic sets and
ideals. Consider $I=\{x^2 - y^2, x^3 + y^3 - z^2\}$ in
$\C[x,y,z]$. From naive counting, $\mathcal Z(I)$ is a curve since
there are $2$ equations in $3$ variables. However, the plot of
$\mathcal Z(I)$ (Figure \ref{fig:reducible_curve}) looks like a line
and a cusp curve. So $\mathcal Z(I)$ is {\it reducible}, in the sense
that it can be decomposed into smaller algebraic sets. So we need the
concept of {\it primary decomposition}.
\begin{figure}[ht]
  \centering
  \includegraphics[scale=0.6]{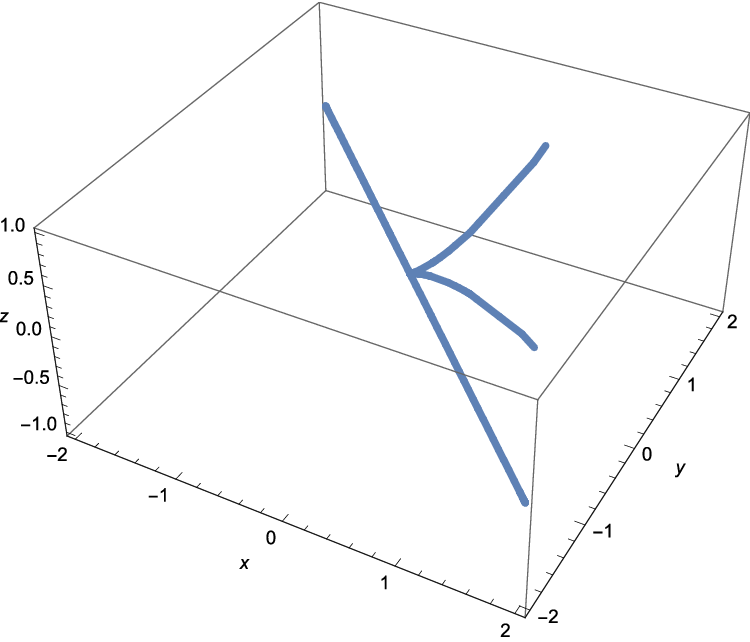}
  \caption{A reducible algebraic set (in blue), defined by $\mathcal Z(\{x^2 - y^2, x^3 + y^3 - z^2\})$.}
  \label{fig:reducible_curve}
\end{figure}
\begin{definition}
  An ideal $I$ in a ring $R$ is called prime, if $\forall a b\in I$ ($a$,
  $b\in R$) then $a\in I$ or $b\in I$. An ideal $I$ in $R$ is called
  primary is if $a b\in I$ ($a$,
  $b\in R$) then $a\in I$ or $b^n\in I$, for some positive integer $n$.
\end{definition}
A prime ideal must be a primary ideal. On the other hand, 
\begin{proposition}
  If $I$ is a primary ideal, then the radical of $I$, $\sqrt I$ is a
  prime ideal.
\end{proposition}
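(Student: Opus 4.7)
The plan is to unfold the two definitions and observe that they chain together almost mechanically. I want to show that $\sqrt{I}$ is prime, meaning: whenever $ab \in \sqrt{I}$, either $a \in \sqrt{I}$ or $b \in \sqrt{I}$.

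First I would take an arbitrary product $ab \in \sqrt{I}$ and translate the hypothesis using the definition of radical: there exists a positive integer $n$ such that $(ab)^n \in I$. Since $R$ is commutative, $(ab)^n = a^n b^n$, so this gives a product $a^n \cdot b^n$ lying in $I$. This is exactly the setup needed to invoke the primary hypothesis.

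Next I would apply the definition of primary ideal to the product $a^n \cdot b^n \in I$. The definition forces one of two alternatives: either $a^n \in I$, in which case $a \in \sqrt{I}$ directly from the definition of the radical; or $(b^n)^m \in I$ for some positive integer $m$. In the latter case $b^{nm} \in I$, and again the definition of the radical yields $b \in \sqrt{I}$. Either way one of $a, b$ lies in $\sqrt{I}$, which is the defining property of a prime ideal.

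The main obstacle is essentially cosmetic: one must keep the quantifier structure straight, because ``primary'' allows the exponent $m$ to depend on which factor is raised, and one must not accidentally demand $a \in I$ when only $a \in \sqrt{I}$ is needed. I would also briefly note that $\sqrt{I}$ is indeed an ideal (standard fact from the binomial expansion, which should already have been observed earlier or can be added as a one-line remark), so that the word ``prime ideal'' in the conclusion makes sense. No further machinery, and in particular no appeal to Hilbert's Nullstellensatz or to Gr\"obner bases, is needed.
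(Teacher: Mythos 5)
Your proof is correct and complete. The paper itself gives no argument, deferring to Zariski--Samuel, so there is nothing to compare against; what you wrote is the standard proof, and it handles the one real subtlety (the asymmetric form of the primary condition, where only the second factor is allowed an exponent) correctly by choosing the roles of $a^n$ and $b^n$ appropriately. One small remark: with the paper's definitions as literally stated, neither ``prime'' nor ``primary'' requires the ideal to be proper, so you do not need to check $\sqrt{I} \neq R$; in most textbook conventions properness is built in, in which case one would add the one-line observation that $1 \in \sqrt{I}$ would force $1 \in I$, contradicting $I$ proper.
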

\begin{proof}
  See Zariski and Samuel \cite[Chapter 3]{MR0384768}.
\end{proof}

Note that $I=\{x^2 - y^2, x^3 + y^3 - z^2\}$ is not a prime ideal or
primary ideal. Define $a=x-y$, $b=x+y$, clearly $ab\in I$, but $a\not
\in I$ and $b^n\not \in I$ for any positive integer $n$. (The point
$P=(2,2,4)\in \mathcal Z(I)$. If $(x+y)^n\in I$ then $(x+y)^n|_P=0$. It
is a contradiction.) 

For another example, $J=\la (x-1)^2 \ra$ in $\C[x]$ is primary but
not prime. $\mathcal Z(J)$ contains only one point $\{1\}$ with the multiplicity $2$. $(x-1)(x-1) \in J$ but $(x-1)\not \in J$.  For there
examples, we see primary condition implies that the corresponding
algebraic set cannot be decomposed to smaller algebraic sets, while
prime condition further requires that the multiplicity is $1$.

\begin{thm}[Lasker-Noether]
  For an ideal $I$ in $\Fpoly$, $I$ has the primary decomposition,
  \begin{equation}
    \label{primary_decomposition}
    I=I_1 \cap \ldots \cap I_m\,,
  \end{equation}
such that,
\begin{itemize}
\item Each $I_i$ is a primary ideal in $\Fpoly$,
\item $I_i\not \supset \cap_{j\not=i} I_j$,
\item $\sqrt I_i \not = \sqrt I_j$, if $i\not =j$.
\end{itemize}
Although primary decomposition may not be unique, the radicals $\sqrt I_i$'s are uniquely determined by $I$ up to orders. 
\end{thm}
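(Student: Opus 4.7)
The plan is to prove existence and uniqueness separately, exploiting the Noetherian property of $\Fpoly$ (Theorem \ref{thm_Noether}). The strategy mirrors the classical Lasker--Noether argument: first decompose $I$ as a finite intersection of \emph{irreducible} ideals (an ideal $J$ is irreducible if $J=J_1\cap J_2$ forces $J=J_1$ or $J=J_2$); then show each irreducible ideal in a Noetherian ring is primary; finally prune the resulting intersection to satisfy the stated minimality conditions.

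For existence, I would first show by Noetherian induction that every ideal is a finite intersection of irreducibles. Let $\mathcal F$ be the set of ideals of $R=\Fpoly$ that cannot be so written. If $\mathcal F\neq\emptyset$, pick a maximal element $J\in\mathcal F$ (this exists because Noetherian means every ascending chain stabilizes, equivalently every nonempty family has a maximal element). Then $J$ is reducible, so $J=J_1\cap J_2$ with $J\subsetneq J_i$; by maximality each $J_i$ is a finite intersection of irreducibles, hence so is $J$, a contradiction. Next I would prove that any irreducible ideal $Q$ is primary: suppose $ab\in Q$ with $a\notin Q$; consider the ascending chain $(Q:b)\subseteq(Q:b^2)\subseteq\cdots$, which stabilizes at some $(Q:b^n)$; then one checks $Q=(Q+\langle a\rangle)\cap(Q+\langle b^n\rangle)$, and irreducibility together with $a\notin Q$ forces $Q=Q+\langle b^n\rangle$, i.e.\ $b^n\in Q$. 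Combining, $I=Q_1\cap\cdots\cap Q_r$ with each $Q_i$ primary. Finally, using the lemma that the intersection of primary ideals with equal radical is again primary, group together any $Q_i,Q_j$ with $\sqrt{Q_i}=\sqrt{Q_j}$, and discard any $Q_i$ containing $\bigcap_{j\neq i}Q_j$; this yields the required decomposition \eqref{primary_decomposition}.

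For uniqueness of the radicals $\{\sqrt{I_i}\}$, the cleanest route is via the characterization of \emph{associated primes}. I would show that a prime $P$ of $R$ equals some $\sqrt{I_i}$ if and only if $P=\sqrt{(I:c)}$ for some $c\in R\setminus I$, where $(I:c)=\{h\in R\mid hc\in I\}$. The ``only if'' direction: given a minimal (among the $\sqrt{I_j}$) decomposition, pick $c\in\bigcap_{j\neq i}I_j\setminus I_i$ and compute that $\sqrt{(I:c)}=\sqrt{I_i}$, using that each $I_j$ is primary. The ``if'' direction comes from taking radicals of $(I:c)=\bigcap_j(I_j:c)$ and noting that $(I_j:c)$ is either $R$ (if $c\in I_j$) or has radical $\sqrt{I_j}$ (by the primary property). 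Since the right-hand characterization depends only on $I$ and not on any chosen decomposition, the set $\{\sqrt{I_i}\}$ is an invariant of $I$.

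The main obstacle is the uniqueness step: the existence part is a straightforward Noetherian induction, but proving that the associated primes are intrinsic to $I$ requires the careful manipulation of ideal quotients $(I:c)$ and a subtle choice of $c$ that isolates a single primary component. I would expect to spend most of the proof verifying the equivalence between ``$P=\sqrt{I_i}$ for some $i$'' and ``$P=\sqrt{(I:c)}$ for some $c\notin I$,'' and checking that the primary property is exactly what makes this equivalence go through. Once this intrinsic characterization is established, uniqueness of the radicals is immediate, while the primary components themselves need not be unique (only the isolated ones, corresponding to minimal associated primes, are).
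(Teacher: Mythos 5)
Your proof is correct and is the standard classical Lasker--Noether argument: Noetherian induction to decompose $I$ into irreducibles, the ideal-quotient chain $(Q:b)\subseteq(Q:b^2)\subseteq\cdots$ to show irreducible ideals in a Noetherian ring are primary, grouping components with equal radical, and the associated-prime characterization $P=\sqrt{(I:c)}$ for uniqueness of the radicals. The paper itself does not prove the theorem---it simply cites Zariski and Samuel---so your writeup supplies exactly the argument that the cited reference (or any standard commutative-algebra text) gives.
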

\begin{proof}
  See Zariski, Samuel \cite[Chapter 4]{MR0384768}.
\end{proof}
Note that unlike Gr\"obner basis, primary decomposition is very
sensitive to the number field. For an ideal $I\subset \Fpoly$, $\F
\subset \mathbb K$, the
primary decomposition results of $I$ in $\Fpoly$ and $\K[z_1,\ldots,z_n]$ can
be different. Primary decomposition can be computed by \Macaulay\ or
\Singular. However, the computation is heavy in general. 

Primary
decomposition was also used for studying string theory vacua \cite{Mehta:2012wk}.
\begin{example}
  Consider $I=\{x^2 - y^2, x^3 + y^3 - z^2\}$. Use \Macaulay\ or
\Singular, we find that, $I=I_1\cap I_2$, where,
\begin{equation}
  \label{eq:68}
  I_1=\la z^2,x+y\ra,\quad I_2=\la 2y^3-z^2,x-y \ra 
\end{equation}
Then $\sqrt I_1=\la z,x+y \ra$ is a prime ideal, where $I_2$ itself is
prime. 
\end{example}

When $I\subset \Fpoly$ has a primary decomposition $I=I_1\cap \ldots
\cap I_m$, $m>1$, then $\mathcal Z_\F(I)=\mathcal Z_\F(I_1) \cup \ldots \cup
\mathcal Z_\F(I_m) $. Then algebraic set decomposed to the union of
sub algebraic sets. We switch the study of reducibility to the geometric
side.  

\begin{definition}
  Let $V$ be a nonempty closed set in $\mathbf A_\F$ in Zariski
  topology, $V$ is irreducible, if $V$ cannot be a union of two
  closed proper subsets of $V$.   
\end{definition}

\begin{proposition}
  Let $\mathbb K$ be an algebraic closed field. There is a one-to-one
  correspondence:
  \begin{equation}
    \begin{array}{ccc}
      \text{prime ideals in }\mathbb K[z_1,\ldots z_n]
        &   &\text{irreducible
        algebraic sets in }\mathbf A_{\mathbb K}\\
I & \longrightarrow & \mathcal Z_\K(I)\\
\mathcal I(V) & \longleftarrow & V\\
\end{array}
  \end{equation}
  \begin{proof} (Sketch)
    This follows from Hilbert Nullstellensatz \eqref{eq:47}. 
  \end{proof}
\end{proposition}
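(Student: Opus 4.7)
The plan is to unpack the correspondence into two pieces: first use Hilbert's Nullstellensatz to obtain a bijection between \emph{radical} ideals and closed algebraic sets, then refine this to the sub-correspondence between prime ideals and irreducible closed sets.

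For the radical/closed bijection, I would invoke the identities already stated in the text: $\mathcal I(\mathcal Z(I))=\sqrt I$ (Hilbert's Nullstellensatz, which needs $\K$ algebraically closed), and $\mathcal Z(\mathcal I(V))=\overline V$ for any subset $V\subset \mathbf A^n_\K$ (the proposition preceding the theorem). For a closed set $V$ the latter gives $\mathcal Z(\mathcal I(V))=V$, and for a radical ideal the former gives $\mathcal I(\mathcal Z(I))=I$. So $\mathcal Z$ and $\mathcal I$ are mutually inverse on these two classes. Since every prime ideal is radical (if $f^n\in I$ with $I$ prime, then some factor $f$ must already lie in $I$ by induction on $n$), prime ideals sit inside the set over which $\mathcal Z$ is already injective with inverse $\mathcal I$.

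Next I would show the two implications that match ``prime'' with ``irreducible''. For the forward direction, assume $I$ is prime and suppose $\mathcal Z(I)=V_1\cup V_2$ with $V_i$ proper closed subsets. Write $V_i=\mathcal Z(J_i)$ with $J_i\supsetneq \mathcal I(V_i)\supseteq I$; pick $f_i\in \mathcal I(V_i)\setminus I$. Then $f_1f_2$ vanishes on $V_1\cup V_2=\mathcal Z(I)$, hence $f_1f_2\in \mathcal I(\mathcal Z(I))=\sqrt I=I$ (using primeness gives radicality). Primeness of $I$ then forces $f_1\in I$ or $f_2\in I$, a contradiction. For the reverse direction, assume $V$ is irreducible closed and suppose $fg\in \mathcal I(V)$ with $f,g\notin \mathcal I(V)$. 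Then $V=(V\cap \mathcal Z(f))\cup (V\cap \mathcal Z(g))$; each piece is closed in $V$, and neither equals $V$ because $f,g$ do not vanish identically on $V$. This contradicts irreducibility, so $\mathcal I(V)$ is prime.

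The only subtle point I expect is the bookkeeping in the forward direction: one must make sure the $f_i$ actually exist and that the equality $\mathcal I(\mathcal Z(I))=I$ is legitimate. Both are handled by noting that prime implies radical and then quoting Nullstellensatz; once that observation is in place, the rest is routine. The reverse direction is the cleaner of the two and requires no appeal to algebraic closedness, which is worth remarking on since the text's hypothesis that $\K$ be algebraically closed is used only through Nullstellensatz in the forward half of the equivalence.
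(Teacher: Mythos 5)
Your proposal is correct, and it follows the route the paper's one-line sketch gestures at: use the Nullstellensatz identities $\mathcal I(\mathcal Z(I))=\sqrt I$ and $\mathcal Z(\mathcal I(V))=\overline V$ to get the bijection between radical ideals and closed sets, observe that prime ideals are radical, and then match ``prime'' with ``irreducible'' by the two standard $fg$-arguments. Since the paper offers no details beyond the Nullstellensatz citation, your writeup simply supplies what was omitted.

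One small bookkeeping slip to fix: you write ``$J_i\supsetneq\mathcal I(V_i)\supseteq I$,'' which is backwards (one always has $\mathcal I(\mathcal Z(J_i))=\sqrt{J_i}\supseteq J_i$, not the reverse), and more importantly the inclusion $\mathcal I(V_i)\supseteq I$ is written as non-strict even though you immediately need to ``pick $f_i\in\mathcal I(V_i)\setminus I$.'' The strictness is easy but should be stated: if $\mathcal I(V_i)=I$ then applying $\mathcal Z$ (legitimate because $V_i$ is closed) gives $V_i=\mathcal Z(\mathcal I(V_i))=\mathcal Z(I)$, contradicting that $V_i$ is proper. That is exactly the bijection you set up in the first paragraph doing the work, not merely ``prime implies radical'' as your closing remarks suggest. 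Your observation that the direction (irreducible closed set $\Rightarrow\mathcal I(V)$ prime) needs no algebraic closedness is correct and worth keeping.
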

We call an irreducible Zariski closed set ``affine variety''. Similar to primary decomposition of ideals, algebraic set has the
following decomposition,
\begin{thm}
\label{variety_decomposition}
  Let $V$ be an algebraic set. $V$ uniquely decomposes as the union of
  affine varieties, $V=V_1 \cup \ldots \cup V_m$, such that $V_i \not
  \supset V_j$ if $i\not =j$.
\end{thm}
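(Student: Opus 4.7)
The plan is to prove existence of a decomposition into irreducible closed sets first, then prune it to satisfy the incomparability condition $V_i \not\supset V_j$, and finally establish uniqueness by exploiting irreducibility. The key structural input is Theorem \ref{thm_Noether} (every ideal of $\Fpoly$ is finitely generated), which via the order-reversing correspondence $V \mapsto \mathcal I(V)$ translates into a descending chain condition on Zariski-closed sets in $\mathbf A^n_\F$: any strictly decreasing chain $V_1 \supsetneq V_2 \supsetneq \ldots$ produces a strictly increasing chain of ideals $\mathcal I(V_1) \subsetneq \mathcal I(V_2) \subsetneq \ldots$, and since every such ideal chain stabilizes (a standard consequence of finite generation), so must the chain of closed sets. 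Here I use that $\mathcal Z(\mathcal I(V)) = \overline V = V$ for closed $V$, proved in the preceding proposition, to ensure the correspondence is strictly monotone on closed sets.

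First I would prove existence by Noetherian induction. Let $\mathcal F$ be the collection of all algebraic sets that cannot be written as a finite union of affine varieties. If $\mathcal F$ is nonempty, the descending chain condition forces $\mathcal F$ to contain a minimal element $V_0$. Such a $V_0$ cannot be irreducible, so by the definition of irreducibility $V_0 = V' \cup V''$ with $V', V''$ proper Zariski-closed subsets of $V_0$. By minimality of $V_0$, neither $V'$ nor $V''$ lies in $\mathcal F$, so each is a finite union of affine varieties, whence $V_0$ is too, contradicting $V_0 \in \mathcal F$. Thus $\mathcal F = \emptyset$ and every $V$ admits some decomposition $V = V_1 \cup \ldots \cup V_N$ into affine varieties. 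I would then delete any $V_i$ that is contained in some $V_j$ with $j \neq i$, repeating until the collection is incomparable, which yields the asserted form $V = V_1 \cup \ldots \cup V_m$ with $V_i \not\supset V_j$ for $i\neq j$.

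For uniqueness, suppose $V = V_1 \cup \ldots \cup V_m = W_1 \cup \ldots \cup W_k$ are two such decompositions. Fix $i$; then
\begin{equation}
V_i = V_i \cap V = \bigcup_{j=1}^{k} (V_i \cap W_j),
\end{equation}
a finite union of closed subsets of $V_i$. Since $V_i$ is irreducible, some summand must equal the whole, giving $V_i \subset W_{j(i)}$ for some index $j(i)$. Symmetrically, applying the same argument to $W_{j(i)}$ using the $V$-decomposition yields $W_{j(i)} \subset V_{i'}$ for some $i'$. Chaining gives $V_i \subset V_{i'}$, and the incomparability hypothesis forces $i = i'$ and hence $V_i = W_{j(i)}$. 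The resulting map $i \mapsto j(i)$ is injective (by $V_i \neq V_{i'}$ for $i \neq i'$) and a symmetric argument makes it surjective, so $\{V_i\} = \{W_j\}$ as unordered collections.

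The step I expect to be the main obstacle is the Noetherian setup: one must carefully translate the ascending chain condition on ideals (obtained from Theorem \ref{thm_Noether} by the standard equivalence with finite generation) into the descending chain condition on closed sets, using the fact that $\mathcal I$ is strictly order-reversing on closed subsets, which itself depends on $\mathcal Z \circ \mathcal I = \mathrm{id}$ on closed sets. Once this machinery is in place, the remainder of the argument is essentially formal bookkeeping driven by the definition of irreducibility.
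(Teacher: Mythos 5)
Your proof is correct, and it takes a genuinely different route from the paper's on the existence half. The paper works on the algebra side: it sets $I = \mathcal I(V)$, applies the Lasker--Noether primary decomposition $I = I_1 \cap \ldots \cap I_m$, observes that since $I$ is radical the components can be taken prime, and then passes to $V = \mathcal Z(I_1) \cup \ldots \cup \mathcal Z(I_m)$; the incomparability of the $\mathcal Z(I_i)$'s is read off from the irredundancy clause of Lasker--Noether. You instead argue entirely on the geometric side by Noetherian induction: the ascending chain condition on ideals (Theorem \ref{thm_Noether}), transported through the strictly order-reversing map $\mathcal I$ with $\mathcal Z \circ \mathcal I = \mathrm{id}$ on closed sets, gives the descending chain condition on Zariski-closed sets, and then a minimal counterexample to decomposability splits and contradicts itself; you prune afterward to get incomparability. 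Your approach is self-contained and more elementary (it never invokes Lasker--Noether), and it works over any coefficient field $\F$, whereas the paper's argument leans on the prime-ideal/irreducible-variety dictionary that is stated only over an algebraically closed field. The paper's approach, on the other hand, buys an explicit match between the geometric components and the minimal primes of $\mathcal I(V)$, which fits the algebra--geometry dictionary being developed in that section. The uniqueness halves of the two proofs are essentially identical: both fix one irreducible component, intersect with the other decomposition, use irreducibility to force containment in a single piece, chain the containments, and invoke incomparability to close the loop.
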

\begin{proof}
Let $I=\mathcal I(V)$. The primary
  decomposition determines that $I=I_1 \cap\ldots \cap I_m$. Since $I$ is a
  radical ideal, all $I_i$'s are prime. Then $V=\mathcal
  Z(I)=\cap_{i=1}^m \mathcal Z(I_i)$. Each  $\mathcal Z(I_i)$ is an
  affine variety.  If $\mathcal Z(I_i) \supset \mathcal Z(I_j)$, then
  $I_i \subset I_j$ which is a violation of radical uniqueness of
  Lasker-Noether theorem. 

If there are two decompositions, $V=V_1 \cup \ldots \cup V_m=W_1\cup
\ldots \cup W_l$. $V_1=V_1\cap(W_1\cup
\ldots \cup W_l)=(V_1\cap W_1)\cup \ldots (V_1\cap W_l)$. Since $V_1$
is irreducible, $V_1$ equals some $V_1\cap W_j$, \WLOG , say
$j=1$. Then $V_1\subset W_1$. By the same analysis $W_1 \subset V_i$
for some $i$. Hence $V_1\subset V_i$ and so $i=1$. We proved
$W_1=V_1$. Repeat this process, we see that the two decompositions
are the same. 
\end{proof}

\begin{example}
\label{dbox_primary_decomposition}
  As an application, we use primary decomposition to find cut
  solutions of $4D$ double box in Table \ref{dbox_sol}. It is quite
  messy to derive all unitarity solutions by brute force computation. In this
  situation, primary decomposition is very helpful.

Use van
  Neerven-Vermaseren variables, the ideal $I=\la D_1,\ldots D_7\ra$
  decomposes as $I=I_1\cap I_2 \cap I_3 \cap I_4 \cap I_5 \cap I_6$. 
  \begin{eqnarray}
    \label{eq:76}
   I_1&=& \{2 y_4-t,s+2 y_2,-t+2 x_3-2 x_4,y_3,\frac{s}{2}+y_1+y_2,x_2-\frac{s}{2},x_1\}\,,\nn\\
   I_2&=& \{t+2 y_4,s+2 y_2,-t+2 x_3+2
   x_4,y_3,\frac{s}{2}+y_1+y_2,x_2-\frac{s}{2},x_1\}\,,\nn\\
   I_3&=& \{s+t+2 y_2+2 y_4,2
  x_4-t,x_3,y_3,\frac{s}{2}+y_1+y_2,x_2-\frac{s}{2},x_1\}\,,\nn\\
   I_4&=& \{s+t+2 y_2-2 y_4,t+2
  x_4,x_3,y_3,\frac{s}{2}+y_1+y_2,x_2-\frac{s}{2},x_1\}\,,\nn\\
   I_5&=& \{s+t+2 y_2+2 y_4,x_4 (2 s+2 t)+y_4 (2 s+2 t)+s t+t^2+4 x_4
  y_4,\nn\\ &&-t+2 x_3-2
  x_4,y_3,\frac{s}{2}+y_1+y_2,x_2-\frac{s}{2},x_1\}\,,\nn\\
 I_6&=& \{s+t+2 y_2-2 y_4,x_4 (-2 s-2 t)+y_4 (-2 s-2 t)+s t+t^2+4 x_4
  y_4,\nn\\ &&-t+2 x_3+2 x_4,y_3,
\frac{s}{2}+y_1+y_2,x_2-\frac{s}{2},x_1\}\,.
  \end{eqnarray}
Each $I_i$ is prime and corresponds to a solution in Table
\ref{dbox_sol}. \Singular\ computes this primary decomposition in
about $3.6$ seconds on a laptop. In practice, the computation can be
sped up if we first eliminate all RSPs.

Hence the unitarity solution set $\mathcal Z(I)$ consists of six
irreducible solution sets $\mathcal Z(I_i)$, $i=1\ldots 6$. Each one
can be parametrized by a free parameter. 
\end{example}

For a variety $V$, we want to define its dimension. Intuitively, we
may test if $V$ contains a point, a curve, a surface...? So the
dimension of $V$ is defined as the length of variety sequence in $V$, 
\begin{definition}
  The dimension of a variety $V$, $\dim V$, is the largest number $n$ in all
  sequences $\emptyset \not=W_0\subset W_1 \ldots \subset W_n\subset V$, where $W_i$'s are
  distinct varieties.
\end{definition}
On the algebraic side, let $V=\mathcal Z(I)$, where $I$ is an ideal in
$R=\Fpoly$. Consider the quotient ring
$R/I$. Roughly speaking, the remaining ``degree of freedom'' of $R/I$
should be the same as $\dim V$. Krull dimension counts ``the degree
of freedom'',
\begin{definition}[Krull dimension]
  The Krull dimension of a ring $S$, is the largest number $n$ in all
  sequences $p_0\subset p_1 \ldots \subset p_n$, where $p_i$'s are
  distinct prime ideals in $S$. 
\end{definition}
If for a prime ideal $I$, $R/I$ is has Krull dimension zero then $I$
is a {\it maximal ideal}. A maximal ideal $I$ in $R$ is an ideal which
such that for any proper ideal $J\supset I$, $J=I$. $I$ is a maximal
idea, if and only if $R/I$ is a field. ($R$ itself is not a maximal
idea of $R$). When $\F$ is
algebraically closed, then any maximal ideal $I$ in $R=\Fpoly$ has the
form \cite{MR3330490},
\begin{equation}
  \label{eq:79}
  I=\la z_1-c_1,\ldots z_n-c_n\ra,\quad c_i\in \F.
\end{equation}
Note that the point $(c_1,\ldots,c_n)$ is zero-dimensional, and
$R/I=\F$ has Krull dimension $0$. More generally,
\begin{proposition}
  If $\F$ is algebraically closed and $I$ a prime proper ideal of
  $R=\Fpoly$. Then the Krull dimension of $R/I$ equals $\dim \mathcal
  Z(I)$. 
\end{proposition}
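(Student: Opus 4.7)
The plan is to translate the Krull-dimension chain in $R/I$ into a chain of irreducible subvarieties of $\mathcal Z(I)$, using two basic correspondences, and then match the maximal lengths.

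First I would invoke two bijections. The standard correspondence theorem for quotient rings says that prime ideals of $R/I$ are in inclusion-preserving bijection with prime ideals of $R$ containing $I$, via $\bar{p}\leftrightarrow\pi^{-1}(\bar{p})$ where $\pi\colon R\to R/I$ is the projection. Since $I$ is itself prime by hypothesis, $I$ appears as the minimal element in this correspondence. Next, because $\F$ is algebraically closed, the proposition stated just before this one gives an inclusion-reversing bijection between prime ideals of $R$ and irreducible Zariski-closed subsets of $\mathbf A^n_\F$, namely $p\mapsto \mathcal Z(p)$ with inverse $W\mapsto \mathcal I(W)$. Restricting this bijection to primes containing $I$ gives an inclusion-reversing bijection between such primes and irreducible closed subsets of $\mathcal Z(I)$ (using the fact that $p\supseteq I\iff \mathcal Z(p)\subseteq \mathcal Z(I)$).

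Composing these two bijections, a chain $\bar p_0\subsetneq \bar p_1\subsetneq\cdots\subsetneq \bar p_n$ of distinct prime ideals of $R/I$ corresponds to a chain $I\subseteq p_0\subsetneq p_1\subsetneq\cdots\subsetneq p_n$ in $R$, and hence to a descending chain of irreducible subvarieties
\begin{equation}
\mathcal Z(I)\supseteq \mathcal Z(p_0)\supsetneq \mathcal Z(p_1)\supsetneq\cdots\supsetneq \mathcal Z(p_n).
\end{equation}
Reading this chain in reverse, one obtains a strictly ascending chain $W_0\subsetneq W_1\subsetneq\cdots\subsetneq W_n\subseteq \mathcal Z(I)$ of irreducible closed subsets of $\mathcal Z(I)$, so $\dim \mathcal Z(I)\geq n$. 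Taking $n$ to be the Krull dimension of $R/I$ gives $\dim \mathcal Z(I)\geq \dim(R/I)$.

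For the reverse inequality, I would start from a maximal strictly ascending chain $W_0\subsetneq W_1\subsetneq\cdots\subsetneq W_m\subseteq \mathcal Z(I)$ of irreducible subvarieties, apply $\mathcal I(\cdot)$ (which is inclusion-reversing and injective by the bijection), and obtain a strict chain $\mathcal I(W_0)\supsetneq \mathcal I(W_1)\supsetneq\cdots\supsetneq \mathcal I(W_m)\supseteq I$ of prime ideals. Passing to $R/I$ via $\pi$ produces a chain of length $m$ of distinct primes of $R/I$, so $\dim(R/I)\geq m=\dim\mathcal Z(I)$. Combining the two inequalities yields the desired equality.

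The main point to be careful about is keeping track of strictness under the two maps: the bijection between primes and irreducible closed sets is inclusion-reversing and one-to-one (this is the nontrivial input, and relies on $\F$ being algebraically closed through Hilbert's Nullstellensatz), so strict inclusions of primes correspond to strict inclusions of varieties in the opposite direction. Once this is in hand the argument is almost bookkeeping; no deep commutative algebra beyond the previously stated correspondence is required.
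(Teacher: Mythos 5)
The paper itself gives no proof here---it simply defers to Hartshorne---so there is nothing to compare your argument against except the standard reference. Your proof is correct and is in fact the standard Hartshorne-style argument: you chain together the quotient-ring correspondence (primes of $R/I$ $\leftrightarrow$ primes of $R$ containing $I$) with the Nullstellensatz correspondence (primes of $R$ $\leftrightarrow$ irreducible closed sets) and observe that both are strict-inclusion-preserving bijections, so maximal chain lengths on the two sides must agree. You correctly isolated the one substantive point, namely that strictness is preserved under both maps, which relies on the injectivity supplied by the Nullstellensatz, and you correctly used that $I$ prime (hence radical) gives $\mathcal I(\mathcal Z(I))=I$ so that the restricted correspondence really lands inside $\mathcal Z(I)$.
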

\begin{proof}
  See Hartshorne \cite[Chapter 1]{MR0463157}. Note that Krull
  dimension of $R/I$ is different from the linear dimension $\dim_\F R/I$.
\end{proof}
In summary, we has the
algebra-geometry dictionary (Table \ref{AG_dictionary}), where the last two rows hold if $\F$ is algebraic closed.
\begin{table}
\centering
  \begin{tabular}{ccc}
    Algebra & & Geometry  \\
\hline
    Ideal $I$ in $\Fpoly$ & & algebraic set $\mathcal Z(I)$\\
      $I_1\cap I_2$   & & $\mathcal Z(I_1\cap I_2)=\mathcal Z(I_1)\cup
                          \mathcal Z(I_2)$ \\
      $I_1+ I_2$   & & $\mathcal Z(I_1+ I_2)=\mathcal Z(I_1)\cap 
                          \mathcal Z(I_2)$\\
                       $I_1\subset I_2$ &$\Rightarrow$ & $\mathcal Z(I_1)\supset
                                           \mathcal Z(I_2)$\\
prime ideal $I$ & $\Rightarrow$ & $\mathcal Z(I)$ (irreducible) variety\\
maximal ideal $I$ & $\Rightarrow$ & $\mathcal Z(I)$ is a point \\
Krull dimension of $\dim \Fpoly/I$ &$=$ & $\dim \mathcal Z(I) $
  \end{tabular}
\caption{algebraic geometry dictionary}
\label{AG_dictionary}
\end{table}

We conclude this section by an example which applies Gr\"obner basis,
primary decomposition and dimension theory.
\begin{example}
\label{Galois_group}
  (Galois theory) Galois theory studies the symmetry of a field
  extension, $\F \subset \K$ by the Galois group
  $\text{Aut}(\K/\F)$. Historically, Galois group of a polynomial is
  defined to be the permutation group of roots, such that algebraic
  relations are preserved. Galois completely determined
  if a polynomial equation can be solved by radicals. In practice,
  given a polynomial to find its Galois group may be difficult. Here
  we introduce an automatic method of computing Galois group. 

For example, consider the polynomial $f(x)=x^4+3 x+3$ in $\Q[x]$. It is irreducible
over $\Q[x]$ and contains no multiple root in $\mathbb C$. We denote
the four distant roots as $x_1$, $x_2$, $x_3$, $x_4$. To ensure that
these variables are distant, we use a classic trick in algebraic
geometry: auxiliary variable. 
Introduce a new variable $w$, define
that 
\begin{equation}
  \label{Galois_ideal}
  I=\la f(x_1), f(x_2), f(x_3), f(x_4),
  w(x_1-x_2)(x_1-x_3)(x_1-x_4)(x_2-x_3)(x_2-x_4)(x_3-x_4)-1 \ra\,.
\end{equation}
It is clear that in $\C[x_1,x_2,x_3,x_4,w]$, $\mathcal Z(I)$ is a finite
set (for example via Gr\"obner basis computation.) The four variables must be distinct on the solution set, because
of the last generator in \eqref{Galois_ideal}. Back to
$\Q[x_1,x_2,x_3,x_4,w]$,  we want to find more algebraic
relations over $\Q$ which are ``consistent'' with $I$. That is to find
a maximal ideal $J$ in $\Q[x_1,x_2,x_3,x_4,w]$, $I\subset J$. In
practice, we use primary decomposition and find that in
$\Q[x_1,x_2,x_3,x_4,w]$,
\begin{equation}
  \label{eq:72}
  I=I_1\cap I_2 \cap I_3\,,
\end{equation}
where explicitly each $I_i$ is prime. Since $dim_\Q
(\Q[x_1,x_2,x_3,x_4,w]/I)$ is finite,
\begin{equation}
  \label{eq:73}
  dim_\Q
(\Q[x_1,x_2,x_3,x_4,w]/I_1)<\infty\,.
\end{equation}
$I_1$ is prime hence $\Q[x_1,x_2,x_3,x_4,w]/I_1$ has no zero
divisor. A finite-dimensional $Q$-algebra with no zero divisor must be
a field. Hence $I_1$ is a maximal ideal of $\Q[x_1,x_2,x_3,x_4,w]$.

Compute the Groebner basis of $I_1$ with the block order $[w]\succ
[x_1,x_2,x_3,x_4]$, we have 
\begin{gather}
  \label{eq:80}
  G(I_1)=\{x_1+x_2+x_3+x_4,2 x_4^2+2 x_2 x_4+2 x_3 x_4+x_2+x_3-3,
2 x_3
   x_2+x_2+x_3+3,\nn\\ x_2^2\c -x_2+x_3^2\c-x_3,4 x_4^3\c -2 x_4^2+6 x_4+5 x_2+5 x_3+9,2 x_4
   x_3^2+x_3^2+2 x_4^2 x_3\c-3 x_3+x_4^2\c-3 x_4\c-3,\nn\\4 x_3^3\c-2 x_3^2+x_3\c-5 x_2+9,315 w\c-2
   x_3^2-4 x_4 x_3-2 x_4^2+3\}
\end{gather}
Except the last one, polynomials in $G(I_1)$ provides all the
algebraic relations over $\Q$ of the four roots. Note that some
relations are
trivial like $x_1+x_2+x_3+x_4=0$
which comes from coefficients of $f(x)$. Some relations like $2 x_3
x_2+x_2+x_3+3=0$, are
nontrivial.

Consider all $24$ permutations of $(x_1, x_2, x_3, x_4)$, we find
the $8$ of them preserves algebraic relations in $G(I_1)$, explicitly,
\begin{eqnarray}
  \label{eq:82}
 &&(x_1,x_2,x_3,x_4),(x_1,x_3,x_2,x_4),(x_2,x_1,x_4,x_3),(x_3,x_1,x_4,x_2),\nn\\
&&(x_2,x_4,x_1,x_3),(x_3,x_4,x_1,x_2),(x_4,x_2,x_3,x_1),(x_4,x_3,x_2,x_1)\,.
\end{eqnarray}
Hence Galois group of the $x^4+3 x+3$ is the dihedral group $D_4$. Clearly, this process
applies to all irreducible polynomials without multiple root.

Note that $\Q[x_1,x_2,x_3,x_4,w]/I_1$ actually is the splitting field
of this polynomial. 
\end{example}

\section{Multi-loop integrand reduction via Gr\"obner basis}
With the knowledge of basic algebraic geometry, now
multi-loop integrand reduction is almost a piece of cake. We apply \GB\ method \cite{Zhang:2012ce,Mastrolia:2012an}.

Consider the algorithm of direct integrand reduction (IR-D). Suppose that all terms with
denominator set $\mathcal D$, $\{D_1,\ldots D_k\}\subsetneqq \mathcal D$ are already
reduced, then,
\begin{enumerate}
\item Collect all integrand terms with inverse propagators $D_1,\ldots D_k$, which
  include terms from Feynman rules and also terms from the integrand
  reduction of parent diagrams. Denote the sum as $N/(D_1,\ldots D_k)$.
\item Define $I=\la D_1,\ldots ,D_k\ra$. Compute the \GB\ of $I$ in \grevlex,
  $G(I)=\{g_1,g_2,\ldots ,g_m\}$. 
\item Polynomial division $N=a_1 g_1+\ldots a_m g_m+\Delta$. Use
  Gr\"obner basis convention relation, rewrite the division as $N=q_1
  D_1+\ldots q_k D_k+\Delta$. 
\item Add $\Delta/(D_1 \ldots D_k)$ to the final result. Keep terms
  \begin{equation}
    \label{eq:86}
    \frac{q_1}{\hat{D_1}D_2 \ldots D_k} + \frac{q_2}{D_1\hat{D_2}
      \ldots D_k} +\ldots \frac{q_k}{D_1D_2\ldots \hat{D_k}}\,,
  \end{equation}
for child diagrams. 
\end{enumerate}
Repeat this process, until all terms left are integrated to zero (like
massless tadpoles, integral without loop momenta dependences). 

Integrand reduction (IR-U) is more subtle. Again, Suppose that all diagrams with
denominator set $\mathcal D$, $\{D_1,\ldots D_k\}\subsetneqq \mathcal D$ are 
reduced, then,
\begin{enumerate}
\item Define $I=\la D_1,\ldots D_k \ra$. Compute the \GB\ of $I$ in \grevlex\
  with numeric kinematics,
  $G(I)=\{g_1,g_2,\ldots ,g_m\}$.
\item Identify all degree-one polynomials in $G(I)$, and solve them
  linearly. The dependent variables are RSPs. Define $J$ as the
  ideal obtained by eliminate all RSPs in $I$.
\item Make a numerator ansatz $N$ in ISPs, with the power counting
  restriction from renormalization conditions. Divide $N$ toward
  $G(J)$, the remainder $\Delta$ is the integrand basis.
\item Cut all propagators by $D_1=\ldots =D_k=0$. Classify all solutions
  by the primary decomposition of $J$ and get $n$ irreducible solutions.
\item On the cut, compute the tree products summed over internal
  spins/helicities. Subtract all known parent diagrams on this cut. The
  result should be a list of $n$ functions $S_i$, defined on each cut
  solution.
\item Fit coefficients of $\Delta$ from $S_i$'s.
\end{enumerate}
We have some comments here: 
\begin{itemize}
\item To make an integrand basis with
undetermined coefficients, we only need Gr\"obner basis with numeric
kinematic conditions.
\item RSPs can be automatically found, because any
degree-one polynomial in $I$ should be a linear combination of
degree-one polynomials in $G(I)$, via Algorithm
\ref{multivariate_polynomial_division}. Hence linear algebra computation
determines RSPs.
\item Integrand basis should not contain RSPs. Furthermore, it is
  helpful to eliminate RSPs before the primary decomposition. 
\item If the cut solution is complicated, primary decomposition helps
  finding all of solutions. And in general, solution sets cannot be
  parameterized rationally before primary decomposition.
\end{itemize}
The key idea of these algorithms
is that polynomial division via \GB\ provides the simplest integrand,
in the sense that the resulting numerator does not contain any term
which are divisible by denominators.

Back to our double box examples, we use algebraic geometry methods to
automate most of the computations.
 Given $7$ propagators in Van
Neerven-Vermaseren variables, we use number field $\F=\Q(s,t)$, define
the ideal $I=\la D_1 , D_2, \ldots D_7\ra$.

First, we determine the RSPs. Compute $G(I)$ in
\grevlex, with numeric kinematics, $t\to -3, s\to 1$. We find that $G(I)$ contains $4$ linear polynomials,
\begin{gather}
  \label{eq:83}
  \{y_3,\frac{1}{2}+y_1+y_2,x_2-\frac{1}{2},x_1\}\subset G(I)\,.
\end{gather}
 This
allow us to define RSPs: we have $4$ linear polynomials and $5$
variables, pick up $y_1$ to be the free variable. And then we
determined $x_1,x_2,y_2,y_3$ are RSPs. (If needed, the full RSP relations can be
obtained from Groebner basis conversion.)
\begin{gather}
  \label{dbox_RSP_2}
  x_1=\frac{D_1-D_2}{2},\quad x_2=\frac{D_2-D_3}{2}+\frac{s}{2}\,,\nn\\
  y_2=\frac{D_4-D_6}{2}-\frac{s}{2}-y_1,\quad y_3=\frac{-D_6+D_7}{2}\,.
\end{gather}

Then, we consider to eliminate RSPs. Define $J$ to be an ideal in
$\F[x_3,y_1,x_4,y_4]$, which is the ideal after RSP elimination. With
numeric kinematics, the \GB\ of $J$ in \grevlex\ and $y_4\succ
x_4\succ y_1\succ x_3$ is,
\begin{gather}
  G(J)=\{-4 x_3^2-12 x_3+4 x_4^2-9,20 x_3 y_1+4 x_4 y_4+6 x_3+6 y_1+9,-4
  y_1^2-12 y_1+4 y_4^2-9,\nn\\
4 x_3^2 y_4+20 x_4 x_3 y_1+12 x_3 y_4+6 x_4 y_1+6 x_4 x_3+9 x_4+9
y_4,\nn\\
4 x_4 y_1^2+12 x_4 y_1+20 x_3 y_4 y_1+6 x_3 y_4+9 x_4+6 y_4 y_1+9 y_4,
4 x_3^2 y_1^2+2 x_3^2 y_1+2 x_3 y_1^2+3 x_3 y_1,\nn\\ 
80 x_3^2 y_1 y_4+16 x_3^2 y_4+40 x_3 y_1 y_4+18 x_3 y_4-6 x_4 y_1+24
x_4 x_3-9 x_4-9 y_4\}.
\label{dbox_GB}
\end{gather}
Note that the first $3$ polynomials are just  equations in
\eqref{dbox_quadratic}. However, the rest algebra relations in
\eqref{dbox_GB} are not obtained by the naive generalization of OPP
method. So previously we got a redundant basis. 

Consider the numerator in ISPs only,
\begin{eqnarray}
  \label{eq:28}
  N_\text{dbox}=\sum_m \sum_n \sum_\alpha \sum_\beta
  c_{mn\alpha\beta}' x_3^{m} y_1^{n} x_4^{\alpha} y_4^{\beta} ,
\end{eqnarray}
where $c_{mn\alpha\beta}'$ are indeterminate coefficients. By
renormalization condition, there $160$ such $c$'s. Divide $
N_\text{dbox}$ by $G(I)$, we get the remainder,
\begin{eqnarray}
  \label{eq:28}
  \Delta_\text{dbox}=\sum_{(m,n,\alpha,\beta)\in S} 
  c_{mn\alpha\beta} x_3^{m} y_1^{n} x_4^{\alpha} y_4^{\beta} ,
\end{eqnarray}
where the index set $S$ contains $32$ elements,
\begin{gather}
 (0, 0, 0, 0), (1, 0, 0, 0), (2, 0, 0, 0), (3, 0, 0, 0), (4, 0, 0, 0), 
(0, 1, 0, 0), (1, 1, 0, 0), (2, 1, 0, 0),\nn\\ (3, 1, 0, 0), (4, 1, 0, 0), 
(0, 2, 0, 0), (1, 2, 0, 0), (0, 3, 0, 0), (1, 3, 0, 0), (0, 4, 0, 0), 
(1, 4, 0, 0), \nn\\ (0, 0, 1, 0), (1, 0, 1, 0), (2, 0, 1, 0), (3, 0, 1, 0),
(0, 1, 1, 0), (0, 0, 0, 1), (1, 0, 0, 1), (2, 0, 0, 1),\nn\\ (3, 0, 0, 1), 
(4, 0, 0, 1), (0, 1, 0, 1), (1, 1, 0, 1), (0, 2, 0, 1), (1, 2, 0, 1), 
(0, 3, 0, 1), (1, 3, 0, 1).
\label{dbox_integrand_basis_GB}
\end{gather}
Note that the number of terms in $\Delta_\text{dbox}$ matches
the number of independent relations from unitarity
cuts. \eqref{dbox_integrand_basis_GB} is the integrand basis of the $4D$
double box. Of these $32$ terms, the last $16$ terms integrated to
zero by Lorentz symmetry, so they are spurious terms. 

In Example \ref{dbox_primary_decomposition}, we already used primary
decomposition to find all unitarity-cut solutions. Note that there is
shortcut'' it is enough to consider the primary decomposition of
$J$. On a laptop computer, it takes only $0.22$ seconds to finish. Using
\eqref{dbox_integrand_basis_GB}  and $4D$ tree amplitudes, we can easily
determine the double box integrand for (super)-Yang-Mills
theory \cite{Badger:2012dp, Zhang:2012ce}.

For $D=4-2\epsilon$, we need to introduce $\mu$ variables,
\begin{gather}
  \label{eq:88}
  l_i=l_i^{[4]}+l_i^\perp , \quad i=1,\ldots ,L,\nn \\
  \mu_{ij}=-l_i ^\perp \cdot   l_j ^\perp, \quad 1\leq i \leq j\leq L.
\end{gather}
In this case, we have
further simplification: $I=\la D_1,\ldots D_k \ra$ must be a prime
ideal, hence it is not necessary to consider the primary decomposition
of $I$ \cite{Zhang:2012ce, Badger:2013gxa}.

\begin{example}
  Consider two-loop five-gluon pure Yang-Mills planar amplitude, with
  helicity $(+++++)$. Note that tree-level all-plus-helicity $5$-gluon
  amplitude
  in Yang-Mills theory is zero, while the one-loop-level is
  finite. The two-loop amplitude is much more challenging. We
  used algebraic geometry method to compute this amplitude.  

For the integrand, we use both IR-D and IR-U methods \cite{Badger:2013gxa}. Note that this
amplitude is well-define only with $D=4-2\epsilon$. Repeat the
integrand reduction process, we get all the diagrams with
non-vanishing integrands in Figure \ref{2l5g_planar} (and their
permutations). For example, the box-pentagon diagram for this
amplitude has a simple integrand,
\begin{align}
  &\Delta_{431}(1^+,2^+,3^+,4^+,5^+) =
  \nonumber\\&
  -\frac{i \, s_{12}s_{23}s_{45} \, F_1(D_s,\mu_{11},\mu_{22},\mu_{12})}{\A12\A23\A34\A45\A51
  \trfive}
  \left( \tr_+(1345) (l_1 + k_5)^2 + s_{15}s_{34}s_{45}
  \right)
  \label{delta431}
\end{align}
where
\begin{align}
  F_1(D_s,\mu_{11},\mu_{22},\mu_{12})
  = (D_s-2)\left( \mu_{11}\mu_{22} + \mu_{11}\mu_{33} + \mu_{22}\mu_{33} \right) + 16\left(
  \mu_{12}^2 - \mu_{11}\mu_{22} \right),
  \label{eq:F1def}
\end{align}
and $\mu_{33}=\mu_{11}+\mu_{22}+2\mu_{12}$ and $D_s$ is the dimension
for internal states. \cite{Badger:2013gxa}. 
\begin{align}
\trfive &= \tr \! \left( \gamma_5 \feyn{k}_1 \feyn{k}_2 \feyn{k}_3
          \feyn{k}_4 \right) \; = \; \B12 \A23 \B34 \A41 - \A12 \B23
          \A34 \B41 .\nn \\
\tr_{\pm}(abcd) &= \frac{1}{2} \tr \! \big( (1 \pm \gamma_5) \feyn{k}_a \feyn{k}_b \feyn{k}_c \feyn{k}_d \big),
\end{align}
Results from IR-D and IR-U
match each other.  After getting these simple integrand, the complete
integrals and final analytic result for this amplitude was obtained by
differential equation method \cite{Gehrmann:2015bfy}.

All-plus two-loop five-gluon non-planar integrand and all-plus
two-loop six-gluon integrand were also obtained by integrand reduction
method \cite{Badger:2015lda,Badger:2016ozq}. 

\begin{figure}[h]
  \centering
\subfloat{\includegraphics[width=3cm]{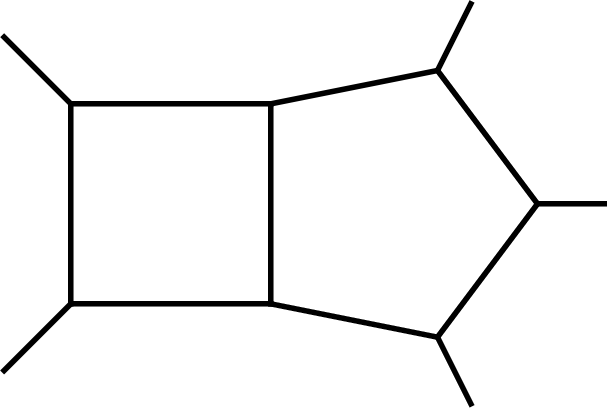}}\quad
\subfloat{\includegraphics[width=3cm]{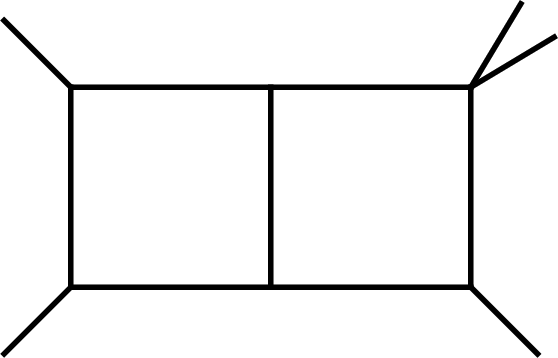}}\quad
\subfloat{\includegraphics[width=3cm]{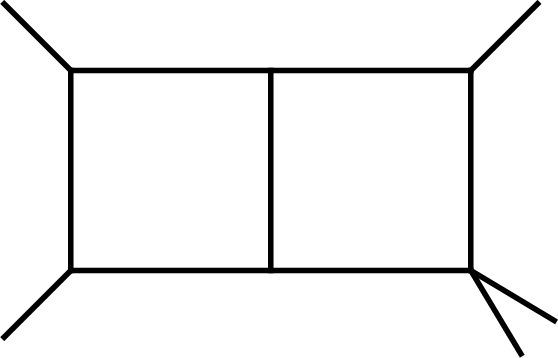}}\quad
\subfloat{ \includegraphics[width=3cm]{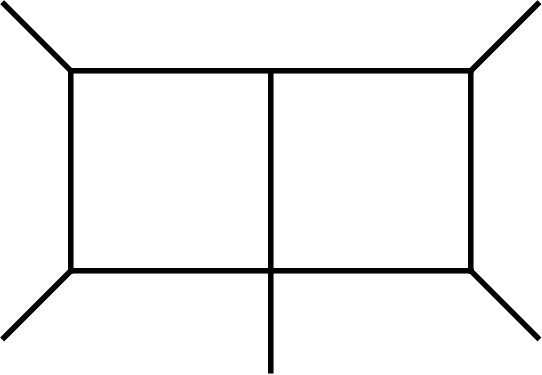}}\\
\subfloat{\includegraphics[width=3cm]{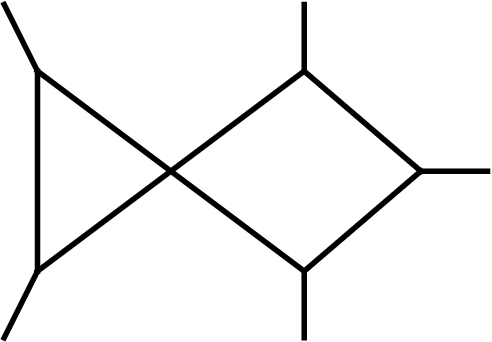}}\quad
\subfloat{\includegraphics[width=3cm]{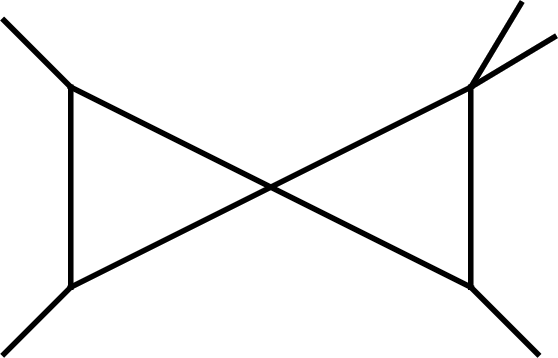}}\quad
\subfloat{ \includegraphics[width=3cm]{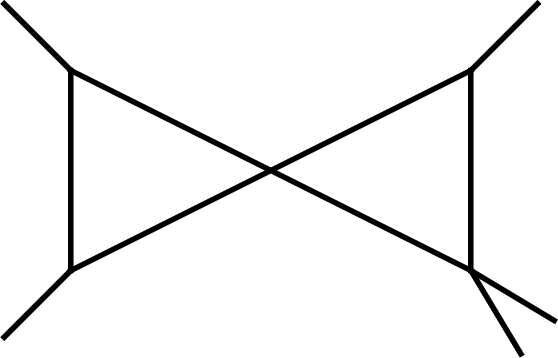}}\quad
\subfloat{ \includegraphics[width=3cm]{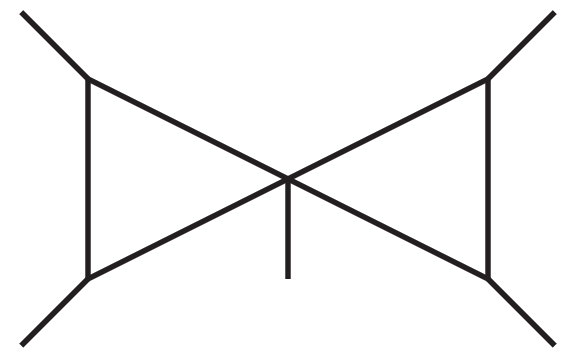}}
\caption{Nonzero diagrams from the integrand reduction, for
  $(+++++)$-helicity two-loop five-gluon planar amplitude.}
       \label{2l5g_planar}
\end{figure}
\end{example}

See \cite{Feng:2012bm, Mastrolia:2013kca,Mastrolia:2016dhn} for more
example of CAG based integrand reductions.

\section{Exercises}
\begin{ex}
  Derive the integrand basis of a box diagram with
  $k_1^2=k_2^2=k_3^2=k_4^2=m^2$ and inverse propagators 
$D_1=l_1^2$, $D_2=(l_1-k_1)^2$,  $D_3=(l_1-k_1-k_2)^2$ and 
 $D_4=(l_1+k_4)^2$, via OPP approach \cite{Ossola:2006us,Ossola:2007ax}.
\end{ex}

\begin{ex}[Basic operations of ideals] $I$ and $J$ are two ideals in $\Fpoly$. Define
  $I+J=\{f+g|f\in I, \ g\in J\}$ and $IJ$
      as the ideal generated by the set $\{f g|f\in I, \ g\in
      J\}$. 
  \begin{enumerate}
  \item Prove that  $\sqrt I$, $I+ J$, $I\cap J$ are ideals.
\item Prove that $IJ \subset I\cap J$ and $\sqrt{I\cap J}=\sqrt{IJ}$.
\item Let $I=\la y(y-x^2)\ra$, $J=\la x y\ra$ in $\Q[x,y]$, determine
  $Z_\C(I+J)$, $Z_\C(I\cap J)$ and $Z_\C(IJ)$. Compute generating sets of $I+J$, $I\cap J$ and $IJ$. Is
   $IJ$ the same as $I\cap J$ in this case? Compute generating sets
   of $\sqrt{I\cap J}$ and $\sqrt{IJ}$. 
\end{enumerate}
 \end{ex}

\begin{ex}[Hilbert's weak Nullstellensatz]
Let $f_1(x)=2 x - 4 x^2 + x^3$ and $f_2(x)=x^2-1$. Prove that as an
ideal in $\Q[x]$, $\la f_1,f_2\ra=\la 1 \ra$. Explicitly find two
polynomials $h_1(x)$ and $h_2(x)$ in $Q[x]$ such that,
\begin{equation}
  \label{eq:39}
  h_1(x) f_1(x) + h_2(x) f_2(x) =1\,.
\end{equation}
(Hint: use Euclid's algorithm, Algorithm \ref{Euclid}.)
\end{ex}

\begin{ex}[Zariski topology]
  Prove \eqref{algebraic_set_intersection} and
  \eqref{algebraic_set_union}. 
\begin{eqnarray}
    \bigcap_i \mathcal Z(I_i) &=&\mathcal Z(\bigcup_i I_i )\,.
\nn \\
    \mathcal Z(I_1)\bigcup \mathcal Z(I_2) &=&\mathcal Z( I_1 I_2
    )=\mathcal Z( I_1 \cap I_2 )\,.
  \end{eqnarray}
\end{ex}

\begin{ex}[Elimination theory] \

  \begin{enumerate}
\item Use computer software like \mm, {\sc Maple}, \Singular\  or 
\Macaulay,  to eliminate $y$ and $z$ from 
\begin{equation}
  \label{eq:81}
  I=\la -x^3-x z+y^2-1,x^2+x z+y^2,x y+x z+y\ra\,,
\end{equation}
to get a equation in $x$ only. How many common zeros are there for
the three polynomials over $\C$? 

\item Use computer software to find the projection of the curve
  $\mathcal C$,
  \begin{equation}
    \label{eq:84}
    \mathcal C:\quad x^2 + x y + z^2= x^2 - z y - z^3+1=0\,,
  \end{equation}
on $x$-$y$ plane. 

\end{enumerate}
\end{ex}

\begin{ex}[Polynomial division via \GB]
Let $f_1=y^2 - x^3 - 1$, $f_2=x y + y^2 + 1$ and $f_3=y^2 + x - y$.
Use {\sc Maple} or {\sc Macaulay} to find the \GB\ $G=\{g_1, \ldots
 , g_m\}$ and the conversion,
\begin{equation}
  \label{eq:85}
  g_j =\sum_{i=1}^3 f_i a_{ij} .
\end{equation}
Reduce the fraction $1/(f_1 f_2 f_3)$ as,
\begin{equation}
  \label{eq:87}
  \frac{1}{f_1 f_2 f_3}=\frac{q_1}{f_2 f_3}+\frac{q_2}{f_1 f_3}+\frac{q_3}{f_1 f_2}
\end{equation}
where $q_1$, $q_2$ and $q_3$ are polynomials in $x$ and $y$. 
\end{ex}

\begin{ex}[Primary decomposition]
Use  {\sc Macaulay} or \Singular\ to find the primary decomposition of
$I=\la x z - y^2, x^3 - y z\ra$. Then parameterize each 
irreducible closed set. 
\end{ex}

\begin{ex}[Galois group and primary decomposition]
Use the method in Example \ref{Galois_group}, to determine the Galois group of
$x^4-10 x^2+1$.
\end{ex}

\begin{ex}[Integrand basis via Gr\"obner basis]
Massless crossed box diagram is the two-loop diagram with 
$k_1^2=k_2^2=k_3^2=k_4^2=0$ and inverse propagators
$D_1=l_1^2$, $D_2=(l_1 - k_1)^2$, $D_3=(l_1 - k_1 - k_2)^2$, $D_4=l_2^2$, $D_5=(l_2 - k_4)^2$, $D_6=(l_1 + l_2 - 
   k_1 - k_2 - k_4)^2$, $D_7=(l_1 + l_2)^2$.
   \begin{figure}[ht]
     \centering
\includegraphics[scale=0.9]{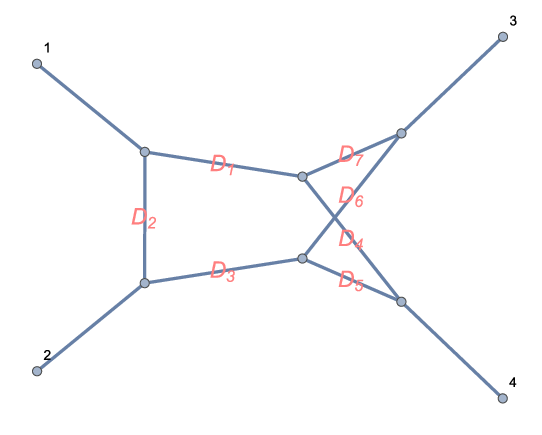}
     \caption{crossed box diagram}
   \end{figure}
Find the $4D$ integrand basis via  Gr\"obner basis.
\end{ex}

\begin{ex}[Fit integrand basis from unitarity cuts]
Use the $4D$ double box integrand basis  \eqref{dbox_integrand_basis_GB} to determine
the  double box integrand form of the $4D$ $(--++)$ and $(-+-+)$ helicity
color-ordered amplitude in pure-Yang-Mills theory. (Hint: see \cite{Badger:2012dp}.)
\end{ex}

\chapter{ Unitarity Cuts and Several Complex Variables}
\section{Maximal unitarity}
Besides integrand reduction method for loop amplitudes, we can also
consider (generalized) unitarity with residue approach \cite{Bern:1994zx,Bern:1994cg,Bern:1995db,Britto:2004nc,Britto:2005fq}
\begin{equation}
A_n^\text{$L$-loop}=\sum_{i} c_i I_i + \text{rational terms}\,. 
\label{MI}
\end{equation}
The set $\{I_k\}$ is the master integral (MI) basis, i.e., minimal
linear basis of Feynman integrals. For example, for one-loop order, we
have {\it scalar} box, triangle, bubble (and tadpole) integrals
. The MI basis is usually a proper subset of the integrand basis like
\eqref{box_integrand_basis}, since spurious terms are removed and
integration-by-parts (IBP) identities are used.

{\it Maximal unitarity} method gets coefficients $c_i$'s for a scattering
process, from contour integrals. (Usually contour integrals are 
simpler than Euclidean Feynman integrals. )
Let $k$ be
the largest number of propagators for all integrals in MI basis. Suppose that there are $d(k)$ diagrams with
exactly $k$ propagators in the master integral list, $\mathscr
D_1,\ldots, \mathscr D_{d(k)}$. 

Maximal
unitarity method first separate (\ref{MI}) as,
\begin{gather}
A_n^\text{$L$-loop}=\sum_{\alpha=1}^{d(k)}\sum_j c_{\alpha,j} I_{\alpha,j} + \big(\text{simpler integrals}\big)+\text{rational terms}
\label{maximal_unitarity}
\end{gather}
where for fixed $\alpha$, $I_{\alpha,j}$'s stand for all master integrals
associated with the diagram $\mathscr D_\alpha$.  ``Simpler integrals'' stands for integrals with
fewer-than-$k$ propagators.

The coefficients $c_{\alpha,j}$'s can be obtained
by maximal unitarity as follows: Let the propagators of $\mathscr
D_\alpha$ be $D_1,
\ldots, D_k$. For simplicity, we drop the index $\alpha$. The cut
equation is,
\begin{equation}
  \label{eq:90}
  D_1=\ldots =D_k=0
\end{equation}
which has $m$ independent solutions. In algebraic geometry language,
the ideal $I=\la D_1 \ldots D_k \ra$ has the primary decomposition,
\begin{equation}
  \label{eq:89}
  I=I_1 \cap \ldots \cap I_m\,.
\end{equation}
Each independent solution is an (irreducible) variety, $V_i=\mathcal
Z(I_i)$. For an {\it integer value } of the spacetime dimension $D$, we replace a generic
Feynman integral as a contour integral,
\begin{eqnarray}
  \label{contour_integral}
  \int \frac{d^D l_1}{i\pi^{D/2}} \ldots \frac{d^D l_L}{i\pi^{D/2}}
\frac{N(l_1,\ldots l_L)}{D_1^{}\ldots
  D_k^{}}
&\rightarrow &\oint \frac{d^D l_1}{(2\pi i)^{D}} \ldots \frac{d^D
               l_L}{(2\pi i )^{D}}
\frac{N(l_1,\ldots l_L)}{D_1^{} \ldots
              D_k^{}}\nonumber \\
&=& \sum_{i=1}^m \sum_b w_b^{(i)} \oint_{\mathcal C_b^{(i)}} \Omega^{(i)}(N)\,.
\end{eqnarray}
In the first line, we have a $DL$-fold contour integral. Part of the
contour integrals serve as ``holomorphic'' Dirac delta functions in
$D_1,\ldots ,D_k$, and the original integral becomes $(\dim V_i)$-fold
contour integrals on each $V_i$.  $\mathcal C_b^{[i]}$'s are 
non-trivial contours on $V_i$ for this integrand, which consists of poles
in the integrand and fundamental cycles of $V_i$ 
On each cut
solution, the original numerator $N(l_1,\ldots,l_L)$ becomes,
\begin{equation}
  \label{eq:91}
  N(l_1,\ldots,l_L) \big|_{V_i}= S^{(i)}.
\end{equation}
where $S^{(i)}$ is the sum of products of tree amplitudes obtained
from the maximal cut. In general, there may be several nontrivial contours
on $V_i$, so for each one we set up a weight $w_b^{(i)}$ to be determined
later. 

We demand that if the original integral is zero, or can be reduce to
integrals with fewer propagators by IBPs, the corresponding contour
integral is zero. If
\begin{eqnarray}
  \label{eq:92}
  \int \frac{d^D l_1}{i\pi^{D/2}} \ldots \frac{d^D l_L}{i\pi^{D/2}}
\frac{F(l_1,\ldots l_L)}{D_1^{}\ldots
  D_k^{}}=0\,,\quad \text{($F$ is spurious)},
\end{eqnarray}
or
\begin{eqnarray}
\int \frac{d^D l_1}{i\pi^{D/2}} \ldots \frac{d^D l_L}{i\pi^{D/2}}
\frac{F(l_1,\ldots l_L)}{D_1^{}\ldots
  D_k^{}}= (\text{simpler integrals})\,,\quad \text{(IBP relation)},
\end{eqnarray}
Then 
\begin{equation}
  \label{eq:93}
  \sum_{i=1}^m \sum_b w_b^{(i)} \oint_{\mathcal C_b^{(i)}}
\Omega^{(i)}(F)=0\,.
\end{equation}

Spurious terms and IBPs fix $w_b^{(i)}$'s up to the normalization
of master integrals. To extract the
coefficients $c_i$'s in (\ref{MI}), we can find a special set of weights
$w_{b,j}^{(i)}$ such that,
\begin{equation}
  c_j =   \sum_{i=1}^m \sum_b w_{b,j}^{(i)} \oint_{\mathcal C_b^{(i)}} \Omega^{(i)}(S^{(i)}).
\label{linear_fitting_c}
\end{equation}
After getting all $k$-propagator master integrals' coefficients, we repeat this
process for $(k-1)$ propagator integrals. We need spurious terms
, IBPs and parent integral exclusion conditions, to fix the
contour weights.

For example, consider the $4D$ massless four-point
amplitude. \eqref{maximal_unitarity} reads.
\begin{gather}
A_4^\text{$1$-loop}=c_{box} I_{box} + \ldots \,.
\end{gather}
From \eqref{box_solution}, $D_1=D_2=D_3=D_4=0$ has two
solutions. Change the original integral to contour integrals,
\begin{eqnarray}
  \label{eq:94}
  \int \frac{d^4 l_1}{i \pi^2} \frac{1}{D_1 D_2 D_3 D_4} \to \frac{2}{t(s+t)}\oint
  \frac{dx_1dx_2 dx_3 dx_4}{(2 \pi i)^4} \frac{1}{D_1
    D_2 D_3 D_4}
=
    \left\{\begin{array}{cc}
         \frac{1}{4st} & \text{on }V_1\\
 -\frac{1}{4st} & \text{on }V_2\\
    \end{array}
\right.
\end{eqnarray}
and 
\begin{eqnarray}
  \label{spurious_contour}
  \int \frac{d^4 l_1}{i \pi^2} \frac{l\cdot \omega}{D_1 D_2 D_3 D_4} \to\frac{2}{t(s+t)}\oint
  \frac{dx_1dx_2 dx_3 dx_4}{(2 \pi i)^4} \frac{x_4}{D_1
    D_2 D_3 D_4}
=
    \left\{\begin{array}{cc}
         \frac{1}{8s} & \text{on }V_1\\
 \frac{1}{8s} & \text{on }V_2\\
    \end{array}
\right.
\end{eqnarray}
We have two weights $\omega^{(1)}$ and $\omega^{(2)}$. From the
contour integral of spurious term \eqref{spurious_contour},
\begin{equation}
  \label{eq:95}
  \omega^{(1)} \frac{1}{8 s}+\omega^{(2)} \frac{1}{8 s} =0 \,.
\end{equation}
Hence $\omega^{(2)}=-\omega^{(1)}$. Normalize the weights for the scalar box
integral, 
\begin{equation}
  \label{eq:13}
  \omega^{(1)}=2 st,\quad  \omega^{(2)}=-2 st\,.
\end{equation}

Hence
\begin{align}
  \label{eq:96}
  c_{box}&=2 st \cdot \frac{2}{t(s+t)} \big(\oint_{V_1} \frac{dx_1dx_2 dx_3
    dx_4}{(2\pi i)^4}  \frac{N}{D_1 D_2 D_3 D_4} -\oint_{V_2}\frac{dx_1dx_2 dx_3
    dx_4}{(2\pi i)^4}  \frac{N}{D_1 D_2 D_3 D_4} \big)\nn\\
&= \frac{1}{2} S^{(1)}+\frac{1}{2} S^{(2)}\,.
\end{align}
which is the same as \eqref{box_unitarity}.

 Two-loop maximal unitarity method was first invented in
\cite{Kosower:2011ty} for the $4D$ massless double box diagram, in an
elegant way of determining all contours and corresponding contour
weights. Afterwards, this method was generalized for  double box
diagram with external massive legs \cite{Johansson:2012sf, Johansson:2012zv, Johansson:2013sda, CaronHuot:2012ab}. 

In general, for multi-loop cases, the contour
integrals are multivariate, and can be complicated in some cases. There are
complicated issues with \eqref{contour_integral}.
\begin{enumerate}
  \item The solution set $V_i$ is not a rational variety. For example,
    $V_i$ can be an elliptic curve or a hyper-elliptic curve. Then 
    contour integrals are then not only residue computation, but also
    integrals over the
    fundamental cycles. Some of these cases are treated by maximal
    unitarity with complete elliptic integrals or hyper-ellitpic integrals
    \cite{Sogaard:2014jla,Georgoudis:2015hca}. There is a rich algebraic geometry
    structure in this direction and these integrals are important
    for LHC physics. But we are not going to cover this
    direction in these notes, since the background knowledge of algebraic
    curves need to be introduced. 
\item The residue is multivariate and Cauchy's formula does not work
  since the Jacobian at the pole is zero. For example, the $4D$ slashed
  box diagram and the $4D$ triple box diagram both have complicated
  multivariate residues.
We discuss this direction in the rest of this chapter.
\end{enumerate}

Note that, in a different context,
\cite{Henn:2013pwa,Henn:2014qga,Caron-Huot:2014lda,Remiddi:2016gno,Primo:2016ebd,Meyer:2016slj}
contour integrals like \eqref{contour_integral}
from Feynman integrals are also important for determining the {\it canonical}
MI basis, for which the differential equation makes simple.

\subsection{A multivariate residue example}

Consider $4D$ three-loop massless triple box diagram
(Figure. \ref{tribox}). There are $10$ inverse
propagators ,
\begin{align}
D_1 = {} & l_1^2\;, &
D_2 = {} & l_2^2\;, &
D_3 = {} & l_3^2\;, &
D_4 = {} & (l_1+k_1)^2\;, \nn \\
D_5 = {} & (l_1-k_2)^2\;, &
D_6 = {} & (l_2+k_3)^2\;, &
D_7 = {} & (l_2-k_4)^2\;, &
D_8 = {} & (l_3+k_1+k_2)^2\;, \nn \\
D_9 = {} & (l_1-l_3-k_2)^2\;, &
D_{10} = {} & (l_3-l_2-k_3)^2\;.
\end{align}
with $k_1^2=k_2^2=k_3^2=k_4^2=0$. 
\begin{figure}
  \centering
  \includegraphics[scale=1]{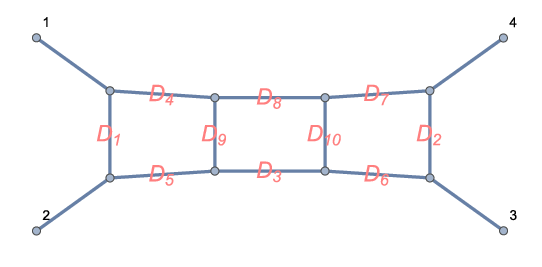}
  \caption{Three loop triple box diagram}
  \label{tribox}
\end{figure}
We parameterize loop momenta with spinor helicity formalism \cite{Dixon:1996wi}, 
\begin{align}
\ell_1(\alpha_1,\dots,\alpha_4) = {} &
\alpha_1 k_1+\alpha_2 k_2+
\alpha_3\frac{\avg{23}}{\avg{13}}1\tilde 2+
\alpha_4\frac{\avg{13}}{\avg{23}}2\tilde 1\;,
\nn \\[1mm]
\ell_2(\beta_1,\dots,\beta_4)\, = {} &
\beta_1 k_3+\beta_2 k_4+
\beta_3
\frac{\avg{14}}{\avg{13}} 3\tilde 4+
\beta_4
\frac{\avg{13}}{\avg{14}}
4\tilde 3\;.
\nn \\[1mm]
\ell_3(\gamma_1,\dots,\gamma_4)\, = {} &
\gamma_1 k_2+\gamma_2 k_3+
\gamma_3
\frac{\avg{34}}{\avg{24}}
2\tilde 3+
\gamma_4
\frac{\avg{24}}{\avg{34}}
3\tilde 2\;,
\end{align}
The cut solution for $D_1=D_2=\ldots D_{10}=0$ can be found by primary
decomposition \cite{Badger:2012dv,Sogaard:2013fpa},
\begin{equation}
  I=I_1\cap \ldots \cap I_{14}.
\end{equation}
There are $14$ independent solutions, each of which can be
parameterized rationally. For example, on $V_1=\mathcal Z(I_1)$ the
triple box Feynman integral with numerator $N(l_1,l_2,l_3)$
becomes a contour integral,
\begin{align}
 \frac{1}{(2\pi i)^2t^2s^{8}}\oint
\frac{dz_1\wedge dz_2  N(l_1,l_2,l_3)|_{V_1}}{(1+z_1)(1+z_2)(1+z_1-\chi z_2)}\;.
\end{align}
where the denominators come from the Jacobian of evaluating holomorphic
delat functions in $D_1,\ldots D_{10}$. $z_1$, $z_2$ are free
variables parametrizing this solution. The difficulty is that on this
cut solution,  loop momenta $l_i$ are not polynomials in $z_1$ and
$z_2$, but rational functions in $z_1$ and $z_2$ \cite{Sogaard:2013fpa}. Hence we get contour integrals like,
\begin{equation}
\label{xbox_residue}
  \frac{1}{(2\pi i)^2}\oint
\frac{dz_1\wedge dz_2 P(z_1,z_2)}{(1+z_1)(1+z_2)(1+z_1-\frac{t}{s} z_2)z_2}\;.
\end{equation}
where $P(z_1,z_2)$ is a polynomial in $z_1$ and $z_2$. $(z_1,z_2)\to
(-1,0)$ is a multivariate residue. Note that at this point, $3$
factors in denominators vanish,
\begin{equation}
  1+z_1,\quad z_2 \quad 1+z_1-\frac{t}{s} z_2\,.
\end{equation}
Hence, the Jacobian of denominators must be vanishing at $(-1,0)$, so
the residue cannot be calculated by inverse Jacobian (Cauchy's
theorem). Note that we cannot directly use polynomial division to
simplify the integrand, since $I=\la 1+z_1 , z_2, 1+z_1-\frac{t}{s}
z_2\ra =\la 1+z_1,z_2\ra\not=\la 1\ra$. So Hilbert's
weak Nullstellensatz (Theorem \ref{weak_Nullstellensatz}) cannot be
used here to reduce the number of
denominators. 

Difficult multivariate residues also arises from the maximal cut of integrals
with doubled propagators from two-loop integrals. In the rest of this
chapter, we use algebraic geometry techniques to compute these
residues efficiently.


\section{Basic facts of several complex variables}
\subsection{Multivariate holomorphic functions}
We first review some properties of several complex variables \cite{MR1288523,MR1045639,MR2176976}.
\begin{definition}
  Complex variables for $\C^n$ are $z_i=x_i+i y_i$ and the basis for the tangent space is
  \begin{equation}
    \frac{\partial}{\partial
      z_i}=\frac{1}{2}\big(\frac{\partial}{\partial x_i}-i
    \frac{\partial}{\partial y_i}\big),\quad
\frac{\partial}{\partial
      \bz_i}=\frac{1}{2}\big(\frac{\partial}{\partial x_i}+i \frac{\partial}{\partial y_i}\big).
  \end{equation}
For a point $\xi=\xis$ in $\C^n$, the (open) polydisc with radius $r$ is
\begin{equation}
  \Delta(\xi,r)=\{\zs\big||z_i-\xi_i|<r, \ i=1,\ldots n\}
\end{equation}
\end{definition}
\begin{definition}
  A differentiable function $f$ on $U$, an open set of $\C^n$, is {\it
    holomorphic } if,
  \begin{equation}
    \frac{\partial f}{\partial
      \bz_i}=0,\quad i=1,\ldots n.
  \end{equation}
\end{definition}
\begin{thm}[Cauchy's formula]
Let $f$ a function holomorphic in $\Delta(\xi,r)$ and continuous on
$\bar \Delta(\xi,r)$. Then for $z\in \Delta(\xi,r)$,
\begin{equation}
  f(z)=\frac{1}{(2\pi i)^n}\int_{|w_1-\xi_1|=r}\ldots \int_{|w_n-\xi_n|=r} \frac{ f(w_1,\ldots,w_n)dw_1
    \ldots dw_n}{(w_1-z_1)\ldots (w_n-z_n)}.
\end{equation}
\end{thm}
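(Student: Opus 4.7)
The plan is to reduce the multivariate statement to the univariate Cauchy formula by induction on $n$, iterating the one-variable formula one coordinate at a time and then recombining the iterated integrals via Fubini.

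First I would handle the base case $n=1$, which is nothing but the classical Cauchy integral formula for a function holomorphic on $\Delta(\xi_1,r)$ and continuous on its closure. For the inductive step, assume the formula for $n-1$ variables and let $f$ be holomorphic on $\Delta(\xi,r)\subset\C^n$ and continuous on $\bar\Delta(\xi,r)$. Fix $z=\zs\in\Delta(\xi,r)$. The key observation is that since $\partial f/\partial \bz_i=0$ for every $i$, $f$ is in particular holomorphic in each variable separately on each slice. So for fixed $(z_2,\ldots,z_n)$ inside the smaller polydisc, the slice function $w_1\mapsto f(w_1,z_2,\ldots,z_n)$ is holomorphic on $|w_1-\xi_1|<r$ and continuous on the closure; the one-variable formula gives
\begin{equation}
f(z_1,z_2,\ldots,z_n)=\frac{1}{2\pi i}\int_{|w_1-\xi_1|=r}\frac{f(w_1,z_2,\ldots,z_n)}{w_1-z_1}\,dw_1.
\end{equation}

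Next I would apply the inductive hypothesis to the function $g_{w_1}(z_2,\ldots,z_n):=f(w_1,z_2,\ldots,z_n)$, which for each fixed $w_1$ on the circle $|w_1-\xi_1|=r$ is holomorphic on the $(n-1)$-dimensional polydisc $\Delta((\xi_2,\ldots,\xi_n),r)$ and continuous on its closure. This yields
\begin{equation}
f(w_1,z_2,\ldots,z_n)=\frac{1}{(2\pi i)^{n-1}}\int_{|w_2-\xi_2|=r}\!\!\!\!\cdots\!\!\int_{|w_n-\xi_n|=r}\frac{f(w_1,w_2,\ldots,w_n)\,dw_2\cdots dw_n}{(w_2-z_2)\cdots(w_n-z_n)}.
\end{equation}
Substituting this expression into the previous one produces an iterated integral whose integrand is exactly $f(w_1,\ldots,w_n)/\prod_i(w_i-z_i)$, and recombining the $n$ single contour integrals into one $n$-fold contour integral gives the claimed formula.

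The only real subtlety — the main obstacle, such as it is — is to justify interchanging the order of the $n$ contour integrations so that the iterated integral equals the product-contour integral. This is a routine application of Fubini's theorem: the contour $\prod_i\{|w_i-\xi_i|=r\}$ is a compact real $n$-torus, $f$ is continuous (hence bounded) on it, and the denominator $\prod_i(w_i-z_i)$ is bounded away from zero because each $|z_i-\xi_i|<r$, so the full integrand is continuous and bounded on the product contour and Fubini applies without difficulty. A minor bookkeeping point is that one should first prove the formula for $z$ in the slightly smaller polydisc $\Delta(\xi,r')$ with $r'<r$ (using continuity on the closed polydisc of radius $r'$) and then let $r'\to r$, but this is automatic from the hypothesis that $f$ is continuous on $\bar\Delta(\xi,r)$.
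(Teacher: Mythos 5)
Your proof is correct and matches the paper's (which simply says to apply the one-variable Cauchy formula $n$ times). One small but real point: the shrinking-radius step you relegate to ``minor bookkeeping'' is in fact load-bearing, and its purpose is slightly misstated. For $w_1$ on the boundary circle $|w_1-\xi_1|=r$ and interior $(z_2,\ldots,z_n)$, the point $(w_1,z_2,\ldots,z_n)$ lies on $\partial\bar\Delta(\xi,r)$, where the hypothesis gives only continuity, not holomorphy; so the claim that $g_{w_1}$ is holomorphic on $\Delta((\xi_2,\ldots,\xi_n),r)$ is not available at radius $r$, and the inductive hypothesis cannot be applied there. Working at $r'<r$ puts the relevant torus and interior slices strictly inside $\Delta(\xi,r)$, which fixes this; the passage $r'\to r$ then goes through by uniform continuity of $f$ on $\bar\Delta(\xi,r)$ and the uniform lower bound on $\prod_i|w_i-z_i|$, exactly as you indicate.
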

\begin{proof}
  Apply one-variable Cauchy's formula $n$ times \cite{MR1045639}. 
\end{proof}
From the Taylor expansion of $1/(w_i-z_i)$ in $(z_i-\xi_i)$, $f(z)$ has a multivariate Taylor expansion in  $\bar
\Delta(\xi,r)$. Hence like univariate case, a holomorphic function is
an analytic function. Similarly, for two holomorphic functions $f$ and $g$
on a connected open set $U\subset \C^n$, if $f=g$ on an open subset of
$U$ then $f=g$ on $U$. 

However, the pole structure of a multivariate function is very
different from that in the univariate case.  

\begin{thm}[Hartog's extension]
\label{Hartog}
  Let $U$ be an open set of $\C^n$, $n>1$. $K$ is a compact subset of
  $U$ and $U-K$ is connected. Then any holomorphic function on $U-K$
  extends to a holomorphic function of $U$.
 \end{thm}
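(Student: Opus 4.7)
The plan is to reduce the extension problem for $f$ on $U\setminus K$ to solving an inhomogeneous Cauchy-Riemann equation $\bar\partial u = g$ with a compactly supported solution $u$. The dimensional hypothesis $n>1$ enters in exactly one place: only when $n\geq 2$ can one force $u$ itself to have compact support.

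First I would choose a smooth cutoff $\chi\in C^\infty_c(U)$ which is identically $1$ on an open neighborhood $W$ of the compact set $K$, with $\operatorname{supp}\chi$ strictly inside $U$. Define
$$\tilde f \;:=\; (1-\chi)\,f$$
on $U\setminus K$, and extend by $0$ across $K$ to obtain a smooth function on $U$ which agrees with $f$ outside $\operatorname{supp}\chi$ but fails to be holomorphic on the transition annulus. The obstruction is the $(0,1)$-form
$$g \;:=\; \bar\partial\tilde f \;=\; -\,f\,\bar\partial\chi,$$
which is smooth, $\bar\partial$-closed, and supported in a compact subset of $U\setminus K$ (since $\bar\partial\chi$ vanishes both where $\chi\equiv 1$ and outside $\operatorname{supp}\chi$). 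Extend $g$ by $0$ to a compactly supported form on $\C^n$.

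Next I would solve $\bar\partial u = g$ globally on $\C^n$ by the explicit formula
$$u(z_1,\ldots,z_n) \;=\; \frac{1}{2\pi i}\int_{\C} \frac{g_1(w,z_2,\ldots,z_n)}{w-z_1}\,dw\wedge d\bar w,$$
where $g=\sum_j g_j\,d\bar z_j$. A standard Cauchy-Pompeiu calculation, using the closedness identity $\partial g_1/\partial\bar z_j = \partial g_j/\partial\bar z_1$ to transfer the outer $\bar z_j$-derivative onto $\bar w$ inside the integral, yields $\bar\partial u = g$. Here $n>1$ is crucial: for $(z_2,\ldots,z_n)$ outside the compact projection $\pi(\operatorname{supp} g)\subset\C^{n-1}$ the integrand vanishes identically in $w$, so $u\equiv 0$ there. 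Since $u$ is holomorphic on $\C^n\setminus\operatorname{supp} g$ and vanishes on this nonempty open subset of the unbounded component of $\C^n\setminus\operatorname{supp} g$, the identity principle forces $u\equiv 0$ on that entire unbounded component, so $\operatorname{supp} u$ is compact.

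Finally, put $F := \tilde f - u$ on $U$. Then $\bar\partial F = g-g = 0$, so $F$ is holomorphic on $U$. With $\chi$ chosen so that $\operatorname{supp}\chi\cup\operatorname{supp} u$ does not exhaust $U$, there is a nonempty open subset of $U\setminus K$ on which $\chi=0$ and $u=0$, hence $F=f$ there. Because $U\setminus K$ is connected by hypothesis, the identity principle for holomorphic functions gives $F=f$ throughout $U\setminus K$, so $F$ is the desired holomorphic extension. The main obstacle is precisely the compact-support step for $u$: in dimension one there is no transverse direction in which to vary parameters outside the projection of $\operatorname{supp} g$, and the Cauchy integral produces a function decaying only like $1/z_1$ rather than vanishing identically outside a compact set, which is exactly why the Hartog phenomenon is genuinely a multivariate fact.
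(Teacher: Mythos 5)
Your argument is the standard Ehrenpreis $\bar\partial$-proof, which is precisely the route taken in the reference the paper cites for this theorem (H\"ormander, Ch.~2). The paper itself does not spell out a proof of the general statement; what it gives immediately after the theorem is a worked example for $n=2$, $U=\Delta(0,r)$, $K=\{(0,0)\}$, where the extension is produced by a one-variable Cauchy integral $g(z_1,z_2)=\frac{1}{2\pi i}\int_{|w_2|=r'}\frac{f(z_1,w_2)\,dw_2}{w_2-z_2}$ and then identified with $f$ on the slices $z_1\neq 0$. That slicing device is more elementary but exploits the fact that $K$ is a single point, so the complex lines $\{z_1=c\}$, $c\neq 0$, miss $K$ entirely. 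Your $\bar\partial$-route (cutoff, compactly supported $\bar\partial$-closed $(0,1)$-form, compactly supported solution $u$ via the Cauchy transform in one coordinate, identity-principle patching) handles an arbitrary compact $K$ uniformly and correctly isolates where $n>1$ enters: the vanishing of $u$ on $\C\times\big(\C^{n-1}\setminus\pi(\operatorname{supp} g)\big)$, which has no $n=1$ analogue. What the paper's slice trick buys is elementarity for a toy $K$; what your $\bar\partial$ argument buys is full generality, matching the cited source.

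One phrase in your last paragraph should be tightened. You write ``with $\chi$ chosen so that $\operatorname{supp}\chi\cup\operatorname{supp} u$ does not exhaust $U$'', but $u$ is determined by $\chi$, so this is not a free choice. The correct observation is that the needed open set exists automatically for any admissible $\chi$: let $\Omega_0$ be the unbounded component of $\C^n\setminus\operatorname{supp}\chi$. Since $\operatorname{supp}\chi$ is a compact subset of the open set $U$, any boundary point of $\Omega_0$ lies in $\operatorname{supp}\chi\subset U$ and hence has a ball inside $U$ that meets $\Omega_0$, so $U\cap\Omega_0\neq\emptyset$. Because $\operatorname{supp} g\subset\operatorname{supp}\chi$, $\Omega_0$ lies in the unbounded component of $\C^n\setminus\operatorname{supp} g$, where $u\equiv 0$. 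Thus $\chi=0$ and $u=0$ on the nonempty open set $U\cap\Omega_0\subset U\setminus K$, giving $F=f$ there, and the connectedness of $U\setminus K$ finishes the proof --- no special choice of $\chi$ is required.
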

 \begin{proof}
   See H\"ormander \cite[Chapter 2]{MR1045639}.
 \end{proof}

 \begin{example}
   Consider $n=2$, $U$ is the polydisc $\Delta(0,r)$ and
   $K=O=\{(0,0)\}$ in Theorem \ref{Hartog}. Suppose that $f(z_1,z_2)$
     is holomorphic in $U-K$. Define the function
     \begin{equation}
       g(z_1,z_2)=\frac{1}{2 \pi i}\int_{|w_2|=r'} \frac{f(z_1,w_2)dw_2}{w_2-z_2}\,,
     \end{equation}
where $0<r'<r$. Clearly $g$ is well defined in the smaller polydisc
$\Delta(0,r')$. $g$ is holomorphic in both $z_1$
and $z_2$. If $z_1\not =0$, then by one-variable Cauchy's formula,
$g(z_1,z_2)=f(z_1,z_2)$. $f=g$ in $\Delta(0,r')\cap
\{z|z_1\not=0\}$, hence $f=g$ in $\Delta(0,r')-O$.
Define a new function
\begin{equation}
  F(z)=\left\{
    \begin{array}{cc}
      g(z) &  z\in \Delta(0,r')\\
      f(z) & z\not \in \Delta(0,r')\ \text{but } z \in \Delta(0,r)
    \end{array}
\right. \,
\end{equation}
Clearly $F$ is holomorphic in $U$ and $F=f$ in $U-K$, so $F$ is the extension.
 \end{example}

Hartog's extension means the pole of a multivariate holomorphic $f$
has complicated structure, say, cannot be a point. It also implies
that we should not study the space of holomorphic functions on an open
set like $U-K$ in Theorem \ref{Hartog}, since these functions can
always be extended. 

Laurent expansion of a multivariate holomorphic function is also
subtle. 
\begin{definition}
  A subset $\Omega$ of $\C^n$ is called Reinhardt domain, if $\Omega$
  is open, connected and for any $\zs\in \Omega$, $(e^{i\theta_1}
  z_1,\ldots , e^{i\theta_n}
  z_n)\in \Omega$, $\forall \theta_1 \in \mathbb R,\ldots, \theta_n
  \in \mathbb R$. This is a generalization of an annulus on complex plane.
\end{definition}

\begin{proposition}
  Let $f$ be a holomorphic function on a Reinhardt domain
  $\Omega$. Then there exists a Laurent series, 
  \begin{equation}
    \sum_{(\alpha_1,\ldots,\alpha_n) \in Z^n}  c_{\alpha_1\ldots \alpha_n}z_1^{a_1} \ldots z_n^{a_n}\,,
  \end{equation}
which is uniformly convergent to $f$ on any compact subset of $\Omega$.
\end{proposition}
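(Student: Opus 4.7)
The plan is to generalize the one-variable Laurent expansion to the Reinhardt setting by iterating the multivariate Cauchy kernel. The key leverage from the Reinhardt hypothesis is that whenever a point $\zeta=(\zeta_1,\dots,\zeta_n)$ with all $\zeta_i\neq 0$ lies in $\Omega$, the entire polytorus $T_\zeta=\{w:|w_i|=|\zeta_i|\}$ lies in $\Omega$ as well. By openness and compactness of $T_\zeta$, I can find radii $r_i<|\zeta_i|<R_i$ such that the closed polyannulus
\[
A(r,R)=\{w:r_i\le|w_i|\le R_i,\ i=1,\dots,n\}
\]
is contained in $\Omega$. This will be the local building block for the expansion.

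On such a polyannulus, I would apply the multivariate Cauchy integral formula one variable at a time. Fixing $(z_2,\dots,z_n)$ in the open polyannulus, the one-variable Laurent theorem gives
\[
f(z_1,\dots,z_n)=\sum_{\alpha_1\in\Z}\bigg(\tfrac{1}{2\pi i}\int_{|w_1|=\rho_1}\tfrac{f(w_1,z_2,\dots,z_n)}{w_1^{\alpha_1+1}}dw_1\bigg)z_1^{\alpha_1},
\]
convergent uniformly on compact annular subsets in $z_1$. Since the inner integrand depends holomorphically on $(z_2,\dots,z_n)$ in the polyannulus (differentiate under the integral), I iterate the same expansion in $z_2,\dots,z_n$. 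This yields the coefficients
\[
c_{\alpha_1\dots\alpha_n}(\rho)=\frac{1}{(2\pi i)^n}\int_{|w_1|=\rho_1}\!\!\cdots\!\int_{|w_n|=\rho_n}\frac{f(w)}{w_1^{\alpha_1+1}\cdots w_n^{\alpha_n+1}}\,dw_1\wedge\cdots\wedge dw_n,
\]
and uniform convergence on closed sub-polyannuli via multiple geometric-series estimates on $|z_i/w_i|$ and $|w_i/z_i|$.

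The main obstacle is promoting this local expansion to a single series valid on all of $\Omega$, i.e.\ showing that $c_{\alpha_1\dots\alpha_n}(\rho)$ is independent of the radii $\rho$. For this I would argue: first, within a fixed polyannulus inside $\Omega$, Cauchy's theorem (holomorphicity of $f(w)/w_1^{\alpha_1+1}\cdots w_n^{\alpha_n+1}$) lets me deform each $|w_i|=\rho_i$ continuously without changing the integral. Next, using connectedness of $\Omega$ together with the Reinhardt symmetry, any two admissible radius vectors $\rho,\rho'$ can be joined by a path along which the polytorus $T_\rho$ stays inside $\Omega$; continuity of the integral along this path combined with the local invariance shows $c_{\alpha}(\rho)=c_{\alpha}(\rho')$. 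The Reinhardt property is essential here because it guarantees that the whole torus stays inside $\Omega$ once its modulus vector does.

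Finally, for a general compact $K\subset\Omega$, I would cover $K$ by finitely many closed polyannuli $A^{(k)}\subset\Omega$ obtained as above. On each $A^{(k)}$ the Laurent series converges uniformly to $f$ with coefficients $c_\alpha^{(k)}$, and by the independence step these coincide with a single sequence $c_\alpha$. A standard dominated-tails estimate then upgrades the finite-cover uniform convergence to uniform convergence on $K$, completing the proof. The subtle point I expect to devote real care to is the well-definedness of $c_\alpha$ across components of the ``log-image'' $\log|\Omega|=\{(\log|z_1|,\dots,\log|z_n|):z\in\Omega\}$: this is precisely where the connectedness of $\Omega$ and the Reinhardt invariance combine to force a single global expansion.
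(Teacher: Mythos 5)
The paper itself offers no argument for this proposition, only the citation to Scheidemann, so there is no internal proof of the paper's to compare yours against. What you have written is the standard textbook proof: iterate the one-variable Laurent theorem on a polyannulus, show the coefficients are radius-independent via contour deformation, and globalize by a finite cover. The overall plan is sound, and you correctly identify the contour-deformation/connectedness issue as the place that needs real care. That part does work: $\Omega^*=\Omega\cap(\C^*)^n$ is obtained from the connected open set $\Omega$ by removing finitely many complex hyperplanes, each of real codimension $2$, so $\Omega^*$ is still connected; its continuous image $\log|\Omega^*|$ is a connected open subset of $\mathbb R^n$, hence path-connected, and local constancy of the torus integral along such a path (Cauchy) gives a single well-defined coefficient sequence $c_\alpha$.

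The step that fails as written is the final covering argument. You propose to cover an arbitrary compact $K\subset\Omega$ by closed polyannuli $A(r,R)$ with all $r_i>0$, each obtained by thickening a torus $T_\zeta$ with every $\zeta_i\neq 0$. If $\Omega$ meets a coordinate hyperplane (a polydisc, or any Reinhardt domain containing the origin), then $K$ may contain a point $p$ with some $p_i=0$, and no such polyannulus covers $p$. You must allow degenerate radii $r_i=0$, so that the building block is a closed polydisc in those coordinates (the Reinhardt saturation of a small polydisc around $p$ supplies exactly such a set inside $\Omega$). In those coordinates the one-variable expansion is a power series with no negative terms, and the integral formula for $c_\alpha$ with $\alpha_i<0$ gives zero by Cauchy's theorem, which keeps the single global coefficient sequence consistent with the degenerate blocks. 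With the degenerate case handled and the connectedness point made explicit, the proof closes.
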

\begin{proof}
See Scheidemann \cite{MR2176976}.
\end{proof}

A multivariate function $f$ may be defined over a domain which is not a Reinhardt domain. For a simple example, the function
$f=1/(z_1-z_2)$ is defined on $U=\{(z_1,z_2)|z_1\not =z_2,\ (z_1,z_2)\in
\C^2\}$, where $U$ is not a Reinhardt domain. It is hard to define
Laurent series for $f$ in $U$. Instead, consider $\Omega=\{(z_1,z_2)||z_1|>|z_2|,\ (z_1,z_2)\in
\C^2\}$. Then $\Omega$ is a Reinhardt domain, and on $\Omega$,
\begin{equation}
  \frac{1}{z_1-z_2}=\sum_{n=0}^\infty z_2^n z_1^{-n-1}, \quad
  (z_1,z_2)\in \Omega
\end{equation}
This Laurent series does not converge outside $\Omega$.

We turn to complex manifolds. 
\begin{definition}
  A complex manifold $M$ is a differentiable manifold, with an open
  cover $\{U_\alpha\}$ and coordinate maps $\phi_\alpha: U_\alpha \to
   \C^n$, such that 
   all $\phi_\alpha\phi_\beta^{-1}$'s components are holomorphic on 
   $\phi_\beta(U_\alpha \cap U_\beta)$ for 
   $U_\alpha \cap U_\beta \not = \emptyset$.
\end{definition}

\begin{example}[Complex projective space]
\label{CPn}
  Define $\CP^n$ as the quotient space $\C^{n+1}-\{0,\ldots 0\}$ over 
  \begin{equation}
    (Z_0,\ldots ,Z_{n})\sim  (\lambda Z_0,\ldots \lambda Z_{n}),\quad
    \lambda \in\C^*
\label{rescaling}
  \end{equation}
The equivalence class of $(Z_0,\ldots,Z_n)$ in $\CP^n$ is denoted as
$[Z_0, \ldots, Z_n]$, which is the {\it homogeneous
  coordinate}. Define an open cover of $\CP^n$,
$\{U_0,U_1,\ldots,U_n\}$, where,
\begin{equation}
  U_i=\{[Z_0,\ldots,Z_n]|Z_i\not=0\},\quad i=0,\ldots, n
\end{equation}
For each $U_i$, the coordinate map $\phi_i: \ U_i \to \C^n$ is
\begin{equation}
  \phi_i([Z_0,\ldots,Z_n])=(\frac{Z_0}{Z_i},\ldots,
  \widehat{\frac{Z_i}{Z_i}},\ldots, \frac{Z_n}{Z_i})\equiv
  (z_0^{(i)},\ldots,\widehat{z_i^{(i)}}\ldots z_n^{(i)}).
\end{equation}
Hence, for $i<j$, 
\begin{equation}
\label{projective_atlas}
  \phi_i \phi_j^{-1}(z_0^{(j)},\ldots,\widehat{z_j^{(j)}}\ldots
  z_n^{(j)})=(\frac{z_0^{(j)}}{z_i^{(j)}},\ldots,\widehat{\frac{z_i^{(j)}}{z_i^{(j)}}}\ldots,\frac{1}{z_i^{(j)}},\ldots
  \frac{z_n^{(j)}}{z_i^{(j)}}),
\end{equation}
Since on $\phi_j (U_i\cap U_j)$, $z_i^{(j)}\not =0$, the transformation
\eqref{projective_atlas} is holomorphic. Hence $\CP^n$ is a compact complex
space.  In particular, we may identify $U_0$ as  $\C^n$.

For a homogeneous polynomial $F(Z_0,\ldots Z_n)$, the equation $F(Z_0,\ldots
Z_n)=0$ is well defined, since the rescaling \eqref{rescaling}
does not affects the value $0$.
\end{example}

Like real manifold case, we can also study sub-manifolds of a
manifold. In particular, the codimension-1 case is very important for
our discussion of residues in this chapter. 
\begin{definition}
  An analytic hypersurface $V$ of a complex manifold $M$ is a subset
  of $M$ such that $\forall p\in V$, there exists a neighborhood of $p$ in
  $M$, such that locally $V$ is the set of zeros of a
  holomorphic function $f$, defined in this neighborhood. 
\end{definition}
Like the algebraic variety case (Theorem \ref{variety_decomposition}),
any analytic hypersurface uniquely decomposes as the union of irreducible
analytic hypersurfaces. \cite{MR1288523}.
\begin{definition}
  For a complex manifold $M$, a divisor $D$ is a locally finite formal linear
  combination,
  \begin{equation}
    D=\sum_i c_i V_i,
  \end{equation}
where each $V_i$ is an irreducible analytic hypersurface in $M$. 
\end{definition}

\subsection{Multivariate residues}
Recall the for one-variable case, the residue of a meromorphic
function $h(z)/f(z)$ at the point $\xi$, is defined as,
\begin{equation}
  \Res{}_\xi\bigg(\frac{h(z)}{f(z)}dz\bigg)=\frac{1}{2\pi i}\oint_{|z-\xi|=\epsilon} \frac{h(z)dz}{f(z)}\,.
\end{equation}
where $f$ and $h$ are holomorphic near $\xi$. 

To define a multivariate residue in $\C^n$, we need $n$ vanishing
denominators $f_1, \ldots f_n$ such that $f_1(z)=\ldots =f_n(z)=0$
defines isolated points. 

\begin{definition}
\label{Grothendieck_residue}
   Let $U$ be a ball in $\C^n$ centered at $\xi$,
   i.e. $||z-\xi||<\epsilon$ for $z\in U$. $f_1(z), \dots, f_n(z)$ are
   holomorphic function in $U$, and have only one isolated
common zero, $\xi$ in $U$. Let $h(z)$ be a holomorphic function in
a neighborhood of $\bar U$. Then for the differential form,
\begin{equation}
  \label{omega}
  \omega=\frac{h(z) dz_1 \wedge \cdots \wedge dz_n}{f_1(z) \cdots f_n(z)}\;,
\end{equation}
the (Grothendieck) residue at $\xi$ is defined to be \cite{MR1288523} ,
\begin{equation}
    \label{local_residue}
    \Res{}_{\{f_1,
  \dots, f_n\},\xi}(\omega)=\bigg(\frac{1}{2\pi i}\bigg)^n\oint_{\Gamma}
\frac{h(z) dz_1 \wedge \cdots \wedge dz_n}{f_1(z) \cdots f_n(z)}\;,
  \end{equation}
where the contour $\Gamma$ is define by the real $n$-cycle
$\Gamma=\{z: z\in $U$, ~|f_i(z)|=\epsilon\}$ with the orientation specified by
$d(\arg f_1) \wedge \cdots \wedge d(\arg f_n)$. 
\end{definition}
Note that $\Gamma$ in this definition ensures that $f_i$'s are
nonzero for this contour integral. A naive contour choice
$\Gamma'=\{z: z\in $U$, |z_i-\xi_i|=\epsilon,\ \forall i\}$ in general does not
work. For instance,
\begin{equation}
  \frac{1}{(2\pi i)^2} \oint_{\Gamma'} \frac{dz_1\wedge dz_2}{(z_1+z_2) (z_1-z_2)},
\end{equation}
with $\Gamma'=\{z: z\in $U$, |z_1|=\epsilon,|z_2|=\epsilon 
\}$ is ill-defined. On this contour, both $(z_1+z_2)$ and $(z_1-z_2)$
has zeros. 

Note that if we permute functions $\{f_1 ,\ldots , f_n\}$, the
differential form is invariant but the contour orientation will be
reversed if the permutation is odd. This is a new feature of
multivariate residue, hence in Definition \ref{Grothendieck_residue}, we
keep $\{f_1,\ldots, f_n\}$ in the subscript.

Clearly, if $f_1(z)=f_1(z_1),\ldots, f_n(z)=f_n(z_n)$, then,
\begin{equation}
\label{factorable_residue}
   \Res{}_{\{f_1,
  \dots, f_n\},\xi}(\omega)=\bigg(\frac{1}{2\pi i}\bigg)^n\oint
\frac{dz_1}{f_{z_1}} \oint\frac{dz_2}{f_{z_2}} \ldots \oint \frac{h(z)dz_n}{f_{z_n}}  \,,
\end{equation}
the multivariate residue becomes iterated univariate residues.

\begin{definition}
  In Definition \ref{Grothendieck_residue}, if the Jacobian of
$f_1,
  \ldots, f_n$ in $z_1,\ldots z_n$ at $\xi$ is nonzero, we call this
  residue non-degenerate. Otherwise it is called degenerate.
\end{definition}

\begin{proposition}[Cauchy]
If the residue in Definition \ref{Grothendieck_residue} is non-degenerate, then
  \begin{equation}
    \label{non_degenerate_residue}
    \Res{}_{\{f_1,
  \dots, f_n\},\xi}(\omega)=\frac{h(\xi)}{J(\xi)}\;.
  \end{equation}
where $J(\xi)$ is the Jacobian of
$f_1,
  \ldots, f_n$ in $z_1,\ldots z_n$ at $\xi$ .
\end{proposition}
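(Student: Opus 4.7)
The plan is to reduce the non-degenerate case to the factorized situation in \eqref{factorable_residue} via a holomorphic change of variables. Since the Jacobian $J(\xi)=\det\bigl(\partial f_i/\partial z_j\bigr)(\xi)$ is nonzero, the holomorphic inverse function theorem guarantees that the map
\begin{equation}
  \Phi:z\longmapsto w=(f_1(z),\dots,f_n(z))
\end{equation}
is a biholomorphism from some open neighborhood $U'\subset U$ of $\xi$ onto a neighborhood $V$ of $0\in\C^n$. Shrinking $\epsilon$ if necessary, I would arrange that the contour $\Gamma=\{z\in U:|f_i(z)|=\epsilon,\ i=1,\ldots,n\}$ is contained in $U'$ and maps under $\Phi$ onto the distinguished torus $T=\{w:|w_i|=\epsilon\}\subset V$.

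Next I would transport the differential form. From $dw_1\wedge\cdots\wedge dw_n=J(z)\,dz_1\wedge\cdots\wedge dz_n$ and the change of variables formula,
\begin{equation}
  \omega=\frac{h(z)\,dz_1\wedge\cdots\wedge dz_n}{f_1(z)\cdots f_n(z)}
  =\frac{h\bigl(z(w)\bigr)}{J\bigl(z(w)\bigr)}\,\frac{dw_1\wedge\cdots\wedge dw_n}{w_1\cdots w_n}.
\end{equation}
Because $J$ is continuous and $J(\xi)\neq 0$, the function $\tilde h(w):=h(z(w))/J(z(w))$ is holomorphic on a neighborhood of $\overline{\Phi(U')}$ after further shrinking. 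Hence the integral reduces to the product-of-circles form to which \eqref{factorable_residue} applies directly:
\begin{equation}
  \Res{}_{\{f_1,\ldots,f_n\},\xi}(\omega)
  =\Bigl(\tfrac{1}{2\pi i}\Bigr)^{n}\oint_{T}\frac{\tilde h(w)\,dw_1\wedge\cdots\wedge dw_n}{w_1\cdots w_n}
  =\tilde h(0)=\frac{h(\xi)}{J(\xi)},
\end{equation}
where the final equality is the iterated one-variable Cauchy formula applied to the holomorphic function $\tilde h$.

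The only subtle point, and the step I expect to require genuine care, is the orientation of the cycle $\Gamma$. Definition \ref{Grothendieck_residue} fixes the orientation by $d(\arg f_1)\wedge\cdots\wedge d(\arg f_n)$, which in the $w$-coordinates is exactly $d(\arg w_1)\wedge\cdots\wedge d(\arg w_n)$; this is the standard orientation of the torus $T$ used in the iterated Cauchy formula, so no sign is introduced. Once this is checked, the argument above is complete. An even shorter alternative would be to note that both sides of \eqref{non_degenerate_residue} are $\C$-linear and continuous in $h$, and that for $h\equiv 1$ the identity follows from a straightforward homotopy of $\Gamma$ to the torus $\{|z_i-\xi_i|=\delta\}$ after the linearization $f_i(z)\approx\sum_j(\partial_j f_i)(\xi)(z_j-\xi_j)$; but the change-of-variables route is cleaner and self-contained.
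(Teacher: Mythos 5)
Your argument is correct and is precisely the one the paper points to: the paper's proof is the one-line remark ``use the implicit function theorem to treat the $f_i$'s as coordinates and compute the residue directly,'' citing Griffiths--Harris, and your change-of-variables $w_i=f_i(z)$ with the Jacobian factor and the orientation check is exactly that argument written out in full. No gap and no meaningful deviation from the intended route.
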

\begin{proof}
  In this case, we can use implicit function theorem to treat $f_i$'s
  as coordinates and compute the residue directly \cite{MR1288523}.
\end{proof}

\begin{proposition}
\label{vanishing_residue}
If $h$ in Definition \ref{Grothendieck_residue} satisfies,
\begin{equation}
  h(z)=q_1(z) f_1(z) + \ldots + q_n(z) f_n(z), \quad z\in U
\end{equation}
where $q_i$'s are holomorphic functions in $U$. Then 
\begin{equation}
    \Res{}_{\{f_1,
  \dots, f_n\},\xi}(\omega)=0\;.
  \end{equation}
\end{proposition}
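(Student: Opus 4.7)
\textbf{The plan} is to use the $\mathbb{C}$-linearity of the residue to reduce to the case $h=q\cdot f_k$ for a single index $k$, cancel the factor $f_k$ in numerator and denominator, and then apply Stokes' theorem to a carefully chosen real $(n{+}1)$-chain whose boundary is the defining contour $\Gamma$.

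By linearity of the contour integral in $h$, it suffices to show that
\[\Res{}_{\{f_1,\dots,f_n\},\xi}\!\left(\frac{q_k f_k\,dz_1\wedge\cdots\wedge dz_n}{f_1\cdots f_n}\right)=0\]
for each $k$. After cancelling $f_k$ and relabeling so that $k=n$, the integrand simplifies to
\[\omega' \;=\; \frac{q_n\,dz_1\wedge\cdots\wedge dz_n}{f_1\cdots f_{n-1}},\]
which is a holomorphic $n$-form on $U\setminus\bigcup_{j<n}\{f_j=0\}$. Crucially, $\omega'$ has no pole along the hypersurface $\{f_n=0\}$, which is what makes the rest of the argument work.

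Next I would introduce the real $(n{+}1)$-dimensional chain
\[D \;=\; \bigl\{\,z\in U : |f_j(z)|=\epsilon \text{ for } j=1,\dots,n-1,\ |f_n(z)|\le\epsilon\,\bigr\},\]
oriented in the standard way. For generic small $\epsilon>0$, Sard's theorem guarantees the real hypersurfaces $\{|f_j|=\epsilon\}$ meet transversely, so $D$ is a smooth manifold with boundary $\partial D=\Gamma$. On $D$, each $|f_j|=\epsilon>0$ for $j<n$, so $\omega'$ is smooth throughout $D$. Since $\omega'$ is an $(n,0)$-form with holomorphic coefficient, a direct check gives $d\omega'=\bar\partial\omega'+\partial\omega'=0$: $\bar\partial\omega'$ vanishes by holomorphy of the coefficient, and $\partial\omega'$ would be an $(n{+}1,0)$-form, of which none exists in complex dimension $n$. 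Stokes' theorem then yields
\[\oint_\Gamma \omega' \;=\; \int_{\partial D}\omega' \;=\; \int_D d\omega' \;=\; 0,\]
and hence the residue vanishes.

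\textbf{The main obstacle} is ensuring $D$ really is a smooth oriented $(n{+}1)$-manifold with boundary precisely $\Gamma$ (with the orientation matching the one specified by $d(\arg f_1)\wedge\cdots\wedge d(\arg f_n)$). This rests on transversality of the level sets $\{|f_j|=\epsilon\}$ inside $U$, which holds for generic small $\epsilon$ by Sard; since the value of the residue is independent of $\epsilon$ (the contours are homologous in $U\setminus\bigcup_j\{f_j=0\}$), one may always choose such an $\epsilon$. A secondary routine point is to verify $D\subset U$ for small enough $\epsilon$; this follows from the isolated-zero hypothesis, since $F=(f_1,\dots,f_n)\colon U\to\mathbb{C}^n$ has $\xi$ as its only preimage of $0$ locally, so preimages of small polydiscs about the origin are small neighborhoods of $\xi$.
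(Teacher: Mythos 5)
Your proof is correct and takes the same Stokes-theorem route that the paper merely cites to Griffiths and Harris; your chain $D$ with $\partial D = \Gamma$, together with the observation that cancelling $f_k$ leaves a form that is holomorphic (hence closed) on a neighborhood of $D$, is exactly the content the paper leaves to the reference. One small technical point: Sard's theorem does not directly yield transversality at the diagonal value $(\epsilon,\dots,\epsilon)$, since the diagonal has measure zero in $\mathbb{R}^n$; the clean fix is to take distinct radii $\epsilon_1,\dots,\epsilon_n$ (for which Sard applies directly and the residue is unchanged, as you yourself note) or to invoke parametric transversality in the single parameter $\epsilon$.
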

\begin{proof}
  This is from Stokes' theorem \cite{MR1288523}.
\end{proof}

In general a multivariate residue is not of the form like 
\eqref{factorable_residue} or non-degenerate. Unlike the univariate
case, Laurent expansion, even if it is defined in a subset, in general
does not help the evaluation of multivariate residues. Hence we need a
sophisticated method to compute residues, like \eqref{xbox_residue}.

\begin{thm}[Global residue]
   Let $M$ be a compact complex manifold. $D_1 \ldots D_n$ are
   divisors of $M$, such that $D_1\cap \ldots \cap D_n=S$ is a finite set. If
   $\omega$ is a holomorphic $n$-form defined in $M-D_1\cup \ldots
   \cup D_n$ whose polar divisor is $D=D_1+\ldots D_n$, then
   \begin{equation}
     \sum_{\xi \in S}\Res{}_{\{D_1,
  \dots, D_n\},\xi}(\omega) =0 .
   \end{equation}
\end{thm}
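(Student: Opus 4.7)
The plan is to reduce the statement to Stokes' theorem on the compact manifold $M$. The key observation is that $\omega$, being a holomorphic $(n,0)$-form on the $n$-dimensional complex manifold $M \setminus (D_1 \cup \cdots \cup D_n)$, is automatically $d$-closed there: $\bar\partial\omega = 0$ by holomorphy, and $\partial\omega = 0$ since there is no room for another $dz$ in top degree. So if I can show that the collection of residue cycles at the points of $S$ bounds a real $(n{+}1)$-chain in the complement of the polar locus, the sum of residues must vanish.

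First I would set up the local picture. Around each $\xi \in S$, pick a coordinate polydisc $U_\xi$ on which $D_i \cap U_\xi = \{f_{i,\xi} = 0\}$ for local holomorphic defining functions, shrunk so the $\overline{U}_\xi$ are pairwise disjoint and $S \cap \overline{U}_\xi = \{\xi\}$. For small $\epsilon$ the real $n$-torus
\[
\Gamma_\xi(\epsilon) = \{z \in U_\xi : |f_{i,\xi}(z)| = \epsilon,\ i=1,\ldots,n\},
\]
oriented by $d(\arg f_{1,\xi}) \wedge \cdots \wedge d(\arg f_{n,\xi})$, computes the $\xi$-residue via Definition~\ref{Grothendieck_residue}. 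So the statement becomes: the cycle $\sum_{\xi \in S}\Gamma_\xi(\epsilon)$ is null-homologous in $H_n(M \setminus \bigcup_i D_i;\mathbb{C})$.

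The heart of the argument is constructing a real $(n{+}1)$-chain $C_\epsilon \subset M \setminus \bigcup_i D_i$ with $\partial C_\epsilon = \sum_\xi \Gamma_\xi(\epsilon)$. Locally near $\xi$ this is easy --- the complement in $U_\xi$ of the open tube $\{|f_{i,\xi}| < \epsilon\}$ retracts onto $\Gamma_\xi(\epsilon)$. The difficulty is that the naive global candidate ``$\{|f_i|\geq \epsilon \text{ for all } i\}$'' does not exist, since the $f_{i,\xi}$ are merely local. I would cure this by a standard \v Cech/partition-of-unity argument: cover $M$ by finitely many coordinate polydiscs on each of which every $D_i$ meeting it has a local defining function, use the fact that away from $S$ at most $n{-}1$ of the divisors meet at any point (so the local tubes deformation-retract onto lower-dimensional strata), and patch the local bounding chains together using a subordinate partition of unity. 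Compactness of $M$ is what makes the resulting \v Cech cochain a coboundary, producing $C_\epsilon$ globally. Once $C_\epsilon$ is in hand, Stokes gives
\[
\sum_{\xi \in S}\int_{\Gamma_\xi(\epsilon)} \omega \;=\; \int_{\partial C_\epsilon} \omega \;=\; \int_{C_\epsilon} d\omega \;=\; 0,
\]
and dividing by $(2\pi i)^n$ is the theorem.

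The main obstacle is precisely this global patching: the polar divisors $D_i$ need not meet transversally outside $S$, and their local defining functions depend on the chart. An elegant alternative is to replace $M$ by a blow-up $\pi:\widetilde{M}\to M$ making $\pi^{-1}(D_1+\cdots+D_n)$ simple normal crossings; on $\widetilde M$ the bounding chain appears tautologically as the complement of small polydisc neighborhoods of the finitely many ``$n$-fold crossing'' points, and one checks that $\pi^*\omega$ still has $S$ as its only $n$-fold polar intersection so the residues are preserved. Either route leads to the same Stokes-theorem conclusion; I would expect the blow-up route to be cleaner conceptually while the \v Cech route is more self-contained.
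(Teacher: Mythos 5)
The paper's own ``proof'' is only a citation to Griffiths--Harris, so there is nothing to compare line by line; the question is whether your sketch of a Stokes-theorem argument actually works. Your framing is right in outline: $\omega$ is a closed $n$-form on $M\setminus\bigcup_i D_i$ (the type-$(n,0)$ observation is correct), and the statement reduces to exhibiting a real $(n{+}1)$-chain $C$ in $M\setminus\bigcup_i D_i$ with $\partial C=\sum_\xi\Gamma_\xi$. But the way you propose to build $C$ contains a genuine error, and it is precisely the step you label ``easy.''

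The claim that ``locally near $\xi$ this is easy --- the complement in $U_\xi$ of the open tube retracts onto $\Gamma_\xi$'' is the opposite of what you need. If $R_\xi=\{z\in U_\xi:|f_{i,\xi}(z)|\geq\epsilon\ \forall i\}$ deformation-retracts onto the $n$-torus $\Gamma_\xi$, then $H_n(R_\xi)\cong H_n(\Gamma_\xi)\cong\mathbb Z$ is generated by $[\Gamma_\xi]$, so $\Gamma_\xi$ does \emph{not} bound inside $R_\xi$. More to the point, $U_\xi\setminus(D_1\cup\cdots\cup D_n)$ is homotopy equivalent to $(\mathbb C^*)^n$, and $\Gamma_\xi$ generates its $H_n\cong\mathbb Z$; that nonvanishing class is exactly why the local residue can be nonzero. (The $n=1$ picture makes this vivid: a small circle around a pole does not bound in the punctured disc.) So there are no local bounding chains, and your \v Cech/partition-of-unity plan has nothing to glue. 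The bounding chain must be built globally; the compactness of $M$ and the globality of the $D_i$ are essential, not bookkeeping. Your blow-up variant has the same dimension confusion: ``the complement of small polydisc neighborhoods'' is a $2n$-real-dimensional region, whereas Stokes for an $n$-form requires an $(n{+}1)$-chain.

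The construction that actually works --- and is what the cited Griffiths--Harris argument does after reducing to normal crossings --- is the iterated tube (Leray coboundary). Take tubular neighborhoods $T_\epsilon(D_i)$; then
\begin{equation}
C\;=\;\partial T_\epsilon(D_1)\cap\partial T_\epsilon(D_2)\cap\cdots\cap\partial T_\epsilon(D_{n-1})\;\setminus\;T_\epsilon(D_n)
\end{equation}
is a real $(n{+}1)$-chain lying in $M\setminus\bigcup_i D_i$ (each $\partial T_\epsilon(D_i)$ avoids $D_i$, and excising $T_\epsilon(D_n)$ removes $D_n$), and for $\epsilon$ small its boundary is $\bigcap_{i=1}^n\partial T_\epsilon(D_i)$, which is precisely $\sum_{\xi\in S}\Gamma_\xi$ with the stated orientation. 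Now Stokes and $d\omega=0$ finish the argument. I would encourage you to replace the \v Cech/patching paragraph with this explicit iterated-tube chain; it makes transparent both why the result is global and why it fails without compactness of $M$.
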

\begin{proof}
  This is from Stokes' theorem of a complex manifold. See Griffiths and Harris \cite{MR1288523}.
\end{proof}
Note that to consider global residue theorem, we need a compact
complex manifold but $\C^n$ is not. So residues on a complex manifold
has to be defined. $\omega$ has the polar divisor  $D=D_1+\ldots D_n$
means, near a point $\xi\in S$, locally each $D_i$ is a divisor of a
holomorphic function $f_i$ and $\omega$ has the local form
\eqref{omega}. Again, the script $\{D_1,
  \dots, D_n\}$ indicates the ordering of denominators. 

  \begin{example}
    Consider the meromorphic differential form in $\C^n$,
    \begin{equation}
      \omega=\frac{dz_1 \wedge dz_2}{(z_1+z_2)(z_1-z_2)}\,.
    \end{equation}
Extend $\omega$ to a  meromorphic differential form  in $\CP^2$
(Example \ref{CPn}). Let $[Z_0,Z_1,Z_2]$ be the homogeneous
coordinate. In the patch $U_0$, define $z_1=Z_1/Z_0$,
$z_2=Z_2/Z_0$. For the patch $U_1$, let $u_0=Z_0/Z_1$,
$u_2=Z_2/Z_1$. Then on $U_0\cap U_1$,
\begin{equation}
  z_1=\frac{1}{u_0},\quad z_2=\frac{u_2}{u_0}\,.
\end{equation}
After the change of variables, on $U_0\cap U_1$,
\begin{equation}
      \omega=\frac{du_0 \wedge du_2}{u_0(u_2-1)(u_2+1)}\,,
    \end{equation}
Similarly, For the patch $U_2$, let $v_0=Z_0/Z_2$,
$v_1=Z_1/Z_2$. On $U_0\cap U_2$,
\begin{equation}
      \omega=\frac{dv_0 \wedge dv_1}{v_0(v_1-1)(v_1+1)}\,.
    \end{equation}
Then in $\CP^2$, $\omega$ is defined except on $3$ irreducible
hypersurfaces $V_1=\{Z_0=0\}$, $V_2=\{Z_1+Z_2=0\}$ and
$V_3=\{Z_1-Z_2=0\}$. To apply global residue theorem, consider
\begin{equation}
  D_1=V_0+V_1,\quad D_2=V_2 \,.
\end{equation}
Then $D=D_1+D_2$ is the polar divisor of $\omega$. $D_1\cap
D_2=\{P_1,P_2\}$, where $P_1=[1,0,0]$ and $P_2=[0,1,1]$. 
Global residue theorem reads,
\begin{equation}
\Res{}_{\{D_1, D_2\},P_1} (\omega)+\Res{}_{\{D_1, D_2\},P_2} (\omega)=0 \,.
\end{equation}
Explicitly by
\eqref{non_degenerate_residue},
\begin{equation}
  \Res{}_{\{D_1, D_2\},P_1} (\omega)=-\half,\quad \Res{}_{\{D_1, D_2\},P_2} (\omega)=\half \,.
\end{equation}
Note that if we consider a different
set of divisors, say, $D_1'=V_1$,\quad $D_2'=V_0+V_2$, then $D_1'\cap
D_2'=\{P_1,P_3\}$, where $P_3=[0,1,-1]$. So there is another relation $\Res{}_{\{D_1', D_2'\},P_1}
(\omega)+\Res{}_{\{D_1', D_2'\},P_3} (\omega)=0$, and,
\begin{equation}
  \Res{}_{\{D_1', D_2'\},P_3} (\omega)=\half.
\end{equation}
We see that for a multivariate case, there can be several global
residues relations for one meromorphic form.
\end{example}

 \section{Multivariate residues via computational algebraic geometry }
There are several algorithms of calculating multivariate
residues in algebraic geometry. We mainly use two methods,
{\it transformation law} and {\it Bezoutian}.

\subsection{Transformation law}

\begin{thm}
  For the residue in Definition \ref{Grothendieck_residue}, $g_i=\sum_j a_{ij} f_j$, where
  $a_{ij}$ are locally holomorphic functions near $\xi$. Then 
\begin{equation}
    \label{transformation}
    \Res{}_{\{f_1,
  \dots, f_n\},\xi}(\omega)= \Res{}_{\{g_1,
  \dots, g_n\},\xi}(\det A~\omega)
  \end{equation}
where $A$ is the matrix $(a_{ij})$.
\end{thm}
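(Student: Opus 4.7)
The plan is to reduce the statement to three elementary transformations of $A$ and verify each case separately, using the vanishing property of Proposition \ref{vanishing_residue} and contour homotopies.

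First I would note that for the right-hand side to define a Grothendieck residue at all, the $g_i$ must still cut out $\xi$ as an isolated common zero. Since $g_i=\sum_j a_{ij}f_j$, the ideal inclusion $\langle g_1,\dots,g_n\rangle\subseteq\langle f_1,\dots,f_n\rangle$ holds in the local ring $\mathcal{O}_{\C^n,\xi}$, with equality (and hence isolation of $\xi$) precisely when $\det A(\xi)\neq 0$. So $A$ is invertible over $\mathcal{O}_{\C^n,\xi}$, and by the structure theorem for invertible matrices over a local ring it factors as a product of elementary matrices with holomorphic entries. A short check shows that both sides of \eqref{transformation} compose correctly: if the law holds for a change $\{f\}\to\{g\}$ via $A$ and a change $\{g\}\to\{k\}$ via $B$, then it holds for $\{f\}\to\{k\}$ via $BA$, using $\det(BA)=\det B\cdot\det A$. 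Hence it suffices to verify \eqref{transformation} for the three elementary types.

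For a row permutation swapping $f_i\leftrightarrow f_j$ one has $\det A=-1$, while the orientation $d(\arg g_1)\wedge\cdots\wedge d(\arg g_n)$ flips sign under the swap; the two signs cancel and equality is immediate. For the scaling $g_i=\lambda f_i$, $g_k=f_k$ ($k\neq i$) with $\lambda$ a local unit, one has $\det A=\lambda$ and both sides reduce to an integral of the same form $h\,dz_1\wedge\cdots\wedge dz_n/(f_1\cdots f_n)$, taken over the two contours $\{|f_k|=\epsilon\}$ and $\{|\lambda f_i|=\epsilon,\,|f_k|=\epsilon\,(k\neq i)\}$. Since $|\lambda|$ is bounded away from $0$ and $\infty$ near $\xi$, these contours are homotopic through real $n$-cycles on which the integrand is holomorphic, and Stokes' theorem delivers the equality.

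The substantive case is the row addition $g_i=f_i+\mu f_j$, $g_k=f_k$ for $k\neq i$ (with $i\neq j$), where $\det A=1$. Here I would introduce the one-parameter family
\[
\omega_t=\frac{h\,dz_1\wedge\cdots\wedge dz_n}{f_1\cdots f_{i-1}(f_i+t\mu f_j)f_{i+1}\cdots f_n},\qquad \Gamma_t=\{|f_k|=\epsilon\,(k\neq i),\ |f_i+t\mu f_j|=\epsilon\},
\]
for $t\in[0,1]$, and show that $\frac{1}{(2\pi i)^n}\int_{\Gamma_t}\omega_t$ is independent of $t$. Differentiating in $t$ and applying Stokes' theorem to absorb the variation of the moving boundary into an exact form on the complement of the singular set, the derivative vanishes; evaluating at $t=0$ and $t=1$ gives the two sides of \eqref{transformation}.

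The main obstacle is making the moving-contour computation in the row-addition case fully rigorous, since the boundary variation and the variation of the integrand must be shown to combine into an exact form on a neighborhood of $\Gamma_t$ in the complement of $\{f_k=0\}\cup\{f_i+t\mu f_j=0\}$. A cleaner alternative, which I would fall back on, is Grothendieck's local-duality description: both residues represent the same $\F$-linear trace functional on the Artinian local algebra $\mathcal{O}_{\C^n,\xi}/\langle f_1,\dots,f_n\rangle=\mathcal{O}_{\C^n,\xi}/\langle g_1,\dots,g_n\rangle$ (the ideals coincide by invertibility of $A$), and $\det A$ arises precisely as the Jacobian relating the two presentations of this trace; uniqueness of the trace on a Gorenstein local ring forces equality. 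This is the approach of Griffiths--Harris \cite[Ch.~5]{MR1288523}, and sidesteps the moving-contour estimate entirely.
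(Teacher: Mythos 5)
Your proof has a genuine gap, and it traces back to your opening premise: the claim that the $g_i$ cut out $\xi$ as an isolated zero ``precisely when $\det A(\xi)\neq 0$'' is false. Isolation of $\xi$ is equivalent to $\dim_\C\bigl(\mathcal{O}_{\C^n,\xi}/\la g_1,\dots,g_n\ra\bigr)<\infty$, and this can certainly hold under strict ideal inclusion $\la g_1,\dots,g_n\ra\subsetneq\la f_1,\dots,f_n\ra$, i.e.\ with $\det A(\xi)=0$. For example $f_1=z_1$, $f_2=z_2$, $g_1=z_1^2$, $g_2=z_2^2$ has $A=\operatorname{diag}(z_1,z_2)$, so $\det A(0,0)=0$, yet both pairs have the origin as an isolated common zero, and the transformation law is easily checked to hold
\begin{equation}
\Res{}_{\{z_1,z_2\},0}\!\Bigl(\tfrac{h\,dz_1\wedge dz_2}{z_1 z_2}\Bigr)=h(0)
=\Res{}_{\{z_1^2,z_2^2\},0}\!\Bigl(\tfrac{z_1 z_2\, h\,dz_1\wedge dz_2}{z_1^2 z_2^2}\Bigr).
\end{equation}
This degenerate situation is exactly the one the paper needs: in the worked example immediately after the theorem, the Gr\"obner-basis elimination produces univariate $g_i(z_i)$ whose vanishing order at $\xi$ is strictly higher than that of the $f_i$, so $\det A(\xi)=0$ there by construction. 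The degenerate residue is the entire point of invoking the transformation law.

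Both branches of your argument live and die on invertibility of $A$ over the local ring: the factorization of $A\in\mathrm{GL}_n(\mathcal{O}_{\C^n,\xi})$ into elementary matrices for the main line, and the identification of Artinian algebras $\mathcal{O}_\xi/\la f_1,\dots,f_n\ra=\mathcal{O}_\xi/\la g_1,\dots,g_n\ra$ for the Grothendieck-duality fallback. When $A(\xi)$ is invertible, the two ideals genuinely coincide and the statement is the easy boundary case (the two residue functionals live on the same algebra and differ only by the unit $\det A$). When $\det A(\xi)=0$ the algebras are not isomorphic --- $\dim_\C\mathcal{O}_\xi/\la g\ra$ is strictly larger --- and neither the elementary-matrix factorization nor the ``uniqueness of the trace'' argument gets off the ground. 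The proof in Griffiths and Harris is specifically engineered to handle the non-invertible case (via a continuity/perturbation argument that reduces to simple zeros and a direct computation), which is what your proposal is missing.
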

\begin{proof}
  See Griffiths and Harris \cite{MR1288523}.
\end{proof}
Note that this is a transformation of denominators, not the complex
variables. In particular, if $f_1,\ldots f_n$
are polynomials, we can calculate Gr\"obner basis for $I=\langle
f_1,\ldots f_n\rangle$ in \lex\  to get a set of polynomial $g_i$'s, such that each $g_i$ is
univariate. $g_i(z)=g_i(z_i)$ (Theorem \ref{Elimination}). Then the r.h.s of \eqref{transformation}
can be calculated as  univariate residues.
\begin{example} Consider the residue of
  \begin{equation}
    \omega=\frac{dx \wedge dy }{f_1 f_2},\,
  \end{equation}
at $(0,0)$, where $f_1=a y^3 + x^2 + y^2$, $f_2=x^3 + x y - y^2$. This
is a degenerate residue. By
\GB\ computations,
\begin{equation}
  A=\left(
\begin{array}{cc}
 -\frac{2 a x^2+a x-a y x-a y+1}{a^2} & \frac{a x^4-a y x^2+a y^2 x-x+a y^2-y}{a} \\
 \frac{a^2 y^5-2 a y^3-a x^2 y^2+x y+y+x^2}{a^3} & \frac{a x y^2-y-x}{a^3} \\
\end{array}
\right),
\end{equation}
and,
\begin{equation}
  \{g_1,g_2\}=\big\{\frac{x^2 (a^2 x^5-3 a x^2-a x-1)}{a^2},\frac{y^3
    (a^3 y^5-2 a^2 y^3+a y+1)}{a^3}\big\}. \,
\end{equation}
Note that $g_1$, $g_2$ are univariate polynomials. Hence by
\eqref{transformation},
\begin{equation}
  \Res{}_{\{f_1,f_2\},(0,0)}(\omega)=a(1-a)\,.
\end{equation}

\end{example}

\begin{example}
Consider the $4D$ triple box's maximal cut \eqref{xbox_residue}, near
$z_1=-1$ and $z_2=0$,
  \begin{equation}
    \omega=\frac{dz_1\wedge dz_2 P(z_1,z_2)}{(1+z_1)(1+z_2)(1+z_1-\frac{t}{s}
  z_2)z_2}\;.
\end{equation}
Define $V_1=\{1+z_1=0\}$, $V_2=\{z_2=0\}$ and $V_3=\{1+z_1-\chi z_2\}$, which
  are irreducible hypersurfaces. So locally the polar divisor of
  $\omega$ is,
  \begin{equation}
    D=V_1+V_2+V_3.
  \end{equation}
To define multivariate residues, we may consider two divisors
$D_1=V_1+V_2$ and $D_2=V_3$. This corresponds to the denominator
definitions, $f_1=(1+z_1)z_2$ and $f_2=(1+z_1-t/s z_2)$. Using
\eqref{transformation} to change denominators, we find that,
for example if $P=1$,
\begin{equation}
  \Res{}_{\{f_1,f_2\},(0,0)}(\omega)=s/t.
\end{equation}
Note that there are different ways to define the divisors for
$\omega$, for instance, $D_1'=V_1+V_3$ and $D_2'=V_2$,
i.e. $f_1'=(1+z_1)(1+z_1-\chi z_2)$ and $f_2'=z_2$. Multivariate
residue dependence on the definition of divisors, for example if $P=1$,
 \begin{equation}
  \Res{}_{\{f_1',f_2'\},(0,0)}(\omega)=0\not=\Res{}_{\{f_1,f_2\},(0,0)}(\omega).
\end{equation}
Hence we need to consider all possible divisor definitions.

We calculated all $64$ residues from the maximal
unitarity cut of a three-loop triple box diagram
\cite{Sogaard:2013fpa}, by Cauchy's theorem and transformation
law. Then the contours weights are determined by spurious integrals
and IBPs. We used contour weights to derive the triple box master
integrals part of $4$-gluon $3$-loop pure-Yang-Mills amplitude, which agrees
with that from integrand reduction method \cite{Badger:2012dv}.

For integral with doubled propagators, we can also use transformation
law to compute residues for contour integrals \cite{Sogaard:2014ila,Sogaard:2014oka}.
\end{example}

\begin{remark}\ 

  \begin{enumerate}
    \item Usually, \GB\ computation in \lex\ is heavy. It is
      better to first compute \GB\ in \grevlex\ order, $G(I)=\{F_1,\ldots
      F_k\}$ and find the relations $F_i=b_{ij} f_j$. Then compute \GB
      in a block order to get  univariate polynomials
      $g_i(z_i)$. Divide $g_i(z_i)$ towards $G(I)$ and use $b_{ij}$'s,
      we get the matrix $A$.
      \item This method also works if $f_1,\ldots f_n$ are holomorphic functions
        but not polynomials. Replace $f_i$'s by their Taylor series,
        we can apply Gr\"obner basis method.  
  \end{enumerate}
\end{remark}

\subsection{Bezoutian} 
Multivariate residue computation via \GB, may be quite heavy since the transformation matrix
$A$ may contain high-degree polynomials. Bezoutian method provide a
different approach. 

\begin{definition}
  With the convention of Definition \ref{Grothendieck_residue}, for $\xi\in \C^n$,
  define the local symmetric form, for locally holomorphic functions
  $N_1$ and $N_2$,
 \begin{gather}
    \langle N_1, N_2\rangle_\xi\equiv\Res{}_{\{f_1,
  \dots, f_n\},\xi}\bigg(\frac{N_1 N_2 \dnz}{f_1\ldots f_n}\bigg)\,,
  \end{gather}
If $f_1,\ldots f_n$, $N_1, N_2$ are globally holomorphic in $\C^n$ and $\mathcal
Z(\{f_1,\ldots f_n\})$ is a finite set, then the
global symmetric form is 
\begin{gather}
  \langle N_1, N_2\rangle\equiv \sum_{\xi\in \mathcal
Z(\{f_1,\ldots f_n\})} \Res{}_{\{f_1,
  \dots, f_n\},\xi}\bigg(\frac{N_1 N_2\dnz}{f_1\ldots f_n}\bigg)\,.
  \end{gather}
\end{definition}

For the rest of discussion, we assume $f_1,\ldots f_n$, $N_1, N_2$ are
polynomials. In the previous Chapter, we used the ring
$R=\mathbb F[x_1,\ldots,x_n]$ and ideals to study algebraic
varieties. Here to discuss local properties of a variety, we need the
concept of local ring.
\begin{definition}
\label{local_ring}
  Consider $R=\mathbb C[x_1,\ldots,x_n]$, for a point $\xi \in \C^n$,
  $R_\xi$ is the set of rational functions,
  \begin{equation}
    R_\xi\equiv\bigg\{\frac{f(z)}{g(z)}\big|g(\xi)\not=0,\quad f,g\in R\bigg\}\,.
  \end{equation}
For an ideal $I$ in $R$, we denote $I_\xi$ as the ideal in $R_\xi$
generated by $I$. If $\xi\in \mathcal Z(I)$, and $\dim_{\C}R_\xi/I_\xi<\infty$, we define the multiplicity of $I$ at $\xi$ as $\dim_{\C}R_\xi/I_\xi$. 
\end{definition}

Let $I$ be $\la f_1,\ldots f_n\ra$. From Proposition \ref{vanishing_residue}, it is clear that $\la ,\ra$
is defined in $R/I$ and $\la ,\ra_\xi$ is defined in $R_\xi/I_\xi$,
because any polynomial in the ideal $I$ or the localized ideal $I_\xi$
must provide zero residue.

\begin{thm}[Local and Global Dualities]
  Let $I=\la f_1,\ldots, f_n\ra$ be an ideal in $\C[x_1,\ldots x_n]$,
  $\mathcal Z(I)$ is a finite set. Then $\la ,\ra$ is a non-degenerate
  in $R/I$ and $\la ,\ra_\xi$ is non-degenerate
  in $R_\xi/I_\xi$.
\end{thm}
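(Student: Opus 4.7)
The plan is to reduce the global statement to the local one via a decomposition of $R/I$ indexed by the points of $\mathcal Z(I)$, and then to establish the local statement by using the transformation law to replace the $f_i$ with pure powers $(x_i-\xi_i)^{m_i}$, after which the residue becomes an explicit coefficient extraction.

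First, since $\mathcal Z(I)=\{\xi_1,\ldots,\xi_r\}$ is finite, Hilbert's Nullstellensatz gives $\sqrt{I}=\bigcap_k \mathfrak m_{\xi_k}$, and a standard Chinese Remainder plus localization argument yields the canonical isomorphism
$$R/I \;\cong\; \bigoplus_{k=1}^{r} R_{\xi_k}/I_{\xi_k}.$$
The global pairing respects this splitting: the residue $\Res_{\{f_i\},\xi_k}$ only sees the germ of $N_1 N_2\,\dnz/(f_1\cdots f_n)$ at $\xi_k$, so it annihilates any representative chosen to be supported away from $\xi_k$. Consequently $\langle\,,\,\rangle$ is block-diagonal with respect to the decomposition, and its global non-degeneracy on $R/I$ is equivalent to the simultaneous local non-degeneracy of $\langle\,,\,\rangle_{\xi}$ on each $R_\xi/I_\xi$. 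It therefore suffices to prove the local statement at a single point $\xi$.

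For the local statement, because $\xi$ is an isolated zero of $f_1,\ldots,f_n$, the local Nullstellensatz gives integers $m_i\ge 1$ with $(x_i-\xi_i)^{m_i}\in I_\xi$. Writing $(x_i-\xi_i)^{m_i}=\sum_j a_{ij}(x)\,f_j$ in $R_\xi$, setting $g_i=(x_i-\xi_i)^{m_i}$ and $A=(a_{ij})$, the transformation law \eqref{transformation} converts
$$\langle N_1, N_2\rangle_\xi \;=\; \Res_{\{g_i\},\xi}\!\bigg(\frac{N_1 N_2\,\det A\,\dnz}{g_1\cdots g_n}\bigg),$$
and the right-hand side factorizes into iterated univariate residues via \eqref{factorable_residue}, namely the extraction of the coefficient of $\prod_i (x_i-\xi_i)^{m_i-1}$ in the Taylor expansion of $N_1 N_2\det A$ at $\xi$. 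Non-degeneracy of $\langle\,,\,\rangle_\xi$ in $R_\xi/I_\xi$ then reduces to two sub-claims: (i) the coefficient-extraction pairing descends non-degenerately to the relevant monomial quotient, and (ii) multiplication by $\det A$ is invertible in $R_\xi/I_\xi$.

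The main obstacle is sub-claim (ii). Because $R_\xi$ is regular local of dimension $n$ and $\mathcal Z(I_\xi)=\{\xi\}$, the sequence $f_1,\ldots,f_n$ is automatically regular, so $R_\xi/I_\xi$ is a complete intersection and hence a Gorenstein Artinian local $\C$-algebra. The essential content of local duality is that, for such a ring, any linear functional $\tau:R_\xi/I_\xi\to\C$ that does not vanish on the one-dimensional socle (the annihilator of the maximal ideal) is a generator of the rank-one $R_\xi/I_\xi$-module $\operatorname{Hom}_\C(R_\xi/I_\xi,\C)$. The task is therefore to verify that the residue functional $\tau(N):=\Res_{\{f_i\},\xi}(N\dnz/\prod f_i)$ is nonzero on the socle; equivalently, to check that $\det A$, which represents the socle generator up to a unit via the transformation law, pairs non-trivially with $1$ under coefficient extraction. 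Once this is done, $\langle N_1,\cdot\rangle_\xi=\tau(N_1\,\cdot)$ vanishes identically only when $N_1\in I_\xi$, which is precisely local non-degeneracy. This Gorenstein/socle computation is the technical heart of the proof; the rest is bookkeeping on top of the decomposition and the transformation law.
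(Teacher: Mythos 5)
The paper does not actually prove this theorem; the proof block is only a pointer to Griffiths--Harris and to Dickenstein et al., so there is no in-house argument to compare against. Your framework is the correct one and is the one used in those references: decompose $R/I\cong\bigoplus_\xi R_\xi/I_\xi$ via the Chinese Remainder Theorem, note the global pairing is block-diagonal for this decomposition, and at each $\xi$ pass from $\{f_i\}$ to monomials $g_i=(x_i-\xi_i)^{m_i}$ via the transformation law, which reduces everything to an explicit coefficient-extraction pairing.

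However, there is a real gap at the crucial step, and the way you phrase it would break a written-out proof. Sub-claim (ii), ``multiplication by $\det A$ is invertible in $R_\xi/I_\xi$,'' is false: $\det A$ typically lies in the maximal ideal of $R_\xi/I_\xi$ and is a zero-divisor there, so multiplication by it cannot be invertible. What is actually needed is Wiebe's lemma: multiplication by $\det A$ is \emph{injective} as a map $R_\xi/(f_1,\ldots,f_n)\to R_\xi/(g_1,\ldots,g_n)$, equivalently the colon-ideal identity $(g_1,\ldots,g_n):\det A=(f_1,\ldots,f_n)$ holds in $R_\xi$. Granted this, together with the easy direct non-degeneracy of the coefficient pairing on $R_\xi/(g_i)$, one concludes that $\langle h,\cdot\rangle_\xi\equiv 0$ forces $h\,\det A\in(g_i)$ and hence $h\in(f_i)$, which is local non-degeneracy. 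Your fallback claim that ``$\det A$ represents the socle generator up to a unit'' is also not correct in general: for $n=1$ and $f_1=u(x)\,(x-\xi)^m$ with $u$ a unit, one has $g_1=(x-\xi)^m$ and $\det A=1/u$, a unit, while the socle of $R_\xi/(f_1)$ is generated by $(x-\xi)^{m-1}$; so ``check that $\det A$ pairs non-trivially with $1$'' is not the right condition. The technical heart you defer is precisely the Wiebe colon-ideal identity, and your proposal leaves it unproven.
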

\begin{proof}
  See Griffiths and Harris \cite{MR1288523}, Dickenstein et al. \cite{Dickenstein:2010:SPE:1965470}.
\end{proof}
Non-degeneracy of $\la, \ra$ implies that given a linear basis $\{e_1,
\ldots e_k\}$ of $R/I$, there is a dual basis $\{\Delta_1,
\ldots \Delta_n\}$, such that,
\begin{equation}
  \label{eq:5}
  \langle e_i, \Delta_j\rangle =\delta_{ij}\,.
\end{equation}
If these two bases are explicitly found, then we can compute any $\la
N_1, N_2\ra$. In particular, the sum of residues (in affine space)
of $\omega=N\dnz /(f_1\ldots f_n)$ is obtained algebraically,
\begin{equation}
  \sum_{\xi\in \mathcal Z(I)}\Res{}_{\{f_1,\ldots f_n\},\xi}(\omega)=\la N, 1\ra=\la \sum_{i=1}^k c_i
  e_i, \sum_{j=1}^k\mu_j \Delta_j\ra=\sum_{i=1}^k c_i \mu_i\,,
\end{equation}
where in the second equality, we expand $N=\sum_i c_i e_i $ and
$1=\sum_i \mu_i \Delta_i$. $c_i$'s and $\Delta_i$'s are complex
numbers.

Explicitly, $\{e_i\}$'s are found by using \GB\ of $I$ in \grevlex,
$G(I)$. They are monomials which are not divisible by any leading term
in $G(I)$. The dual basis can be found via Bezoutian matrix \cite{Dickenstein:2010:SPE:1965470}. First, calculate
the Bezoutian matrix $B=(b_{ij})$,
\begin{gather}
  \label{eq:4}
  b_{ij}\equiv \frac{f_i(y_1,\ldots y_{j-1},z_j,\ldots,z_n)}{z_j-y_j}
-\frac{f_i(y_1,\ldots y_{j},z_{j+1},\ldots,z_n)}{z_j-y_j}\,,
\end{gather}
where $y_i$'s are auxiliary variables. Let $\tilde I$ be the ideal in
$\C[y_1,\ldots y_n]$ which is $I$ after the replacement $z_1 \to
y_1,\ldots, z_n \to y_n$.

Then we divide the determinant $\det B$ over the double copy of the
Gr\"obner bases, $G(I)\otimes G(\tilde I)$. The remainder can be expand as,
\begin{equation}
  \label{eq:6}
  \sum_{i=1}^k \Delta_i(y) e_i(z),
\end{equation}
here $\Delta_i(y)$'s, after the backwards replacement $y_1 \to
z_1,\ldots, y_n \to z_n$ become the elements of the dual basis
\cite{Dickenstein:2010:SPE:1965470}.

\begin{example}
  Let $f_1=z_1+9 z_2+14 z_3+6$, $f_2=11 z_2 z_1+12 z_3 z_1+3 z_1+4
  z_2+16 z_2 z_3+14 z_3$ and $f_3=2 z_1 z_2+15 z_1 z_3 z_2+5 z_3 z_2+8
  z_1 z_3$ be polynomials in $\C[z_1,z_2,z_3]$. Define
  \begin{equation}
    \omega=\frac{z_1^3 dz_1\wedge dz_2 \wedge dz_3}{f_1 f_2 f_3}\,.
  \end{equation}
The Bezoutian determinant in $z_1,z_2,z_3$ and auxiliary
variables $y_1, y_2, y_3$ is,
\begin{gather}
  \det B=-180 y_1^2 z_3+2520 y_1 z_3^2-1485 y_2 y_1 z_2-576 y_1
  z_2-1620 y_2 y_1 z_3\nn\\
+1620 y_1 z_2 z_3-408 y_1 z_3-207 y_2 z_2+612 y_2 z_3+2160 y_2 z_2 z_3+165 y_2 y_1^2+64 y_1^2-322 y_2 y_1\nn\\-128 y_1-115 y_2-3360 z_2 z_3^2-952 z_3^2+140 z_2+1372 z_2 z_3+700 z_3\,.
\end{gather}
Let $I=\la f_1,f_2,f_3\ra$. Divide $\det B$ towards $G(I)\otimes G(\tilde I)$, we get the basis
$\{e_i\}$,
\begin{equation}
  e_1=z_3^3,\quad e_2=z_2 z_3, \quad e_3=z_3^2, \quad e_4=z_2, \quad e_5=z_3,\quad e_6=1\,,
\end{equation}
and the dual basis $\{\Delta_i\}$,
\begin{gather}
  \Delta_1=\frac{141120}{23},\quad \Delta_2= 2 (-12420 z_2-22680
  z_3-\frac{203652}{23})\,,\nn\\\Delta_3=-22680 z_2-35280
  z_3-\frac{335832}{23}\,,\nn\\ \Delta_4= 2 (-22680 z_3^2-12420 z_2 z_3-5436 z_3+1872
  z_2+1278)\,,\nn\\ \Delta_5=-35280 z_3^2-22680 z_2 z_3-24528 z_3-5436
z_2-\frac{79884}{23}\,,
\nn\\ \Delta_6=\frac{141120 z_3^3}{23}-\frac{335832 z_3^2}{23}-\frac{203652 z_2 z_3}{23}-\frac{79884 z_3}{23}+1278 z_2+\frac{21282}{23}\,.
\end{gather}
From the dual basis, we find the linear relation,
\begin{gather}
  1=\frac{23}{141120} \Delta_1\,.
\end{gather}
By polynomial division, we find 
\begin{gather}
  z_1^3=\frac{1568}{11} e_1 + c_2 e_2 +\ldots c_6 e_6 \mod I\,.
\end{gather}
Hence the sum of residues,
\begin{align}
  \sum_{\xi\in \mathcal Z(I)}\Res{}_{\{f_1,f_2,f_3\},\xi}(\omega)&=\la z_1^3,
  1\ra\nn\\
&=\frac{23}{141120} \la\frac{1568}{11} e_1 + c_2 e_2 +\ldots + c_6
  e_6, \Delta_1\ra
=\frac{23}{990}\,.
\end{align}
Note that all points in $\mathcal Z(I)$ and all local residues are
irrational, but the sum is rational.

This example is from CHY formalism of scattering equation for
$6$-point tree amplitudes. In CHY formalism, scattering amplitude is
expressed as the sum of residues of CHY integrand. Here we calculate
the amplitude without solving scattering equations
\cite{Sogaard:2015dba}. See alternative algebraic approaches in \cite{Baadsgaard:2015voa,Baadsgaard:2015ifa,Huang:2015yka}.
\end{example}

\begin{remark}\
  
  \begin{enumerate}
  \item Note that by this method, we get the sum of residues (in affine space)
purely by \GB\, and matrix determinant 
computations. It is not needed to consider algebraic extension or 
explicit solutions of $f_1=\ldots = f_n=0$. 

\item The Bezoutian matrix is
just a $n\times n$ matrix, i.e., the size of matrix is independent of
the dimension $\dim_\C R/I$. Hence it is an efficient method for
computing the sum of residues. 
\item If $f_i$'s coefficients are parameters, this method proved
that the sum of residues is a rational function of these parameters. 
\item In some cases, the sum of residues can also be evaluated by
  global residue theorem (GRT). However, in general, there are many poles at
  infinity so the GRT computation can be messy. 
  \end{enumerate}
\end{remark}

We can also use Bezoutian matrix to find local residues. One approach
is {\it partition of unity } for an affine variety: For each
$\xi\in\mathcal Z(I)$, we can find a polynomial $s_\xi$ \cite{opac-b1094391}, such that,
\begin{gather}
  \label{eq:7}
  \sum_{ \xi\in \mathcal Z(I)} s_\xi=1\mod I, \quad s_\xi^2=s_\xi \mod I\nn\,,\\
 s_{\xi_i} s_{\xi_j}=0 \mod I,\quad \text{if } i\not =j\,.
\end{gather}
 Then the individual residue is extracted from the sum of residues,
 \begin{equation}
   \label{eq:8}
    \Res{}_{\{f_1,
  \dots, f_n\},\xi}(\omega) =\sum_{ u\in \mathcal Z(I)} \Res{}_{\{f_1,
  \dots, f_n\},u}(s_{\xi} \omega)\,,
 \end{equation}
where the r.h.s is again obtained by Bezoutian matrix computation \cite{Dickenstein:2010:SPE:1965470}.

For the implement of the transformation law and Bezoutian matrix
computation, we refer to the public package {\sc MultivariateResidues} \cite{Larsen:2017aqb}.

\section{Exercises}
  
\begin{ex}
  Consider the maximal unitarity cut of $D=2$ massless sunset diagram with
  $k_1^2=M^2$ and inverse propagators,
  \begin{equation}
    D_1=l_1^2,\quad D_2=l_2^2,\quad D_3=(l_1+l_2-k_1)^2\,.
  \end{equation}
  \begin{enumerate}
  \item Define an auxiliary vector $\omega$, $k_1\cdot \omega=0$,
    $\omega^2=-M^2$. Let $e_1=(k_1+\omega)/2$ and
    $e_2=k_1-\omega$. Parameterize loop momenta as,
    \begin{equation}
      l_1 =a_1 e_1+ a_2 e_2,\quad l_2=b_1 e_1+b_2 e_2\,.
    \end{equation}
Rewrite $D_i$'s as polynomials in $a_1,a_2,b_1,b_2$. Define $I=\la
D_1, D_2, D_3\ra$, use \Singular\ or \Macaulay\ to find independent
solutions via primary decomposition,
\begin{equation}
  I=I_1 \cap \ldots \cap I_m\,.
\end{equation}
\item Formally define,
  \begin{equation}
    I[s_1,s_2,s_3;N]=\int \frac{d^2 l_1}{(2\pi)^2}\frac{d^2
      l_2}{(2\pi)^2} \frac{N}{D_1^{s_1} D_2^{s_2} D_3^{s_3}}\,.
  \end{equation}
Consider the maximal cut of the scalar integral $I[1,1,1;1]$ on each
of the cut solutions
$\mathcal Z(I_i)$. From resulting contour integrals,
determine all the poles on maximal cut. How many of them are
redundant?
\item Denote independent poles as $\{P_1,\ldots P_k\}$ and denote
  $I[s_1,s_2,s_3;N]_{P_i}$ as the residue of its corresponding contour 
  integral at $P_i$. Compute $I[1,1,1;1]_{P_i}$ for all $P_i$.
\item Denote 
  \begin{equation}
    I[s_1,s_2,s_3;N]|_\text{cut}=\sum_i^k w_i  I[s_1,s_2,s_3;N]|_{P_i}\,,
  \end{equation}
where $w_i$'s are weights of contours. Require that
$I[1,1,1;N]|_\text{cut}=0$ for spurious terms $N$,
\begin{equation}
  l_1\cdot \omega,\quad  l_2\cdot \omega,\quad (l_1\cdot
  \omega)(l_2\cdot k_1),\quad l_2\cdot k_1 -l_1\cdot k_1,\quad (l_2\cdot k_1)^2 -(l_1\cdot k_1)^2\,.
\end{equation}
What are the linear constraints of $w_j$'s?
\item Determine the ratio, $
  I[2,1,1;1]|_\text{cut}/I[1,1,1;1]|_\text{cut}$. Derive the on-shell
  integral relation (by determining $c$)
  \begin{equation}
     I[2,1,1;1]= c I[1,1,1;1] +(\text{simpler integrals})\,.
  \end{equation}
Similarly, determine $c'$ in
\begin{equation}
  \label{eq:9}
     I[3,1,1;1]= c' I[1,1,1;1] +(\text{simpler integrals})\,.
\end{equation}
  \end{enumerate}
\end{ex}

\begin{ex}
  Consider the meromorphic form, 
  \begin{equation}
      \omega=\frac{z_1 dz_1 \wedge dz_2}{(z_1+z_2)(z_1-z_2+z_1 z_2)}\,.
\end{equation}
Extend $\omega$ to a meromorphic form in $\CP^2$. Find all residues of
$\omega$ in $\CP^2$ and verify global residue theorem
explicitly. 
\end{ex}

\begin{ex}
  Consider the meromorphic form, 
  \begin{equation}
     \omega=\frac{N(z_1,z_2)dz_1\wedge dz_2}{(z_1+a z_2)(z_1^3+z_2^2+b
       z_1 z_2)}\,.
  \end{equation}

  \begin{enumerate}
  \item Use transformation law and \GB\ computation in {\sc Maple} or
    \Macaulay2, to compute the residue at $(0,0)$ with $N(z_1,z_2)=1$
    and $N(z_1,z_2)=z_1$.
    \item Without computation, argue that if $N(z_1,z_2)=z_1^2$ then the
      residue at $(0,0)$ is zero by Proposition \ref{vanishing_residue}.
  \end{enumerate}
\end{ex}

\begin{ex}
  Consider the meromorphic form, 
  \begin{equation}
     \omega=\frac{N(z_1,z_2)dz_1\wedge dz_2}{(z_1+ z_2)(z_1-z_2)(z_1^2+z_2^2+z_1)}\,.
  \end{equation}
Define $f_1=(z_1+ z_2)$, $f_2=(z_1-z_2)$ and
$f_3=(z_1^2+z_2^2+z_1)$. Use the transformation law to compute,
\begin{equation}
  \Res{}_{\{f_1, f_2 f_3\},(0,0)}(\omega),\quad \Res{}_{\{f_1 f_2, f_3\},(0,0)}(\omega)\,.
\end{equation}
\end{ex}

\begin{ex}[Sum of residues]
   Consider $f_1=z_1^2+ z_1 z_2+a
       z_2$, $f_2=z_1^3+z_2^2+b z_1 z_2$ and $I=\la f_1,f_2\ra$.
       \begin{enumerate}
       \item Use \GB\ in \grevlex, determine the basis $\{e_i\}$ for
         $\C[z_1,z_2]/I$. 
         \item Use Bezoutain matrix, find the dual basis
           $\{\Delta_i\}$.
           \item Compute the sum of residues in $\C^n$ for
             \begin{equation}
               \omega=\frac{z_1 z_2^2 dz_1\wedge dz_2}{f_1 f_2}\,.
             \end{equation}
             \item Compute $\la e_i, e_j\ra$ for all elements in
               $\{e_i\}$. Define $s_{ij}=\la e_i, e_j\ra$ and check
               that $S=(s_{ij})$ is a symmetric non-degenerate
               matrix. 
       \end{enumerate}
  
\end{ex}

\chapter{Integration-by-parts Reduction and Syzygies} 
\label{cha:integr-parts-reduct}

Integration-by-parts (IBP) identities \cite{Tkachov:1981wb,Chetyrkin:1981qh} arise from
the vanishing integration of total derivatives. Combined with symmetry
relations, IBPs reduce integrals to master integrals (MIs), i.e., the
linearly independent integrals. 

An $L$-loop $D$-dimensional \footnote{In general, we need to consider
  IBP in $D$-dimension. Otherwise for a specific integer-valued $D$,
  IBP relations may contain non-vanishing boundary terms.} IBP in general has the form,
\begin{equation}
\int \frac{d^D l_1}{i\pi^{D/2}} \ldots \int \frac{d^D l_L}{i\pi^{D/2}}
\sum_{j=1}^L \frac{\partial}{\partial l_j^\mu}
\bigg(\frac{v_j^\mu \hspace{0.5mm} }{D_1^{a_1} \cdots D_k^{a_k}}\bigg)
\hspace{1mm}=\hspace{1mm} 0 \,, \label{eq:IBP_schematic}
\end{equation}
where  vectors components $v_j^\mu$'s are polynomials in the internal and
external momenta, the $D_k$'s denote inverse propagators, and
$a_i$'s are integers.

For many multi-loop scattering amplitudes, IBP reduction is
a necessary step. After using unitarity and integrand
reduction to obtain the integrand basis, we may carry out IBP
reduction to get the minimal basis of integrals. For
differential equations of Feynman integrals, after differentiating of
master integrals, we get a large number of integrals in general. Then IBP
reduction is required to convert them to a linear combination of MIs, so
that the differential equation system is closed \cite{Kotikov:1990kg,Kotikov:1991pm,Henn:2014qga}. 

Multi-loop IBP reduction in general is very difficult. The difficulty comes from the large
number of choices of $v_i^\mu$ in \eqref{eq:IBP_schematic} : there are many IBP relations and
integrals involved. After obtaining IBP relations, we need to
apply linear reduction to find the independent set of IBPs. This process
usually takes a lot of computing time and RAM. The current standard
IBP generating algorithm is {\rm Laporta} \cite{Laporta:2001dd,Laporta:2000dc}. There are several publicly available implementations
of automated IBP reduction: AIR~\cite{Anastasiou:2004vj},
FIRE~\cite{Smirnov:2008iw,Smirnov:2014hma},
Reduze~\cite{Studerus:2009ye,vonManteuffel:2012np},
LiteRed~\cite{Lee:2012cn}, Kira~\cite{Maierhoefer:2017hyi,
  Maierhofer:2018gpa} along with private implementations. IBP
computation can be sped up by using finite-field methods \cite{vonManteuffel:2014ixa,vonManteuffel:2016xki}.

One sophisticated way to improve the IBP generating efficiency is to pick up suitable
$v_i^\mu$'s such that (\ref{eq:IBP_schematic}) contains no doubled
propagator \cite{Gluza:2010ws}. Since from Feynman rules, usually we
only have integrals
without doubled
propagator. Hence if we can work with integrals without doubled
propagator during the whole IBP reduction procedure, the computation will be
significantly simplified. Specifically, when $a_i= 1$, $\forall i=1,\ldots
,k$ in (\ref{eq:IBP_schematic}), if 
\begin{equation}
\label{no_double_propagator}
  \sum_j \frac{\partial D_i}{\partial l_j^\mu} v_i^\mu =\beta_i
  D_i,\quad i=1\ldots k\,,
\end{equation}
where $\beta_i$ is a polynomial in loop momenta, then all double-propagator
integrals are removed from the IBP relation \eqref{eq:IBP_schematic}.

Note \eqref{no_double_propagator} appears to be a linear equation system for
$v_i^\mu$'s and $\beta_i$. However, 
$v_i^\mu$'s must be polynomials in loop momenta, otherwise the
doubled propagators reappear. If we solve \eqref{no_double_propagator}
by standard linear algebra method, then the solutions are in general rational
functions which do not help the IBP reduction. To
distinguish with linear equations, \eqref{no_double_propagator} is
called a
{\it syzygy} equation. It is not surprising that the form of
\eqref{no_double_propagator} is closely related to S-polynomials and
polynomial division (Definition \ref{S-polynomial} and Algorithm
\ref{multivariate_polynomial_division}), so syzygy can be solved by
\GB\ .

We believe that the syzygy approach \eqref{no_double_propagator} 
maximaizes its power, when combined
with Baikov representation \cite{Baikov:1996rk} and unitarity cuts. Baikov
representation linearizes inverse propagators $D_i$'s so the syzygy
equation becomes simpler. Furthermore, It is more efficient to compute IBPs with
unitarity cuts, in a divide-and-conquer fashion, than to get complete IBPs at once.

In this chapter, we first introduce Baikov representation and then
review syzygy and the geometric meaning of
\eqref{no_double_propagator}. We will see that it defines  {\it polynomial tangent
fields of a hypersurface}, or  formally {\it derivations} in
algebraic geometry.  Finally we sketch some recent IBP 
algorithm development based on computational algebraic geometry \cite{Ita:2015tya, Larsen:2015ped, Boehm:2017wjc,
  Boehm:2018fpv, Abreu:2018rcw}. On the other hand, see
\cite{Bitoun:2017nre} for the recent IBP reduction method based on
D-module theory (differential algebra).

\section{Baikov representation}
The basic idea of Baikov representation \cite{Baikov:1996rk} is to define inverse
propagators and irreducible scalar products (ISP), except
$\mu_{ij}$'s, as free variables. In this section, we give an intuitive
approach to derive the Baikov representation. Once you get familiar
with it, you can directly use the Baikov representation formula given in
\cite{Lee:2012cn}. 

For a simple example, consider $D=4-2\epsilon$ one-loop box diagram \eqref{box_D}. 
\begin{equation}
  I_\text{box}^D[N]=\int \frac{d^D l}{i \pi^{D/2}} \frac{N}{D_1
    D_2 D_3 D_4}\,.
\end{equation}
Use van Neerven-Vermaseren variables, there are $5$ variables
$x_1,x_2,x_3,x_4$ and $\mu_{11}$. Hence it is a $5$-variable
system. The solid angel of $(-2\epsilon)$ directions in this integral is
irrelevant, hence,
\begin{align}
  I_\text{box}^D[N]&=\frac{1}{i \pi^{D/2}} \int d^{-2\epsilon}l^\perp
                       \int d^4 l^{[4]}\frac{N}{D_1
    D_2 D_3 D_4}\nn\\
&= \frac{1}{i \pi^{D/2}} \frac{ \pi ^{\frac{D-4}{2}}}{\Gamma
    (\frac{D-4}{2})} \int_0^\infty \mu_{11}^{\frac{D-6}{2}} d\mu_{11}  \int d^4 l^{[4]}\frac{N}{D_1
    D_2 D_3 D_4}\nn\\
&= \frac{1}{i\pi^2 \Gamma
    (\frac{D-4}{2})} \frac{2}{s(t+s)}\int_0^\infty \mu_{11}^{\frac{D-6}{2}} d\mu_{11}
    \int dx_1 dx_2 dx_3 dx_4\frac{N}{D_1
    D_2 D_3 D_4} \,,
\end{align}
where the factor $2/(s(t+s))$ is the Jacobian of changing variables
$l^{[4]} \to x_1,\ldots , x_4$.  Note that in this form, dimension shift
identities \eqref{box_dimension_shift} are manifest.

Since in this case ISPs are $x_4$ and
$\mu_{11}$, we define Baikov variables $z_1,\ldots z_5$ as,
\begin{equation}
\label{box_baikov_Direct}
  z_1 \equiv D_1,\quad  z_2 \equiv D_2,\quad  z_3 \equiv D_3,\quad  z_4 \equiv D_4,\quad
  z_5 \equiv l_1 \cdot \omega\,,
\end{equation}
Note that the Jacobian
\begin{equation}
  \frac{\partial(z_1,z_2,z_3,z_4,z_5)}{\partial(x_1,x_2,x_3,x_4,\mu_{11})}=-8
\end{equation}
is a constant. This is not surprising since by \eqref{box_D_RSP},
\begin{eqnarray}
x_1 = \half (z_1-z_2),\quad x_2 =\half (z_2-z_3)+\frac{s}{2},\quad 
 x_3 = \half (z_4-z_1),
\end{eqnarray}
$z_5=x_4$ and $D_1$ is linear in $\mu_{11}$. The inverse map,
$(z_1,z_2,z_3,z_4,z_5)\mapsto (x_1,x_2,x_3,x_4,\mu_{11})$ uniquely exists
and has polynomial form,
\begin{align}
  \mu_{11}&=\frac{1}{4 s t (s+t)}\big( s^2 t^2-2 s^2 t z_2-2 s^2 t
  z_4+s^2 z_2^2+s^2 z_4^2-4 s^2 z_5^2-2 s^2 z_2 z_4 \nn\\-2 s t^2 z_1
&-2 s t^2 z_3+2 s t z_1 z_2-4 s t z_1 z_3+2 s t z_2 z_3+2 s t z_1
  z_4-4 s t z_2 z_4+2 s t z_3 z_4+t^2 z_1^2 \nn\\ & +t^2 z_3^2-2 t^2
                                                    z_1 z_3\big)
                                                    \equiv
                                                    F(z_1,z_2,z_3,z_4,z_5) \,,
\end{align}
Then, we get the Baikov representation,
\begin{equation}
  I_\text{box}[N]=\frac{1}{i\pi^2 \Gamma
    (\frac{D-4}{2})} \frac{1}{4s(t+s)}
    \int_\Omega dz_1 dz_2 dz_3 dz_4 dz_5 F(z_1,z_2,z_3,z_4,z_5)^{\frac{D-6}{2}} \frac{N}{z_1
    z_2 z_3 z_4}.
\end{equation}
where $F(z_1,z_2,z_3,z_4,z_5)$ is called Baikov polynomial. $N$ is a
polynomial of $z_1,\ldots z_5$. The integral region $\Omega$ is
defined by $F(z_1,z_2,z_3,z_4,z_5) \geq 0$. In general, the integral
region of Baikov representation is complicated. However, for the
purpose of deriving IBPs, the region is not important. 

In practice, after OPP integrand reduction
\cite{Ossola:2006us,Ossola:2007ax}, $N$ is a polynomial of $\mu_{11}$
and at most linear in $(l\cdot \omega)$
\eqref{box_integrand_basis_D}. The terms with $\mu_{11}$ lead to scalar
integrals in higher dimension \eqref{box_dimension_shift}, while terms linearly in  $(l\cdot
\omega)$ are spurious. Hence we assume that $N$ is independent of
$(l\cdot \omega)$ and $\mu_{11}$. That implies that we can {\it integrate out} $\omega$
direction.

Define $V=\sp\{k_1,k_2,k_4\}$ and $V^\sharp$ is the direct sum of
$\sp\{\omega\}$ and $(-2\epsilon)$-dimensional spacetime. We decompose
$l=l^{[3]}+l^\sharp$ according to $V\oplus V^\sharp$. Then 
\begin{equation}
  (l^\sharp)^2=-\mu_{11}-\frac{s}{t(s+t)} x_4^2\equiv -\lambda_{11}.
\end{equation}
It is clearly that $D_1,\ldots, D_4$ are functions in $x_1,x_2,x_3$
and $\lambda_{11}$ only. We may redefine Baikov variables,
\begin{equation}
 z_1 = D_1,\quad  z_2 = D_2,\quad  z_3 = D_3,\quad  z_4 = D_4.
\end{equation}
Only $4$ variables are needed. Repeat the previous process,
\begin{equation}
  I_\text{box}[N]=\frac{1}{i\pi^{3/2} \Gamma
    (\frac{D-3}{2})} \frac{1}{4\sqrt{-st(t+s)}}
    \int dz_1 dz_2 dz_3 dz_4 \tilde F(z_1,z_2,z_3,z_4)^{\frac{D-5}{2}} \frac{N}{z_1
    z_2 z_3 z_4}\,,
\end{equation}
if $N$ has no $l_1\cdot \omega$ dependence. $\tilde  F(z_1,z_2,z_3,z_4)=F(z_1,z_2,z_3,z_4,0)$.

Baikov representation also works for higher-loop and both planar and
nonplanar diagrams. For example, in a scheme of which all external
particles are in $4D$, a two-loop integral with $n\geq 5$
points becomes
\begin{equation}
I^{(2)}_{n\geq 5} [N]= \frac{2^{D-6}}{\pi^{5}\Gamma(D-5) J} \int \prod_{i=1}^{11} d z_i \hspace{0.6mm}
  F(z)^{\frac{D-7}{2}} \frac{N}{z_1 \cdots z_k}\,,
\label{two-loop-integral-z}
\end{equation}
where $J$ is a Jacobian without $D$ dependence. Here $F(z)$ is the
determinant $\mu_{11}\mu_{22} -\mu_{12}^2$ in Baikov representation. In the same scheme, for a two-loop
amplitude with $n< 5$ point, we can integrate out $5-n$ spurious
directions and get,
\begin{equation}
I^{(2)}_{n<5} [N]= \frac{2^{D-n-1}}{\pi^{n}\Gamma(D-n) J} \int \prod_{i=1}^{2n+1} d z_i \hspace{0.6mm}
  F(z)^{\frac{D-n-2}{2}} \frac{N}{z_1 \cdots z_{k}} \,.
\label{two-loop-integral-4z}
\end{equation}
We leave the Baikov representation of massless double box
diagram as an exercise (Exercise \ref{dbox_Baikov}).

For deriving IBP relations, the overall prefactors are irrelevant. In the
rest of this chapter, we neglect these factors in Baikov
representation. In general for an $L$-loop integral in a 
scheme of which external particles are in $4D$,
\begin{equation}
\label{Baikov}
  I^{(L)}_{n} [N]\propto \int \prod_{i=1}^{\phi(n) L+\frac{L(L-1)}{2}}
  dz_i \ F(z)^\frac{D-L-\phi(n)}{2} \frac{N}{z_1 \ldots z_k}\,,
\end{equation}
where
\begin{equation}
  \phi(n)=\left\{
      \begin{array}{cc}
        n, & n<5\\
        5,& n\geq 5
      \end{array}
\right . .
\end{equation}
The Baikov polynomial $F(z)$ is the determinant $\det(\mu_{ij})$ if
$n\geq 5$, or   the determinant $\det(\lambda_{ij})$ is $n<5$.

\subsection{Unitarity cuts in Baikov representation}

We see that in Baikov representation, inverse propagators are simply linear
monomials. Another feature 
is that unitarity cut structure is clear.

Note that now all inverse propagators are linear, so a unitarity cut
$D_i^{-1} \to \delta(D_i)$ just means to set certain $z_i$ as zero in
\eqref{Baikov}. For a given $c$-fold cut ($0\leq c\leq k$), let
$\mathcal{S}_\mathrm{cut}$, $\mathcal{S}_\mathrm{uncut}$ and
$\mathcal{S}_\mathrm{ISP}$ be the sets of indices labelling
cut propagators, uncut propagators and ISPs, respectively.
$\mathcal{S}_\mathrm{cut}$ thus contains $c$ elements. Furthermore, we
denote $m$ as 
the total number of $z_j$ variables, 
\begin{equation}
  m=\phi(n) L+\frac{L(L-1)}{2}\,,
\end{equation}
and
set $\mathcal{S}_\mathrm{uncut}=\{r_1,\ldots, r_{k-c}\}$
and $\mathcal{S}_\mathrm{ISP}=\{r_{k-c+1},\ldots, r_{m-c}\}$.
Then, by cutting the propagators, $z_i^{-1} \to \delta(z_i),
i \in \mathcal{S}_\mathrm{cut}$, the integrals (\ref{two-loop-integral-z}) and
(\ref{two-loop-integral-4z}) reduce to,
\begin{equation}
I^{(L)}_\mathrm{cut} [N]= \int \frac{d z_{r_1} \cdots d z_{r_{m-c}} }
{z_{r_1}\cdots z_{r_{k-c}}} N F(z)^{\frac{D - L-\phi(n)}{2}} \bigg|_{z_i=0\,, \forall i\in \mathcal{S}_\mathrm{cut}} \,,
\label{cut-z}
\end{equation}

\begin{example}
\label{dbox_Baikov_cut}
  Consider the quintuple cut for $D$-dimensional 
massless double box. (See Exercise \ref{dbox_Baikov}), $D_2=D_3=D_5=D_6=D_7=0$. In this cases, $m=9$.
$\mathcal{S}_\mathrm{uncut}=\{1,4\}$,
$\mathcal{S}_\mathrm{cut}=\{2,3,5,6,7\}$,
$\mathcal{S}_\mathrm{ISP}=\{8,9\}$. Baikov representation \eqref{cut-z}
with this cut reads,
\begin{equation}
  I^{(2)}_\mathrm{penta-cut} [N]= \int \frac{d z_1 d z_4 
    d z_8 d z_9}{z_1 z_4 } F_{[5]}(z)^{\frac{D - 6}{2}}
  N \bigg|_{z_2=z_3= z_5= z_6= z_7=0}\,,
\end{equation}
where,
\begin{gather}
  F_{[5]}(z)=\frac{(s t\c -s z_1\c -2 s z_8\c -2 s z_9\c -t z_1\c-t
    z_4\c+2 z_4 z_8\c -4 z_8 z_9) (2 s z_1 z_9\c +4 s z_8 z_9\c +t z_1 z_4)}{4 s t (s+t)}.
\end{gather}

If we consider the maximal cut $D_1=D_2=\ldots =D_7=0$, then $\mathcal{S}_\mathrm{uncut}=\emptyset$,
$\mathcal{S}_\mathrm{cut}=\{1,2,3,4,5,6,7\}$,
$\mathcal{S}_\mathrm{ISP}=\{8,9\}$. Baikov representation \eqref{cut-z}
on this cut reads,
\begin{equation}
  I^{(2)}_\mathrm{hepta-cut} [N]= \int 
    d z_8 d z_9 F_{[7]}(z)^{\frac{D - 6}{2}}
  N \bigg|_{z_i=0,\ 1\leq i\leq 7}\,,
\end{equation}
and Baikov polynomial on maximal cut is simply,
\begin{equation}
  F_{[7]}(z)=\frac{z_8 z_9 (s t-2 s z_8-2 s z_9-4 z_8 z_9)}{t (s+t)}\,.
\end{equation}
 \end{example}

\subsection{IBPs in Baikov representation}
Note that the higher the unitarity cut is, the simpler the Baikov polynomial
becomes. So We try to use cuts as much as possible to reconstruct the full IBP,
instead of solving the full IBP at once.  Suppose that we consider a $c$-fold cut and make an IBP ansatz as, 
\begin{align}
  \label{ansatz}
0\hspace{-0.5mm}&=\hspace{-0.5mm}\int \hspace{-0.8mm}
d \bigg( \hspace{-0.4mm} \sum_{i=1}^{m-c} \hspace{-0.8mm}
     \frac{(-1)^{i+1} a_{r_i} F(z)^{\frac{D-h}{2}}}{z_{r_1}\cdots z_{r_{k-c}}}
     d z_{r_1} \hspace{-0.5mm} \wedge \hspace{-0.5mm} \cdots \widehat{d z_{r_i}} \cdots
     \hspace{-0.5mm} \wedge \hspace{-0.5mm} d z_{r_{m-c}}
                    \hspace{-0.5mm} \bigg)\nn\\
\c&= \c \int  \sum_{i=1}^{m-c} \c \bigg(\c \frac{\partial a_{r_i}}{\partial z_{r_i}}
   \c \bigg)F(z)^{\frac{D-h}{2}} \omega\c +\c \frac{D\c -\c h}{2}\c \sum_{i=1}^{m-c}
    \bigg(a_{r_i}\frac{\partial F}{\partial z_{r_i}}\bigg) F^\frac{D-h-2}{2}
    \omega\c - \sum_{i=1}^{k-c}  \frac{a_{r_i}}{z_{r_i}}F(z)^{\frac{D-h}{2}} \omega\,.
\end{align} 
where $\omega$ is the measure $dz_{r_1}\wedge \ldots
\wedge dz_{r_{m-c}}/(z_{r_1}\cdots z_{r_{k-c}})$ and $h=D-L-\phi(n)$. The second sum contains
integrals in $D-2$ dimension while the third sum contains doubled
propagators. 

If it is required that resulting IBP has no dimensional shift or doubled
poles \cite{Ita:2015tya,Larsen:2015ped}, we have the {\it syzygy equations}, 
\begin{align}
b F + \sum_{i=1}^{m-c} a_{r_i}\frac{\partial F}{\partial
z_{r_i}} &=0\,, \label{eq:syzygy_1} \\
a_{r_i} + b_{r_i} z_{r_i}&=0\,, \hspace{4mm} i=1,\ldots, k-c \,,
\label{eq:syzygy_2}
\end{align}
where $a_{r_i}$, $b$ and $b_{r_i}$ must be polynomials in $z_j$. Note
that the last $(k-c)$
equations in Eq.~(\ref{eq:syzygy_2}) are trivial since they are solved as
$a_{r_i}=-b_{r_i} z_{r_i}$. So alternatively, we have only one syzygy equation,
\begin{equation}
   b F -\sum_{i=1}^{k-c} b_{r_i}\bigg(z_{r_i}\frac{\partial F }{\partial z_{r_i}}\bigg) \hspace{0.8mm}+\hspace{0.8mm}
    \sum_{j=k-c+1}^{m-c} a_{r_j}\frac{\partial F}{\partial z_{r_j}}
    \hspace{0.8mm}=\hspace{0.8mm} 0
\label{syz}
\end{equation}
for polynomials $b_{r_i}$, $a_{r_i}$ and $b$. 
These equations are similar to the tangent condition of a
hypersurface in differential geometry, however, we require polynomial
solutions. So we apply algebraic geometry to study these equations. 

We find that it is trivial to solve the two equations
\eqref{eq:syzygy_1} and \eqref{eq:syzygy_2} separately, and it is easy
to combine the two individual solutions for \eqref{eq:syzygy_1} and
\eqref{eq:syzygy_2}  together via module intersection, to get the
simultaneous solution. This {\it module intersection} approach is much
more efficient than to solve \eqref{syz} directly.

\section{Syzygies}
Syzygy can be understood as relations of polynomials. Consider the
ring $R=\Fpoly$ and $R^m$, the set of all $m$-tuple of $R$. $R^m$ in
general is not a ring but $R\times R^m \to R$ is well-defined as,
\begin{gather}
  f \cdot (f_1,\ldots ,f_m) \mapsto (f f_1, \ldots, f f_m).
\end{gather}
This leads to the definition of modules.
\begin{definition}
  A module $M$ over the ring $R$ is an Abelian group ($+$) with a map
  $R\times M\to M$ such that,
  \begin{enumerate}
  \item $r\cdot (m_1+m_2)=r \cdot m_1+r \cdot m_2$, $\forall r\in R$, $m_1,m_2\in M$. 
   \item $(r_1+r_2)\cdot m=r_1 \cdot m+r_2 \cdot m$, $\forall
     r_1,r_2\in R$, $m\in M$. 
     \item $(r_1 r_2) \cdot m=r_1 \cdot (r_2 \cdot m)$, $\forall
     r_1,r_2\in R$, $m\in M$. 
     \item $1\cdot m=m$. $1\in R$, $\forall m\in M$. 
  \end{enumerate}
\end{definition}
For example $R^m$, $I$ and $R/I$ are all $R$-modules, where $I$ is an
ideal of $R$. To simplify notations, we formally write an element
$(f_1,\ldots f_m)\in R^m$ as $f_1\e_1+\ldots f_m \e_m$.  

\begin{proposition}
  Any submodule of $R^m$ is finitely generated.
\end{proposition}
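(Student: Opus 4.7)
The plan is to prove this by induction on $m$, leveraging Noether's basis theorem (Theorem~\ref{thm_Noether}) as the base case. For $m=1$, a submodule of $R^1=R$ is exactly an ideal of $R$, so it is finitely generated by Noether's theorem applied to the polynomial ring $\Fpoly$. This handles the initial case cleanly.

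For the inductive step, I will assume that every submodule of $R^{m-1}$ is finitely generated and show the same for $R^m$. Given a submodule $N \subset R^m$, the key construction is the projection $\pi\colon R^m \to R$ onto (say) the last coordinate, sending $f_1\e_1+\dots+f_m\e_m \mapsto f_m$. Its image $\pi(N)$ is an ideal of $R$, hence by Noether finitely generated by some $a_1,\ldots,a_k\in R$. I would then lift each $a_i$ to an element $n_i\in N$ with $\pi(n_i)=a_i$. Separately, the kernel $N\cap\ker\pi$ sits naturally inside $R^{m-1}$ (via the first $m-1$ coordinates), so by the inductive hypothesis it is finitely generated by some $n'_1,\ldots,n'_\ell$.

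The claim to verify is then that $\{n_1,\ldots,n_k,n'_1,\ldots,n'_\ell\}$ generates $N$: for any $v\in N$, write $\pi(v)=\sum h_i a_i$ with $h_i\in R$, so that $v-\sum h_i n_i$ lies in $N\cap\ker\pi$ and can therefore be written as an $R$-linear combination of the $n'_j$'s. This is essentially the snake-style argument applied to the short exact sequence $0\to R^{m-1}\to R^m\to R\to 0$.

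The main obstacle is not conceptual but presentational: one must be careful to state the inductive hypothesis for \emph{every} submodule of $R^{m-1}$ (not merely for a single specific one), since the kernel $N\cap\ker\pi$ is a submodule that depends on $N$. Once that is properly set up, the argument is purely mechanical. A secondary subtlety is identifying $\ker\pi$ with $R^{m-1}$ canonically, but this is immediate from the free module structure. No new machinery beyond Noether's theorem is needed, which is why the result fits so naturally right after it in the exposition.
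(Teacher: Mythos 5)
Your proof is correct and complete; the paper itself does not spell out an argument but merely cites Cox, Little and O'Shea, and the induction you give — base case from Noether's theorem, inductive step via the projection $\pi\colon R^m \to R$ together with the finitely generated kernel $N\cap\ker\pi\subset R^{m-1}$ — is precisely the standard proof found in that reference. You have correctly identified the one place where care is needed, namely that the inductive hypothesis must apply to all submodules of $R^{m-1}$ so that it covers the $N$-dependent kernel, and the lifting-and-subtracting step is exactly right.
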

\begin{proof}
  This is a generalization of Theorem \ref{thm_Noether}. See Cox,
  Little and O'Shea \cite{opac-b1094391}.
\end{proof}

\begin{definition}
  Given a $R$ module $M$, the syzygy module of $m_1, \ldots m_k\in M$,
  $\syz(m_1\ldots m_k)$, is the submodule of $R^k$ which consists of all $(a_1, \ldots a_k)$
  such that
  \begin{equation}
    a_1 \cdot m_1 + a_2 \cdot m_2 + \ldots a_k \cdot m_k=0\,.
  \end{equation}
\end{definition}
So \eqref{syz} defines a syzygy module with $M=R$,
i.e., ``relations'' between polynomials. Naively, given $f_1,\ldots, f_k$,
it is clearly that $f_j \e_i -f_i \e_j\in R^k$, $i\not=j$ is a syzygy
for $f_1,\ldots, f_k$. Such a syzygy is called a {\it principal
  syzygy} which is denoted as $P_{ij}$.

In some cases, principal syzygies generate the whole syzygy module of
given polynomials. For example,
\begin{proposition}
\label{principle_syzygy}
  Given $f_1,\ldots, f_k$ in $R=\Fpoly$, if $\la f_1,\ldots,f_k\ra=\la
  1\ra$, then $\syz(f_1,\ldots f_k)$ is generated by principal
  syzygies $P_{ij}=f_j \e_i -f_i \e_j$, $1\leq i\not =j\leq k$.
\end{proposition}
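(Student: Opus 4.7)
The plan is to give a direct, constructive proof that bypasses any Gr\"obner basis machinery. Take an arbitrary syzygy $(a_1,\ldots,a_k)\in \syz(f_1,\ldots,f_k)$, so that by definition $\sum_{i=1}^k a_i f_i = 0$. The goal is to exhibit polynomials $c_{ij}$ (for $i<j$) such that $(a_1,\ldots,a_k) = \sum_{i<j} c_{ij} P_{ij}$.

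The key hypothesis is leveraged by the Nullstellensatz-style certificate: since $\la f_1,\ldots,f_k\ra = \la 1\ra$, there exist polynomials $h_1,\ldots,h_k\in R$ with $\sum_{i=1}^k h_i f_i = 1$. I would then propose the antisymmetric choice
\begin{equation*}
c_{ij} = a_i h_j - a_j h_i, \qquad i<j,
\end{equation*}
and verify by direct expansion that $\sum_{i<j} c_{ij} P_{ij}$ equals $(a_1,\ldots,a_k)$. Extending $c_{ij}$ antisymmetrically to $i>j$, the $\ell$-th component of $\sum_{i<j} c_{ij} P_{ij}$ equals $\sum_{j\neq \ell} c_{\ell j} f_j = \sum_{j\neq \ell}(a_\ell h_j - a_j h_\ell) f_j$. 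Splitting this sum and plugging in the two identities $\sum_j h_j f_j = 1$ and $\sum_j a_j f_j = 0$, I get
\begin{equation*}
a_\ell\Bigl(\sum_{j\neq \ell} h_j f_j\Bigr) - h_\ell\Bigl(\sum_{j\neq \ell} a_j f_j\Bigr) = a_\ell(1 - h_\ell f_\ell) - h_\ell(-a_\ell f_\ell) = a_\ell,
\end{equation*}
which is exactly what is required.

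There is no real obstacle here beyond bookkeeping: the clever step is the antisymmetric ansatz $c_{ij} = a_i h_j - a_j h_i$, which is the natural way to force the cross-terms $a_i h_j f_j$ (that would reconstruct $a_i$ via $\sum_j h_j f_j = 1$) while cancelling the stray contributions using $\sum_j a_j f_j = 0$. Once this guess is made, the verification is a one-line calculation, and no finiteness or generating-set argument from Noether's theorem is required---the explicit formula shows the principal syzygies already suffice.
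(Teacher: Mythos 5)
Your proof is correct and takes essentially the same approach as the paper: the paper also derives the antisymmetric coefficients $s_{ij}=a_i q_j - a_j q_i$ (identical to your $c_{ij}$, up to naming $q$ vs.\ $h$) from the certificate $\sum q_i f_i = 1$ and the relation $\sum a_i f_i = 0$, then collects terms into principal syzygies $P_{ij}$; you propose the same coefficients as an ansatz and verify componentwise, which is just the same computation run in the opposite direction.
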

\begin{proof}
  We have $q_1 f_1 +\ldots q_k f_k=1$, where $q_i$'s are in $R$. For
  any element in $\syz(f_1,\ldots f_k)$,
  \begin{equation}
    a_1 f_1 +\ldots a_k f_k=0\,,
  \end{equation}
we can rewrite $a_i$ as,
\begin{align}
  a_i=\sum_{j=1}^k a_i q_j f_j=\big(\sum_{\substack{j=1\\j\not=i}}^ka_i q_j f_j
       \big)+a_i q_i f_i \nn =\sum_{\substack{j=1\\j\not=i}}^ka_i q_j f_j -\sum_{\substack{j=1\\j\not=i}}^ka_j q_i f_j \equiv \sum_{\substack{j=1\\j\not=i}}^ks_{ij} f_j,
\end{align}
where $s_{ij}=a_i q_j-a_j q_i$ is a polynomial and antisymmetric in indices. Hence, this syzygy
\begin{align}
  \sum_{i=1}^k a_i \e_i= \sum_{i=1}^k\sum_{\substack{j=1\\j\not=i}}^k
  s_{ij} f_j \e_i=\sum_{\substack{i=1,j=1\\j\not=i}}^k
  \frac{s_{ij}}{2}(f_j \e_i -f_i \e_j)= \sum_{\substack{i=1,j=1\\j\not=i}}^k
  \frac{s_{ij}}{2} P_{ij}\,,
\end{align}
is generated by principal syzygies. 
\end{proof}

\begin{example}
\label{circle_syzygy}
  Consider the polynomial $F=x^2+y^2-1$ in $\Q[x,y]$. Define
  $f_1=\partial F/\partial x=2x$, $f_2=\partial F/\partial y=2y$ and
  $f_3=F$. It is clear that $\la 2x,2y,x^2+y^2-1\ra=\la 1\ra$. Hence
  $\syz(f_1,f_2,f_3)$ is generated by,
  \begin{gather}
    (y,-x,0),\quad (x^2+y^2-1,0,-2x),\quad (0,x^2+y^2-1,-2y)\,.
\end{gather}
Note that we see that $F=0$ defines the unit circle. The tangent
vector at any point on the circle is,
\begin{gather}
  y\frac{\partial }{\partial x}-x\frac{\partial }{\partial y}\,,
\end{gather}
which corresponds to the first generator in \eqref{circle_syzygy}. 
\end{example}

In general, syzygy module for given polynomials can be found by \GB\
computation. For a \GB\ $G=\{g_1,\ldots g_m\}$ in a certain monomial
order, consider two elements $g_i$, $g_j$, $i<j$. Let $S(g_i,g_j)=a_i g_i
+a_j g_j$ be the
S-polynomial (Definition \ref{S-polynomial}). $S(g_i,g_j)$ must be
divisible by $G$, hence, by polynomial division (Algorithm \ref{multivariate_polynomial_division}), 
\begin{equation}
  a_i g_i +a_j g_j = \sum_{l=1}^m q_l g_l\,.  
\end{equation}
Clearly, this is a syzygy of $g_1,\ldots g_m$, which explicitly reads $ q_1 \e_1 +\ldots
(q_i-a_i) \e_i +\ldots + (q_j -a_j) \e_j +\ldots q_m \e_m$. We call
this syzygy, {\it reduction of an S-polynomial}  and denote it as
$s_{ij}\equiv \sum _{l=1}^m (s_{ij})_l \e_l$. 
\begin{thm}[Schreyer]\
\label{Schreyer}
  \begin{enumerate}
  \item  For a \GB\  $G=\{g_1,\ldots g_m\}$ in $R=\Fpoly$, $\syz(g_1,\ldots g_m)$
  is generated by  reductions of  S-polynomials, $s_{ij}$.   
  \item For generic polynomials $\{f_1,\ldots,f_k\}$ in  $R=\Fpoly$,
    let $G=\{g_1,\ldots g_m\}$ be their \GB\ in a certain monomial
    order. Suppose that the conversion relations are,
    \begin{equation}
      g_i=\sum_{j=1}^k a_{ij} f_j\quad f_i=\sum_{j=1}^m b_{ij} g_j\,.
    \end{equation}
The $\syz(f_1,\ldots f_k)$ is generated by,
\begin{gather}
  \sum_{i=1}^m\sum_{j=1}^k (s_{\alpha\beta})_i a_{ij} \e_j, \quad
  1\leq \alpha<\beta \leq m\nn\\
  \e_i-\sum_{l=1}^k \sum_{j=1}^m b_{ij} a_{jl} \e_l, \quad
  1\leq i \leq k
\label{syzygy_GB}
\end{gather}
  \end{enumerate}
 \end{thm}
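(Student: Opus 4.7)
The plan is to prove Part 1 first, using the Schreyer monomial order on $R^m$, and then derive Part 2 by a careful bookkeeping of syzygies under the change of generating set. For the notation, write $\sigma = \sum_{i=1}^m h_i \e_i$ for a general element of $\syz(g_1,\ldots,g_m)$.

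\textbf{Part 1.} First I would introduce the Schreyer order $\prec_S$ on $R^m$: declare $u\e_i \prec_S v\e_j$ iff $\LT(g_i)\,u \prec \LT(g_j)\,v$ in the ambient monomial order on $R$, with ties broken by (say) $i>j$. This is a well-ordering, so any strictly descending chain of leading terms terminates. Now take a syzygy $\sigma$ and let $u\e_i$ be its $\prec_S$-leading term, so $\LT(h_i)\LT(g_i)$ is the highest monomial appearing in the expansion $\sum h_j g_j$. The equation $\sum h_j g_j = 0$ forces this leading monomial to cancel against at least one other $\LT(h_j)\LT(g_j)$; but the cancellation of two such terms is precisely the mechanism encoded by the S-pair reduction $s_{ij}$. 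Subtracting an appropriate $R$-multiple of $s_{ij}$ from $\sigma$ produces a new syzygy whose $\prec_S$-leading term is strictly smaller. Iterating and invoking the well-foundedness of $\prec_S$ (equivalently Dickson's lemma on the monomials of the associated leading terms) terminates the reduction and writes $\sigma$ as an $R$-linear combination of the $s_{\alpha\beta}$.

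\textbf{Part 2.} Given any $\sigma = \sum_j c_j \e_j$ with $\sum_j c_j f_j = 0$, substitute $f_j = \sum_i b_{ji} g_i$ to obtain
\begin{equation}
  0 = \sum_{j=1}^k c_j f_j = \sum_{i=1}^m\Big(\sum_{j=1}^k b_{ji} c_j\Big) g_i\,,
\end{equation}
so the tuple $\big(\sum_j b_{ji} c_j\big)_i$ lies in $\syz(g_1,\ldots,g_m)$. By Part 1 it is an $R$-linear combination of the $s_{\alpha\beta}$'s. Pulling back with $g_i = \sum_j a_{ij} f_j$ rewrites each $s_{\alpha\beta}$ as the first family of generators in the statement. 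What is left of $\sigma$ after this substitution is the difference $\sigma - \sum_l c_l \sum_{i,j} b_{lj} a_{ji} \e_i$, which must itself be a syzygy of the $f$'s; but this difference is an $R$-linear combination of the tuples $\e_l - \sum_{i,j} b_{lj} a_{ji}\e_i$, which are indeed syzygies because substituting the two conversion relations back yields $f_l - \sum_{i,j} b_{lj} a_{ji} f_i = f_l - \sum_j b_{lj} g_j = 0$. This accounts for both families appearing in \eqref{syzygy_GB}.

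\textbf{Main obstacle.} The nontrivial content is Part 1: direct manipulation of an arbitrary syzygy fails because $h_i$'s can have arbitrarily large degrees and arbitrary supports, and there is no obvious reason a priori why only the finitely many S-pair syzygies should suffice. The trick is the Schreyer order, which converts the problem into a standard ``leading term can always be killed'' induction mirroring the proof of Buchberger's criterion itself. Once Part 1 is in hand, Part 2 is essentially linear algebra: one transports a syzygy through the two explicit change-of-basis matrices $(a_{ij})$ and $(b_{ij})$, and the second family of generators is exactly the ``kernel'' correction measuring how far $B A$ is from the identity on $R^k$.
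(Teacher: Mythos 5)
Your argument is correct and reproduces exactly the proof the paper outsources to Cox, Little, and O'Shea: Part~1 via the Schreyer order (showing the $s_{ij}$ form a Gr\"obner basis of the syzygy module, so any syzygy's $\prec_S$-leading term can be killed one S-pair at a time, with termination by well-foundedness of $\prec_S$), and Part~2 by pushing a syzygy of the $f_i$'s through $B$ into a syzygy of the $g_i$'s, pulling it back through $A$, and recognizing the leftover as a combination of the ``defect of $BA$ from the identity'' relations $\e_i - \sum_{l,j}b_{ij}a_{jl}\e_l$—the very remark the paper appends to its citation. One small wording fix: the parenthetical ``equivalently Dickson's lemma'' overstates things—termination of your descent uses the well-foundedness of $\prec_S$, which is inherited from the ambient monomial order being a well-ordering (a fact one proves \emph{using} Dickson's lemma, but the two are not the same assertion), and it would sharpen Part~1 to note explicitly that when $u\e_{i_0}$ is the $\prec_S$-leading term, the lcm $\operatorname{lcm}(\LT(g_{i_0}),\LT(g_{j_1}))$ of a cancelling pair necessarily divides the common top monomial, which is exactly why a single $R$-multiple of a single $s_{i_0 j_1}$ suffices at each step.
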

 \begin{proof}
   See Cox, Little and O'Shea \cite{opac-b1094391}. Note that in
   second line of \eqref{syzygy_GB}, the relations are coming from the
   map from $f_i$'s to $G$ and the inverse map.
 \end{proof}
This theorem also generalizes to modules. Given several elements
$m_1,\ldots m_k$ in
$R^m$, we can define a module order which is an extension of monomial
order. Then we can compute \GB\ and the syzygy module of $m_1,\ldots
m_k$ \cite{opac-b1094391}.

In practice, we may use \pmb{syz} in \Singular or \pmb{syz} in
\Macaulay , to find the syzygy module of polynomials or elements in
$R^m$. See
alternative ways of finding syzygies with the linear algebra method
\cite{Schabinger:2011dz} or by F5 algorithm \cite{MR2772175}. 

\section{Polynomial tangent vector field}
In this section, we use the tool of syzygy to study {\it polynomial tangent
vector field} \cite{Hauser1993}. See \cite{Bern:2017gdk} for the
physical meaning of these tangent vectors for IBPs.

Let $F(z)$ be a polynomial in $R=\mathbb
C[z_1,\ldots,z_n]$. $F=0$ defines a hypersurface (reducible or irreducible). The set of
polynomial tangent fields, $\TF_F$, is the
submodule in $R^n$ which consists of all $(a_1,\ldots, a_n)\in R^n$ such that,
\begin{equation}
  \label{polynomial_tangent_field}
  \sum_{i=1}^n a_i \frac{\partial F}{\partial z_i} = b F\,,
\end{equation}
for some polynomial $b$ in $z_i$'s. \eqref{polynomial_tangent_field} is a
syzygy equation which can be solved by the algorithm in Theorem \ref{Schreyer}. (We
drop the factor $b$ in the definition of $\TF_F$, since this factor can be easily recovered
later.) Mathematically, $\TF_F$ is called the set of {\it derivations}, from
$R/\la F \ra$ to $R/\la F \ra$. 

Geometrically, if a point $\xi=(\xi_1,\ldots ,\xi_n)\in \mathbb C^n$ is on
the hypersurface $\mathcal Z(F)$, then,
\begin{equation}
  \sum_{i=1}^n a_i (\xi_1,\ldots ,\xi_n) \frac{\partial F}{\partial z_i}(\xi_1,\ldots ,\xi_n) = 0\,,
\end{equation}
and $(a_1(\xi),\ldots,a_n(\xi))$ is along the tangent direction of
$\mathcal Z(F)$. This is the origin of terminology, {\it polynomial tangent
vector field}.

Although syzygy computation by Theorem \ref{Schreyer} can find
$\mathbf T_F$ for any polynomial $F$, it is interesting to study
the geometric properties of $F$ and $\mathbf T_F$.

\begin{definition}
\label{singular_ideal}
  For a polynomial $F$ in $R=\mathbb
C[z_1,\ldots,z_n]$. The singular ideal $I_S$ for $F$ is defined to be,
\begin{equation}
  I_s=\la\frac{\partial F}{\partial z_1},\ldots, \frac{\partial
    F}{\partial z_n}, F\ra \,,
\end{equation}
If $I_s=\la 1 \ra$, then we call the hypersurface $\mathcal Z(F)$
smooth. Otherwise we call points in $\mathcal Z(I_s)$ singular points.  \end{definition}

Intuitively, at a singular point $\xi\in \mathcal Z(I_s)$, $F$ and all its first derivates vanish.  Hence near $\xi$,
$F=0$ does not define a complex submanifold with codimension $1$. 

If a hypersurface is smooth, then by Definition \ref{singular_ideal} and 
Proposition \ref{principle_syzygy}, we have the following statement. 
\begin{proposition}
  If  $F$ in $R=\mathbb
C[z_1,\ldots,z_n]$ defines a smooth hypersurface, then $\TF_F$ is
generated by principal syzygies of $\partial F/\partial z_1,\ldots, \partial
    F/\partial z_n, F$.
\end{proposition}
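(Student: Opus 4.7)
The plan is to reduce this statement to a direct application of Proposition \ref{principle_syzygy} (the fact that if $\la f_1,\ldots,f_k\ra = \la 1 \ra$, then $\syz(f_1,\ldots,f_k)$ is generated by principal syzygies). First I would observe that the defining equation of $\TF_F$, namely
\begin{equation}
\sum_{i=1}^n a_i \frac{\partial F}{\partial z_i} - bF = 0\,,
\end{equation}
says exactly that the tuple $(a_1,\ldots,a_n,-b)\in R^{n+1}$ lies in the syzygy module
\begin{equation}
\Sigma := \syz\bigl(\tfrac{\partial F}{\partial z_1},\ldots,\tfrac{\partial F}{\partial z_n},F\bigr).
\end{equation}
Conversely, every syzygy in $\Sigma$ projects (via the map $\pi: R^{n+1}\to R^n$ that drops the last coordinate) to an element of $\TF_F$, with $b$ recovered from the dropped coordinate. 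Hence $\pi$ induces a surjection $\Sigma \twoheadrightarrow \TF_F$.

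Next I would invoke the smoothness hypothesis. Smoothness of $\mathcal Z(F)$ is precisely the statement $I_s = \la \partial_1 F,\ldots,\partial_n F,F\ra = \la 1\ra$, which is exactly the hypothesis needed in Proposition \ref{principle_syzygy}. Applying that proposition to the generators $f_1=\partial_1 F,\ldots,f_n=\partial_n F,f_{n+1}=F$ of the unit ideal, we obtain that $\Sigma$ is generated (as an $R$-module) by the principal syzygies
\begin{equation}
P_{ij} = f_j\e_i - f_i\e_j, \qquad 1\le i<j\le n+1\,.
\end{equation}

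Finally, since $\pi$ is $R$-linear and surjective onto $\TF_F$, the images $\pi(P_{ij})$ generate $\TF_F$. Concretely these images are $(\partial_j F)\e_i - (\partial_i F)\e_j$ for $1\le i<j\le n$ (the ``internal'' principal syzygies among the partials, with $b=0$) and $F\e_i$ for $1\le i\le n$ (from $P_{i,n+1}$, with $b=\partial_i F$). This is precisely the claim, so no further work is required.

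There is no serious obstacle; the only point requiring a little care is the bookkeeping of the projection $\pi$ and verifying that smoothness applies to the augmented list including $F$ itself (not just the partial derivatives). One should also note the geometric content: on the smooth locus the Jacobian criterion guarantees some $\partial_i F$ is a unit locally, so tangent fields are locally spanned by the $P_{ij}$ among the partials, while the $F\e_i$ terms account for the ``normal'' direction that is killed modulo $F$ — but for the purely algebraic generation statement, Proposition \ref{principle_syzygy} already packages this.
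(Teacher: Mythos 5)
Your proof is correct and follows exactly the paper's intended route: smoothness means the singular ideal $\la \partial_1 F,\ldots,\partial_n F, F\ra$ is the unit ideal, so Proposition \ref{principle_syzygy} applies, and $\TF_F$ is recovered as the $R$-linear image of $\syz(\partial_1 F,\ldots,\partial_n F,F)$ under the projection dropping the $b$-coordinate. The paper states this as an immediate consequence without spelling out the projection step, which you have done carefully and correctly.
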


For instance, in Example \ref{circle_syzygy}, the unit circle is
clearly smooth. Hence its polynomial tangent vector fields is
generated by principal syzygies. This can be understood as an algebraic
version of implicit function theorem. 

The singular cases are more interesting and subtle. 
\begin{example}
  Let $F=y^2-x^3$. $F=0$ is not a smooth curve, since the singular variety
  is $I_s=\la -3x^2,2y,y^2-x^3\ra=\la x^2, y\ra \not=\la 1\ra$. So
  there is one singular point at $(0,0)$ which is a cusp point. We
  cannot just use principal syzygies to generate $\TF_F$, so we turn
  to Theorem \ref{Schreyer}. 

Define that $\{f_1,f_2,f_3\}=\{-3x^2,2y,y^2-x^3\}$. Note that this is
a \GB\ in \grevlex, although it is not a reduced \GB. 
\begin{itemize}
\item $S(f_1,f_2)=(2y)f_1+(3x^2)f_2=0$ hence we get a syzygy generator
  $\mathcal S_1=(2y,-3x^2,0)$.
\item $S(f_2,f_3)=(-x^3)f_2-(2y)f_3=-2y^3=-y^2 f_2$. $\mathcal S_2=(0,
  -x^3+y^2 ,-2 y)$.
\item  $S(f_3,f_1)=-3f_3+x f_1=-3y^2=-\frac{3}{2}y f_2$. $\mathcal S_3=(x,
 \frac{3}{2}y,-3)$.
\end{itemize}
$\mathcal S_3$ is not from principal syzygies. Locally it
characterizes the scaling behavior of the curve $y^2-x^3=0$ near the cusp
point $(0,0)$. It is a {\it weighted Euler vector field} \cite{Hauser1993}. 

Dropping the factor $b$ in \eqref{polynomial_tangent_field}, we find
that $\TF_F$ is generated by,
\begin{equation}
  (2y, -3x^2),\quad (0,-x^3+y^2), \quad (x,\frac{3y}{2})\,.
\end{equation}

\end{example}


\begin{proposition}
   Let $F\in R=\mathbb C[z_1,\ldots,z_n]$, $\TF_F$ is a Lie algebra with
   $[,]$ defined as that for vector fields. 
\end{proposition}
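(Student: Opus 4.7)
The plan is to identify an element $(a_1,\dots,a_n)\in \TF_F$ with the polynomial vector field $X=\sum_i a_i\,\partial/\partial z_i$ acting on $R=\mathbb C[z_1,\dots,z_n]$, and then to verify that under the standard bracket $[X,Y]$ of derivations the subset $\TF_F\subset \mathrm{Der}(R)$ is closed. Bilinearity, antisymmetry, and the Jacobi identity are automatic because they already hold for the full Lie algebra $\mathrm{Der}(R)$ of polynomial derivations on $R$; once closure under $[\,\cdot\,,\cdot\,]$ is established, $\TF_F$ inherits the Lie algebra structure.

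The content of the proposition is therefore the closure statement. First I would reinterpret the defining equation \eqref{polynomial_tangent_field}: $X\in\TF_F$ exactly when $X(F)\in\la F\ra$, i.e.\ $X(F)=bF$ for some $b\in R$. Take $X=\sum_i a_i\partial_i$ and $Y=\sum_i c_i\partial_i$ in $\TF_F$, with $X(F)=bF$ and $Y(F)=dF$. Then the bracket
\begin{equation}
[X,Y]=\sum_{i=1}^n\Big(X(c_i)-Y(a_i)\Big)\frac{\partial}{\partial z_i}
\end{equation}
has polynomial coefficients, since $X$ and $Y$ send polynomials to polynomials; hence $[X,Y]\in R^n$.

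The main (and only nontrivial) step is to show that $[X,Y](F)\in\la F\ra$. Applying Leibniz,
\begin{equation}
[X,Y](F)=X\big(Y(F)\big)-Y\big(X(F)\big)=X(dF)-Y(bF)=\big(X(d)-Y(b)\big)F+dX(F)-bY(F),
\end{equation}
and substituting $X(F)=bF$, $Y(F)=dF$ gives
\begin{equation}
[X,Y](F)=\big(X(d)-Y(b)\big)F+d\,bF-b\,dF=\big(X(d)-Y(b)\big)F\in\la F\ra.
\end{equation}
Thus the tuple of coefficients of $[X,Y]$ satisfies \eqref{polynomial_tangent_field} with the new $b$ given by $X(d)-Y(b)$, and $[X,Y]\in\TF_F$. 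Bilinearity in $X,Y$ over the base field $\mathbb C$, antisymmetry, and the Jacobi identity then follow verbatim from the corresponding identities in $\mathrm{Der}(R)$.

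I do not foresee a real obstacle here; the only subtlety worth flagging in the write-up is that $\TF_F$ is an $R$-submodule of $R^n$ but only a $\mathbb C$-Lie subalgebra of $\mathrm{Der}(R)$ — the bracket is not $R$-bilinear, as $[fX,Y]=f[X,Y]-Y(f)X$ shows. This is consistent with the Lie algebra statement of the proposition and worth a one-line remark, but does not affect the proof.
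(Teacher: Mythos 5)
Your proof is correct and follows essentially the same approach as the paper: both compute the bracket $[X,Y]$ and verify it lands in $\TF_F$ by showing $[X,Y](F)$ is a multiple of $F$. Your Leibniz-rule presentation (expanding $X(dF)-Y(bF)$ directly) is a bit cleaner than the paper's coordinate-index manipulation, and you make explicit what the paper leaves tacit — that bilinearity, antisymmetry, and Jacobi are inherited from $\mathrm{Der}(R)$ once closure is established — but the substance is identical.
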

\begin{proof}
  Let $v_1=(a_1, \ldots ,a_n)$ and $v_2=(b_1,\ldots ,b_n)$ be two polynomial
  tangent vector fields, 
  \begin{equation}
    \sum_{i=1}^n a_i \frac{\partial F}{\partial z_i} = A F,\quad 
\sum_{i=1}^n b_i \frac{\partial F}{\partial z_i} = B F\,,
  \end{equation}
where $A$ and $B$ are polynomials. $[v_1,v_2]$'s i-th component is,
\begin{gather}
\sum_{j=1}^n \bigg( a_j \frac{\partial
    b_i}{\partial z_j}- b_j \frac{\partial a_i}{\partial
    z_j}\bigg) \,,
\end{gather}
Hence $[v_1,v_2]$ acts on $F$ as,
\begin{gather}
 \sum_{i=1}^n \sum_{j=1}^n \bigg( a_j \frac{\partial
    b_i}{\partial z_j}- b_j \frac{\partial a_i}{\partial
    z_j}\bigg) \frac{\partial F}{\partial z_i}=F\cdot \sum_{j=1}^n\bigg( a_j \frac{\partial B}{\partial z_i}-b_j \frac{\partial A}{\partial z_i}\bigg)\,.
\end{gather}
so $[v_1,v_2]$ is in $\TF_F$. 
\end{proof}
In general, $\TF_F$ is an infinite-dimensional Lie algebra over
$\C$. We may call $\TF_F$ a {\it tangent algebra}.

If we require a polynomial vector field $(a_1, \ldots ,a_n)$  tangent to a list of
hypersurfaces defined by $F_1,\ldots,F_k$, like
the case of \eqref{eq:syzygy_1} and \eqref{eq:syzygy_2},
\begin{gather}
     \sum_{i=1}^n a_i \frac{\partial F_1}{\partial z_i} = A_1 F_1(z)\nn\\
\ldots \nn\\
\sum_{i=1}^n a_i \frac{\partial F_k}{\partial z_i} = A_k F_k(z)\,.
\label{vector_field_hypersurfaces}
\end{gather}
Then by definition, the solution set of such $(a_1, \ldots ,a_n)$'s is
the intersection of modules $\TF_{F_1}\cap\ldots \cap\TF_{F_k}$, which
is again a submodule of $R^n$. On the other hand, 
\begin{proposition}
\label{tangent_algebra_component}
  If a polynomial $F$ in $R=\C[z_1,\ldots,z_n]$ factorizes as,
  \begin{equation}
    F=f_1^{s_1}\ldots f_k^{s_k} \,,
  \end{equation}
where $f_i$'s are irreducible polynomials in $R$ and $f_i \not | f_j$
if $i\not =j$. $s_i$'s are positive integers. Then $\TF_F=\TF_{f_1}\cap\ldots \cap\TF_{f_k}$.
\end{proposition}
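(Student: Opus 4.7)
The statement is a containment in both directions between submodules of $R^n$, so the plan is to treat each inclusion separately.

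For the easy inclusion $\TF_{f_1}\cap\cdots\cap\TF_{f_k}\subseteq \TF_F$, I would take $(a_1,\ldots,a_n)$ tangent to each factor, i.e.\ $\sum_j a_j\,\partial_j f_i = A_i f_i$ for some polynomials $A_i$, and simply apply the Leibniz rule to $F=f_1^{s_1}\cdots f_k^{s_k}$. Expanding
\[
\sum_j a_j\,\partial_j F \;=\; \sum_{i=1}^k s_i\Big(\sum_j a_j\,\partial_j f_i\Big)\,f_i^{s_i-1}\prod_{l\ne i}f_l^{s_l} \;=\; \Big(\sum_i s_i A_i\Big)F,
\]
exhibits the required polynomial $B=\sum_i s_i A_i$, and no syzygy machinery is needed here.

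The harder inclusion $\TF_F\subseteq \TF_{f_i}$ for each $i$ is where I expect to do real work. Suppose $\sum_j a_j\,\partial_j F=BF$, set $G_i:=\sum_j a_j\,\partial_j f_i$, and let $H_i:=F/f_i=f_i^{s_i-1}\prod_{l\ne i}f_l^{s_l}$ (a polynomial). The same Leibniz computation rewrites the hypothesis as
\[
\sum_{i=1}^k s_i\, G_i\, H_i \;=\; B\,F.
\]
Fix an index, say $i=1$. Since $BF$ is divisible by $f_1^{s_1}$, and for every $i\ne 1$ the polynomial $H_i$ already carries the full factor $f_1^{s_1}$, reducing the displayed identity modulo $f_1^{s_1}$ leaves
\[
s_1 G_1 f_1^{s_1-1}\prod_{l\ne 1}f_l^{s_l} \;\equiv\; 0 \pmod{f_1^{s_1}}.
\]
Cancelling $f_1^{s_1-1}$ (characteristic zero, so $s_1$ is invertible in $\mathbb C$) shows $f_1$ divides $G_1\,\prod_{l\ne 1}f_l^{s_l}$. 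Because $\C[z_1,\ldots,z_n]$ is a UFD and the $f_l$ are pairwise non-associate irreducibles, $\gcd(f_1,f_l)=1$ for $l\ne 1$, so $f_1\mid G_1$. Hence $G_1=A_1 f_1$ with $A_1$ a polynomial, i.e.\ $(a_1,\ldots,a_n)\in \TF_{f_1}$, and by symmetry the same argument works for every index.

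The main obstacle, and the step most deserving of care, is the divisibility extraction: I must be sure that the mod-$f_i^{s_i}$ reduction really isolates a single term and that cancelling the $f_i^{s_i-1}$ factor is legitimate; this is where pairwise coprimality of the $f_l$ together with the UFD property of $R$ are essential, and where the implicit assumption of characteristic zero (so that $s_i\neq 0$ in $\mathbb C$) is used. Once the two inclusions are in hand, combining them gives $\TF_F=\bigcap_i\TF_{f_i}$, matching the module intersection description of vector fields tangent to a list of hypersurfaces as in \eqref{vector_field_hypersurfaces}.
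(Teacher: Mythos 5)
Your proof is correct and, at its core, identical to the paper's: the easy inclusion via the Leibniz rule, and the reverse inclusion by isolating the term carrying $G_t=\sum_j a_j\partial_j f_t$, observing that all other contributions are divisible by a high enough power of $f_t$, and then using pairwise coprimality of the $f_l$ in the UFD $R$ together with $s_t\ne 0$ to conclude $f_t\mid G_t$. The only cosmetic difference is bookkeeping: you reduce $\sum_i s_i G_i H_i=BF$ modulo $f_1^{s_1}$ and then cancel $f_1^{s_1-1}$, whereas the paper first divides the entire identity through by $f_t^{s_t-1}$ and then argues divisibility by $f_t$ directly.
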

\begin{proof}
  It is clear that $\TF_F\supset \TF_{f_1}\cap\ldots
  \cap\TF_{f_k}$. For $(a_1, \ldots ,a_n)\in \TF_F$,
  \begin{equation}
    \sum_{l=1}^k s_l \bigg(\sum_{i=1}^n a_i \frac{\partial
      f_l}{\partial z_i}\bigg) \frac{F}{f_l}=b F\,.
  \end{equation}
For a fixed index $t$, $1\leq t\leq k$, divide the above expression by
$f_t^{s_t-1}$, 
  \begin{equation}
  s_t \bigg(\sum_{i=1}^n a_i \frac{\partial
      f_t}{\partial z_i}\bigg) \frac{F}{f_t^{s_t}}+
    \sum_{\substack{l=1\\  l\not= t}}^k s_l \big(\sum_{i=1}^n a_i \frac{\partial
      f_l}{\partial z_i}\big) \frac{F}{f_l f_t^{s_t-1}}=b \frac{F}{ f_t^{s_t-1}}\,.
  \end{equation}
Note the second term on l.h.s and the r.h.s are polynomials
proportional to $f_t$. Hence,
\begin{equation}
   s_t \bigg(\sum_{i=1}^n a_i \frac{\partial
      f_t}{\partial z_i}\bigg) \frac{F}{f_t^{s_t}}\,,
\end{equation}
is also proportional to $f_t$. However $f_t$ does not divide
$F/f_t^{s_t}$, since $f_i$'s are distinct irreducible polynomials. So
$f_t$ divides $\sum_{i=1}^n a_i \partial
      f_t/\partial z_i$ and $(a_1, \ldots ,a_n)\in \TF_{f_t}$, and $\TF_F\subset \TF_{f_1}\cap\ldots
  \cap\TF_{f_k}$.
\end{proof}
It implies that for a reducible hypersurface, its tangent algebra is
the intersection of tangent algebras of all its irreducible
components \cite{Hauser1993}. 

In practice, give a syzygy equation system \eqref{vector_field_hypersurfaces}, we can
first determine each $\TF_{F_i}$ and then calculate
the intersection $\TF_{F_1}\cap\ldots \cap\TF_{F_k}$. (See
\cite[Chapter~5]{opac-b1094391} for the algorithm of computing
intersection of submodules.) Furthermore, for
each $\TF_{F_i}$, if $F_i$ is factorable, we can use Proposition
\ref{tangent_algebra_component} to further divide the problem. This
divide-and-conquer approach is in general much more efficient than
solving \eqref{vector_field_hypersurfaces} at once.

More specifically, it is known that the tangent condition \eqref{eq:syzygy_1} for the
Baikov polynomial $F$ can be solved directly without any computation
\cite{Ita:2015tya}. In Baikov variables, the generating tangent vectors for $F$
have a beautiful structure: they contain at most linear functions in
Baikov variables \footnote{We learnt this generating tangent system
  from Roman Lee's blog \url{
    http://mathsketches.blogspot.ru/2010/07/blog-post.html} (in
  Russian). The completeness of these generating vectors, with the
  mathematical proof, is given in \cite{Boehm:2017wjc}.}. So for our
IBP computation, all $\TF_{F_i}$'s known without computation. The
intersection of these modules can be computed efficient by the compute
algebra system {\sc Singular}.

\section{IBPs from syzygies and unitarity}
With Baikov representation, unitarity cut and syzygy computation, we
introduce some recent IBP generating algorithms
\cite{Ita:2015tya,Larsen:2015ped} with the two-loop double box as an example. 

For the massless double box, define
\begin{equation}
    I[m_1,\ldots ,m_9]=\int \frac{d^D l_1}{i \pi^{D/2}} \frac{d^D l_2}{i
      \pi^{D/2}} \frac{(l_1\cdot k_4)^{-m_8}(l_2\cdot k_1)^{-m_9}}{D_1^{m_1} \ldots D_7^{m_7}}\,.
  \end{equation}
Our target integral space is the set of all list $(m_1\ldots m_9)$
such that $m_i\leq 1$, $i=1,\ldots,7$, $m_j\leq 0$ , $j=8,\ldots,9$,
since we try to find IBPs without doubled propagators.

\begin{example}
  Consider the massless double box with maximal cut. From
  Example \ref{dbox_Baikov_cut}, we see that with maximal cut the
  Baikov polynomial is
  \begin{equation}
\label{F_7cut}
 F_{[7]}= \frac{z_8 z_9 (s t-2 s z_8-2 s z_9-4 z_8 z_9)}{t (s+t)}\,.
  \end{equation}
The syzygy equation is,
\begin{equation}
  a_8 \frac{\partial F_{[7]}}{\partial z_8}+a_9
  \frac{\partial F_{[7]}}{\partial z_9}=\beta  F_{[7]}\,.
\end{equation}
Solutions of $(a_8,a_9)$ form $\TF_{F_{[7]}}$, the
tangent algebra of $F_{[7]}$. We leave the computation of $\TF_{F_{[7]}}$ as an exercise. There are $3$ generators of
$\TF_{F_{[7]}}$,
\begin{gather}
 v_1=\big(-(t-2 z_8) z_8 , (t-2 z_9) z_9 \big),\quad 
 v_2=\big (2 (s+t) z_8 z_9, -(t-2 z_9) z_9 (s+2 z_9)\big),\nn\\ 
v_3=\big(0 , -z_9 (s t-2 s z_8-2 s z_9-4 z_8 z_9)\big)\,,
\end{gather}
Using these generators and the ansatz \ref{ansatz}, we get IBPs
without double propagators. For instance, from the first generator we
have the IBP,
\begin{equation}
  I[1,1,1,1,1,1,1,-1,0]=I[1,1,1,1,1,1,1,0,-1] +
  \ldots ,
\end{equation}
and from the second generator,
\begin{gather}
  4 (D-3) I[1,1,1,1,1,1,1,0,-2]+(3 D s-12 s-2 t)
  I[1,1,1,1,1,1,1,0,-1]\nn\\
-\frac{1}{2} (D-4) s t I[1,1,1,1,1,1,1,0,0]
  =0 + \dots \,.
\end{gather}
Note that with maximal cut, any integral with at least one $m_i<1$,
$i=1,\ldots 7$, is neglected. ``$\ldots$'' stands for these
integrals. 

To get all IBPs with maximal cut, we need to consider vector fields
$q_1 v_1+q_2 v_2+q_3 v_3$ where $q_1$, $q_2$ and $q_3$ are arbitrary
polynomials in $z_8$, $z_9$ up to a given degree. When the smoke is
clear, we find that all integrals with $m_i=1$, $i=1,\ldots 7$ and
$m_j\leq 0$ , $j=8,\ldots,9$ are reduced to
$I[1,1,1,1,1,1,1,0,0]$,
$I[1,1,1,1,1,1,1,-1,0]$ and integrals with fewer-than-$7$
propagators. 
\end{example}
 
\begin{example}
\label{slashed_box_cut}
  Consider the quintuple cut of the massless double box,
 $D_2=D_3=D_5=D_6=D_7=0$. The goal is to study integrals
 $I_\text{dbox}[m_1,m_2,\ldots m_9]$ such that
 $m_2=m_3=m_5=m_6=m_7=1$, $m_1,m_4\leq 1$, $m_8,m_9$ non-positive. 
The syzygy equations read,
\begin{eqnarray}
   a_1 \frac{\partial F_{[5]}}{\partial z_1}+a_4
  \frac{\partial F_{[5]}}{\partial z_4}+a_8 \frac{\partial F_{[5]}}{\partial z_8}+a_9
  \frac{\partial F_{[5]}}{\partial z_9}&=&\beta  F_{[5]}  \\
a_1 &=& b_1 z_1\\
a_4 &=& b_4 z_4
\label{dbox_5cut}
\end{eqnarray}
In the formal language, the solutions of last two equations form
a tangent algebra $\TF_{14}$ with generators,
\begin{eqnarray}
  (z_1,0,0,0),\quad  (0,z_4,0,0),\quad  (0,0,1,0),\quad  (0,0,0,1).
\end{eqnarray}
The first equation can be solved by \pmb{syz} in \Singular\ and
\Macaulay, which leads to a tangent algebra $\TF_{F[5]}$. Then
the solution set of \eqref{dbox_5cut}  of 
$\TF_{F[5]}\cap \TF_{14}$. This intersection of submodules can be
calculated by  \pmb{intersect} in \Singular\ and
\Macaulay. 

Again we find IBPs with this tangent algebra. All integrals with
$m_2=m_3=m_5=m_6=m_7=1$, $m_1,m_4\leq 1$, $m_8,m_9$ non-negative are reduced to
$3$ master integrals $I[1,1,1,1,1,1,1,0,0]$,
$I[1,1,1,1,1,1,1,-1,0]$, $I[0,1,1,0,1,1,1,0,0]$ and integrals with fewer-than-$5$
propagators. 
\end{example}

In general it is easy to obtain IBPs with maximal cut, since the
number of variable is small. We may use symmetries and IBPs with maximal cut,
numerically, to find all MIs \cite{AZURITE}. It takes only a few seconds to
find all master integrals for massless double box. 
\begin{itemize}
  \item double box, $I[1,1,1,1,1,1,1,-1,0]$, $I[1,1,1,1,1,1,1,0,0]$,
    \item slashed box,  $I[0,1,1,0,1,1,1,0,0]$,
\item box bubble, $I[0,1,0,1,1,1,1,0,0]$,
\item double bubble, $I[1,0,1,1,0,1,0,0,0]$,
\item bubble triangle, $I[0,1,0,1,0,1,1,0,0]$,
\item $t$-channel sunset, $I[0,1,0,0,1,0,1,0,0]$,
\item $s$-channel sunset, $I[0,0,1,0,0,1,1,0,0]$.
\end{itemize}
We define that $I[m_1,\ldots m_9]$ is lower than $I[n_1,\ldots n_9]$ if
$m_i\leq n_i$, $i=1,\ldots 7$. For example, $s$-channel sunset  is lower
than the slashed box. A triple cut $D_3=D_6=D_7=0$ contains all
information of the quintuple cut in Example
\ref{slashed_box_cut}. Since here the lowest master integrals are double
bubble, bubble triangle, $t$-channel sunset, $s$-channel sunset, we
can see that the following four cuts,
\begin{gather}
  D_1=D_3=D_4=D_5=0,\quad D_2=D_4=D_6=D_7=0\nn\\
  D_2=D_5=D_7=0,\quad D_3=D_6=D_7=0\,,
\end{gather}
 determine complete IBPs without cut.

By this method \cite{Larsen:2015ped}, a
Mathematica code with the communication with \Singular, analytically
reduces all double box integrals with numerator
rank $\leq 4$, to the $8$ master integrals in about $39$ seconds for
massless double box on a laptop. Similarly, it takes about $162$ seconds
for the analytic IBP reduction of one-massive double box. 

In \cite{Boehm:2018fpv}, we tested our module intersection IBP method
in Baikov representation, with the help of the primitive implement of
sparse linear algebra and rational function reconstruction. We
successfully reduce the $2$-loop $5$-point nonplanar hexa-box
integrals with numerators up to degree $4$.

We expect
that combined with sparse linear algebra and finite-field fitting \cite{vonManteuffel:2014ixa,vonManteuffel:2016xki ,Peraro:2016wsq}
techniques, it can solve some very difficult two-loop/three-loop IBP problems in the near future.  


\section{Exercise}
\begin{ex}[Baikov representation of massless double box]
\label{dbox_Baikov}
  Consider two-loop massless double box diagram
  (Fig. \ref{graph_dbox}) with inverse propagators
  $D_1,\ldots, D_7$ defined in \eqref{dbox_propagators}. Let
  \begin{equation}
    I_\text{dbox}[N]=\int \frac{d^D l_1}{i \pi^{D/2}} \frac{d^D l_2}{i
      \pi^{D/2}} \frac{N}{D_1 \ldots D_7}.
  \end{equation}
By integrand reduction, we see that $N$ can be a polynomial in
$\mu_{11}$, $\mu_{22}$ and $\mu_{12}$, but at most linear in
$(l_1\cdot \omega)$ and $(l_2 \cdot \omega)$. Terms linear in
$(l_1\cdot \omega)$ and $(l_2 \cdot \omega)$ are spurious so 
dropped. Terms in $\mu$'s can be converted to integrals without
$\mu$'s in higher dimension, via Schwinger parameterization
\cite{Bern:2003ck}. Or alternatively,
polynomials in $\mu$'s
or $(l_i\cdot \omega)$ can be directly integrated out by {\it adaptive
  integrand decomposition} \cite{Mastrolia:2016dhn}, using Gegenbauer polynomials
techniques.  Hence we assume $N$ contains no $\mu$'s,
$(l_1\cdot \omega)$ or $(l_2 \cdot \omega)$. 

\begin{enumerate}
\item The original Van Neerven-Vermaseren variables are define in
  \eqref{dbox_vNV} and $\mu_{ij}=-l_i^\perp \cdot l_j^\perp$. To
  integrate out $\omega$ direction, define $V_1=\sp\{k_1,k_2,k_4\}$ and $V^\sharp$ as the direct sum
  of $\sp\{\omega\}$ and $(-2\epsilon)$ extra spacetime. Decompose
  $l_i=l_i^{[3]}+l_i^\sharp$ and denote $(l_i^\sharp\cdot
  l_j^\sharp)=-\lambda_{ij}$. Prove that 
  \begin{equation}
    \lambda_{11}=\mu_{11}\c +\frac{s}{t(s+t)} x_4^2,\quad
\lambda_{22}=\mu_{22}\c +\frac{s}{t(s+t)} y_4^2,\quad
\lambda_{12}=\mu_{12}\c +\frac{s}{t(s+t)} x_4 y_4\,,
\end{equation}
and $D_1,\ldots ,D_7$ only depend on
$x_1,x_2,x_3,y_1,y_2,y_3,\lambda_{11},\lambda_{22},\lambda_{12}$.  

\item
  Integrate over the solid angle parts of $l_1^\sharp$ and
  $l_2^\sharp$ to get
  \begin{gather}
     I_\text{dbox}[N]=\frac{2^{D-5}}{\pi^4\Gamma(D-4)}\int_0^\infty d\lambda_{11}\int_0^\infty 
d\lambda_{22} \int_{-\sqrt{\lambda_{11}
    \lambda_{22}}}^{\sqrt{\lambda_{11} \lambda_{22}}} d\lambda_{12} (\lambda_{11}\lambda_{22}-\lambda_{12}^2)^{\frac{D-6}{2}}\times
\nn\\
\int d^3 l_1^{[3]} d^3 l_2^{[3]} \frac{N}{D_1 \ldots D_7}\,.
  \end{gather}
  \item
    Define $9$ Baikov variables as
    \begin{gather}
      z_i=D_i,\quad 1\leq i \leq 7,\quad z_8=l_1\cdot k_4,\quad
      z_9=l_2\cdot k_1\,.
    \end{gather}
    Find the inverse map $(z_1,\ldots z_9)\mapsto
    (x_1,x_2,x_3,y_1,y_2,y_3,\lambda_{11},\lambda_{22},\lambda_{12})$ and the
    Jacobian of the map. 
\item Derive the Baikov form of integral,
 \begin{gather}
     I_\text{dbox}[N]=\frac{2^{D-5}}{\pi^4\Gamma(D-4) J}
     \int \prod_{i=1}^9 dz_i F(z)^{\frac{D-6}{2}}\frac{N}{D_1 \ldots D_7}\,.
  \end{gather}
Calculate $J$ and $F(z)$ explicitly. Note that the Jacobian of the
changing variables $l_i^{[3]}$ to $ (x_1,x_2,x_3,y_1,y_2,y_3)$ should be
included. 
\end{enumerate}
\end{ex}

\begin{ex}
  Derive the Baikov representation for two-loop pentagon-box diagram,
  (Fig. \ref{graph_pentabox}).
  \begin{figure}[h]
    \centering
    \includegraphics[scale=1]{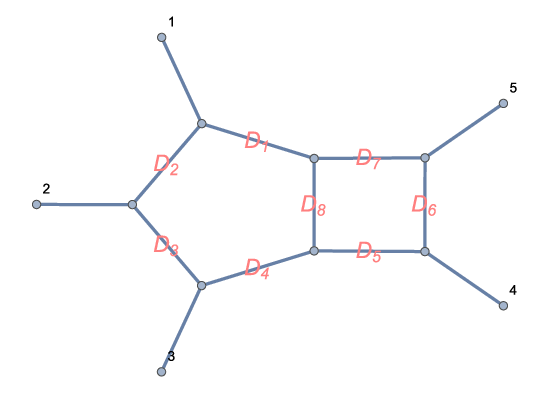}
    \caption{Pentagon box diagram}
\label{graph_pentabox}
  \end{figure}
with inverse propagators,
\begin{gather}
  D_1=l_1^2, \quad D_2=(l_1 - k_1)^2, \quad D_3= (l_1 - k_1 - k_2)^2, \quad D_4= (l_1 - k_1 - k_2 - k_3)^2, \nn\\\quad D_5= (l_2 + k_1 +
    k_2 + k_3)^2, \quad D_6= (l_2 + k_1 + k_2 + k_3 + k_4)^2, \quad
    D_7=l_2^2, \quad D_8=(l_1 + l_2)^2.
\end{gather}
(Hint: define $z_i=D_i$, $i=1,\ldots,8$. $z_{9}=l_1 \cdot k_5$,
$z_{10}=l_2 \cdot k_1$,  $z_{11}=l_2 \cdot k_2$.)
\end{ex}

\begin{ex}
   Consider $f_1=x^3-2 x y$, $f_2=x^2 y-2 y^2 +x$ as Example
   \ref{example_Buchberger}. We know that the \GB\ in \grevlex\ is
   $G=\{g_1,g_2,g_3\}=\{x^2, x y, y^2-\half x\}$. The conversion
   relations are,
   \begin{gather}
       g_1= -y f_1 +x f_2 ,\quad g_2= -\frac{ (1+x y)}{2}f_1+\half x^2 f_2
 ,\quad g_3= -\half y^2 f_1+\half (x y-1) f_2,\\
       f_1=x g_1-2 g_2,\quad f_2=y g_1-2 g_3\,.
   \end{gather}
Find the generators of $\syz(f_1,f_2)$ by Theorem \ref{Schreyer}.
\end{ex}

\begin{ex}
  Let $F=\left(x^2+y^2\right)^2+3 x^2 y-y^3$, the plot of the curve
  $F=0$ is in Figure. \ref{graph_triple}. Determine the singular
  points of this curve and find the polynomial tangent vector fields $\TF_F$.
  \begin{figure}[h]
    \centering
\includegraphics[scale=0.5]{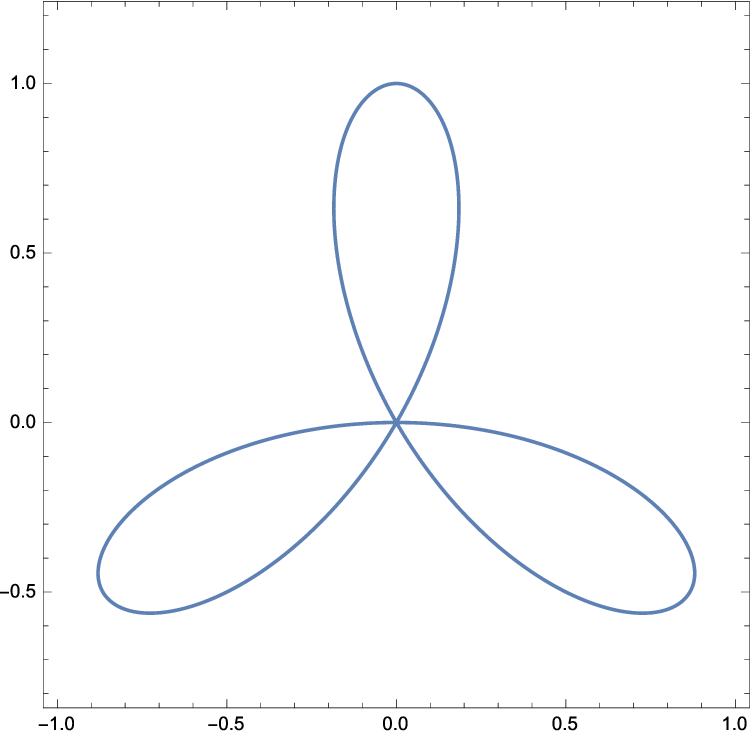}
\caption{A singular curve, $\left(x^2+y^2\right)^2+3 x^2 y-y^3=0$}
\label{graph_triple}
  \end{figure}
\end{ex}

\begin{ex}
  Computer $\TF_F$ for the double box on the maximal cut, where $F$ is the
  corresponding Baikov polynomial \eqref{F_7cut}. (We drop the
  subscript ``~${[7]}$''.)
  \begin{enumerate}
  \item Use \pmb{syz} in \Singular\  or \Macaulay, to compute $\TF_F$
    directly.
\item Note that $F$ has $3$ irreducible factors, $f_1=z_8$, $f_2=z_9$
  and $f_3=(s t-2 s z_8-2 s z_9-4 z_8 z_9)$. $f_1$ is linear so
  $\TF_{f_1}$ is generated by,
  \begin{equation}
    (z_8,0),\quad (0,1)\,.
  \end{equation}
Similarly, $\TF_{f_2}$ is generated by,
 \begin{equation}
    (1,0),\quad (0,z_9)\,.
  \end{equation}
What is $\TF_{f_1}\cap \TF_{f_2}$? Note that $f_3=0$ is smooth. Use
Proposition \ref{principle_syzygy} to find $\TF_{f_3}$.
\item Use \pmb{intersection} in \Singular\ or \Macaulay, to compute $
  \TF_{F} =\TF_{f_1}\cap \TF_{f_2}\cap \TF_{f_3}$. Compare the result
  with that from the direct computation. 

  \end{enumerate}

\end{ex}

\begin{ex}
  Consider three-loop massless triple box diagram (Figure. \ref{tribox})
  \begin{enumerate}
  \item Define $z_i=D_i$, $i=1,\ldots 10$, and 
    \begin{gather}
      z_{11}=(l_1 + k_4)^2,\quad z_{12}=(l_2 + k_1)^2,\quad z_{13}=(l_3 + k_1)^2,\nn\\ z_{14}=(l_3 + k_4)^2, \quad l_{15}=(l_1 + l_2)^2.
    \end{gather}
Determine its Baikov representation.
\item Derive IBPs with the maximal cut $D_1=\ldots=D_{10}=0$, and
  determine master integrals with $10$ propagators for this diagram. 
  \end{enumerate}
\end{ex}

\bibliographystyle{alpha}
\bibliography{AA2016.bib}

\end{document}